\newtheorem{thm}{Theorem}
\newtheorem{cor}[thm]{Corollary}
\newtheorem{lem}[thm]{Lemma}
\newtheorem{prop}[thm]{Proposition}
\newtheorem{defn}{Definition}
\renewcommand{\thesection}{\arabic{section}}
\renewcommand{\thesubsection}{\thesection.\arabic{subsection}}
\renewcommand{\thesubsubsection}{\thesubsection.\arabic{subsubsection}}
\renewcommand{\p@subsection}{}
\def\l@subsubsection#1#2{}
\g@addto@macro\bfseries{\boldmath}
\newcommand{\vpd}[0]{\vphantom{\dagger}}
\newcommand{\vps}[0]{\vphantom{*}}
\newcommand{\vpp}[0]{\vphantom{\prime}}
\DeclarePairedDelimiter{\floor}{\lfloor}{\rfloor}
\DeclarePairedDelimiter{\norm}{\lVert}{\rVert}
\DeclarePairedDelimiter{\abs}{\lvert}{\rvert}
\DeclarePairedDelimiter{\ket}{\lvert}{\rangle}
\DeclarePairedDelimiter{\bra}{\langle}{\rvert}
\DeclareMathOperator*{\Span}{span}
\DeclareMathOperator*{\Endomorphism}{End}
\DeclareMathOperator*{\automorphism}{Aut}
\newcommand{\ii}[0]{\mathrm{i}}
\newcommand{\bvec}[1]{\boldsymbol{#1}}
\newcommand{\kron}[1]{\delta^{\,}_{#1}}
\newcommand{\ident}[0]{\mathds{1}}
\newcommand{\Comps}{\mathbb{C}}
\newcommand{\Ints}{\mathbb{Z}}
\newcommand{\Unitary}[1]{\mathsf{U} \hspace{-0.1em} \left( #1 \right)}\WithSuffix\newcommand\Unitary*[1]{\mathsf{U} \hspace{-0.1em} (#1)}
\newcommand{\SU}[1]{\mathsf{SU} \left( #1 \right)}\WithSuffix\newcommand\SU*[1]{\mathsf{SU} \hspace{-0.1em} (#1)}
\newcommand{\Order}[1]{\mathsf{O} \left( #1 \right)}
\DeclareMathOperator*{\Dim}{dim}
\newcommand{\DimOf}[1]{\Dim \left( #1 \right)}\WithSuffix\newcommand\DimOf*[1]{\Dim ( #1 )}
\newcommand{\BKop}[2]{\ket{#1} \hspace{-0.4mm} \bra{#2}}
\newcommand{\matel}[3]{\left\langle #1 \middle| #2 \middle| #3 \right\rangle}
\newcommand\matel*[3]{\langle #1 | #2 | #3 \rangle}
\newcommand{\tr}[1]{{\rm tr} \left[  #1  \right]}
\DeclareMathOperator*{\trace}{tr}
\newcommand{\ptr}[2]{\trace^{\,}_{#1} \left[ #2 \right]}
\newcommand{\SpanOf}[1]{\Span \left\{ #1 \right\}}\WithSuffix\newcommand\SpanOf*[1]{\Span \{ #1 \} }
\newcommand{\End}[1]{\Endomorphism \left( #1 \right)}\WithSuffix\newcommand\End*[1]{\Endomorphism ( #1 )}
\newcommand{\Aut}[1]{\automorphism \left( #1 \right)}\WithSuffix\newcommand\Aut*[1]{\automorphism ( #1 )}
\newcommand{\comm}[2]{\left[ #1, \, #2 \right]}
\newcommand{\acomm}[2]{\left\{ #1, \, #2 \right\} }
\newcommand{\Hilbert}[0]{\mathcal{H}}
\newcommand{\HilDim}[0]{\mathcal{D}}
\newcommand{\Nspins}[0]{N}
\newcommand{\Nmeas}[0]{M}
\newcommand{\Nregions}[0]{m}
\newcommand{\Noutcome}[0]{\mathcal{N}}
\newcommand{\Dist}[0]{L}
\newcommand{\LRvel}[0]{v}
\newcommand{\interval}[0]{\mathcal{I}}
\newcommand{\numQ}[0]{k}
\newcommand{\observ}[0]{\mathcal{O}}
\newcommand{\mobserv}[0]{A}
\newcommand{\mObserv}[2]{\mobserv^{#2}_{#1}}
\def\DensMat{\rho}
\newcommand{\meig}[1]{a^{\,}_{#1}}
\newcommand{\mEig}[1]{a^{\vpp}_{#1}}
\DeclareMathOperator*{\Stab}{Stab}
\DeclareMathOperator*{\UStab}{stab}
\newcommand{\StabEl}[0]{\mathsf{S}}
\newcommand{\StabOf}[1]{\Stab \left( #1 \right)}\WithSuffix\newcommand\StabOf*[1]{\Stab ( #1 )}
\newcommand{\UStabOf}[1]{\UStab \left( #1 \right)}\WithSuffix\newcommand\UStabOf*[1]{\UStab ( #1 )}
\newcommand{\stabilizer}[0]{\mathcal{S}}
\newcommand{\chan}{\mathcal{W}}
\newcommand{\chandag}{\mathcal{W}^{\dagger}}
\newcommand{\MeasChannel}[0]{\mathcal{M}}
\newcommand{\QECChannel}[0]{\mathcal{R} }
\newcommand{\cliff}[0]{U}
\newcommand{\Cliff}[1]{\cliff^{\vpd}_{#1}}
\newcommand{\CliffDag}[1]{\cliff^{\dagger}_{#1}}
\newcommand{\uchan}[0]{\mathcal{U}}
\newcommand{\conjchan}[0]{\chi}
\newcommand{\ConjChan}[1]{\conjchan^{\vpd}_{#1}}
\newcommand{\ConjChanPow}[2]{\conjchan^{#2}_{#1}}
\newcommand{\isometry}[0]{\mathbb{V}}
\newcommand{\umeas}[0]{\mathcal{M}} 
\newcommand{\Umeas}[1]{\umeas^{\vpd}_{\left[ #1 \right]}}
\newcommand{\Umeasdag}[1]{\umeas^{\dagger}_{\left[ #1 \right]}}
\newcommand{\Proj}[2]{P^{(#1)}_{#2}}
\newcommand{\Pauli}[2]{\sigma^{#1}_{#2}}
\newcommand{\PauliGroup}[1]{\mathbf{P}^{\vps}_{#1}}\WithSuffix\newcommand\PauliGroup*[1]{\mathbf{P}^{\,}_{#1}}
\newcommand{\CliffGroup}[1]{\mathbf{C}^{\vps}_{#1}}\WithSuffix\newcommand\CliffGroup*[1]{\mathbf{C}^{\,}_{#1}}
\newcommand{\RecGroup}[1]{\mathbf{R}^{\vps}_{#1}}\WithSuffix\newcommand\RecGroup*[1]{\mathbf{R}^{\,}_{#1}}
\newcommand{\PX}[1]{X^{\vps}_{#1}}
\newcommand{\PXp}[2]{X^{#2}_{#1}}
\newcommand{\PY}[1]{Y^{\vps}_{#1}}
\newcommand{\PZ}[1]{Z^{\vps}_{#1}}
\newcommand{\PZp}[2]{Z^{#2}_{#1}}
\newcommand{\LX}[0]{X^{\vps}_{{\rm L}}}
\newcommand{\LXp}[1]{X^{#1}_{{\rm L}}}
\newcommand{\LY}[0]{Y^{\vps}_{{\rm L}}}
\newcommand{\LYp}[1]{Y^{#1}_{{\rm L}}}
\newcommand{\LZ}[0]{Z^{\vps}_{{\rm L}}}
\newcommand{\LZp}[1]{Z^{#1}_{{\rm L}}}
\newcommand{\SSid}[1]{\widetilde{\ident}^{\vps}_{#1}}
\newcommand{\SSX}[1]{\widetilde{X}^{\vps}_{#1}}
\newcommand{\SSXp}[2]{\widetilde{X}^{#2}_{#1}}
\newcommand{\SSZ}[1]{\widetilde{Z}^{\vps}_{#1}}
\newcommand{\SSZp}[2]{\widetilde{Z}^{#2}_{#1}}
\newcommand{\SSProj}[2]{\widetilde{P}^{(#1)}_{#2}}
\begin{document}

\title{Quantum teleportation implies symmetry-protected \quad\quad$\vps$ topological order}

\author{Yifan Hong}
\email{yifan.hong@colorado.edu}
\affiliation{Department of Physics and Center for Theory of Quantum Matter, University of Colorado, Boulder, CO 80309, USA}

\author{David T. Stephen}
\affiliation{Department of Physics and Center for Theory of Quantum Matter, University of Colorado, Boulder, CO 80309, USA}
\affiliation{Department of Physics and Institute for Quantum Information and Matter, California Institute of Technology, Pasadena, California 91125, USA}

\author{Aaron J. Friedman}
\affiliation{Department of Physics and Center for Theory of Quantum Matter, University of Colorado, Boulder, CO 80309, USA}

\begin{abstract}
    We constrain a broad class of teleportation protocols using insights from locality. In the ``standard'' teleportation protocols we consider, all  outcome-dependent unitaries are Pauli operators conditioned on linear functions of the measurement outcomes. We find that all such protocols involve preparing a ``resource state'' exhibiting symmetry-protected topological (SPT) order with Abelian protecting symmetry $\mathcal{G}_{\numQ}= (\Ints^{\,}_2 \times \Ints^{\,}_2)^\numQ$. The $\numQ$ logical states are teleported between the edges of the chain by measuring the corresponding $2\numQ$ string order parameters in the bulk and applying outcome-dependent Paulis. Hence, this single class of nontrivial SPT states is both necessary and sufficient for the standard teleportation of $\numQ$ qubits. We illustrate this result with several examples, including the cluster state, variants thereof, and a nonstabilizer hypergraph state.
\end{abstract}

\maketitle

\section{Introduction}

\subsection{Background}

Entangled states of many-body quantum systems are valuable resources for a variety of tasks in quantum information processing. A prototypical example is quantum teleportation: By measuring an entangled state and applying unitaries conditioned on the outcomes (i.e., feedback), quantum information can be transferred over larger distances than can be achieved using unitary time evolution alone \cite{Bennett_1993}. Teleportation protocols can also be concatenated to form a \emph{quantum repeater} that transfers quantum information over arbitrary distances \cite{Briegel_1998}.

It is then natural to ask what types of entangled many-body \emph{resource states} can be used to teleport quantum information by measuring from one end of a 1D chain to the other and applying outcome-dependent feedback, as illustrated in Fig.~\ref{fig:intro}. In fact, such entangled resource states form the backbone of many notions of quantum computation, and especially \emph{measurement-based} quantum computation (MBQC), in which logical operations are applied to a quantum state as it is teleported \cite{CliffordHierarch, Raussendorf2001}. These same states are also closely related to the concept of localizable entanglement \cite{LocalizableEntanglement}, which has important connections to the ground-state properties and phase transitions of quantum spin chains \cite{Verstraete2004, Verstraete2004a, Jin2004, Pachos2004, Venuti2005, Skrovseth2009}. Moreover, the effect of imperfect teleportation on many-body ground states has recently been studied and implemented on quantum devices~\cite{lee2022, Zhu_2023, eckstein2024robust, Sala2024}.
Accordingly, significant effort has been invested into characterizing entangled quantum states by their utility as resource states for tasks such as MBQC and quantum teleportation \cite{Raussendorf2005a, Vandennest2006, Gross2007, Gross2009, Bremner2009, Doherty2009, Gross2010, DominicMBQC_SPT, Prakash2015, hypergraphMiyake2D, Stephen2017, Marvian2017, Wei2018a, Bao_teleport, google_teleport}. 

Here we consider a different question. Rather than consider particular quantum states and diagnose their suitability for various quantum tasks, we instead seek to classify and constrain teleportation protocols directly, in analogy to phases of matter. The idea is to use this alternative, top-down approach to identify classes of entangled resource states compatible with teleportation. To do so, we restrict our consideration to a particular class of ``standard'' teleportation protocols, which include the most straightforwardly implementable protocols known to the literature. From the conditions imposed by standard teleportation, we then investigate the implications for compatible resource states.

\begin{figure}[t]
\centering
\includegraphics[width=0.42\textwidth]{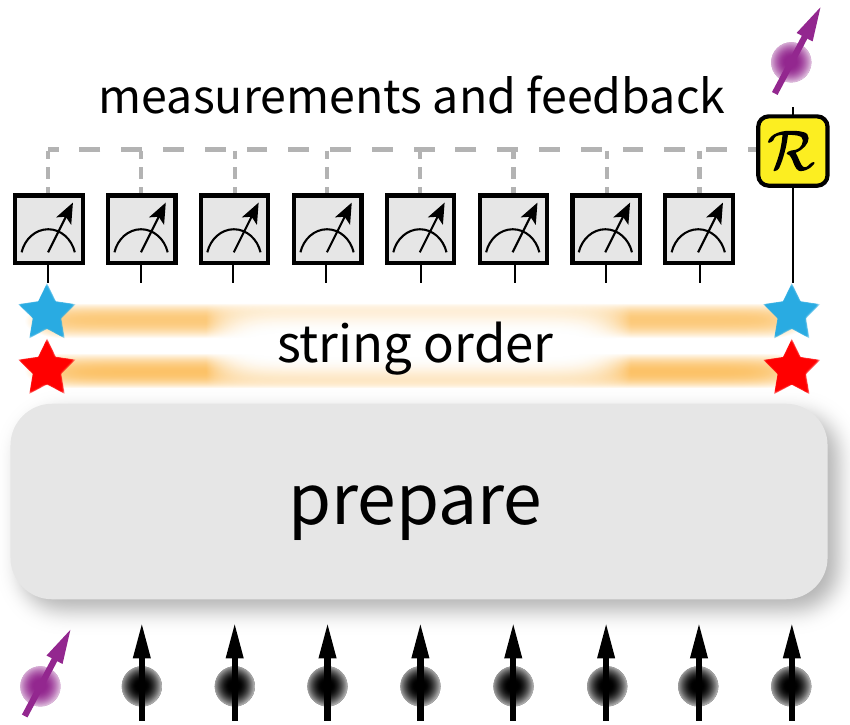}
\caption{Illustration of the central result of this paper: All standard teleportation protocols amount to preparing an SPT state with symmetry $\mathcal{G} = \Ints^{\,}_2 \times \Ints^{\,}_2$ with a logical state $\ket{\psi}$ encoded (in purple), measuring the corresponding pair of string order parameters, and applying outcome-dependent Paulis. Teleporting $\numQ \geq 1$ qubits requires measuring $2\numQ$ strings.}
\label{fig:intro}
\end{figure}

Of particular interest is the class of many-body states in 1D with symmetry-protected topological (SPT) order \cite{Chen2011, MPS1dEntSpect, MPS_classification, Chen2013}, which are known to be useful resource states for quantum teleportation and MBQC \cite{Doherty2009, Gross2010, Miyake2010, hypergraphMiyake2D, DominicMBQC_SPT,  Prakash2015, Stephen2017, Marvian2017, Wei2018a}. These states are short-range entangled, yet cannot be connected to product states via finite-depth unitaries that respect the ``protecting'' symmetry group $\mathcal{G}$. For an Abelian symmetry $\mathcal{G}$, the corresponding SPT states exhibit perfect long-range order characterized by \emph{nonlocal} stringlike objects. These \emph{string order parameters} act as commuting symmetry operators in the interior of any finite interval, but are decorated by endpoint operators at the boundaries \cite{denNijs1989, Kennedy1992, MPS1dDetect}. Moreover, measuring the symmetry charge in any such bulk region reduces the string order parameter to a long-range two-point correlation function of the two endpoint operators, leaving behind a pattern of long-range entanglement that can be used for teleportation \cite{Marvian2017}. Remarkably, a state's suitability for teleportation is a property of entire SPT \emph{phases}, and not merely special points within the phase \cite{Doherty2009, DominicMBQC_SPT, Miyake2010, Raussendorf2022}, establishing a compelling link between quantum information processing and quantum phases of matter.

Another source of insight into quantum information processing is \emph{locality}, which imposes bounds on the speed with which quantum information can be transferred. The Lieb-Robinson Theorem \cite{Lieb1972} establishes an emergent speed limit in the context of unitary time evolution under generic local Hamiltonians. Additionally, more general quantum channels captured by local Lindbladians obey the same bound \cite{Poulin}. However, when local projective measurements are combined with \emph{nonlocal} feedback (via instantaneous classical signalling), the standard Lieb-Robinson bound can be overcome. Nevertheless, a similar bound on generic useful quantum tasks has recently been derived even in the presence of nonlocal feedback \cite{SpeedLimit}. These results impose optimal quantitative constraints on the distance over which logical qubits can be teleported.

In this paper, we use similar techniques to derive more \emph{qualitative} constraints on teleportation protocols and their associated resource states. Importantly, this requires only a few physically motivated assumptions---namely that the initial state is short-range entangled and that the outcome-dependent corrections have some particular form. In particular we consider ``standard'' teleportation protocols, whose correction operations are Pauli gates conditioned on parities of measurement outcomes. We then find that \emph{all} standard teleportation protocols can be written in a particular ``canonical form'' that corresponds to preparing a resource state with $\Ints_2^{2\numQ}$ SPT order, measuring its string order parameters, and applying Pauli gates conditioned on the outcomes, as depicted in Fig.~\ref{fig:intro}. Because such SPT states were known to be viable resource states for teleportation, our main result---captured by Theorem~\ref{thm:spt order}---implies an equivalence between standard teleportation and resource states with $\Ints_2^{2\numQ}$ SPT order.

\subsection{Summary of assumptions and results}

The main result of this work is that a standard class of  teleportation protocols---which includes most, if not all, teleportation protocols known to the literature---is equivalent to preparing a state with a particular SPT order, measuring its string order parameters, and applying outcome-dependent Pauli gates to the destination qubits. We summarize our assumptions and main results below.

\textbf{Summary of assumptions.} We consider ``standard'' teleportation protocols $\chan$ with the following properties:
\begin{itemize}
    \item The $\numQ$ logical states $\ket{\psi_n}$ are localized to single sites before and after the protocol $\chan$.
    \item Each qubit is teleported a distance greater than that which can be achieved using unitaries alone.
    \item The unitary part of $\chan$ is local in the sense of obeying a Lieb-Robinson bound~\cite{Lieb1972}.  
    \item The nonunitary part of $\chan$ is limited to projective measurements and outcome-dependent ``recovery'' operations. We note that other operations cannot enhance teleportation~\cite{SpeedLimit}. 
    \item All measured observables $A_j$ are unitarily connected to single-qubit observables
    . 
    \item No unitaries act on previously measured sites. 
    \item Finally, the most important assumptions are that: (\emph{i}) the recovery operations are linear functions of the measurement outcomes and (\emph{ii}) act on the physical qubits as Pauli operators or strings thereof.
\end{itemize}

These criteria are also summarized in Def.~\ref{def:standard teleport} of standard teleportation. The first five assumptions simply define the precise notion of teleportation we consider. The penultimate, technical criterion is imposed to simplify proofs, but it can almost certainly be relaxed. Most importantly, the final criterion is our main conceptual assumption, and we find that it strongly constrains the class of teleportation protocols and corresponding resource states. We expect that relaxing this final assumption in various ways will lead to different classes of SPT resource states beyond what we explore herein.

For convenience, we write any such protocol $\chan$ in the ``canonical form'' $\chan = \QECChannel \,  \MeasChannel \, \uchan$~\eqref{eq:chan canonical form}, where $\uchan$ acts unitarily (see Sec.~\ref{subsec:canonical}). We view $\uchan$ as preparing a ``resource'' state~\eqref{eq:resource state}, which is then subjected to the nonunitary measurement and recovery channels (respectively, $\MeasChannel$ and $\QECChannel$) to achieve standard teleportation. 

\textbf{Summary of results.} Our main results for such standard teleportation protocols are as follows:
\begin{itemize}
    %
    %
    \item The distance $L$~\eqref{eq:L def} over which $\numQ$ qubits can be teleported using $\Nmeas$ measurements is bounded by
    \begin{equation*}
        \tag{\ref{eq:k>1 bound}}
        \Dist \leq \LRvel T + 2 \, \LRvel \left\lfloor \frac{\Nmeas}{2\numQ} \right\rfloor \left( T - 1 \right) \, ,~
    \end{equation*}
    with the additional requirement $T \geq 1 + \numQ/v$~\eqref{eq:k>1 min depth}, as established in Theorem~\ref{thm:standard bound}.
    \item All measured observables in the canonical-form protocol must commute with one another and do not act on the final logical sites, as proven in Cor.~\ref{cor:meas comm}.
    \item If $\chan$ teleports $\numQ$ qubits as described above acting on an initial state $\ket{\Psi_0}$, then the \emph{resource state} $\ket{\Psi_t} = \uchan \ket{\Psi_0}$~\eqref{eq:resource state} has SPT order protected by a $(\Ints_2 \times \Ints_2)^{\numQ}$ symmetry. For $\numQ=1$, e.g., all such states are unitarily equivalent to the cluster state~\cite{Briegel_2001}. This result appears in Theorem~\ref{thm:spt order}.
    \item The resource state $\ket{\Psi_t}$~\eqref{eq:resource state} has perfect string order, and thus lies at the fixed point of the $(\Ints_2 \times \Ints_2)^{\numQ}$ SPT phase~\cite{MPSStringOrder}. Each of the $2 \numQ$ string order parameters is the product of all measured observables used to determine the $X$-type (or $Z$-type) recovery operation for each of the $\numQ$ logical qubits. 
    \item The resource state need not be a stabilizer state, as with the hypergraph state we consider in Sec.~\ref{subsec:hypergraph}.
\end{itemize}

Importantly, while it was previously known that generic 1D SPT states could be used as resources for quantum teleportation, our results conversely show that \emph{all} standard teleportation protocols imply a \emph{single} class of SPT resource state. This establishes an \emph{equivalence} between (standard) teleportation and SPT states, a direct connection between a particular quantum phase of matter and a useful quantum tasks, and lends insight into the physical mechanism underlying quantum teleportation.

\subsection{Organization of the paper}

We begin with a generic overview in Sec.~\ref{sec:overview}, in which we assume only that the system of interest contains $\Nspins$ qubits and that all unitary operations act locally. The statements and proofs therein may also be found in the literature, and hold beyond the scope of this work. For example, we specify the intuitive notion of ``physical'' teleportation protocols in Def.~\ref{def:phys teleport}, and briefly explain the bounds~\cite{Lieb1972, SpeedLimit} that constrain them in Sec.~\ref{subsec:bounds}.

The main assumptions appear in Sec.~\ref{sec:restricted teleportation}. We formally define the ``standard'' teleportation of $\numQ \geq 1$ logical qubits in Def.~\ref{def:standard teleport}. We require that these protocols achieve a teleportation distance $\Dist$ \eqref{eq:L def} that cannot be achieved using unitary evolution alone, to distinguish them from quantum state transfer (see Def.~\ref{def:state transfer}). We also restrict all measurements to single-qubit observables, as is standard in  quantum information processing \cite{QC_book}. We  assume that no physical unitaries are applied to previously measured qubits; such unitaries have no direct utility to teleportation, but needlessly complicate the statement of our main results. Our crucial assumption---which we expect is directly related to our main results---is that all quantum operations conditioned on the outcomes of prior measurements correspond to physical Pauli gates, and all classical side processing involves only linear functions of the measurement outcomes. We expect the extension of this analysis beyond these restrictions to lead to new classes of teleportation protocols with analogous results; we defer such an investigation to future work.

In Sec.~\ref{sec:teleport+locality}, we constrain standard teleportation protocols in several ways. We first show that at least two measurements---along with outcome-dependent feedback---are required to beat purely unitary protocols. We then show that two measurements are required per ``measurement region,'' with each measurement region no larger than $2 \LRvel (T-1)$, where $\LRvel$ is the Lieb-Robinson velocity \cite{Lieb1972} and $T$ is the duration of time evolution or depth. Generic teleportation protocols involve concatenating measurement regions using pairs of measurements. Importantly, we show that \emph{all} measured observables must commute, and provide bounds on teleportation with $\Nmeas$ measurements. These results hold for any $\numQ \geq 1$.

We note that any quantum protocol $\chan$ can be written in ``canonical form'' (see Def.~\ref{def:chan canonical form}), $\chan = \QECChannel \, \MeasChannel \, \uchan$ by commuting and redefining various terms. The protocol $\chan$ begins with unitary time evolution, which prepares a \emph{resource state} $\ket{\Psi_t} = \uchan \, \ket{\Psi_0}$ (see Def.~\ref{def:resource state}). The protocol $\chan$ then applies all measurements (via $\MeasChannel$) to $\ket{\Psi_t}$, followed by all outcome-dependent Pauli operations (via $\QECChannel$). 

In Sec.~\ref{sec:SPT proof}, we prove that the resource state $\ket{\Psi_t}$ is an SPT with protecting symmetry $\Ints_2^{2\numQ}$. For $\numQ=1$, this state is in the same phase as the cluster state \cite{Briegel_2001}. We also show that the measured observables define the bulk operators of $2 \numQ$ \emph{string order parameters} that define the SPT order. Essentially, measuring the bulk of each pair of string order parameters leads to localizable entanglement that can be used to teleport a single logical qubit~\cite{LocalizableEntanglement,Marvian2017}. The general form of such protocols is depicted schematically in Fig.~\ref{fig:intro}.

In Sec.~\ref{sec:examples} we highlight these results using various examples, including nonstabilizer states and states for which the SPT order was not previously identified. Finally, in Sec.~\ref{sec:outlook}, we summarize our main results, comment on protocols beyond standard teleportation, and discuss future directions and extensions of this work.

\section{Teleportation overview}
\label{sec:overview}

Here we briefly review the task of quantum teleportation. In Sec.~\ref{subsec:teleport prelim} we identify the physical systems of interest, corresponding to qubits on a graph $G$, along with various useful operators. In Sec.~\ref{subsec:state transfer}, we define quantum state transfer---the measurement-free analogue of teleportation---and prove several results.  In Sec.~\ref{subsec:Stinespring}, we review the Stinespring representation \cite{AaronDiegoFuture,  SpeedLimit, AaronMIPT, Stinespring, ChoisThm, Takesaki} of measurements and outcome-dependent operations as unitary channels on a dilated Hilbert space. In this sense, teleportation can be viewed as state transfer on the dilated Hilbert space, as we explain in Sec.~\ref{subsec:phys teleport}, where we also consider an example of teleportation using the 1D cluster state \cite{Briegel_2001}, highlighting the Stinespring representation. Lastly, in Sec.~\ref{subsec:bounds}, we present Theorem~\ref{thm:LR theorem} due to Lieb and Robinson \cite{Lieb1972}, which establishes a bound \eqref{eq:LR bound} on quantum state transfer. We then present Theorem~\ref{thm:FYHL}  \cite{SpeedLimit}, which extends Theorem~\ref{thm:LR theorem} to generic quantum channels (involving measurements and nonlocal feedback) with a modified bound \eqref{eq:FYHL bound}, and constrains teleportation. 

\subsection{Preliminaries}
\label{subsec:teleport prelim}

We consider physical systems of $\Nspins$ qubits identified with the vertices $v \in V$ of an arbitrary graph $G$. The $\Nspins$-qubit Hilbert space has the tensor-product structure
\begin{align}
    \label{eq:physical Hilbert space}
    \Hilbert^{\vpp}_{\rm ph}  =  \bigotimes\limits_{v \in V} \, \Comps^2  = \Comps^{2^\Nspins} \, ,~~
\end{align}
and dimension $\HilDim^{\,}_{\rm ph} = \Dim (\Hilbert^{\,}_{\rm ph}) = 2^\Nspins$. We denote \emph{Pauli-string} operators acting on $\Hilbert^{\,}_{\rm ph}$ as, e.g.,
\begin{align}
    \label{eq:Pauli string def}
    \Gamma^{\vpp}_{\bvec{n}} = \bigotimes\limits_{v \in V} \, \Pauli{n^{\,}_v}{v} \equiv \Pauli{\bvec{n}}\, , ~~
\end{align}
where $\bvec{n} = \{\, n^{\,}_v \, | \, \forall \,  v \in V \, \}$ is a length-$\Nspins$ vector that encodes the Pauli content $n^{\,}_v \in \{0,1,2,3\}$ on each vertex $v\in V$, and $\Pauli{0}{v} \equiv \ident$, $\Pauli{1}{j}=\Pauli{x}{j}=X_j$, and so on. The $4^{\Nspins}$ unique Pauli strings form a complete, orthonormal basis for operators acting on $\Hilbert^{\,}_{\rm ph}$ \eqref{eq:physical Hilbert space}; when multiplied by $\ii^k$ (for $k=0,1,2,3$), these operators realize  the \emph{Pauli group} $\PauliGroup*{\Nspins}$. Relatedly, the \emph{Clifford group} $\CliffGroup*{\Nspins}$ is the set of unitary operators that map the Pauli group $\PauliGroup*{\Nspins}$ to itself. Note that we frequently label both the Clifford and Pauli groups by the vertex set or Hilbert space to which they correspond (e.g., $\CliffGroup*{\Nspins} = \CliffGroup*{V}=\CliffGroup*{\rm ph}$).

Additionally, for a given $\Nspins$-qubit state $\ket{\Psi} \in \Hilbert^{\,}_{\rm ph}$, we define the \emph{unitary stabilizer group} 
\begin{equation}
    \label{eq:UStab}
    \stabilizer = \UStabOf{\ket{\Psi}} = \left\{ \, \StabEl \in \Unitary{2^\Nspins} \,  \middle| \, \StabEl \ket{\Psi} = \ket{\Psi} \,\right\}\, ,~~
\end{equation}
as the set of all unitaries $\StabEl$ for which $\ket{\Psi}$ is a $+1$ eigenstate (note that $\stabilizer$ always contains the identity $\ident$). Crucially, we \emph{do not} require that $\stabilizer$ \eqref{eq:UStab} be a subset of the Pauli group $\PauliGroup*{\Nspins}$---i.e., we do not restrict to ``stabilizer states'' $\ket{\Psi}$ \cite{gottesman1997stabilizer}, for which all $\StabEl\in \UStabOf*{\ket{\Psi}}$ are Pauli strings \eqref{eq:Pauli string def}. In fact, in Sec.~\ref{subsec:hypergraph}, we explicitly consider teleportation protocols involving a nonstabilizer \emph{hypergraph} resource state \cite{CliffordHierarch, hypergraph, hypergraphMiyake, hypergraphMiyake2D}. Moreover, we do not require that $\stabilizer$ \eqref{eq:UStab} be Abelian (which is guaranteed when  $\stabilizer \subset \PauliGroup*{\Nspins}$).

\subsection{Quantum state transfer}
\label{subsec:state transfer}
Before considering quantum teleportation, we briefly discuss the related task of \emph{quantum state transfer}. By convention, ``quantum state transfer'' refers to unitary protocols $\chan$ applied to a system of physical qubits, while ``quantum teleportation'' also involves projective measurements and outcome-dependent operations (feedback). 

Both tasks transfer $\numQ \geq 1$ ``logical states'' within a given many-body state. A logical state is simply a particular state of a quantum bit---i.e., $\ket{\psi^{\,}_n} = \alpha^{\,}_n \, \ket{0}^{\,}_n + \beta^{\,}_n \ket{1}^{\,}_n$ for the $n$th ``logical qubit'' with logical basis states $\ket{0}^{\,}_n$ and $\ket{1}^{\,}_n$. The simplest protocols $\chan$ transfer logical states between individual physical qubits; however, logical states may also be ``embedded'' in many physical qubits. 

Logical qubits can also be characterized in terms of their associated logical \emph{operators}. If the many-body state $\ket{\Psi}$ encodes $k\geq 1$ logical states $\ket{\psi^{\,}_n}$ with coefficients $\alpha^{\,}_n,\beta^{\,}_n$ (for $1\leq n \leq k$), then the corresponding logical basis operators $\LXp{(n)},\LYp{(n)},\LZp{(n)}$ are defined such that,
\begin{subequations}
\label{eq:logical alpha beta expval}
    \begin{align}
    \matel*{\Psi}{\LXp{(n)}}{\Psi}  &= \matel*{\psi^{\vpp}_n}{\PX{}}{\psi^{\vpp}_n} =  2 \, \mathrm{Re}(\alpha^*_n \beta^{\vpp}_n) \label{eq:logical X expval gen} \\
    \matel*{\Psi}{\LYp{(n)}}{\Psi}   &= \matel*{\psi^{\vpp}_n}{\PY{}}{\psi^{\vpp}_n} =  2\, \mathrm{Im}(\beta^*_n \alpha^{\vpp}_n) \label{eq:logical Y expval gen}  \\
    \matel*{\Psi}{\LZp{(n)}}{\Psi}  &= \matel*{\psi^{\vpp}_n}{\PZ{}}{\psi^{\vpp}_n} = \abs{\alpha^{\vpp}_n}^2 - \abs{\beta^{\vpp}_n}^2 \, ,~~\label{eq:logical Z expval gen} 
    \end{align}
\end{subequations}
so that the logical Paulis $\LXp{(n)},\LYp{(n)},\LZp{(n)}$ span all logical operations on the $n$th logical qubit. 

In general, the logical operators must reproduce the Pauli algebra and satisfy Eq.~\ref{eq:logical alpha beta expval}. However, in the context of state transfer and teleportation, we assume that the initial and final many-body states are separable with respect to the $\numQ$ logical qubits and all other physical qubits, and realized on $\numQ$ physical sites. Often, the $n$th logical qubit is realized on the physical qubit $j$, so that $\LXp{(n)},\LYp{(n)},\LZp{(n)}$ reduce to the standard Paulis $\PX{j}$, $\PY{j}$, and $\PZ{j}$.

\begin{defn}\label{def:state transfer}
A unitary protocol $\chan$ acting on $\Hilbert^{\,}_{\rm ph}$ \eqref{eq:physical Hilbert space} achieves \emph{quantum state transfer} of $\numQ$ logical qubits from the initial logical vertex set $I = \{i_1, \dots, i_{\numQ} \} \subset V$ to the final logical vertex set $F  = \{f_1, \dots, f_{\numQ} \} \subset V$ if 
\begin{equation}
\label{eq:W state transfer}
    \ket{ \Psi^{\vpp}_T \left( \bvec{\alpha},\bvec{\beta} \right) } =  \chan\, \ket{ \Psi^{\vpp}_0 \left( \bvec{\alpha},\bvec{\beta} \right) } \, ,~~
\end{equation}
with the initial and final states given by
\begin{align}
    \ket{ \Psi^{\vpp}_0 \left( \bvec{\alpha},\bvec{\beta} \right) }  &=  \left[ \bigotimes\limits_{n =1}^{\numQ} \, \ket{\psi^{\vps}_n(\alpha^{\vps}_n, \beta^{\vps}_n)}^{\vpd}_{i_n} \right] \otimes \ket{\Phi^{\vpp}_0}^{\vpd}_{I^c} \label{eq:teleport initial mb state} \\
    \ket{ \Psi^{\vpp}_T \left( \bvec{\alpha},\bvec{\beta} \right) } &= \left[ \bigotimes\limits_{n =1}^{\numQ} \, \ket{\psi^{\vps}_n(\alpha^{\vps}_n, \beta^{\vps}_n)}^{\vpd}_{f_n} \right] \otimes \ket{\Phi^{\vpp}_T}^{\vpd}_{F^c}\, , ~\label{eq:teleport final mb state}
\end{align}
where the logical state $\ket{\psi^{\,}_n} = \alpha^{\,}_n \ket{0} + \beta^{\,}_n \ket{1}$ is transferred from site $i^{\,}_n \in I$  \eqref{eq:teleport initial mb state} 
to site $f^{\,}_n \in F$  \eqref{eq:teleport final mb state}, 
$\bvec{\alpha} = (\alpha^{\,}_1, \dots, \alpha^{\,}_{\numQ})$ and  $\bvec{\beta} = (\beta^{\,}_1, \dots, \beta^{\,}_{\numQ})$ 
satisfy $\abs{\alpha^{\,}_n}^2 +\abs{\beta^{\,}_n}^2 =1$ for each $n$, and $\ket{\Phi^{\,}_0}$ and $\ket{\Phi^{\,}_T}$ are arbitrary many-body states on the $\Nspins-\numQ$ qubits in $I^c$ and $F^c$, respectively. 
\end{defn}

We now present Prop.~\ref{prop:teleportation conditions}, which establishes necessary and sufficient conditions for quantum state transfer in terms of logical and stabilizer operators, an equivalence between the transfer of logical states and logical operators, and that the logical operators in the initial and final many-body states are spanned by the Pauli operators $\PX{i_n},\PZ{i_n}$ and $\PX{f_n},\PZ{f_n}$, for $i^{\,}_n \in I$ and $f^{\,}_n \in F$,  respectively.

\begin{prop}[State-transfer conditions]
\label{prop:teleportation conditions}
Given an initial state of the form \eqref{eq:teleport initial mb state}, the following conditions are necessary and sufficient for a unitary protocol $\chan$ to achieve quantum state transfer of $\numQ$ logical states $\ket{\psi^{\,}_n}$,
\begin{subequations}
\label{eq:teleportation conditions}
\begin{align}
    \chandag \PX{f_n} \chan &= \PX{i_n} \, \StabEl^{}_{n,x}  \label{eq:X teleportation condition} \\
    \chandag \PZ{f_n} \chan &= \PZ{i_n} \, \StabEl^{}_{n,z}  \label{eq:Z teleportation condition} \, ,
\end{align}
\end{subequations}
for $1 \leq n \leq \numQ$, where $\StabEl^{}_{n,\nu} \in \UStabOf{\ket{\Psi^{\,}_0}}$ \eqref{eq:UStab} are unitary and stabilize $\ket{\Psi^{\,}_0}$ \eqref{eq:teleport initial mb state}; see also Def.~\ref{def:state transfer}.
\end{prop}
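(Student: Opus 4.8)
The plan is to establish the equivalence by proving each implication in turn, using throughout that the single-site Paulis at the logical vertices act as logical operators on the product states \eqref{eq:teleport initial mb state} and \eqref{eq:teleport final mb state}: $\PX{i_n}\ket{\psi_n(\alpha_n,\beta_n)} = \ket{\psi_n(\beta_n,\alpha_n)}$ and $\PZ{i_n}\ket{\psi_n(\alpha_n,\beta_n)} = \ket{\psi_n(\alpha_n,-\beta_n)}$, and likewise at $f_n$. I would also use the structural feature that the ancillary states $\ket{\Phi_0}$ and $\ket{\Phi_T}$ are independent of the logical input $(\bvec{\alpha},\bvec{\beta})$, so that acting with $\PX{i_n}$ or $\PZ{i_n}$ on an initial state merely relabels its logical coefficients and returns another valid initial state of the same form.

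For \emph{necessity}, I would act with $\chan$ on such a relabeled input and invoke the state-transfer hypothesis \eqref{eq:W state transfer} for both the original and relabeled coefficients. Since the same relabeling at $f_n$ is implemented by $\PX{f_n}$ on the final state, this gives $\chan\,\PX{i_n}\,\ket{\Psi_0(\bvec{\alpha},\bvec{\beta})} = \PX{f_n}\,\chan\,\ket{\Psi_0(\bvec{\alpha},\bvec{\beta})}$ for every admissible $(\bvec{\alpha},\bvec{\beta})$. Defining $\StabEl_{n,x} := \PX{i_n}\,\chandag\,\PX{f_n}\,\chan$ makes \eqref{eq:X teleportation condition} hold by construction, and rearranging the displayed relation shows $\StabEl_{n,x}\ket{\Psi_0} = \ket{\Psi_0}$ for \emph{all} inputs, i.e.\ $\StabEl_{n,x} \in \UStabOf{\ket{\Psi_0}}$. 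The argument for \eqref{eq:Z teleportation condition} is identical, replacing $X$ by $Z$ at the sites $i_n$ and $f_n$.

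For \emph{sufficiency}, the first step is to note that, because each $\StabEl_{n,\nu}$ fixes every initial state, the operator conditions \eqref{eq:teleportation conditions} collapse on the initial code space to the clean relations $\chan\,\PX{i_n}\,\ket{\Psi_0} = \PX{f_n}\,\chan\,\ket{\Psi_0}$ and $\chan\,\PZ{i_n}\,\ket{\Psi_0} = \PZ{f_n}\,\chan\,\ket{\Psi_0}$. Writing a generic input as $\ket{\Psi_0(\bvec{\alpha},\bvec{\beta})} = \prod_n (\alpha_n\ident + \beta_n\,\PX{i_n})\,\ket{\Psi_0(\bvec{0})}$, with $\ket{\Psi_0(\bvec{0})}$ the reference state having each logical qubit in $\ket{0}$, I would transfer the $\PX{i_n}$ factors through $\chan$ onto the $\PX{f_n}$ by induction on the number of factors---each partial product $\prod_{n \in S}\PX{i_n}\ket{\Psi_0(\bvec{0})}$ is again a logical-basis input, so the $X$-relation applies at every step. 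This yields $\chan\,\ket{\Psi_0(\bvec{\alpha},\bvec{\beta})} = \prod_n(\alpha_n\ident + \beta_n\,\PX{f_n})\,\chan\,\ket{\Psi_0(\bvec{0})}$.

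It then remains to identify $\chan\,\ket{\Psi_0(\bvec{0})}$. As the reference is a simultaneous $+1$ eigenstate of all $\PZ{i_n}$, the $Z$-relation forces $\chan\,\ket{\Psi_0(\bvec{0})}$ to be a simultaneous $+1$ eigenstate of every $\PZ{f_n}$, which necessarily factorizes as $\bigl( \bigotimes_n \ket{0}_{f_n} \bigr) \otimes \ket{\Phi_T}_{F^c}$ for some $\ket{\Phi_T}$; substituting and acting with $\prod_n(\alpha_n\ident + \beta_n\,\PX{f_n})$ reconstructs exactly the final product state \eqref{eq:teleport final mb state}, which is state transfer. The step I expect to be most delicate is the stabilizer bookkeeping: one must read \eqref{eq:teleportation conditions} with the $\StabEl_{n,\nu}$ fixing the \emph{entire} $2^\numQ$-dimensional code space rather than a single input, as this is exactly what lets the conditions descend to the clean Heisenberg relations used above. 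A related consistency check is that no independent condition on the logical-$Y$ operator is required, since $Y = \ii X Z$ and the $X$- and $Z$-relations already fix the transferred qubit once the residual stabilizers are seen to cancel on the relevant states.
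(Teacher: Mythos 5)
Your proof is correct. The necessity half coincides with the paper's own argument (App.~\ref{app:Proof state=operator teleport}): you define $\StabEl^{}_{n,\nu} = \Pauli{\nu}{i_n}\,\chandag\,\Pauli{\nu}{f_n}\,\chan$ and check that it fixes every code state by tracking how the Paulis relabel the logical coefficients under the Schr\"odinger-picture transfer. The sufficiency half, however, takes a genuinely different route. The paper argues via the logical density matrix: it writes $\BKop{\psi^{\,}_n}{\psi^{\,}_n}^{\,}_{i_n} = \tfrac{1}{2}\,(\ident + \sum_{\nu} c^{(n)}_{\nu}\,\Pauli{\nu}{i_n})$, inserts the stabilizers, pulls all three Paulis through $\chan$ (which is why it also needs the derived $Y$-condition \eqref{eq:Y teleportation condition}), concludes that $\chan\ket{\Psi^{\,}_0}$ is stabilized by $\BKop{\psi^{\,}_n}{\psi^{\,}_n}^{\,}_{f_n}$, and closes with a fidelity-equals-one argument to identify the reduced state on each $f_n$. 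You instead expand the input as $\prod_n (\alpha^{\,}_n \ident + \beta^{\,}_n \PX{i_n})\,\ket{\Psi^{\,}_0(\bvec{0})}$, commute the $\PX{i_n}$ factors through $\chan$ by induction over computational-basis inputs, and pin down $\chan\,\ket{\Psi^{\,}_0(\bvec{0})}$ from the simultaneous $+1$ eigenvalue conditions for the $\PZ{f_n}$. Your route is more elementary---no density matrices or fidelity---and never invokes the $Y$-relation; it also makes explicit that $\ket{\Phi^{\,}_T}$ in \eqref{eq:teleport final mb state} is a single fixed state independent of $(\bvec{\alpha},\bvec{\beta})$, since it is extracted once from the reference state, whereas the paper obtains this only implicitly through purity of the reduced states. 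What the paper's approach buys is per-qubit locality: each logical qubit is handled by its own reduced density matrix, with no global basis expansion. Finally, you correctly flag the reading that both proofs rely on: the $\StabEl^{}_{n,\nu}$ must stabilize the entire $2^{\numQ}$-dimensional code space of states \eqref{eq:teleport initial mb state}, not a single input---this is precisely what licenses inserting them in front of an arbitrary initial state.
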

The proof of Prop.~\ref{prop:teleportation conditions} appears in App.~\ref{app:Proof state=operator teleport}. Loosely speaking, Prop.~\ref{prop:teleportation conditions} establishes that the task of transferring $\numQ$ logical states from the logical sites $I=\{i_1,\dots,i_{\numQ}\} \subset V$ to the logical sites $F=\{f_1,\dots,f_{\numQ}\} \subset V$ in the Schr\"odinger picture is equivalent to transferring $\numQ$ pairs of $X$- and $Z$-type logical operators from the logical sites in $F$ to those in $I$ in the Heisenberg picture. The result for the $n$th $Y$-type logical operator follows from $\PY{} = \ii \PX{} \PZ{}$:
\begin{align}\label{eq:Y teleportation condition}
    \chandag \PY{f_n} \chan &= \PY{i_n} \, \StabEl^{}_{n,y} \, ,
\end{align}
where $\StabEl^{}_{n,y} \equiv \PZ{i_n} \StabEl^{}_{n,x} \PZ{i_n} \StabEl^{}_{n,z} \in \UStabOf{\ket{\Psi^{\vps}_0}}$. 

Importantly the result \eqref{eq:teleportation conditions} of Prop.~\ref{prop:teleportation conditions} captures how \emph{any} logical operation $\observ^{\,}_n$ on the $n$th logical qubit in the \emph{final} state $\ket{\Psi^{\,}_T}$ \eqref{eq:teleport final mb state} 
is related to the same logical operation $\observ^{\,}_n$ on the $n$th logical qubit in the \emph{initial} state $\ket{\Psi^{\,}_0}$ \eqref{eq:teleport initial mb state}. Since the logical Paulis $\Pauli{\nu}{n,{\rm L}}$ for $\nu=1,2,3$ form a complete basis for $\observ^{\,}_n$ in the final state $\ket{\Psi^{\,}_T}$ \eqref{eq:teleport final mb state}, Prop.~\ref{prop:teleportation conditions} implies that 
\begin{align}
    O^{\vpp}_{f_n} (T) &= \chan^{\dagger} \, O^{\vpp}_{f_n} \chan \notag \\
    &= \frac{1}{2} \sum\limits_{\nu=0}^3 \, \trace\limits_{f_n} \left[ O^{\vpp}_{f_n} \, \Pauli{\nu}{f_n}\right] \, \chan^{\dagger} \, \Pauli{\nu}{f_n} \, \chan \notag \\
    &= \frac{1}{2} \sum\limits_{\nu=1}^3 \, \trace\limits_{i_n} \left[ O^{\vpp}_{i_n} \, \Pauli{\nu}{i_n}\right] \, \Pauli{\nu}{i_n} \, \StabEl^{}_{n,\nu} \notag \\
    &\equiv O^{\vpp}_{i_n} \StabEl^{}_{n,O} \, ,
    \label{eq:logical observable transfer}
\end{align}
is equivalent to $O^{\,}_{i_n}$ acting on $\ket{\Psi^{\,}_0}$ \eqref{eq:teleport initial mb state}.

Prop.~\ref{prop:teleportation conditions} establishes that state transfer requires growing the support of logical operators. Quantitatively, the degree of ``overlap'' between two operators can be captured by commutator norms. To this end, we define the standard \emph{spectral norm} of an operator as
\begin{equation}
    \label{eq:spectral norm}
    \norm{\mobserv} \equiv \max\limits_{\lambda \in {\rm eigs}(\mobserv)} \, \abs{\lambda} \, ,~~
\end{equation}
which is  magnitude of the largest eigenvalue of $\mobserv$. Using the fact that $\norm{\acomm{\Pauli{\nu}{n,{\rm L}}}{\StabEl} } = 2$, we find that \cite{SpeedLimit}
\begin{align}
\label{eq:X Z Heisenberg commutator}
    \norm{\comm{\chandag \PX{f_n} \chan}{\PZ{i_n}}} = 2 \, ,
\end{align}
meaning that successful state transfer implies that the Heisenberg-evolved logical operator $\PX{f_n}(T)$ for the $n$th logical qubit realizes the operator $\PX{i_n}$ acting on the initial state, and obeys the Pauli algebra \eqref{eq:X Z Heisenberg commutator}.

Finally, we define the ``task distance'' $\Dist$ for quantum state transfer (and later, teleportation), as the minimum distance any logical qubit is transferred.

\begin{defn}
    \label{def:task dist}
    The \emph{task distance} (or \emph{teleportation distance}) for a protocol $\chan$ that maps the initial state $\ket{\Psi^{\,}_0}$ \eqref{eq:teleport initial mb state} with $\numQ$ logical qubits on the sites $I=\{i_1,i_2,\dots,i_{\numQ}\}\subset V$ to the state $\ket{\Psi^{\,}_T}$ \eqref{eq:teleport final mb state} with logical qubits on the sites $F = \{f_1,f_2,\dots,f_{\numQ}\}\subset V$ is given by
    \begin{equation}
        \label{eq:L def}
        \Dist \equiv \min\limits_n \, d(i_n,f_n) \, , ~~
    \end{equation}
    where the 
    metric $d(x,y)$ is defined by the graph $G$.
\end{defn}

\subsection{Stinespring measurement}
\label{subsec:Stinespring}

To facilitate our discussion of quantum teleportation---which involves measurements and outcome-dependent feedback---we now review the Stinespring formalism \cite{SpeedLimit, AaronMIPT, AaronDiegoFuture, Stinespring}, which derives from the Stinespring Dilation Theorem \cite{Stinespring, ChoisThm, Takesaki}. The formalism leads to a \emph{unitary} representation of projective measurements and outcome-dependent operations in a \emph{dilated} Hilbert space,
\begin{equation}
    \label{eq:dilated Hilbert space}
    \Hilbert^{\vpd}_{\rm dil} \equiv \Hilbert^{\vpd}_{\rm ph} \otimes \Hilbert^{\vpd}_{\rm ss} \, ,~~
\end{equation}
where $\Hilbert^{\,}_{\rm ph}$ is the physical Hilbert space \eqref{eq:physical Hilbert space} and $\Hilbert^{\,}_{\rm ss}$ is the ``Stinespring'' Hilbert space. For a protocol $\chan$ involving $\Nmeas$ single-qubit measurements, $\Hilbert^{\,}_{\rm ss} = \Comps^{2 \Nmeas}$ contains $\Nmeas$ qubits, which store the $\Nmeas$ outcomes (and reflect the states of the corresponding measurement apparati).

The crucial result of the Stinespring Theorem \cite{Stinespring, Takesaki, ChoisThm} in this context is that \emph{all} valid quantum channels (i.e., positive maps) can be represented via \emph{isometries}. An isometry is a map $\isometry \, :\,  \Hilbert \to \Hilbert'$, where $\HilDim \leq \HilDim'$. When $\HilDim=\HilDim'$, $\isometry$ is a \emph{unitary}; when $\HilDim<\HilDim'$, $\isometry$ \emph{dilates} $\Hilbert$ to $\Hilbert'= \Hilbert^{\,}_{\rm dil}$  \eqref{eq:dilated Hilbert space}. Crucially, one can always \emph{embed} the isometry $\isometry \, :\,  \Hilbert^{\,}_{\rm ph} \to \Hilbert^{\,}_{\rm dil}$ in a unitary $\cliff \, : \, \Hilbert^{\,}_{\rm dil} \to \Hilbert^{\,}_{\rm dil}$ by working in $\Hilbert^{\,}_{\rm dil}$ from the outset; this merely requires identifying a \emph{default} initial state of all measurement apparati, which we take to be $\ket{0}^{\,}_{\rm ss}$ without loss of generality \cite{Takesaki, ChoisThm, AaronMIPT, SpeedLimit, Stinespring, AaronDiegoFuture}.

We first consider the Stinespring representation of projective measurements. An observable $\mobserv$ with $\Noutcome$ unique eigenvalues $\meig{n}$ has a \emph{spectral decomposition} given by
\begin{align}
    \label{eq:spectral decomp}
    \mobserv &= \sum\limits_{n=0}^{\Noutcome-1} \, \mEig{n} \, \Proj{n}{\mobserv} \, ,~~
\end{align}
where the projectors are orthogonal, idemptotent, and complete. The trace of each projector is equal to the degeneracy of the corresponding eigenvalue. The dilated unitary channel that captures measurement of $\mobserv$ \eqref{eq:spectral decomp} is
\begin{align}
\label{eq:unitary meas general}
    \Umeas{\mobserv} &= \sum\limits_{m,n=0}^{\Noutcome-1} \, \Proj{m}{\mobserv} \otimes \BKop{n+m}{n}^{\vpp}_{\rm ss} \, ,~~
\end{align}
which is unique up to the definition of the Stinespring basis and choice of default state \cite{Stinespring, Takesaki, AaronMIPT, SpeedLimit, AaronDiegoFuture, ChoisThm}. Projecting the post-measurement state onto the $n$th Stinespring state leads to the physical state $\ket{\psi'} \propto \Proj{n}{\mobserv} \, \ket{\psi}$ \eqref{eq:spectral decomp}, as required. For example, when $\mobserv = \PZ{j}$, we have
\begin{align}
\label{eq:fluorescent measurement}
    \Umeas{\PZ{j}} &= \frac{1}{2} \sum\limits_{m=0,1} \left( \ident + (-1)^m \, \PZ{j} \right) \otimes \SSXp{j}{m} ~~  \nonumber \\
    &= \Proj{0}{j} \otimes \SSid{j} + \Proj{1}{j} \otimes \SSXp{j} \, ,
\end{align}
which is simply a CNOT gate from the physical site $j$ to the Stinespring qubit $j$. In general, we use tildes to denote operators on the Stinespring registers, which we label according to the corresponding observable. 

The Stinespring formalism also represents outcome-dependent operations as controlled unitary operations on the physical qubits, conditioned on the Stinespring qubits. Since we assume classical communication (of the outcomes) to be \emph{instantaneous}, such operations may be conditioned on \emph{any} prior outcomes (in the Schr\"odinger picture), and take the generic form
\begin{align}
    \label{eq:unitary feedback general}
    \QECChannel^{\vpd}_{\Omega} &= \sum\limits_{\bvec{n}} \, \ConjChan{\bvec{n}} \otimes \SSProj{\bvec{n}}{\Omega} \, , ~~
\end{align}
where $\SSProj{\bvec{n}}{\Omega}$ is an $\SSZ{}$-basis projector onto the outcome ``trajectory'' $\bvec{n}=(n^{\,}_1, \dots, n^{\,}_{s})$ for the subset $\Omega \subset V^{\,}_{\rm ss}$.  The ``recovery operators'' $\conjchan^{\,}_{\bvec{n}}$ must act \emph{unitarily} on $\Hilbert^{\,}_{\rm ph}$ and/or any unused Stinespring registers. For example, one might apply $\PX{i}$ to the physical qubit $i$ if the $j$th outcome was $-1$, and do nothing otherwise, so that $\QECChannel^{\,}_{\Omega}$~\eqref{eq:unitary feedback general} becomes
\begin{align}
\label{eq:feedback CNOT}
    \QECChannel^{\vpd}_{j \to i} &=  \ident^{\vpd}_{i} \otimes \SSProj{0}{j} + \PX{i} \otimes \SSProj{1}{j} \, , ~~
\end{align}
which is simply a CNOT gate from the Stinespring site $j$ to the physical qubit $i$, in contrast to $\MeasChannel^{\,}_{[Z]} $\eqref{eq:fluorescent measurement}.

\subsection{Physical teleportation}
\label{subsec:phys teleport}

We are now equipped to define quantum teleportation in a general sense. Importantly, we distinguish this notion of \emph{physical teleportation}---as described in Def.~\ref{def:phys teleport}---from the slightly restricted class of quantum teleportation protocols we consider later on. We then consider a  canonical example teleportation protocol involving the 1D cluster state~\cite{Briegel_2001}; the corresponding circuit is depicted in Fig.~\ref{fig:cluster state circuit}.

\begin{defn}
\label{def:phys teleport}
\emph{Physical teleportation} refers to any unitary protocol $\chan$ acting on a \emph{dilated} Hilbert space $\Hilbert^{\,}_{\rm dil}$ \eqref{eq:dilated Hilbert space} that (i) fulfills the criteria for quantum state transfer (see  Def.~\ref{def:state transfer}) and (ii) involves the combination of $\Nmeas>1$ measurements, instantaneous classical communication, and outcome-dependent operations.
\end{defn}

We stress that physical teleportation requires that the $\numQ$ logical qubits be localized to $\numQ$ physical qubits both before and after the protocol $\chan$. This is in contrast to more general definitions of teleportation---which we refer to as ``virtual teleportation''---in which the logical qubits may be spread out amongst many physical qubits, but can nonetheless be decoded using certain measurements thereof. This is the case for general teleportation protocols based on matrix product states~\cite{Gross2007,DominicMBQC_SPT}; we relegate the treatment of virtual teleportation to future work~\cite{VirtualTeleportFuture}. 

Importantly, the class of teleportation protocols described by Def.~\ref{def:phys teleport} is too broad to constrain. However, as far as we are aware, all \emph{known} physical teleportation protocols---in which the final state is of the form $\ket{\Psi^{\,}_T}$~\eqref{eq:teleport final mb state}---belong to a slightly restricted class of protocols. Before defining that class in Sec.~\ref{sec:restricted teleportation}, we briefly consider a canonical example protocol belonging to this class (depicted in Fig.~\ref{fig:cluster state circuit} for $\Nspins=5$ qubits) based on the 1D cluster state~\cite{Briegel_2001} that teleports a single logical qubit.

\begin{figure}[t]
\centering
\includegraphics[width=0.45\textwidth]{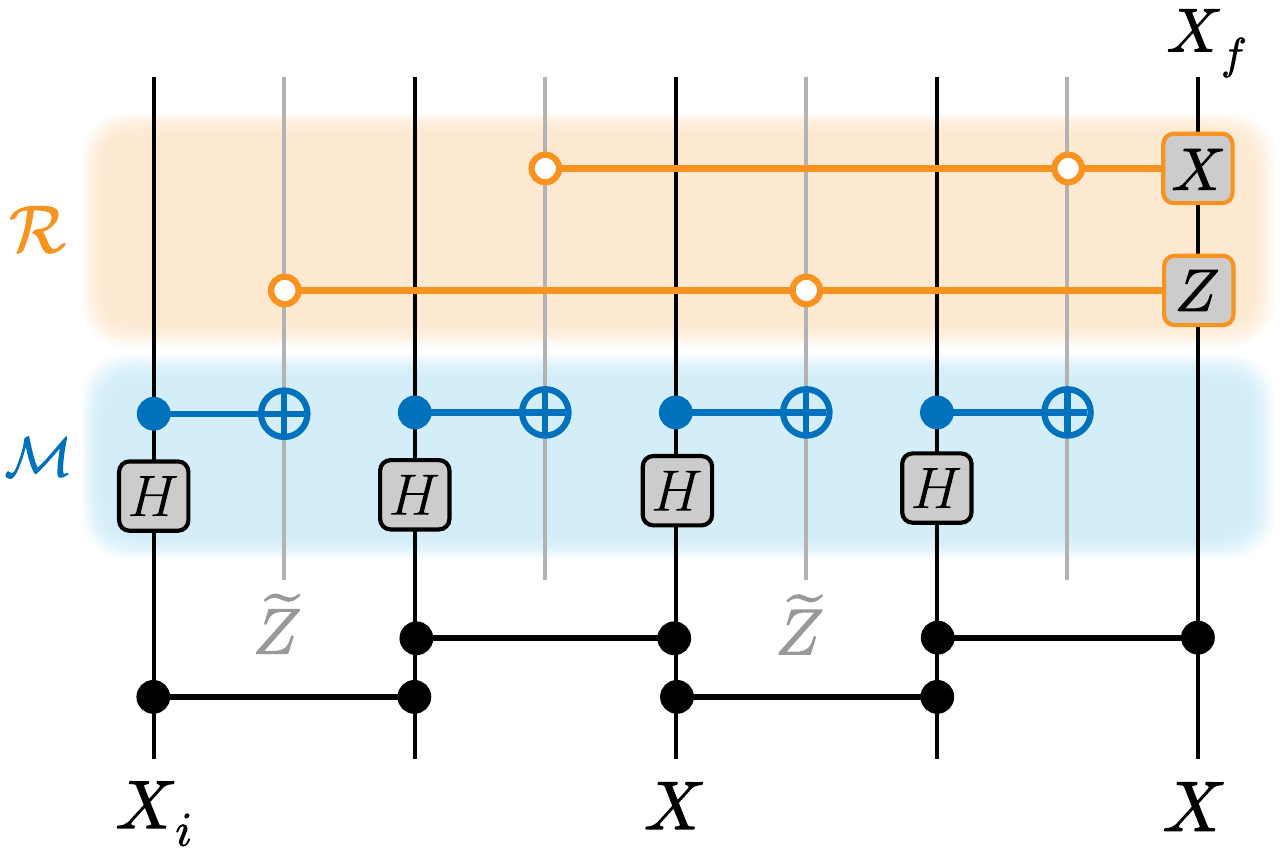}
\caption{Dilated quantum circuit for 1D cluster-state teleportation of a single logical qubit a distance $\Dist=4$, along with Heisenberg evolution of the logical $\PXp{\rm L}{}$  operator. The $\Nmeas=4$ Stinespring qubits are depicted in gray, the measurement channel in blue, and the outcome-dependent operations in orange (the operation on site $f=5$ is conditioned on the parity of the Stinespring qubits, indicated by white circles). The initial state is given by Eq.~\ref{eq:cluster initial state}, with $\ket{\psi}$ on the leftmost site, so that $\PXp{\rm L}{}(T) = \PXp{1}{} \PXp{3}{} \PXp{5}{}$ is a valid logical operator.}
\label{fig:cluster state circuit}
\end{figure}

The initial state is the product state
\begin{align}\label{eq:cluster initial state}
    \ket{\Psi^{\vps}_0} = \ket{\psi}^{\vpp}_1 \otimes \ket{+++\dots}^{\vpp}_{\rm rest} \otimes \ket{\bvec{0}}^{\vpp}_{\rm ss} \, ,~~
\end{align}
where $\ket{+}$ is the $+1$ eigenstate of $\PX{}$,  $\ket{\bvec{0}}=\ket{0\cdots 0}$ is the default Stinespring state, and we take $\Nspins$ to be odd.  We first apply to $\ket{\Psi^{\,}_0}$~\eqref{eq:cluster initial state} the (physical) unitary circuit
\begin{align}
\label{eq:cluster CZ circ}
    \uchan = \prod_{j=1}^{\Nspins-1} \mathrm{CZ}^{\vps}_{j,j+1} \, ,
\end{align}
where the controlled-$\PZ{}$ gate $\mathrm{CZ}=\mathrm{diag}(1,1,1,-1)$ in the two-qubit computational ($\PZ{}$) basis. Applying $\uchan$ to $ \ket{\Psi^{\vps}_0}$ prepares the so-called cluster state \cite{Briegel_2001}.

We then measure $\PX{}$ on all sites $j=1,\dots,\Nspins-1$ (where $i=1$ and $f=\Nspins$ are the initial and final logical sites, respectively). This is represented by the measurement channel $\MeasChannel$, corresponding to the blue-shaded region of Fig.~\ref{fig:cluster state circuit}, which is the product of the channels
\begin{align}
    \MeasChannel^{\vpd}_j = \Proj{+}{j} \otimes \SSid{j} + \Proj{-}{j} \otimes \SSX{j} \, ,~~
\end{align}
where the projectors are in the $\PX{j}$ basis, and the Stinespring registers are marked in light gray in Fig.~\ref{fig:cluster state circuit}. 

Finally, we apply the outcome-dependent recovery operations. The parity of outcomes on ``odd'' sites $j=2n+1$ (with $j<\Nspins$) determines a final $\PZ{\Nspins}$ rotation, while the parity of outcomes on ``even'' sites $j=2n+2$ (with $j<\Nspins)$ determines a final $\PX{f}=\PX{\Nspins}$ rotation, according to  
\begin{subequations}
\label{eq:cluster QEC}
\begin{align}
    \QECChannel^{\vpd}_z &= \ident^{\vpd}_{f} \otimes \SSProj{0}{\rm even} + \PX{f} \otimes \SSProj{1}{\rm even} \\
    \QECChannel^{\vpd}_x &= \ident^{\vpd}_{f} \otimes \SSProj{0}{\rm odd} + \PZ{f} \otimes \SSProj{1}{\rm odd} \, ,
\end{align}
\end{subequations}
where, e.g., $\SSProj{n}{\rm even}$ projects onto the parity-$n$ configuration $\SSZ{\rm even} = (-1)^n$ of the even Stinespring qubits. The channels above appear in the  orange-shaded part of Fig.~\ref{fig:cluster state circuit}; they are named according to the logical operator they modify, and may be applied in either order. After applying these three steps, the initial logical state $\ket{\psi}$ appears on site $f$ for any sequence of measurement outcomes.

Having defined the cluster-state teleportation protocol $\chan$, we can evolve the final-state logical operator $\LX=\PX{f}$  backwards in time via the Heisenberg picture. We have
\begin{subequations}
\label{eq:cluster R update}
\begin{align}
    \QECChannel^\dagger \PX{f} \QECChannel &= \QECChannel^\dagger_x \PX{f} \QECChannel^{\vpd}_x = \PX{f} \prod_{j=1}^{\Dist/2} \SSZ{2j-1} \label{eq:cluster R update X}\\
    \QECChannel^\dagger \PZ{f} \QECChannel &= \QECChannel^\dagger_z \PZ{f} \QECChannel^{\vpd}_z = \PZ{f} \,  \prod_{j=1}^{\Dist/2} \SSZ{2j} \, ,~~\label{eq:cluster R update Z}
\end{align}
\end{subequations}
where $\Dist = \Nspins-1$. Next, evolving through $\MeasChannel$ gives
\begin{subequations} \label{eq:cluster premeasurement logical}
\begin{align}
    \MeasChannel^\dagger \QECChannel^\dagger \PX{f} \QECChannel \MeasChannel &= \PX{f}\, \prod_{j=1}^{\Dist/2} \PX{2j-1} \, \SSZ{2j-1}  \\
    \MeasChannel^\dagger \QECChannel^\dagger \PZ{f} \QECChannel \MeasChannel &= \PZ{f}\, \prod_{j=1}^{\Dist/2} \PX{2j} \, \SSZ{2j}  \, ,~~
\end{align}
\end{subequations}
and evolving through the circuit $\uchan$ \eqref{eq:cluster CZ circ} leads to
\begin{subequations}
\begin{align}
    \chandag \PX{f} \chan &= \PX{i}\, \underbrace{\prod_{j=1}^{\Dist/2} \PX{2j+1}\, \SSZ{2j-1}}_{\StabEl^{}_x} \\
    \chandag \PZ{f} \chan &= \PZ{i}\, \underbrace{\prod_{j=1}^{\Dist/2} \PX{2j} \, \SSZ{2j}}_{\StabEl^{}_z} \, ,~~
\end{align}
\end{subequations}
from which we identify $\StabEl^{\,}_x$ and $\StabEl^{\,}_z$ according to Prop.~\ref{prop:teleportation conditions}, and confirm that they indeed stabilize $\ket{\Psi^{\,}_0}$ \eqref{eq:cluster initial state}.

We now comment on several properties of the foregoing protocol. Observe that there are effectively two types of commuting measurements, corresponding to even versus odd sites. There are also two types of recovery channel---one for each logical operator---which depend only on the measurement outcomes on the even sites and odd sites, respectively. Finally, if we take the stabilizers $\StabEl^{}_x$ and $\StabEl^{}_z$ and evolve them forwards through $\uchan$, we find
\begin{subequations} \label{eq:cluster string ops}
\begin{align}
    \uchan \, \StabEl^{\vpp}_x \, \uchan^\dagger &= \PZ{2} \left(\prod_{j=2}^{\Dist/2} \PX{2j-1} \right)\PX{N} \\
    \uchan \, \StabEl^{\vpp}_z \, \uchan^\dagger &= \PZ{1}\PX{2} \left( \prod_{j=2}^{\Dist/2} \PX{2j} \right) \PZ{N}  ,~~
\end{align}
\end{subequations}
up to $\SSZ{}$ operators, which we recognize as the string order parameters of the cluster state \cite{Briegel_2001}. Observe that, while these operators necessarily commute in the bulk of the chain, the \emph{endpoint} operators anticommute---i.e., $\{\PZ{2},\PZ{1}\PX{2}\}=\{\PX{\Nspins},\PZ{\Nspins}\}=0$. This anticommutation demonstrates the nontrivial SPT order of the cluster state, as we discuss in further detail in Sec.~\ref{subsec:SPT intro}. 

In the remainder, we prove that many of the above features---including the emergence of string order parameters for the state to which $\MeasChannel$ is applied---are ubiquitous features of a large class of teleportation protocols.

\subsection{Bounds on teleportation}
\label{subsec:bounds}

Before defining a restricted class of teleportation protocols in Sec.~\ref{sec:restricted teleportation}, we briefly discuss general bounds on generic useful quantum tasks that derive from \emph{locality}. We first consider bounds for unitary protocols $\chan = \uchan$ (e.g., state transfer) that act only on $\Hilbert^{\,}_{\rm ph}$ \eqref{eq:physical Hilbert space}. In general, we regard $\uchan$ as a finite-depth quantum circuit (FDQC), though \emph{all} of our results are compatible with $\uchan$ being generated by evolution under a local Hamiltonian, up to exponentially small operator tails. In either case, the task distance $\Dist$ \eqref{eq:L def} obeys the Lieb-Robinson Theorem \cite{Lieb1972}, which we state below as Theorem~\ref{thm:LR theorem}.
 \begin{thm}[Lieb Robinson \cite{Lieb1972}]
    \label{thm:LR theorem}
    Consider a unitary $\chan$ realizing time evolution on $\Hilbert^{\,}_{\rm ph}$ \eqref{eq:physical Hilbert space} for total time $T$, and two operators $A$ and $B$ with support in regions $X$ and $Y$, where $\norm{A}=\norm{B}=1$ \eqref{eq:spectral norm} and $d(X,Y)=\Dist$. Then,
    \begin{equation}
        \label{eq:LR bound}
        \norm{\comm{\, \chan^{\dagger} A \, \chan \, }{B}} \lesssim e^{\LRvel \, T- \Dist} \, , ~~
    \end{equation}
    where the Lieb-Robinson velocity $\LRvel$ depends on $\chan$ and the graph $G$, and the inequality is exact in the limit $\Dist,T \gg 1$. 
\end{thm}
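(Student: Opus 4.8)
The plan is to work entirely in the Heisenberg picture and to track how the support of the evolved operator $\chan^{\dagger} A \chan$ grows with the evolution time $T$. The cleanest setting is the one the paper emphasizes, in which $\chan = \uchan$ is a finite-depth quantum circuit built from gates of bounded range $\gatesize$. First I would observe that conjugating $A$ by a single layer of such a circuit enlarges its support by at most $\gatesize$ in the graph metric $d$: a gate commutes through $A$ unless its support overlaps that of $A$, so after one layer the evolved operator is supported within the $\gatesize$-neighborhood of $X$. Iterating over the $T$ layers, the support of $\chan^{\dagger} A \chan$ lies within distance $\LRvel T$ of $X$ for an effective velocity $\LRvel \sim \gatesize$. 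If $\Dist = d(X,Y) > \LRvel T$, then $\chan^{\dagger} A \chan$ and $B$ act on disjoint regions and the commutator vanishes exactly; otherwise the trivial bound $\norm{\comm{\chan^{\dagger} A \chan}{B}} \leq 2 \norm{A} \norm{B} = 2$ applies. This strict lightcone is the $\Dist, T \gg 1$ statement, and the smooth envelope $e^{\LRvel T - \Dist}$ is its soft version.

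To obtain the stated exponential form---valid when $\uchan$ is instead generated by a local Hamiltonian $\Ham = \sum_Z h_Z$---I would set up the standard differential inequality. Writing $\tau_t(A) = \chan^{\dagger}(t) A \chan(t)$ and $f(t) = \comm{\tau_t(A)}{B}$, differentiating gives $\dot f(t) = \ii \comm{\Ham}{f(t)} - \ii \comm{\tau_t(A)}{\comm{\Ham}{B}}$ via the Jacobi identity. The first term generates norm-preserving conjugation and can be rotated away, leaving the integral bound $\norm{f(t)} \leq \norm{f(0)} + \int_0^t \norm{\comm{\tau_s(A)}{\comm{\Ham}{B}}} \, \mathrm{d}s$. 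Since $\comm{\Ham}{B} = \sum_{Z \cap Y \neq \emptyset} \comm{h_Z}{B}$ involves only terms anchored near $Y$, and $f(0) = \comm{A}{B} = 0$ for $\Dist > 0$, iterating this inequality produces a Dyson-like series in which each order contributes a sum over local terms stepping from $Y$ toward $X$.

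The main obstacle---and the technical heart of the proof---is the combinatorial resummation of this series. Each term in the expansion corresponds to a sequence of overlapping interaction supports forming a path of length at least $\Dist$ from $Y$ to $X$; bounding the number of such paths against the decay of the interaction strengths requires a reproducing condition on the lattice (finite connectivity and summable tails of the $\norm{h_Z}$). Carefully counting these paths and summing the resulting geometric series in $t$ is what yields the $e^{\LRvel T - \Dist}$ envelope, with $\LRvel$ fixed by the one-norm of the interactions and the lattice geometry. I expect the bookkeeping of this resummation, rather than any single analytic estimate, to be the part demanding the most care; for the finite-depth-circuit case this step collapses to the elementary support-counting argument above.
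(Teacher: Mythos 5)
This theorem is not proven in the paper at all: it is imported verbatim from the literature (the text immediately following the statement says it ``has been proven in the literature for numerous classes of local Hamiltonians (and circuits)'' and cites Refs.~\citenum{Lieb1972, hastings_rev, tight_LR}), so there is no internal proof to compare against---the relevant comparison is with the standard literature argument, which is exactly the route you take. Your circuit-case argument (support of $\chan^{\dagger} A \, \chan$ grows by at most the gate range $\gatesize$ per layer, hence vanishing commutator for $\Dist > \LRvel T$ and the trivial bound $2\norm{A}\,\norm{B}$ otherwise) is complete and correct, and it is in fact all the paper needs, since the paper works with finite-depth circuits and explicitly invokes the sharp linear light cone for that case. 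Your Hamiltonian-case setup is also sound: the Heisenberg equation plus the Jacobi identity gives $\dot f(t) = \ii \comm{\Ham}{f(t)} - \ii \comm{\tau_t(A)}{\comm{\Ham}{B}}$, the homogeneous part is an isometry in operator norm so Duhamel yields $\norm{f(t)} \leq \norm{f(0)} + \int_0^t \norm{\comm{\tau_s(A)}{\comm{\Ham}{B}}}\, \thed s$, and iterating with $\comm{\Ham}{B} = \sum_{Z \cap Y \neq \emptyset} \comm{h_Z}{B}$ anchors the expansion at $Y$ and steps outward, exactly as in the Lieb--Robinson/Nachtergaele--Sims treatment.

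The one caveat is that for the Hamiltonian case you have written a proof \emph{plan}, not a proof: the combinatorial resummation---showing that the $n$th-order term is controlled by $(2 \norm{h} t)^n / n!$ times the number of overlapping-support chains from $Y$ to $X$, that the lowest non-vanishing order is $n \gtrsim \Dist/\gatesize$, and that the resulting series sums to the $e^{\LRvel T - \Dist}$ envelope under a finite-connectivity/summability hypothesis on the interactions---is named but not executed, as you yourself acknowledge. Since that resummation \emph{is} the content of the theorem in the Hamiltonian setting, that half remains incomplete as it stands; but the skeleton is the correct one, and for the purposes of this paper the circuit argument you did complete is sufficient.
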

Theorem~\ref{thm:LR theorem} has been proven in the literature for numerous classes of local Hamiltonians (and circuits) \cite{Lieb1972, hastings_rev, tight_LR}. It has also been extended to long-range interactions \cite{LRfossfeig, chen2019finite, Kuwahara:2019rlw, Tran:2020xpc} and even local Lindblad dynamics \cite{Poulin}, up to minor modifications to the Lieb-Robinson velocity $\LRvel$. In the case of dynamics generated by local quantum circuits, the linear ``light cone'' is sharp, in that the commutator norm \eqref{eq:LR bound} is identically zero until $\LRvel T \geq \Dist$, after which it is $\Order{1}$. Detailed reviews of Lieb-Robinson bounds and their applications appear in, e.g., Refs.~\citenum{hastings_rev} and \citenum{chen2023speed}. 

In the context of teleportation, the use of projective measurements and outcome-dependent operations far from the corresponding measurement can realize a task distance $\Dist$ \eqref{eq:L def} that exceeds the Lieb-Robinson bound \eqref{eq:LR bound}. However, strictly local nonunitary channels are captured by Lindblad dynamics, and obey the \emph{same} Lieb-Robinson bound \eqref{eq:LR bound} \cite{Poulin}; hence, such channels are not useful to teleportation (compared to time evolution alone), and we do not consider them herein. Although teleportation can overcome the conventional Lieb-Robinson bound \eqref{eq:LR bound}, it is nonetheless constrained by an alternative bound due to Ref.~\citenum{SpeedLimit}, which we now state in the specific context of teleporting $\numQ$ qubits as Theorem~\ref{thm:FYHL}.

\begin{thm}[Teleportation bound \cite{SpeedLimit}]
\label{thm:FYHL}
    Consider a protocol $\chan$ involving (i) unitary time evolution generated by a local Hamiltonian $H(t)$ for duration $T$, (ii) $\Nmeas$ local projective measurements, and (iii) arbitrary local operations based all prior measurement outcomes. If $\chan$  teleports $\numQ$ logical states (per Def.~\ref{def:standard teleport}) or generates $\numQ$ bits of entanglement, then the task distance $\Dist$ \eqref{eq:L def} obeys the bound
    \begin{align}
    \label{eq:FYHL bound}
        \Dist \leq 2 \left( 1 + \left\lfloor \frac{\Nmeas}{\numQ} \right\rfloor \right) \LRvel \, T \, , ~~
    \end{align}
    where $\LRvel$ is determined solely by the Hamiltonian $H$ that generates time evolution in $\chan$. If $\chan$ is a circuit, then $T$ is the circuit depth and $\LRvel$ is the maximum distance between any two qubits acted upon by a single gate.  
\end{thm}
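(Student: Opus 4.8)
The plan is to work entirely in the dilated picture of Sec.~\ref{subsec:Stinespring}, where the measurements and feedback are absorbed into a single unitary $\chan$ on $\Hilbert_{\rm dil}$, and then to play the state-transfer conditions of Prop.~\ref{prop:teleportation conditions} against the Lieb--Robinson bound of Theorem~\ref{thm:LR theorem}. By Prop.~\ref{prop:teleportation conditions} and Eq.~\eqref{eq:X Z Heisenberg commutator}, teleporting the $n$th logical qubit forces $\norm{\comm{\chandag \PX{f_n}\chan}{\PZ{i_n}}}=2$, so the backward-evolved operator must develop physical support reaching $i_n$, a distance at least $\Dist$ from $f_n$. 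Equivalently, one may demand that a connected correlator between operators localized near $i_n$ and near $f_n$ be $\Order{1}$; the whole theorem then reduces to bounding how far the relevant ``influence cone'' can stretch in time $T$.

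The key step is to characterize that cone. Decompose $\chan$ into its time-ordered pieces: continuous local Hamiltonian evolution, the local measurement unitaries $\Umeas{A_j}$~\eqref{eq:unitary meas general} (each coupling one physical qubit to one Stinespring register), and the feedback unitaries $\QECChannel$~\eqref{eq:unitary feedback general} (controlled on Stinespring registers but possibly acting on \emph{distant} physical qubits). Evolving $\PX{f_n}$ backward, Hamiltonian segments grow the support at velocity $\LRvel$ per Theorem~\ref{thm:LR theorem}, whereas a feedback step can only append a Stinespring operator $\SSZ{j}$, exactly as in Eq.~\eqref{eq:cluster R update X}; crucially, that register was populated solely by its local measurement, so passing backward through $\Umeas{A_j}$ converts $\SSZ{j}$ into a physical operator anchored at the measurement site, exactly as in Eq.~\eqref{eq:cluster premeasurement logical}. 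Thus nonlocal feedback never spreads physical support on its own---it only ``relays'' the front to a measurement site, which then seeds a fresh Hamiltonian cone. I would prove by induction over the interleaved steps that the physical support of $\chandag\PX{f_n}\chan$ lies in a union of such cones: one around $f_n$ plus one per measurement on which the recovery for qubit $n$ depends.

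It then remains to assemble the bound. Correlating the two ends requires chaining these cones into a connected bridge of length $\ge \Dist$, and because correlation (or entanglement) between two regions demands overlapping light cones emanating from a common past, each relay link spans up to $2\LRvel T$---this is precisely the origin of the prefactor $2$ (already visible in the measurement-free case, where two forward cones of radius $\LRvel T$ overlap only when $\Dist \le 2\LRvel T$). A pigeonhole argument distributes the $\Nmeas$ measurements among the $\numQ$ logical qubits, so some qubit uses at most $\lfloor \Nmeas/\numQ\rfloor$ of them; applying the cone count $1 + \lfloor \Nmeas/\numQ\rfloor$ to the minimum-distance qubit gives $\Dist \le 2\left(1+\lfloor \Nmeas/\numQ\rfloor\right)\LRvel T$, which is Eq.~\eqref{eq:FYHL bound}. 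The entanglement-generation version follows identically by bounding the mutual information across a cut separating $I$ from $F$ with the same cone structure.

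I expect the main obstacle to be the inductive claim of the second paragraph---rigorously showing that nonlocal feedback cannot enlarge the physical support except through a Stinespring register already populated by a prior local measurement, and that each register contributes only a single new cone rather than compounding across successive relays. This demands careful tracking of the operator through the alternating unitary, measurement, and feedback channels, together with the independence of the $\numQ$ logical operators (needed for the pigeonhole step, so that one measurement cannot service all qubits at once). The precise constants---the factor $2$ and the additive $1$---are sensitive to this bookkeeping and should be calibrated against the cluster-state protocol of Sec.~\ref{subsec:phys teleport}.
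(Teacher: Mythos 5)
First, a point of orientation: the paper never proves Theorem~\ref{thm:FYHL} at all --- it is imported verbatim from Ref.~\citenum{SpeedLimit}, so there is no internal proof to compare against. The closest internal analogues are Lemma~\ref{lem:M=2 dist}, Lemma~\ref{lem:M>2 speedup}, and Theorem~\ref{thm:standard bound}, which prove a \emph{tighter} bound \eqref{eq:k>1 bound} but only for the restricted ``standard'' class of Def.~\ref{def:standard teleport}. Your cone-relay mechanism --- backward evolution appends Stinespring operators via feedback, and passing backward through $\Umeas{\mobserv_j}$ converts $\SSZ{j}$ into a physical seed at the measurement site --- is sound and is exactly the machinery of Lemma~\ref{lem:measure = attach} and the appendix proofs (Apps.~\ref{app:M=2 speed limit}--\ref{app:standard bound proof}). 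For $\numQ=1$ your chaining argument does deliver a bound of the form $\Dist \leq 2(1+\Nmeas)\LRvel T$.

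The genuine gap is the pigeonhole step, and it is not merely bookkeeping. In the setting of Theorem~\ref{thm:FYHL} the feedback is \emph{arbitrary} and may be conditioned on \emph{all} prior outcomes, so the dependency sets $D_n$ (measurements seeding cones for qubit $n$) need not be disjoint: a single outcome can legitimately condition the recovery of every logical qubit, whence $\sum_n \abs{D_n}$ can reach $\numQ \Nmeas$ and your pigeonhole yields only $\min_n \abs{D_n} \leq \Nmeas$, i.e.\ the $\numQ=1$ bound with no $\numQ$ in the denominator. Operator-support bookkeeping cannot exclude this sharing; the paper itself excludes it (Prop.~\ref{prop:M=1 useless} and App.~\ref{app:standard bound proof}: the same observable attached to two logical operators violates Eq.~\eqref{eq:teleportation conditions}) only by leaning on Pauli recovery and linear classical processing --- precisely the assumptions Theorem~\ref{thm:FYHL} does not make. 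In Ref.~\citenum{SpeedLimit} the $\lfloor \Nmeas/\numQ \rfloor$ comes from an information-theoretic ingredient you have inverted the role of: the $\numQ$-qubit teleportation statement is routed \emph{through} the entanglement statement, using the fact that a projective measurement of a single qubit changes the entanglement entropy of any region by at most one bit, so transporting the $\numQ$ ebits of purification entanglement across each segment that unitary evolution cannot bridge (length $\sim 2\LRvel T$) costs at least $\numQ$ measurements. Your closing remark that the entanglement version ``follows identically'' gets the logical dependence backwards --- that version is the engine of the proof for $\numQ>1$, not a corollary of the cone count.
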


While the teleportation bound \eqref{eq:FYHL bound} depends on the number of measurements $\Nmeas$, the general bounds of Ref.~\citenum{SpeedLimit} depend on the number of ``measurement regions'' $\Nregions$ (see Sec.~\ref{subsec:k=1 daisy chain}). Teleporting a single qubit ($\numQ=1$), e.g., obeys
\begin{equation}
    \label{eq:FYHL k=1}
    \Dist \leq 2 \left( \Nregions + 1 \right) \LRvel T \, ,~~
\end{equation}
and we prove in Sec.~\ref{sec:teleport+locality} for all $\numQ \geq 1$ that standard teleportation protocols require $\Nmeas=2\numQ$ outcomes  per ``region,'' so the factor of two in the bound \eqref{eq:FYHL bound} is loose. In fact, it was conjectured in Ref.~\citenum{SpeedLimit} that a minimum of two distinct measurements are required per logical qubit, per measurement region. By comparison, generating long-range entanglement---e.g., preparing an $\Nspins$-qubit Greenberger-Horne-Zeilinger (GHZ) state \cite{GHZ89,GHZ_prep}---may succeed with a single measurement per region.

The bounds of Ref.~\citenum{SpeedLimit} apply to generic quantum tasks that transfer or generate quantum information, entanglement, and/or correlations using arbitrary local quantum channels and instantaneous classical communication. Even when gates are conditions on the outcomes of arbitrarily distant measurements,  a finite speed limit emerges. While all such protocols obey the generalized bound \eqref{eq:FYHL k=1}, only protocols involving local time evolution, projective measurements, and outcome-dependent feedback are compatible with violations of the standard Lieb-Robinson bound \eqref{eq:LR bound} \cite{Poulin, SpeedLimit}---and thus, teleportation.

\section{Protocols of interest}
\label{sec:restricted teleportation}

Here we detail the physical teleportation protocols of interest, motivated both by practical constraints and analytical necessity, culminating in Def.~\ref{def:standard teleport} in Sec.~\ref{subsec:standard teleport}. Importantly, the ``standard teleportation protocols'' captured by Def.~\ref{def:standard teleport} include all known physical teleportation protocols.

\subsection{Canonical form}
\label{subsec:canonical}

Here we define the \emph{canonical form} $\chan$ of a measurement-based protocol $\chan'$. We note that physical unitaries $\cliff$ that act after a dilated channel (i.e., $\QECChannel_j'$ or $\MeasChannel_j'$) can always be ``pulled through'' to act before that channel, at the cost of modifying the latter according to $\cliff \QECChannel_j' = \QECChannel^{\vpp}_j \cliff$, where $\QECChannel^{\vpp}_j = \cliff \QECChannel_j' \cliff^{\dagger}$, and likewise for $\MeasChannel_j'$. See Fig. \ref{fig:canonical form}. 

\begin{defn}
    \label{def:chan canonical form}
    The \emph{canonical form} $\chan$ of the \emph{na\"ive} protocol $\chan'$ involving physical unitaries, projective measurements, and outcome-dependent operations is given by
    \begin{equation}
        \label{eq:chan canonical form}
        \chan = \QECChannel \,  \MeasChannel \, \uchan \, , ~~
    \end{equation}
    where $\uchan$ includes \emph{all} physical unitaries, $\MeasChannel$ includes \emph{all} measurements, and $\QECChannel$ includes \emph{all} recovery operations. 
\end{defn}

We now briefly comment on Def.~\ref{def:chan canonical form}. Crucially, $\chan$ and $\chan'$ are mathematically and physically the \emph{same} protocol (i.e., $\chan=\chan'$). However, the \emph{effective} measured observables  $\mobserv^{\,}_j$ and recovery operators $\conjchan^{\,}_j$ in $\chan$ may differ from those in the na\"ive protocol $\chan'$. The canonical form \eqref{eq:chan canonical form} is essential to defining the class of protocols of interest and numerous other results. Additionally, familiar teleportation protocols (such as the cluster-state example in Sec.~\ref{subsec:phys teleport}) already realize the canonical form \eqref{eq:chan canonical form}. 

However, by identifying modified channels as needed, any protocol $\chan'$ whose projective measurements are effectively instantaneous and whose recovery operations effectively realize a quantum circuit can be rewritten in canonical form $\chan$ \eqref{eq:chan canonical form}. The unitary operations captured by $\uchan$ may be generated by evolution under a local, time-dependent Hamiltonian. Using, e.g., the interaction picture, the corresponding unitaries can be ``pulled through'' the measurement channels and recovery operations without complication, because the latter are effectively instantaneous \cite{SpeedLimit}. We note that, in the presence of outcome-dependent measurements, one also includes in $\chan$ \eqref{eq:chan canonical form} a channel $\MeasChannel'$ between $\MeasChannel$ and $\QECChannel$ comprising outcome-dependent measurements. However, such channels are generally not useful to physical teleportation (see Def.~\ref{def:phys teleport}), and we do not consider them herein. Lastly, only protocols $\chan'=\chan$ \eqref{eq:chan canonical form} that are \emph{already} in  canonical form saturate the teleportation bound \eqref{eq:FYHL bound} of Theorem~\ref{thm:FYHL}.

\begin{figure}[t]
\centering
\includegraphics[width=0.45\textwidth]{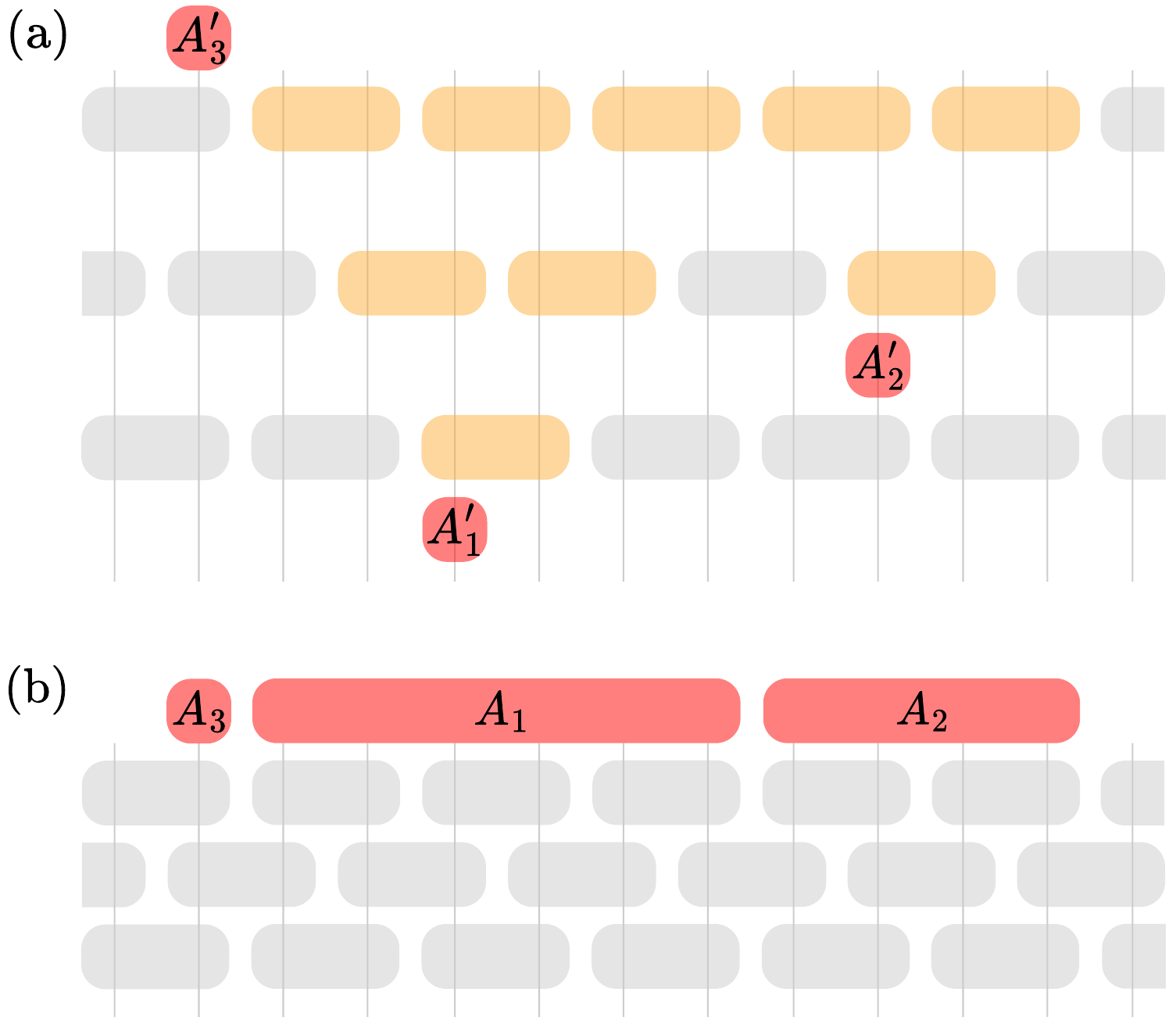}
\caption{\label{fig:canonical form}({\bf a}) A na\"ive teleportation protocol $\chan$  and ({\bf b}) the equivalent canonical-form protocol $\chan'$ \eqref{eq:chan canonical form}. The three layers of unitary gates are the same in both protocols; the measured observables $\mobserv_j'$ (in red) are modified as the unitaries in their light cone (in orange) are ``pulled through'' to realize $\mobserv^{\,}_j$ \eqref{eq:CF observable}.}
\end{figure}

\subsection{Measurement gates}
\label{subsec:allowed meas}

The first restriction we impose relates to the projective measurements in the dilated channel $\MeasChannel$ \eqref{eq:unitary meas general}. In the literature on quantum information processing, by convention, one only considers single-qubit measurements corresponding to $\PZ{j}$. From a theoretical standpoint, multiqubit projective measurements are \emph{universal} \cite{Nielsen_2003, leung2002twoqubit}, and can achieve teleportation acting on a trivial (product) states; as a result, no meaningful constraints can be derived in their presence. From an experimental perspective, in most quantum hardware with qubits amenable to useful quantum tasks (such as teleportation), all measurements are implemented using a combination of physical unitaries and single-qubit $\PZ{}$ measurements \eqref{eq:fluorescent measurement}. 

Here, we restrict to the measurement of single-qubit observables $\mobserv_j'$ in the na\"ive protocol $\chan'$. In the canonical-form protocol $\chan$ \eqref{eq:chan canonical form}, we allow observables $\mobserv^{\,}_j$ that are unitarily equivalent to single-qubit operators, i.e.,
\begin{equation}
    \label{eq:CF observable}
    \mobserv^{\vpp}_j = \cliff \, \mobserv_j' \, \cliff^{\dagger} \, ,~~
\end{equation}
for generality. Since single-qubit rotations are ``free'' (they do not count toward the depth $T$ of $\chan$), this restriction is equivalent to the standard one in the literature. 

We now state Prop.~\ref{prop:single-qubit observable}, which establishes that measuring any single-qubit observable is equivalent to measuring an \emph{involutory} observable (with eigenvalues $\pm 1$). 
\begin{prop}[Equivalent involutory measurement]
    \label{prop:single-qubit observable}
    The measurement of any nontrivial single-qubit observable $\mobserv^{\,}_j$ with real eigenvalues $\meig{0} > \meig{1}$ is equivalent to measurement of its \emph{involutory part} $\bar{\mobserv}^{\,}_j$, defined to be
    \begin{subequations}
    \label{eq:involutory part}
    \begin{align}
        \bar{\mobserv}^{\vpd}_{j} &= \frac{1}{\meig{0} - \meig{1}} \sum\limits_{n=1}^3 \, \trace\limits_j \, \left[ \, \mObserv{j}{\vpp} \, \Pauli{n}{j} \, \right]\, \Pauli{n}{j} \label{eq:involutory part decomp} \\
        &= \left( \meig{0}-\meig{1} \right)^{-1} \, \left( 2 \, \mObserv{j}{\vpp} - \left( \meig{0}+\meig{1}\right) \, \ident \right) \label{eq:involutory part nice} \, ,~~
    \end{align}
    \end{subequations}
    which is unitary, Hermitian, and squares to the identity.
\end{prop}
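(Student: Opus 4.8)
The plan is to reduce everything to the spectral decomposition of the single-qubit observable and to recognize the ``involutory part'' as the difference of its two eigenprojectors. First I would write $\mobserv^{\,}_j = \meig{0}\,\Proj{0}{j} + \meig{1}\,\Proj{1}{j}$, where $\Proj{0}{j}$ and $\Proj{1}{j}$ are the orthogonal rank-one projectors onto the $\meig{0}$- and $\meig{1}$-eigenspaces; these are distinct and satisfy $\Proj{0}{j} + \Proj{1}{j} = \ident$ precisely because nontriviality ($\meig{0} > \meig{1}$) guarantees two distinct eigenvalues on a two-dimensional space. I then \emph{define} the candidate involutory part as $\bar{\mobserv}^{\,}_j \equiv \Proj{0}{j} - \Proj{1}{j}$. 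Hermiticity, unitarity, and $\bar{\mobserv}_j^2 = \ident$ are immediate from the Hermiticity of the projectors, the orthogonality relation $\Proj{m}{j}\Proj{n}{j} = \delta^{\,}_{mn}\,\Proj{m}{j}$, and completeness $\Proj{0}{j}+\Proj{1}{j}=\ident$.

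Next I would establish the closed-form expression \eqref{eq:involutory part nice}. Substituting the spectral decomposition of $\mobserv^{\,}_j$ together with $\ident = \Proj{0}{j}+\Proj{1}{j}$ into $2\,\mobserv^{\,}_j - (\meig{0}+\meig{1})\,\ident$ yields $(\meig{0}-\meig{1})(\Proj{0}{j} - \Proj{1}{j})$, so dividing by $\meig{0}-\meig{1}$ (nonzero by nontriviality) reproduces $\bar{\mobserv}^{\,}_j$ exactly. To obtain the Pauli form \eqref{eq:involutory part decomp}, I would expand $\bar{\mobserv}^{\,}_j$ in the single-qubit basis $\{\ident, \Pauli{1}{j}, \Pauli{2}{j}, \Pauli{3}{j}\}$. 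Because each rank-one projector has unit trace, $\bar{\mobserv}^{\,}_j$ is traceless and its identity component vanishes; the remaining coefficients $\tfrac{1}{2}\trace\limits_j[\bar{\mobserv}^{\,}_j \Pauli{n}{j}]$ reduce, via \eqref{eq:involutory part nice} and $\trace\limits_j[\Pauli{n}{j}]=0$ for $n=1,2,3$, to $(\meig{0}-\meig{1})^{-1}\trace\limits_j[\mobserv^{\,}_j\Pauli{n}{j}]$, which is exactly \eqref{eq:involutory part decomp}.

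Finally, for the equivalence of the two measurements I would observe that $\mobserv^{\,}_j$ and $\bar{\mobserv}^{\,}_j$ share the \emph{same} spectral projectors $\Proj{0}{j}, \Proj{1}{j}$ and differ only in the eigenvalue labels attached to each eigenspace. Since the Stinespring measurement channel $\Umeas{\mobserv}$ in \eqref{eq:unitary meas general}---as well as the post-measurement states and outcome probabilities---depends solely on the projectors and not on the eigenvalues, measuring $\mobserv^{\,}_j$ and measuring $\bar{\mobserv}^{\,}_j$ induce an identical channel up to the relabeling $(\meig{0},\meig{1})\mapsto(+1,-1)$ of the classical outcomes.

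I expect the main obstacle to be conceptual rather than computational: stating precisely what ``equivalent measurement'' means. The algebraic identities \eqref{eq:involutory part decomp} and \eqref{eq:involutory part nice} are direct verifications, so the only genuine content is the final step---making explicit that a projective measurement is determined entirely by its family of projectors, so that rescaling and shifting the spectrum leaves the induced channel unchanged up to a relabeling of outcomes.
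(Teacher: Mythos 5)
Your proof is correct, and its closing step---that a projective measurement, and in particular the Stinespring channel \eqref{eq:unitary meas general}, is determined entirely by the spectral projectors, so that shifting and rescaling the spectrum only relabels the classical outcomes---is exactly the closing argument of the paper's proof in App.~\ref{app:Proof of involutory meas}. Where you differ is in how the algebraic identities are organized. The paper works ``Pauli-first'': it decomposes $\mobserv^{\,}_j = \tfrac{1}{2}(\meig{0}+\meig{1})\,\ident + \vec{\alpha}^{\,}_j\cdot\vec{\sigma}^{\,}_j$, uses $\det[\mobserv^{\,}_j] = \meig{0}\,\meig{1}$ to show $\abs{\vec{\alpha}} = \tfrac{1}{2}(\meig{0}-\meig{1})$, defines $\bar{\mobserv}^{\,}_j = \hat{\alpha}^{\,}_j\cdot\vec{\sigma}^{\,}_j$ as the normalized traceless part, and only then derives $\Proj{n}{j} = \tfrac{1}{2}\left(\ident + (-1)^n\,\bar{\mobserv}^{\,}_j\right)$. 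You work ``projector-first'': defining $\bar{\mobserv}^{\,}_j \equiv \Proj{0}{j} - \Proj{1}{j}$ makes Hermiticity, unitarity, and $\bar{\mobserv}_j^2 = \ident$ immediate, turns \eqref{eq:involutory part nice} into a one-line substitution of the spectral decomposition, and reduces \eqref{eq:involutory part decomp} to a tracelessness observation. Your route buys brevity and transparency---it avoids the determinant computation and the unit-vector normalization entirely---while the paper's route makes the Bloch-vector form of $\bar{\mobserv}^{\,}_j$ explicit, which is the representation reused downstream (e.g.\ the expansion in $\hat{\alpha}^{\,}_{j,\nu}$ in the proof of Prop.~\ref{prop:feedback required}). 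One presentational caveat: since the Proposition states \eqref{eq:involutory part} as the \emph{definition} of $\bar{\mobserv}^{\,}_j$, your argument should be read (as you in fact arrange it) as proving that $\Proj{0}{j}-\Proj{1}{j}$ coincides with that expression, rather than silently replacing the definition.
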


The proof of Prop.~\ref{prop:single-qubit observable} appears in App.~\ref{app:Proof of involutory meas}.  Importantly, Prop.~\ref{prop:single-qubit observable} implies that all measurement channels take the simple form $\MeasChannel^{\,}_{j}$ \eqref{eq:fluorescent measurement}, with $\PZ{j} \to \bar{\mobserv}^{\,}_j$ \eqref{eq:involutory part}. If the effective observable is given by $\mobserv^{\,}_j$ \eqref{eq:CF observable} in the canonical-form protocol $\chan$ \eqref{eq:chan canonical form}, then $\bar{\mobserv}^{\,}_j = \cliff \, \bar{\mobserv}_j' \cliff^{\dagger}$, where $\bar{\mobserv}_j'$ is the involutory part of the single-qubit observable $\mobserv_j'$ (unitary channels preserve involutions). As a result of Prop.~\ref{prop:single-qubit observable}, all measurements effectively have outcomes $\pm 1$, and spectral projectors $\Proj{n}{\mobserv} = \left( \ident + (-1)^n \, \bar{\mobserv} \right)/2$  \eqref{eq:spectral decomp}. This also simplifies the implementation of outcome-dependent operations, and facilitates the following result.

We next state Prop.~\ref{prop:feedback required}, which establishes that, if the outcome of a measurement channel $\MeasChannel^{\,}_j$ \eqref{eq:unitary meas general} is \emph{not} utilized later in the protocol $\chan$ then, on average, $\MeasChannel^{\,}_j$ either acts trivially on all logical operators or causes $\chan$ to fail. 

\begin{prop}[Necessity of feedback]
    \label{prop:feedback required}
    Consider the Heisenberg evolution of a  logical operator $\Gamma \in \PauliGroup*{\rm ph}$ \eqref{eq:Pauli string def} through the channel $\MeasChannel^{\,}_j$ corresponding to the measurement of a single-qubit observable $\mobserv^{\,}_j$ in the protocol $\chan$, averaged over outcomes. If no aspect of $\chan$ depends on the outcome of measuring $\mobserv^{\,}_j$, then $\MeasChannel^{\,}_j$ either acts trivially on $\Gamma$ (i.e., $\Gamma \to \Gamma$), or trivializes $\Gamma$ (i.e., $\Gamma \to 0$).
\end{prop}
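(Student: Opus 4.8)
The plan is to turn the phrase ``averaged over outcomes'' into a concrete quantum channel and then compute its Heisenberg action on $\Gamma$ directly. First, using Prop.~\ref{prop:single-qubit observable}, I would replace $\mobserv_j$ by its involutory part $\bar{\mobserv}_j$, so that $\MeasChannel_j$ is the CNOT-like Stinespring gate of Eq.~\eqref{eq:fluorescent measurement} with $\PZ{j}\to\bar{\mobserv}_j$ and spectral projectors $\Proj{n}{j}=(\ident+(-1)^n\,\bar{\mobserv}_j)/2$ \eqref{eq:spectral decomp}. Since no later operation in $\chan$ is conditioned on this outcome, the associated Stinespring register is never touched again and may be traced out; equivalently, the effective physical map is the pinching (dephasing) channel $\DensMat\mapsto\Proj{0}{j}\,\DensMat\,\Proj{0}{j}+\Proj{1}{j}\,\DensMat\,\Proj{1}{j}$, which is exactly what ``averaged over outcomes'' should mean.

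Second, because this channel is self-adjoint, its Heisenberg action on $\Gamma$ is the same pinching, $\Gamma\mapsto\Proj{0}{j}\,\Gamma\,\Proj{0}{j}+\Proj{1}{j}\,\Gamma\,\Proj{1}{j}$. I would confirm this by sandwiching $\MeasChannel_j^{\dagger}\,(\Gamma\otimes\SSid{j})\,\MeasChannel_j$ between copies of the default state $\ket{0}_{\rm ss}$; the off-diagonal (outcome-changing) terms are killed by the orthogonality $\langle 0|1\rangle_{\rm ss}=0$, leaving precisely the diagonal sum. Expanding the projectors and using $\bar{\mobserv}_j^2=\ident$ then collapses this to the compact form
\[
\Gamma\;\longmapsto\;\tfrac12\bigl(\Gamma+\bar{\mobserv}_j\,\Gamma\,\bar{\mobserv}_j\bigr)\,.
\]

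Third, the stated dichotomy would follow from the algebra of $\Gamma$ with $\bar{\mobserv}_j$: if they commute then $\bar{\mobserv}_j\Gamma\bar{\mobserv}_j=\Gamma$ and the map is the identity ($\Gamma\to\Gamma$), while if they anticommute then $\bar{\mobserv}_j\Gamma\bar{\mobserv}_j=-\Gamma$ and the map annihilates $\Gamma$ ($\Gamma\to0$). I expect the main obstacle to lie exactly here. Prop.~\ref{prop:single-qubit observable} only guarantees that $\bar{\mobserv}_j=\hat{\bvec n}\cdot\bvec\sigma$ is a single-qubit involution, not a coordinate Pauli; for an off-axis $\hat{\bvec n}$ the above map instead projects $\Gamma$ onto its $\bar{\mobserv}_j$-component and returns neither $\Gamma$ nor $0$, so the clean two-valued conclusion genuinely needs $\Gamma$ to commute or anticommute with $\bar{\mobserv}_j$. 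I would therefore need to argue that the effectively measured single-qubit involution can be taken to be a Pauli against which any Pauli string $\Gamma$ automatically (anti)commutes---using the freedom to absorb free single-qubit rotations into $\uchan$ in the canonical form of Def.~\ref{def:chan canonical form}---and to check carefully that this reorientation remains consistent with the Pauli structure assumed of $\Gamma$. Establishing that the (anti)commutation dichotomy is exhaustive, rather than the generic projection, is the step I expect to require the most care.
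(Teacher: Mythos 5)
Your computation coincides with the paper's own proof (App.~\ref{app:Proof feedback required}) up to the final step: tracing out the never-again-touched Stinespring register yields exactly the pinching map $\Gamma \mapsto \tfrac12 ( \Gamma + \bar{\mobserv}^{\,}_j \, \Gamma \, \bar{\mobserv}^{\,}_j )$, and your aligned cases ($\Gamma \to \Gamma$ when the supports are disjoint or the on-site contents agree, $\Gamma \to 0$ when they are distinct coordinate Paulis) are precisely the paper's cases 1--3. You also correctly isolated the genuine difficulty: for an off-axis involution $\bar{\mobserv}^{\,}_j = \hat{\bvec{n}} \cdot \vec{\sigma}^{\,}_j$, the map sends $\Gamma = \Pauli{\nu}{j} \otimes \Gamma^{\,}_{\rm rest}$ to $\hat{n}^{\,}_{\nu} \, \bar{\mobserv}^{\,}_j \otimes \Gamma^{\,}_{\rm rest}$ with $\hat{n}^{\,}_{\nu} = \tfrac12 \trace_j [ \, \bar{\mobserv}^{\,}_j \, \Pauli{\nu}{j} \, ]$, which is neither $\Gamma$ nor $0$.

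However, your proposed repair of that case would fail. The controlling overlap $\hat{n}^{\,}_{\nu}$ is invariant under simultaneous conjugation of the observable and the logical operator, so it cannot be changed by absorbing single-qubit rotations into $\uchan$ in the canonical form of Def.~\ref{def:chan canonical form}: if you rotate the measured observable to $\PZ{j}$, the compensating rotation either gets pulled back through the measurement when canonical form is restored (undoing the change), or else it sits downstream of the measurement, in which case the Heisenberg-evolved logical operator arrives at the measurement as $R^{\,}_j \, \Gamma \, R^{\dagger}_j$ and has the same overlap with $\PZ{j}$ that $\Gamma$ had with $\bar{\mobserv}^{\,}_j$. This is exactly your own worry that $\Gamma$ would lose its Pauli structure, and it is fatal to the fix. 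The paper resolves case 4 without eliminating it: the output $\hat{n}^{\,}_{\nu} \, \bar{\mobserv}^{\,}_j \otimes \Gamma^{\,}_{\rm rest}$ has spectral norm $\abs{\hat{n}^{\,}_{\nu}} < 1$, so it is neither involutory nor unitary and therefore cannot satisfy the teleportation conditions of Prop.~\ref{prop:teleportation conditions}, which require the Heisenberg image of a logical operator to be a Pauli times a unitary stabilizer. That is, the off-axis outcome is classified as a \emph{failure} of $\chan$ and grouped with the ``trivializes'' alternative---this is the intended reading of the Proposition, and it is how the paper's proof concludes (the channel ``either acts trivially on a Pauli string $\Gamma$ of interest, or it causes teleportation to fail because $\Gamma$ fails to be unitary''). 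Replacing your rotation argument with this non-unitarity argument completes the proof.
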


The proof of Prop.~\ref{prop:feedback required} appears in App.~\ref{app:Proof feedback required}. Simply put, all projective measurements must be accompanied by at least one outcome-dependent operation to be useful to a quantum task $\chan$ characterized by a set of logical operators. Absent feedback, a measurement either does nothing to a given logical operator $\Gamma$, or sends $\Gamma \to 0$, implying failure of $\chan$. Hence, we may ignore any measurements in $\chan$ without corresponding feedback.

\subsection{Recovery operations}
\label{subsec:recovery}

Like all useful quantum tasks, a teleportation protocol $\chan$ outputs a state $\DensMat^{\,}_T = \BKop{\Psi^{\,}_T}{\Psi^{\,}_T}$ \eqref{eq:teleport final mb state}, from which, e.g., one extracts expectation values of observables. Hence, $\DensMat^{\,}_T$ must be prepared and sampled many times; accordingly, when $\chan$ involves measurements, the \emph{effective output state} $\DensMat^{\,}_T$ of $\chan$ is the one that is prepared upon \emph{averaging over outcomes}. When either $\DensMat^{\,}_T$ or its logical part is \emph{pure}, as with physical teleportation (see Def.~\ref{def:phys teleport}), the above immediately implies the following Proposition.

\begin{prop}[Hybrid preparation of pure states]
    \label{prop:pure output}
    Suppose that a protocol $\chan$ involving measurements and feedback outputs the state $\DensMat^{\,}_T$ when averaged over outcomes. If the \emph{logical state} $\DensMat^{\,}_{\rm L} = \trace_{F^c} \left[ \DensMat^{\,}_T \right]$ supported on $F \subset V$ is both pure and separable with respect to $F^c$, then $\chan$ must realize $\DensMat^{\,}_{\rm L}$ for \emph{every} sequence of measurement outcomes.
\end{prop}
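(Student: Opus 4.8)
The plan is to exploit the fact that a pure state is an \emph{extreme point} of the convex set of density operators, so that it admits no nontrivial decomposition as a classical mixture. First I would use the Stinespring representation of Sec.~\ref{subsec:Stinespring} to write the effective output $\DensMat_T$ --- obtained by averaging over measurement outcomes, equivalently by tracing out the Stinespring registers --- as a convex combination $\DensMat_T = \sum_{\bvec{n}} p_{\bvec{n}} \, \DensMat^{(\bvec{n})}_T$ indexed by outcome trajectories $\bvec{n}$, where $p_{\bvec{n}} \geq 0$, $\sum_{\bvec{n}} p_{\bvec{n}} = 1$, and $\DensMat^{(\bvec{n})}_T$ is the normalized physical state conditioned on trajectory $\bvec{n}$ (already including the feedback $\QECChannel$). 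This step is immediate once one observes that after $\chan$ the dilated state is pure and the Stinespring registers sit in mutually orthogonal $\SSZ{}$-basis states labeled by $\bvec{n}$; the partial trace over those registers kills all cross terms and leaves precisely this probabilistic ensemble of conditional physical states.

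Next I would trace out $F^c$ on both sides. Since the partial trace is linear, this yields $\DensMat_{\rm L} = \sum_{\bvec{n}} p_{\bvec{n}} \, \DensMat^{(\bvec{n})}_{\rm L}$, with $\DensMat^{(\bvec{n})}_{\rm L} = \ptr{F^c}{\DensMat^{(\bvec{n})}_T}$ a valid logical density operator on $\Hilbert_F$ for each $\bvec{n}$. Thus the hypothesized averaged output $\DensMat_{\rm L}$ is exhibited as a classical mixture of the outcome-conditioned logical states, and the whole problem reduces to showing that this mixture must be trivial.

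The key step is the extreme-point argument. Writing the pure logical state as $\DensMat_{\rm L} = \BKop{\psi}{\psi}$ and letting $\Pi = \ident - \BKop{\psi}{\psi}$ project onto its orthogonal complement, I would compute $0 = \tr{\Pi \, \DensMat_{\rm L}} = \sum_{\bvec{n}} p_{\bvec{n}} \, \tr{\Pi \, \DensMat^{(\bvec{n})}_{\rm L}}$. Every summand is nonnegative because $\Pi \geq 0$ and $\DensMat^{(\bvec{n})}_{\rm L} \geq 0$, so each must vanish whenever $p_{\bvec{n}} > 0$. Hence each realized $\DensMat^{(\bvec{n})}_{\rm L}$ is supported entirely on $\ket{\psi}$ and therefore equals $\DensMat_{\rm L}$, which is exactly the claim that $\chan$ produces $\DensMat_{\rm L}$ for \emph{every} sequence of measurement outcomes. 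I would then note that the separability hypothesis is inherited outcome-by-outcome --- indeed it follows automatically, since once $\DensMat^{(\bvec{n})}_{\rm L}$ is pure the conditional global state must factorize as $\DensMat^{(\bvec{n})}_T = \DensMat_{\rm L} \otimes \DensMat^{(\bvec{n})}_{F^c}$, a bipartite state with a pure marginal being necessarily a product.

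I do not expect a genuine obstacle here; the convexity core is elementary. The only point requiring care is step one: one must verify that averaging over outcomes yields the above \emph{convex} combination rather than a coherent superposition. This is guaranteed precisely by the Stinespring formalism, in which the measurement registers become orthogonal (effectively classical) records of $\bvec{n}$ once read out, so their partial trace dephases the trajectory label and produces a bona fide probabilistic ensemble. Everything downstream is then a one-line application of the extreme-point property of pure states.
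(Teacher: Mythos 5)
Your proposal is correct and follows essentially the same route as the paper's proof: decompose the outcome-averaged state as a convex combination of normalized trajectory-conditioned states (via the orthogonality of the Stinespring registers), then invoke purity of $\DensMat^{\,}_{\rm L}$ to force every term in the mixture to equal $\DensMat^{\,}_{\rm L}$. The only difference is cosmetic — the paper simply asserts the extreme-point property of pure states, whereas you make it explicit with the projector argument $\tr{\,\Pi\,\DensMat^{(\bvec{n})}_{\rm L}\,}=0$, which is a nice touch but not a different proof.
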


The proof of Prop.~\ref{prop:pure output} appears in App.~\ref{app:Proof every trajectory}. Note that $F^c$ includes all Stinespring qubits, so that $\DensMat^{\,}_{\rm L}$ is implicitly averaged over all outcomes. Additionally, Prop.~\ref{prop:pure output} applies to protocols $\chan$ for which $\DensMat^{\,}_T$ itself is pure. As an aside, ``virtual'' teleportation protocols---in which the logical state is stored in the virtual legs of a tensor network \cite{Gross2007, DominicMBQC_SPT}---generally do not realize the logical states $\ket{\psi^{\,}_n}$ on \emph{any} finite subset $F \subset V$ of the physical qubits, even up to unitary operations on $F$. Instead, information is recovered from ostensibly unrelated operations \cite{Raussendorf2017}; we relegate virtual teleportation to future work.

We now consider the implications for the outcome-dependent operations in $\QECChannel$. In general, using measurements to achieve a quantum task only outputs the desired state up to certain \emph{byproduct operators}. The cost of using measurements to exceed the Lieb-Robinson bound \eqref{eq:LR bound} is that the state actually teleported is not $\ket{\psi}$, but $\conjchan \, \ket{\psi}$, where $\conjchan$ is the single-qubit byproduct operator, which is always invertible (i.e., unitary). For example, the cluster-state teleportation protocol considered in Sec.~\ref{subsec:phys teleport} outputs the logical state $\PZp{f}{n^{\,}_A} \, \PXp{f}{n^{\,}_B} \, \ket{\psi}$, rather than $\ket{\psi}$ itself (on qubit $f$). Averaging over outcomes twirls the state $\DensMat^{\,}_{\rm L} = \BKop{\psi}{\psi}$ over the Pauli matrices, so that $\DensMat^{\,}_{\rm L} = \ident / 2$ is simply a random classical bit. The same conclusion is can be reached by considering the Heisenberg evolution of logical operators, as captured by Prop.~\ref{prop:feedback required}.

At the same time, Prop.~\ref{prop:pure output} implies that the errors introduced by measurements must be correctable for \emph{each} sequence of outcomes $\bvec{n}$. This is only possible if the measurement outcomes alone are sufficient to determine the byproduct operator $\conjchan$, and that operator is \emph{invertible}, in which case one merely applies $\conjchan^{-1}$ to recover the logical state $\DensMat^{\,}_{\rm L}$ for every $\bvec{n}$. This is captured by the \emph{recovery channel} $\QECChannel$ in $\chan$ \eqref{eq:chan canonical form}, which applies physical unitaries conditioned on the states of the Stinespring qubits, which are ``read only'' after the corresponding measurement. 

In principle, there is a broad class of dilated ``recovery'' channels \eqref{eq:unitary feedback general} that are compatible with  physical teleportation. Instead, we consider a restricted class of \emph{Clifford} recovery channels, which we now define.

\begin{defn}
    \label{def:Clifford recovery}
    An outcome-dependent channel $\QECChannel^{\,}_{\Omega}$  \eqref{eq:unitary feedback general} is said to be \emph{Clifford} if it can be written as
    \begin{align}
    \label{eq:unitary feedback teleport}
    \QECChannel^{\vpd}_{\Omega} &= \ident \otimes \SSProj{0}{\Omega} + \ConjChan{\Omega} \otimes \SSProj{1}{\Omega} \, , ~~
    \end{align}
    in the canonical-form protocol $\chan$ \eqref{eq:chan canonical form}, up to multiplication on either side by elements of the physical Clifford group $\CliffGroup*{\rm ph}$, where $\conjchan^{\,}_{\Omega}$ is a physical Pauli string \eqref{eq:Pauli string def}, $\Omega \subset \widetilde{V}$ is a subset of the Stinespring qubits, and the operators
    \begin{align}
        \label{eq:SSZ parity}
        \SSProj{n}{\Omega} = \frac{1}{2} \Big( \SSid{} + (-1)^n \, \prod\limits_{j \in \Omega} \SSZ{j} \Big) \, , ~~
    \end{align}
    project onto states of $\Omega$ with $\SSZ{}$ parity $(-1)^n$.
\end{defn}

As far as we are aware, the recovery operations of \emph{all} physical teleportation protocols known to the literature are Clifford in the sense of Def.~\ref{def:Clifford recovery}. We also preclude outcome-dependent measurements, which are not only higher-order Clifford \cite{CliffordHierarch} on $\Hilbert^{\,}_{\rm dil}$ \eqref{eq:dilated Hilbert space}, but are generally not useful to physical teleportation on their own or in combination with the Clifford recovery channels of Def.~\ref{def:Clifford recovery}. Essentially, conditional measurements fail to preserve crucial properties of the logical operators (e.g., unitarity).

As an aside, we conjecture that physical teleportation protocols can be classified according to their \emph{recovery group}---the set generated by the \emph{recovery operators} $\conjchan^{\,}_{\Omega}$ for in the recovery channels $\QECChannel^{\,}_{\Omega}$ \eqref{eq:unitary feedback teleport} in $\chan$ \eqref{eq:chan canonical form} under multiplication. That group is equivalent to the set of all possible \emph{byproduct} operators. Restricting to the Clifford recovery channels of Def.~\ref{def:Clifford recovery}, the recovery group must be a subset of the Pauli group $\PauliGroup{\rm ph}$. Since the phase of the logical states $\ket{\psi^{\,}_n}$ is unimportant, we have $\PX{} \PZ{} \cong \PZ{} \PX{}$, so that the recovery group is effectively \emph{Abelian}. Since $\PauliGroup{\rm ph}$ also spans all physical operators, this is likely the simplest nontrivial recovery group that can realize. We relegate the consideration of more complicated recovery channels (and groups) to future work.

\subsection{Standard teleportation protocols}
\label{subsec:standard teleport}

We now define the ``standard'' teleportation protocols that we consider in the remainder. These protocols fulfill the criteria of physical teleportation and quantum state transfer (see Defs.~\ref{def:phys teleport} and Def.~\ref{def:state transfer}), as well as the properties mentioned in the preceding subsections. As far as we are aware, ``standard'' teleportation protocols include all physical teleportation protocols known to the literature.

\begin{defn}
    \label{def:standard teleport}
    A \emph{standard teleportation protocol} $\chan$ is a physical teleportation protocol (see Def.~\ref{def:phys teleport}) for which the initial state $\ket{\Psi^{\,}_0}$ \eqref{eq:teleport initial mb state} is a product state and, when written in canonical form $\chan = \QECChannel \, \MeasChannel \, \uchan$ \eqref{eq:chan canonical form}, (i)  all measured observables $\mobserv^{\,}_j$ are unitarily equivalent to single-qubit operators; (ii)  all recovery operations are Clifford \eqref{eq:unitary feedback teleport} in the sense of Def.~\ref{def:Clifford recovery}; (iii) no unitaries are applied to sites already measured (in the Schr\"odinger picture); and (iv) the task distance $\Dist = d(I,F) > \LRvel T$ \eqref{eq:L def} exceeds the Lieb-Robinson bound \eqref{eq:LR bound}, where $\LRvel,T$ are determined by the local unitary $\uchan$, which is compatible with Theorem~\ref{thm:LR theorem}.
\end{defn}

We now comment on the details standard teleportation (Def.~\ref{def:standard teleport}), compared to the more general case of physical teleportation (Def.~\ref{def:phys teleport}). We require that $\Dist > \LRvel T$ to ensure that $\chan$ is a teleportation---rather than state-transfer---protocol (see Def.~\ref{def:state transfer}). In combinatition with the requirement that $\ket{\Psi^{\,}_0}$ \eqref{eq:teleport initial mb state} be a product state, demanding $\Dist > \LRvel T$ furnishes the locality-based analyses of Sec.~\ref{sec:teleport+locality}, which also requires that $\ket{\Psi^{\,}_0}$ be separable  with respect to a biparation of the graph $G$ compatible with Lemma~\ref{lem:M=2 dist}. More generally, one might restrict to states $\ket{\Psi^{\,}_0}$ \eqref{eq:teleport initial mb state} that can be prepared using a FDQC $\uchan^{\,}_0$; however, $\uchan^{\,}_0$ can always be incorporated into $\uchan$ itself. The restriction to single-qubit measurements is motivated by experiment. 

The remaining conditions in Def.~\ref{def:standard teleport} potentially preclude physical teleportation protocols. However, to the best of our knowledge, all physical teleportation protocols known to the literature obey these restrictions as well. First, the classical calculations required for teleportation cannot be arbitrary difficult; in general, one should demand that a finite number of classical bits are required to determine the outcome-dependent recovery operations. Here, we restrict to protocols that use a minimal two classical bits per qubit, captured by the Clifford recovery operations of Def.~\ref{def:Clifford recovery}. We also preclude unitary operations on sites that have already been measured, as such operations (\emph{i}) cannot extract further information from the state and (\emph{ii}) cannot be useful to teleportation in the sense of enhancing the teleportation distance $\Dist$ \eqref{eq:L def}. This is a technical assumption that facilitates later proofs, which we generally expect can be relaxed. In Sec.~\ref{sec:outlook}, we comment on how assumption (\emph{ii}) of Def.~\ref{def:standard teleport} can be relaxed.

Given a teleportation protocol $\chan$ in canonical form \eqref{eq:chan canonical form}, we define an associated \emph{resource state} for $\chan$.

\begin{defn}
\label{def:resource state}
The \emph{resource state} $\ket{\Psi^{\,}_t}$ 
of a standard teleportation protocol $\chan$ with canonical form $\chan = \QECChannel \, \MeasChannel \, \uchan$ \eqref{eq:chan canonical form} applied to the initial product state $\ket{\Psi^{\,}_0 (\bvec{\alpha},\bvec{\beta})}$ \eqref{eq:teleport initial mb state} is
\begin{equation}
    \label{eq:resource state}
    \ket{\Psi^{\vpp}_t} = \uchan \, \ket{\Psi^{\vpp}_0} \, , ~~
\end{equation}
where $\uchan$ realizes unitary time evolution on $\Hilbert^{\,}_{\rm ph}$. 
\end{defn}

We note that the resource state $\ket{\Psi^{\,}_t}$ \eqref{eq:resource state} is only defined on the sites on which $\chan$ acts. For qubits on a graph $G$ in $d>1$ dimensions, a teleportation protocol $\chan$ acts on some path $\cal C$ connecting the initial and final logical site(s), in which case the resource state $\ket{\Psi^{\,}_t}$ \eqref{eq:resource state} refers only to the state of the qubits along the path $\mathcal{C} \subset G$.

As an aside, we do not distinguish between unitaries $\uchan$ generated by local quantum circuits versus time evolution under some local Hamiltonian. We only demand that $\uchan$ satisfies the Lieb-Robinson bound \eqref{eq:LR bound} of Theorem~\ref{thm:LR theorem}. The particular details of $\uchan$ are then encoded in the velocity $\LRvel$, where $\LRvel=1$ for circuits with nearest-neighbor gates.

\section{Constraints on teleportation}
\label{sec:teleport+locality}

We now derive constraints on the standard teleportation protocols $\chan$ stipulated in Def.~\ref{def:standard teleport}, making use of the 
results of Secs.~\ref{sec:overview} and \ref{sec:restricted teleportation}. We assume throughout that $\chan$ is written in canonical form \eqref{eq:chan canonical form} as detailed in Def.~\ref{def:chan canonical form}. In Sec.~\ref{subsec:meas+feedback} we consider the action of the dilated channels on logical operators, deriving an effective form of $\chan$ \eqref{eq:canonical Clifford form}. We then prove numerous results for the teleportation of a single logical qubit ($\numQ=1$), and explain how these proofs generalize to the teleportation of $\numQ>1$ qubits in Sec.~\ref{subsec:k>1 main}. 

\subsection{Action of dilated channels}
\label{subsec:meas+feedback}

Here we consider the action of  single-qubit measurements \eqref{eq:fluorescent measurement} and ``Clifford'' recovery operations \eqref{eq:unitary feedback teleport} on the logical operators \eqref{eq:logical alpha beta expval} in canonical-form protocols. We first derive an effective recovery channel for the logical operators $\LX$ and $\LZ$ in Lemma~\ref{lem:recovery factorization}. We then consider the combination of measurements and feedback on the logical operators in Lemma~\ref{lem:measure = attach}. Both proofs extend to any $\numQ \geq 1$.

We now present Lemma~\ref{lem:recovery factorization}, which establishes an effective recovery channel $\QECChannel^{\,}_{\rm eff}$ for any $\QECChannel$ in a canonical-form standard teleportation protocol (see Defs.~\ref{def:chan canonical form} and \ref{def:standard teleport}).

\begin{lem}[Effective recovery channel] 
    \label{lem:recovery factorization}
    Any standard teleportation protocol $\chan = \QECChannel \, \MeasChannel \, \uchan$ \eqref{eq:chan canonical form} that teleports a single logical qubit is equivalent to a protocol of the form
    \begin{equation}
    \label{eq:canonical Clifford form}
        \chan \cong \chan^{\vpd}_{\rm eff} = 
        \QECChannel^{\vpd}_x \, \QECChannel^{\vpd}_z \, \MeasChannel \, \uchan \, , ~~
    \end{equation}
    where $\QECChannel^{\,}_x$ ($\QECChannel^{\,}_z$) act  nontrivially only on $\LX$ ($\LZ$) as 
    \begin{subequations}
        \label{eq:effective recovery}
        \begin{align}
        \QECChannel^{\vpd}_{x} &= \ident \otimes \SSProj{0}{r^{\,}_{x}} + \LZ \otimes \SSProj{1}{r^{\,}_{x}}  \label{eq:effective recovery X} \\
        \QECChannel^{\vpd}_{z} &= \ident \otimes \SSProj{0}{r^{\,}_{z}} + \LX \otimes \SSProj{1}{r^{\,}_{z}} \label{eq:effective recovery Z}\, , ~~
    \end{align}
    \end{subequations}
    where $r^{\,}_{\nu} \subset \widetilde{V}$ is a subset of Stinespring qubits determined by $\QECChannel$ \eqref{eq:unitary feedback teleport}, the projectors correspond to outcome parities of the cluster $r^{\,}_{\nu}$ \eqref{eq:SSZ parity}, $\QECChannel^{\,}_x$ and $\QECChannel^{\,}_z$ effectively commute, and $\QECChannel^{\,}_{\nu} = \ident$ if $\QECChannel$ acts trivially on $\Pauli{\nu}{\rm L}$. 
\end{lem}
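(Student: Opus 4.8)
The plan is to reduce the full recovery channel $\QECChannel = \prod_{\Omega} \QECChannel^{\,}_{\Omega}$ to its Heisenberg action on the two final logical operators $\PX{f}$ and $\PZ{f}$, exhibit a two-gate channel $\QECChannel^{\,}_x \QECChannel^{\,}_z$ that reproduces exactly this action, and then invoke Prop.~\ref{prop:teleportation conditions} to upgrade this operator-level agreement to an equivalence of teleportation protocols. Because $\chan$ is a standard teleportation protocol, each factor is an elementary Clifford recovery channel \eqref{eq:unitary feedback teleport} of Def.~\ref{def:Clifford recovery}, namely a controlled Pauli $\QECChannel^{\,}_{\Omega} = \ident \otimes \SSProj{0}{\Omega} + \ConjChan{\Omega} \otimes \SSProj{1}{\Omega}$ with $\conjchan^{\,}_{\Omega}$ a physical Pauli string; since the recovery group is effectively Abelian, this product is well defined and the order in which I conjugate through its factors will not matter.

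First I would conjugate a single-site logical operator through one factor. As $\conjchan^{\,}_{\Omega}$ is a Pauli string and $\PX{f}$ is supported only on site $f$, every tensor factor of $\conjchan^{\,}_{\Omega}$ away from $f$ squares to the identity and cancels, leaving $\ConjChanDag{\Omega} \PX{f} \ConjChan{\Omega} = \pm \PX{f}$, with the sign fixed solely by whether the site-$f$ restriction of $\conjchan^{\,}_{\Omega}$ anticommutes with $X$. A short expansion using $\SSProj{0}{\Omega} \SSProj{1}{\Omega} = 0$ then gives $\QECChannel^{\dagger}_{\Omega} \PX{f} \QECChannel^{\,}_{\Omega} = \PX{f} \otimes \big( \SSProj{0}{\Omega} \pm \SSProj{1}{\Omega} \big)$, that is, either $\PX{f}$ itself or $\PX{f} \prod_{j \in \Omega} \SSZ{j}$ by \eqref{eq:SSZ parity}. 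The attached operators are $\SSZ{}$-type, so they commute both with $\PX{f}$ and with the $\SSZ{}$-diagonal controls of the remaining factors; hence conjugating through the entire product merely accumulates parities, yielding $\QECChannel^{\dagger} \PX{f} \QECChannel = \PX{f} \prod_{j \in r^{\,}_x} \SSZ{j}$, where $r^{\,}_x \subset \widetilde{V}$ collects the Stinespring qubits lying in an odd number of clusters $\Omega$ whose recovery Pauli anticommutes with $\PX{f}$. The identical argument gives $\QECChannel^{\dagger} \PZ{f} \QECChannel = \PZ{f} \prod_{j \in r^{\,}_z} \SSZ{j}$.

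With $r^{\,}_x$ and $r^{\,}_z$ so defined, I would take $\QECChannel^{\,}_x$ and $\QECChannel^{\,}_z$ as in \eqref{eq:effective recovery}: the controlled-$\LZ$ gate \eqref{eq:effective recovery X} attaches $\prod_{j \in r^{\,}_x} \SSZ{j}$ to $\PX{f}$ while commuting past $\PZ{f}$, and the controlled-$\LX$ gate \eqref{eq:effective recovery Z} does the converse. Thus $\QECChannel^{\,}_x \QECChannel^{\,}_z$ induces precisely the same Heisenberg map on $\PX{f}$ and $\PZ{f}$---and hence on $\PY{f} = \ii \PX{f} \PZ{f}$---as the full $\QECChannel$. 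Since $\MeasChannel$ and $\uchan$ are untouched, $\chandag \PX{f} \chan = \chan^{\dagger}_{\rm eff} \PX{f} \chan_{\rm eff}$ and likewise for $\PZ{f}$. As $\chan$ teleports, Prop.~\ref{prop:teleportation conditions} furnishes $\chandag \PX{f} \chan = \PX{i} \StabEl^{}_{x}$ and $\chandag \PZ{f} \chan = \PZ{i} \StabEl^{}_{z}$ with $\StabEl^{}_{\nu} \in \UStabOf{\ket{\Psi^{\,}_0}}$; the effective protocol inherits the identical conditions with the same stabilizers and therefore teleports the same logical qubit, establishing $\chan \cong \chan_{\rm eff}$.

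Finally I would dispatch the two side claims. The gates $\QECChannel^{\,}_x$ and $\QECChannel^{\,}_z$ effectively commute because their controls are mutually commuting $\SSZ{}$-projectors while their targets obey $\LX \LZ = - \LZ \LX$, so swapping their order multiplies the applied byproduct by $-1$ only on the doubly-conditioned subspace---an overall phase on the recovered logical state that is immaterial. And if $\QECChannel$ acts trivially on $\Pauli{\nu}{\rm L}$, then $r^{\,}_{\nu} = \emptyset$, whence $\SSProj{1}{\emptyset} = 0$ and $\QECChannel^{\,}_{\nu} = \ident$. The step I expect to demand the most care is the claim that only the site-$f$ content of the generally multi-qubit recovery Paulis is relevant: the corrections applied on other (unmeasured) sites may genuinely differ between $\chan$ and $\chan_{\rm eff}$, and it is Prop.~\ref{prop:teleportation conditions} that licenses discarding them, since successful teleportation of the logical qubit is characterized entirely by the Heisenberg action on the pair $\PX{f}, \PZ{f}$.
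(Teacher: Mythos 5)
Your proposal is correct and follows essentially the same route as the paper's proof: you compute the Heisenberg action of each controlled-Pauli recovery factor on a logical operator, showing it attaches a $\SSZ{}$ parity string exactly when the recovery Pauli anticommutes with that logical, accumulate these parities into the sets $r^{\,}_{x}$, $r^{\,}_{z}$, and then reproduce the action with the two channels of Eq.~\eqref{eq:effective recovery} whose targets are the conjugate logical operators. The only cosmetic differences are that the paper phrases the computation for generic Pauli-string logicals $\Gamma$ (via the exponent $\lambda^{\,}_{\Gamma,s}$ and the choice $\conjchan^{\,}_{\Gamma}=\Gamma'$) rather than the single-site $\PX{f},\PZ{f}$, and that you make explicit the appeal to Prop.~\ref{prop:teleportation conditions} to justify the sense in which $\cong$ means equivalence on logical operators only---a point the paper states in its discussion following the Lemma.
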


The proof of Lemma~\ref{lem:recovery factorization} appears in App.~\ref{app:recovery factorization proof}. The Lemma establishes (\emph{i}) when and how a Clifford recovery operation $\QECChannel^{\,}_s$ \eqref{eq:unitary feedback teleport} acts nontrivially on a logical operator $\Gamma$,  (\emph{ii}) that the effective recovery channel $\QECChannel^{\,}_{\rm eff} = \QECChannel^{\,}_x \, \QECChannel^{\,}_z$ \eqref{eq:canonical Clifford form} is equivalent to $\QECChannel$ in its action on all logicals $\Gamma$ (but not other operators), and (\emph{iii}) that the recovery channels for different logical operators can be written in any order, and do not mix. For $\numQ>1$ logical qubits, the pair of channels $\QECChannel^{\,}_{\nu}$ \eqref{eq:canonical Clifford form} are replaced by $2\numQ$ channels $\QECChannel^{\,}_{n,\nu}$ for $n \in [1,\numQ]$, in any order. We stress that both $\MeasChannel$ and $\uchan$ in $\chan$ \eqref{eq:canonical Clifford form} are unaltered compared to $\chan=\QECChannel \, \MeasChannel \, \uchan$ \eqref{eq:chan canonical form}.

We now state Lemma~\ref{lem:measure = attach}, which constrains the combined action of all dilated channels on logical basis operators. 

\begin{lem}[Constraints on dilated channels]
    \label{lem:measure = attach}
    Consider a standard teleportation protocol $\chan$ in canonical form (see Defs.~\ref{def:chan canonical form} and \ref{def:standard teleport}) involving the measurement of $\Nmeas$ (single-qubit) observables $\mobserv^{\,}_j$ \eqref{eq:CF observable}. Suppose that $\Gamma$ is a logical operator (e.g., $\LX$) acting on the final state $\ket{\Psi^{\,}_T}$ \eqref{eq:teleport final mb state}. Then $\chan$ fails to achieve teleportation \eqref{eq:teleportation conditions} \emph{unless} 
    \begin{equation}
    \label{eq:obs logical commute}
        \comm{\mobserv^{\vpp}_j}{\Gamma (t^{\vpp}_j)} = 0 \, , ~~
    \end{equation}
    for all observables $\mobserv^{\,}_j$ \eqref{eq:CF observable} and all logical operators $\Gamma$ evolved to the Heisenberg time $t^{\,}_j$ immediately prior to the measurement of $\mobserv^{\,}_j$. Moreover, if $\QECChannel$ acts nontrivially on $\Gamma$, then the measurement channel acts as
    \begin{align}
    \label{eq:meas attach}
        \Gamma (t^{\vpp}_{j-1}) = \umeas^{\dagger}_j \, \Gamma (t^{\vpp}_j) \, \umeas^{\vpd}_j \cong \bar{\mobserv}^{\vpp}_j \, \Gamma (t^{\vpp}_j) \, , ~~
    \end{align}
    and trivially otherwise, where $\bar{\mobserv}^{\,}_j$ is the involutory part  \eqref{eq:involutory part} of  $\mobserv^{\,}_j$ \eqref{eq:CF observable} and $\cong$ becomes a genuine equality upon projecting onto the default initial Stinespring state $\ket{0}$. 
\end{lem}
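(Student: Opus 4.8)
The plan is to follow a single final-state logical operator $\Gamma$ (say $\LX$) backward through $\chan=\QECChannel\,\MeasChannel\,\uchan$ in the Heisenberg picture and keep track of the Stinespring content it accumulates. By Prop.~\ref{prop:single-qubit observable} I may take each measurement channel in the form \eqref{eq:fluorescent measurement} with $\PZ{j}\to\bar{\mobserv}^{\,}_j$, namely $\umeas^{\,}_j=\Proj{0}{j}\otimes\SSid{j}+\Proj{1}{j}\otimes\SSX{j}$ with $\Proj{0}{j}=\tfrac{1}{2}(\ident+\bar{\mobserv}^{\,}_j)$ and $\Proj{1}{j}=\tfrac{1}{2}(\ident-\bar{\mobserv}^{\,}_j)$; note $\umeas^{\,}_j$ is a Hermitian involution, so $\umeas^{\dagger}_j=\umeas^{\,}_j$. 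By Lemma~\ref{lem:recovery factorization}, conjugating $\Gamma$ through $\QECChannel$ first (it sits last in $\chan$) dresses it with $\SSZ{}$ operators on exactly the subset $r\subset\widetilde{V}$ of Stinespring qubits whose outcomes feed $\QECChannel$'s action on $\Gamma$. Since this dressing commutes with every later measurement channel $\umeas^{\,}_{j'}$ ($j'\neq j$), when the backward evolution reaches the measurement of $\mobserv^{\,}_j$ the operator has the form $\Gamma(t^{\,}_j)=\gamma\,S$, where $\gamma$ is its physical part and $S$ is a product of Stinespring $\SSZ{}$ operators that includes $\SSZ{j}$ precisely when $j\in r$.

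The two elementary conjugations I would then compute are those through a single $\umeas^{\,}_j$. For the Stinespring factor one finds $\umeas^{\dagger}_j\,\SSZ{j}\,\umeas^{\,}_j=\bar{\mobserv}^{\,}_j\,\SSZ{j}$, so each present $\SSZ{j}$ converts into an \emph{attached physical} $\bar{\mobserv}^{\,}_j$ while keeping its $\SSZ{j}$. For the physical factor, $\umeas^{\dagger}_j\,\gamma\,\umeas^{\,}_j=\big(\Proj{0}{j}\gamma\Proj{0}{j}+\Proj{1}{j}\gamma\Proj{1}{j}\big)\otimes\SSid{j}+\big(\Proj{0}{j}\gamma\Proj{1}{j}+\Proj{1}{j}\gamma\Proj{0}{j}\big)\otimes\SSX{j}$, whose $\SSX{j}$ coefficient is the off-diagonal part of $\gamma$ with respect to the two eigenspaces of $\bar{\mobserv}^{\,}_j$. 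This coefficient lives in distinct matrix blocks and hence vanishes \emph{if and only if} $\comm{\bar{\mobserv}^{\,}_j}{\gamma}=0$, which (because the Stinespring dressing of $\Gamma(t^{\,}_j)$ commutes with the physical $\mobserv^{\,}_j$, and $\mobserv^{\,}_j$ and $\bar{\mobserv}^{\,}_j$ differ only by scalars) is equivalent to $\comm{\mobserv^{\,}_j}{\Gamma(t^{\,}_j)}=0$.

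These two facts deliver both halves of the Lemma. For the commutation requirement, suppose $\comm{\mobserv^{\,}_j}{\Gamma(t^{\,}_j)}\neq0$; then a nonzero $\SSX{j}$ term is produced, and it cannot be undone by any later channel: $\mobserv^{\,}_j$ is the unique source of $\SSX{j}$, the remaining measurements $\umeas^{\,}_{j''}$ ($j''\neq j$) commute with $\SSX{j}$, and $\uchan$ is purely physical. Hence $\SSX{j}$ survives into $\chandag\,\Gamma\,\chan$. Because the apparatus starts in the default state $\ket{\bvec 0}$ and $\SSX{j}\ket{0}^{\,}_j=\ket{1}^{\,}_j$, such a term cannot be written as $\PX{i}\,\StabEl^{\,}_x$ with $\StabEl^{\,}_x\in\UStabOf{\ket{\Psi^{\,}_0}}$ stabilizing the dilated product state \eqref{eq:teleport initial mb state} (a unitary stabilizer of a state containing $\ket{\bvec 0}^{\,}_{\rm ss}$ may carry $\SSZ{}$ factors but never a net $\SSX{j}$), contradicting the state-transfer conditions of Prop.~\ref{prop:teleportation conditions}; thus $\chan$ fails unless $\comm{\mobserv^{\,}_j}{\Gamma(t^{\,}_j)}=0$. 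Applying this separately to $\LX$ and $\LZ$ (and hence $\LY$), and to each measurement, gives the stated condition for all logical operators; the argument is unchanged for the $2\numQ$ logicals when $\numQ>1$. For the second half, I restrict to the admissible commuting case, where $\umeas^{\dagger}_j\,\gamma\,\umeas^{\,}_j=\gamma$. If $j\in r$ (i.e.\ $\QECChannel$ acts nontrivially on $\Gamma$ through outcome $j$) the surviving $\SSZ{j}$ in $S$ becomes $\bar{\mobserv}^{\,}_j\,\SSZ{j}$, so $\Gamma(t^{\,}_{j-1})=\umeas^{\dagger}_j\,\Gamma(t^{\,}_j)\,\umeas^{\,}_j=\bar{\mobserv}^{\,}_j\,\Gamma(t^{\,}_j)$, which is \eqref{eq:meas attach}; the residual $\SSZ{j}$ acts as the identity on $\ket{0}^{\,}_j$, which is exactly the content of the $\cong$ (equality upon projecting onto the default Stinespring state). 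If instead $j\notin r$, no $\SSZ{j}$ is present and $\Gamma$ passes through unchanged.

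I expect the main obstacle to be the bookkeeping that ties a ``dangling'' $\SSX{j}$ to genuine failure of teleportation, rather than the short operator identities themselves. Concretely, one must argue carefully that (i) the $\SSX{j}$ generated at step $j$ is neither created earlier nor cancelled later in the time-ordered sequence of channels, and (ii) its presence is truly incompatible with the stabilizer form demanded by Prop.~\ref{prop:teleportation conditions} on the dilated space with default Stinespring state $\ket{\bvec 0}$---in particular, that no unitary stabilizer of such a product state can contain a net odd power of any $\SSX{j}$. Coordinating this with the $\SSZ{}$-dressing of Lemma~\ref{lem:recovery factorization} and with the ordering of the measurements is where the care is required.
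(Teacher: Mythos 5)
Your channel identities are correct: $\umeas^{\dagger}_j\,(\ident\otimes\SSZ{j})\,\umeas^{\,}_j=\bar{\mobserv}^{\,}_j\otimes\SSZ{j}$, the block decomposition of $\umeas^{\dagger}_j\,(\gamma\otimes\SSid{j})\,\umeas^{\,}_j$, and the resulting derivation of Eq.~\eqref{eq:meas attach} in the commuting case all check out, as does your bookkeeping that an $\SSX{j}$ component created at step $j$ is never acted upon again (all recovery channels precede the measurements in Heisenberg order, and the remaining channels do not touch Stinespring qubit $j$). The genuine gap sits in the necessity half, at exactly the point you flagged: the claim that ``a unitary stabilizer of a state containing $\ket{\bvec 0}^{\,}_{\rm ss}$ may carry $\SSZ{}$ factors but never a net $\SSX{j}$'' is false. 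Take any physical qubit $a$ prepared in $\ket{0}^{\,}_a$ in the initial product state and consider $\StabEl=\BKop{0}{0}^{\,}_{a}\otimes\SSid{j}+\BKop{1}{1}^{\,}_{a}\otimes\SSX{j}$, i.e.\ a CNOT from $a$ to Stinespring qubit $j$: it is unitary, it stabilizes $\ket{\Psi^{\,}_0}\otimes\ket{\bvec 0}^{\,}_{\rm ss}$, and it has a nonzero $\SSX{j}$ component. What teleportation actually forces is only that the \emph{coefficient} of $\SSX{j}$ in $\chandag\,\Gamma\,\chan$ annihilate the family of dilated initial states $\ket{\Psi^{\,}_0(\bvec{\alpha},\bvec{\beta})}\otimes\ket{\bvec 0}^{\,}_{\rm ss}$, not that it vanish as an operator.

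This distinction is fatal in precisely the generality the Lemma demands. The coefficient you generate is $O=\Proj{0}{j}\,\gamma\,\Proj{1}{j}+\Proj{1}{j}\,\gamma\,\Proj{0}{j}$, conjugated by the remaining (purely physical or off-qubit-$j$) evolution. If $\gamma$ and $\bar{\mobserv}^{\,}_j$ were Pauli strings, $O$ would be either zero or unitary, a unitary cannot annihilate any state, and your argument would close. But in a standard teleportation protocol $\uchan$ may be non-Clifford and $\mobserv^{\,}_j=\cliff\,\mobserv_j'\,\cliff^{\dagger}$ \eqref{eq:CF observable} need not be a Pauli string, so $O$ can be nonzero yet non-unitary, and nothing in your proposal excludes the evolved $O$ from killing every state in the relevant family. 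The paper's proof avoids this by a different mechanism: it evolves the \emph{outcome-averaged} operator, $\Gamma(t'_j)=\trace_{{\rm ss},j}[\,\umeas^{\dagger}_j\,\Gamma(t^{\,}_j)\,\umeas^{\,}_j\,\SSProj{0}{j}\,]\propto\bar{\mobserv}^{\,\lambda}_j\acomm{\bar{\mobserv}^{\,}_j}{\Gamma(t^{\,}_j)}$ with $\lambda\in\{0,1\}$, requires that it remain involutory (the operator-level content of Eqs.~\eqref{eq:teleportation conditions} and \eqref{eq:measure attach final Gamma}), and then observes that $B=\comm{\Gamma(t^{\,}_j)}{\bar{\mobserv}^{\,}_j}$ obeys $B^2=0$ and $B^{\dagger}=-B$, so $B$ is normal and nilpotent, hence $B=0$; this skew-Hermitian-plus-nilpotent step is what handles non-Pauli $\gamma$ and $\bar{\mobserv}^{\,}_j$ at the operator level, with no reference to the state. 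To salvage your route you would need the same operator-level input, e.g.\ that the $\ket{\bvec 0}^{\,}_{\rm ss}$ block of $\chandag\,\Gamma\,\chan$ is itself unitary; then unitarity of the full dilated operator together with unitarity of that diagonal block forces the $\SSX{j}$ blocks to vanish identically (unit column norms), and your conclusion follows. Without such an input, ``a dangling $\SSX{j}$ survives'' does not imply ``teleportation fails.''
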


The proof of Lemma~\ref{lem:measure = attach} appears in App.~\ref{app:Proof meas = attach}. Simply put, Lemma~\ref{lem:measure = attach} extends Lemma~\ref{lem:recovery factorization} to the combination of a measurements and recovery operations acting on involutory logical operators $\Gamma \in \bvec{\Gamma} \subset \PauliGroup*{\rm ph}$. For a given $\Gamma$, this combination either acts trivially, violates the assumption that $\Gamma(t)$ remains a logical operator (i.e., involutory), or attaches the involutory part $\bar{\mobserv}^{\,}_j$ \eqref{eq:involutory part} of the measured operator $\mobserv^{\,}_j$ to $\Gamma (t^{\,}_j)$, where $t^{\,}_j$ is the (Heisenberg) time immediately prior to the $j$th measurement. Moreover,  $\mobserv^{\,}_j$ must commute with $\Gamma(t^{\,}_j)$. For $\MeasChannel^{\,}_j$ to act nontrivially on $\Gamma$, an odd number of recovery Paulis $\conjchan^{\,}_{j,k}$ conditioned on the outcome $j$ must \emph{anticommute} with the operators $\Gamma (\tau^{\,}_{j,k})$, where $\tau^{\,}_{j,k}$ is the (Heisenberg) time immediately prior to the channel $\QECChannel^{\,}_{j,k}$. This holds for each measured observable $\mobserv^{\,}_j$ and all logical operators $\Gamma$.

Note that Lemma~\ref{lem:measure = attach} holds for arbitrary (i.e., non-Clifford) $\uchan$ in the canonical-form protocol \eqref{eq:chan canonical form}. The proof of Lemma~\ref{lem:measure = attach} holds independently of any physical operations that act prior to the first measurement. Additionally, any physical unitary can be ``pulled through'' a measurement $\umeas^{\,}_j$ without affecting the result of Prop.~\ref{prop:single-qubit observable}, so that Lemma~\ref{lem:measure = attach} continues to hold. Likewise, one can safely ``pull through'' any recovery operations $\QECChannel$ that act prior to some measurement (in the Schr\"odinger picture), to realize a canonical-form protocol \eqref{eq:chan canonical form}.

Summarizing the above, the action of any canonical-form standard teleportation protocol $\chan$ on a logical operator $\Gamma \in \bvec{\Gamma}$ is constrained by Lemmas~\ref{lem:recovery factorization} and \ref{lem:measure = attach}. The former establishes that $\QECChannel$ is equivalent in its action on all $\Gamma \in \bvec{\Gamma}$ to the same protocol $\chan$ with $\QECChannel \to \QECChannel^{\,}_{\rm eff} = \prod_{n=1}^{\numQ} \, \QECChannel^{\,}_{n,x} \, \QECChannel^{\,}_{n,z}$ \eqref{eq:canonical Clifford form}, where $\conjchan^{\,}_{n,\nu}$ is the \emph{other} logical operator for qubit $n$ or the identity. The latter establishes that each measured observable $\mobserv^{\,}_j$ must commute with every logical operator $\Gamma(t^{\,}_j)$ at the Heisenberg time $t^{\,}_j$ immediately prior to $\MeasChannel^{\,}_j$, and that $\MeasChannel^{\,}_j$ attaches $\bar{\mobserv}^{\,}_j$ \eqref{eq:involutory part} to $\Gamma (t^{\,}_j)$ if the effective channel $\QECChannel^{\,}_{\Gamma}$ \eqref{eq:effective recovery} is nontrivial (i.e., $\conjchan^{\,}_{\Gamma} \neq \ident$).

\subsection{At least two measurements are required}
\label{subsec:M=2 sufficient}

We now prove that at least two measurements are needed to exceed the Lieb-Robinson bound \eqref{eq:LR bound}, as required by Def.~\ref{def:standard teleport} of standard teleportation. We first prove in Prop.~\ref{prop:M=1 useless} that any physical teleportation protocol $\chan$ (see Def.~\ref{def:phys teleport}) involving a \emph{single} measurement ($\Nmeas=1$) does not exceed the bound \eqref{eq:LR bound}: Even allowing for \emph{arbitrary} outcome-dependent recovery operations, the task distance $\Dist$ \eqref{eq:L def} realized by $\chan$ with $\Nmeas=1$ obeys the  bound $\Dist \leq \LRvel T$ \eqref{eq:LR bound} for $\Nmeas=0$.  Hence, a single measurement gives no enhancement compared to no measurements  at all. In Lemma~\ref{lem:compatible observables}, we
prove that two measurements (with corresponding Clifford recovery channels) are sufficient to realize $\Dist > \LRvel T$ in standard teleportation protocols.

\begin{prop}[A single measurement is insufficient]
\label{prop:M=1 useless}
    Consider a \emph{physical} teleportation protocol $\chan= \QECChannel \MeasChannel \uchan$ \eqref{eq:chan canonical form} in canonical form (see Defs.~\ref{def:phys teleport} and \ref{def:chan canonical form}), where $\uchan$ captures generic time evolution on $\Hilbert^{\,}_{\rm ph}$ \eqref{eq:physical Hilbert space} for duration $T$ and obeys the bound \eqref{eq:LR bound} with Lieb-Robinson velocity $\LRvel$; $\MeasChannel$  realizes exactly one projective measurement of some (effectively) involutory observable $\mobserv$ \eqref{eq:involutory part}; and $\QECChannel$ is an arbitrary recovery channel conditioned on the measurement's outcome. If the protocol $\chan$ teleports a single logical qubit a distance $\Dist$ in time $T$, then $\Dist \leq \LRvel T$. 
\end{prop}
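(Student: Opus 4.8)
The plan is to show that a single measurement cannot assist the Heisenberg transport of all three logical Paulis at once, so that at least one of them is carried by the unitary part $\uchan$ alone and is therefore confined to the Lieb--Robinson light cone of $\uchan$. By Prop.~\ref{prop:teleportation conditions} and its $Y$-type corollary~\eqref{eq:Y teleportation condition}, teleporting one qubit requires $\chandag \Pauli{\nu}{f} \chan = \Pauli{\nu}{i} \StabEl^{\,}_{\nu}$ for each $\nu \in \{x,y,z\}$, so all three of $\PX{f},\PY{f},\PZ{f}$ must be transported from $f$ back to a neighborhood of $i$; equivalently each evolved operator obeys a norm-$2$ commutator with a local Pauli at site $i$, as in~\eqref{eq:X Z Heisenberg commutator}. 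The goal is to contradict this whenever $\Dist = d(i,f) > \LRvel T$.

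First I would isolate the single measured site. Since $\Nmeas = 1$, the channel $\MeasChannel$ measures one (effectively involutory, by Prop.~\ref{prop:single-qubit observable}) single-qubit observable $\bar{\mobserv}^{\,}_j$ at a site $j$, and conditioned on the outcome $b \in \{0,1\}$ the protocol acts as the Kraus operator $K^{\,}_b = V^{\,}_b \, \Proj{b}{j} \, \uchan$, with $\Proj{b}{j} = (\ident + (-1)^b \bar{\mobserv}^{\,}_j)/2$ and $V^{\,}_b$ the recovery; by Prop.~\ref{prop:pure output} teleportation must succeed for each $b$. The key step is then a cancellation argument: the only factor the measurement can insert beyond $\uchan$ is the single local operator $\bar{\mobserv}^{\,}_j$ (the content of Lemma~\ref{lem:measure = attach} in the Clifford case, and of the decomposition of $\Proj{b}{j}$ in general). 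If both $\PX{f}$ and $\PZ{f}$ receive this factor, then using $\PY{f} = \ii \PX{f} \PZ{f}$, $\comm{\bar{\mobserv}^{\,}_j}{\PX{f}} = 0$, and $(\bar{\mobserv}^{\,}_j)^2 = \ident$ the two insertions cancel,
\begin{equation*}
  \chandag \PY{f} \chan \cong \ii\, \uchan^{\dagger} \bar{\mobserv}^{\,}_j \PX{f} \bar{\mobserv}^{\,}_j \PZ{f} \uchan = \ii\, \uchan^{\dagger} \PX{f} \PZ{f} \uchan = \uchan^{\dagger} \PY{f} \uchan \, .
\end{equation*}
Enumerating the cases, whichever subset of $\{\PX{f},\PZ{f}\}$ is assisted, at least one of the three logicals $\PX{f},\PY{f},\PZ{f}$ is multiplied by an even number (zero) of factors $\bar{\mobserv}^{\,}_j$ and is hence transported by $\uchan$ alone.

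Finally I would invoke locality. For the unassisted logical $\Gamma^{\,}_f$ we have $\chandag \Gamma^{\,}_f \chan = \uchan^{\dagger} \Gamma^{\,}_f \uchan$, an exact equality upon projecting the Stinespring register onto its default state $\ket{0}$, whose support lies within distance $\LRvel T$ of $f$ by Theorem~\ref{thm:LR theorem}. But teleportation demands $\norm{\comm{\uchan^{\dagger} \Gamma^{\,}_f \uchan}{P^{\,}_i}} = 2$ for an appropriate local Pauli $P^{\,}_i$ at site $i$, forcing $d(i,f) \le \LRvel T$. Hence $\Dist \le \LRvel T$, and a single measurement gives no advantage over purely unitary evolution.

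The main obstacle is handling \emph{arbitrary} (non-Clifford) recovery, where Lemma~\ref{lem:measure = attach} does not directly apply. There I would work with the conditional Kraus operators $K^{\,}_b = V^{\,}_b \Proj{b}{j} \uchan$: since each $V^{\,}_b$ is built from local gates, the conjugation $V_b^{\dagger} \Pauli{\nu}{f} V^{\,}_b$ keeps the final logicals supported near $f$, so the only factor capable of reaching a distant site $i$ is $\uchan^{\dagger} \bar{\mobserv}^{\,}_j \uchan$ coming from the projector. One must then show that when $\Dist > \LRvel T$ the site-$i$ content of the transported $\PX{f}$ and $\PZ{f}$ both derive from this \emph{common} operator, so that their product reproduces $\uchan^{\dagger} \PY{f} \uchan$ with no residual site-$i$ content, contradicting transport of $\PY{f}$ to $i$. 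Carefully tracking the non-unitary projectors and the outcome probabilities in this step---rather than relying on the clean operator identities available in the Clifford case---is the delicate part of the argument.
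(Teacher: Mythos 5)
Your proposal is correct and follows essentially the same route as the paper's own proof: a case analysis over which of $\PX{f}$, $\PZ{f}$ receives the attached factor $\bar{\mobserv}$, the cancellation $\bar{\mobserv}^2 = \ident$ in $\PY{f} = \ii\, \PX{f} \PZ{f}$ when both are assisted, and the conclusion that the unassisted logical operator grows under $\uchan$ alone and is therefore confined to the Lieb-Robinson cone, forcing $\Dist \leq \LRvel T$. The only minor difference is your sketch of the non-Clifford recovery case via conditional Kraus operators, where the paper instead remarks that Hermiticity of $\StabEl^{\,}_{\nu}$ is equivalent to $\comm{\Pauli{\nu}{i}}{\uchan^{\dagger} \bar{\mobserv}\, \uchan} = 0$ for all $\nu$; both are outlines of the same fact, flagged as delicate in each treatment.
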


The proof of Prop.~\ref{prop:M=1 useless} appears in App.~\ref{app:M=1 useless proof}. Prop.~\ref{prop:M=1 useless} establishes that a single projective measurement cannot lead to a speedup compared to a protocol $\chan'$ with no measurements at all. Essentially, Prop.~\ref{prop:teleportation conditions} is only  fulfilled with $\Dist > \LRvel T$ if the dilated channels attach distinct operators (via Lemma~\ref{lem:measure = attach}) to, e.g., the logical operators $\LX$ and $\LZ$; otherwise, $\LY$ can only travel a distance $\Dist \leq \LRvel T$.

We now consider protocols with two distinct measurements. We first present and prove Lemma~\ref{lem:compatible observables}, which establishes that a standard teleportation protocol $\chan$ with $\Nmeas=2$ \emph{can} teleport a single qubit a distance $\Dist > \LRvel T$, as well as the condition \eqref{eq:compatible observ} that must be satisfied. 

\begin{lem}[Compatible measurements]
\label{lem:compatible observables}
    Consider a standard teleportation protocol $\chan$ in canonical form $\chan = \QECChannel \, \MeasChannel \, \uchan$ (see Defs.~\ref{def:chan canonical form} and \ref{def:standard teleport}), where $\uchan$ realizes unitary time evolution on $\Hilbert^{\,}_{\rm ph}$ for duration $T$ and obeys the bound \eqref{eq:LR bound}, $\MeasChannel$ involves two projective measurements, and $\QECChannel$ consists of Clifford recovery operations (see Def.~\ref{def:Clifford recovery}). If $\chan$ teleports a logical qubit a distance $\Dist> \LRvel T$ \eqref{eq:L def}, then (i) $\chan$ must fulfill the criteria of Lemma~\ref{lem:measure = attach}, (ii) the effective recovery channels \eqref{eq:effective recovery} for the two logical basis operators \eqref{eq:logical alpha beta expval} must be conditioned on different measurement outcomes, and (iii) the observables must satisfy  
    \begin{equation}
        \label{eq:compatible observ}
         \comm{\mobserv^{\,}_1}{\mobserv^{\,}_2} = 0 \, ,~~
    \end{equation}
   where $\mobserv^{\,}_{1}$ and $\mobserv^{\,}_{2}$ are the \emph{effective} (involutory) operators \eqref{eq:CF observable} measured in the canonical-form protocol $\chan$ \eqref{eq:canonical Clifford form}.
\end{lem}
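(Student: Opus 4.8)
The plan is to dispatch the three claims in turn, leaning on the two preceding lemmas and on the Pauli algebra that the teleported logical operators must obey. Claim (i) is essentially immediate: since $\chan$ achieves teleportation over a distance $\Dist > \LRvel T$, it satisfies the state-transfer conditions of Prop.~\ref{prop:teleportation conditions}, and Lemma~\ref{lem:measure = attach} asserts that any such protocol \emph{fails} unless each effective observable $\mobserv^{\,}_j$ commutes with every logical operator at the Heisenberg time immediately preceding its measurement (whereupon the measurement attaches the involutory part $\bar{\mobserv}^{\,}_j$). Invoking the contrapositive gives (i).

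For claim (ii), I would first invoke Lemma~\ref{lem:recovery factorization} to write $\QECChannel \cong \QECChannel^{\,}_x \, \QECChannel^{\,}_z$ with the two effective channels \eqref{eq:effective recovery} conditioned on Stinespring subsets $r^{\,}_x, r^{\,}_z \subseteq \{1,2\}$. Propagating $\PX{f}$, $\PZ{f}$, and $\PY{f} = \ii \, \PX{f} \, \PZ{f}$ backward through $\QECChannel$ and then $\MeasChannel$ --- using the CNOT structure of the measurement channel \eqref{eq:fluorescent measurement} together with Lemma~\ref{lem:measure = attach} --- attaches the physical factor $\prod_{j \in r^{\,}_x} \bar{\mobserv}^{\,}_j$ to $\LX$, the factor $\prod_{j \in r^{\,}_z} \bar{\mobserv}^{\,}_j$ to $\LZ$, and the factor indexed by the symmetric difference of $r^{\,}_x$ and $r^{\,}_z$ to $\LY$ (the shared outcomes cancel because the Stinespring $\SSZ{}$ factors square to the identity). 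Since $\uchan$ grows operator support by at most $\LRvel T$ (Theorem~\ref{thm:LR theorem}) whereas teleportation requires each of $\LX, \LY, \LZ$ to reach the initial site $i$ at distance $\Dist > \LRvel T$ --- equivalently, to retain a norm-$2$ commutator with $\PZ{i}$ or $\PX{i}$ as in \eqref{eq:X Z Heisenberg commutator} --- none of the three attachment sets may be empty. With only two outcomes, nonemptiness of the symmetric difference forces $r^{\,}_x \neq r^{\,}_z$, which is exactly (ii); this parallels the single-measurement obstruction of Prop.~\ref{prop:M=1 useless}, and any outcome left unused by both channels would be removable via Prop.~\ref{prop:feedback required}.

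The crux is claim (iii). I would use that the Heisenberg-evolved logicals obey the Pauli algebra: $\chandag \PX{f} \chan = \PX{i} \StabEl^{\,}_x$ and $\chandag \PZ{f} \chan = \PZ{i} \StabEl^{\,}_z$ anticommute by \eqref{eq:X Z Heisenberg commutator}. Pulling this anticommutation back through the unitary $\uchan$ (which preserves commutators) and noting that all attached $\SSZ{}$ factors mutually commute, the entire anticommutation must be carried by the pre-$\uchan$ \emph{physical} operators $\PX{f} \, \Pi^{\,}_x$ and $\PZ{f} \, \Pi^{\,}_z$, where $\Pi^{\,}_x = \prod_{j \in r^{\,}_x} \bar{\mobserv}^{\,}_j$ and $\Pi^{\,}_z = \prod_{j \in r^{\,}_z} \bar{\mobserv}^{\,}_j$. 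A short manipulation using $\acomm{\PX{f}}{\PZ{f}} = 0$ then reduces the condition $\acomm{\PX{f} \Pi^{\,}_x}{\PZ{f} \Pi^{\,}_z} = 0$ to the requirement that $\Pi^{\,}_x$ and $\Pi^{\,}_z$ \emph{commute}; since $r^{\,}_x \neq r^{\,}_z$ with only two outcomes, this collapses to $\comm{\bar{\mobserv}^{\,}_1}{\bar{\mobserv}^{\,}_2} = 0$. Because each $\mobserv^{\,}_j$ is an affine function of $\bar{\mobserv}^{\,}_j$ (Prop.~\ref{prop:single-qubit observable}, \eqref{eq:involutory part nice}), commutativity of the involutory parts is equivalent to $\comm{\mobserv^{\,}_1}{\mobserv^{\,}_2} = 0$, establishing \eqref{eq:compatible observ}.

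The step I expect to demand the most care --- and the main obstacle --- is justifying that the attached operators $\bar{\mobserv}^{\,}_j$ commute with the on-site factors $\PX{f}$ and $\PZ{f}$, i.e.\ that neither measurement acts on the final logical site $f$; this is what makes the manipulation in (iii) go through. I would extract it from Lemma~\ref{lem:measure = attach} itself: each $\mobserv^{\,}_j$ must commute simultaneously with $\LX(t^{\,}_j)$, $\LY(t^{\,}_j)$, and $\LZ(t^{\,}_j)$, whose restrictions to site $f$ are the three distinct Paulis $\PX{f}, \PY{f}, \PZ{f}$, and the only single-site operator commuting with all three is the identity. A secondary subtlety is the ordering of the two measurement channels in the backward evolution; I would argue this ordering does not affect the attachment content --- only the commuting $\SSZ{}$ bookkeeping and the (to-be-shown commuting) involutory factors enter --- so that $\Pi^{\,}_x$ and $\Pi^{\,}_z$ are well defined.
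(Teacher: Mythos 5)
Your claims (i) and (ii) are sound and essentially track the paper's own reasoning: nonemptiness of the attachment sets for $\LX$, $\LZ$, and $\LY$ (the last being the symmetric difference) forces the two effective recovery channels to be conditioned on different outcomes. The genuine gap is in claim (iii), at exactly the step you flagged. Your argument that neither observable acts on site $f$ works for the observable encountered \emph{first} in the Heisenberg picture ($\mobserv^{\,}_2$, measured last), since there the logical operators are the bare Paulis $\PX{f},\PY{f},\PZ{f}$ and your "only the identity commutes with all three" argument applies. It fails for $\mobserv^{\,}_1$: Lemma~\ref{lem:measure = attach} only forces $\mobserv^{\,}_1$ to commute with the \emph{dressed} operators $\PX{f}$, $\PZ{f}\bar{\mobserv}^{\,}_2$, and $\PY{f}\bar{\mobserv}^{\,}_2$, and commuting with operators whose \emph{restrictions} to $f$ are the three Paulis is strictly weaker than commuting with the three Paulis themselves (moreover the $Y$ condition is redundant, following from the other two by Leibniz). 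Concretely, take $\mobserv^{\,}_1 = \PX{f}\, B$ with $B$ an involution supported off $f$ that \emph{anticommutes} with $\bar{\mobserv}^{\,}_2$. Then $\comm{\mobserv^{\,}_1}{\PX{f}}=0$ and $\comm{\mobserv^{\,}_1}{\PZ{f}\bar{\mobserv}^{\,}_2}=0$, and even your anticommutation condition holds, since $\PX{f}\Pi^{\,}_x = B$ and $\PZ{f}\Pi^{\,}_z = \PZ{f}\bar{\mobserv}^{\,}_2$ anticommute---yet $\comm{\mobserv^{\,}_1}{\mobserv^{\,}_2} = 2\, \PX{f} B \bar{\mobserv}^{\,}_2 \neq 0$. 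So neither the algebraic patch nor the anticommutation reduction can, by themselves, deliver Eq.~\eqref{eq:compatible observ}.

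The missing ingredient is locality, which is how the paper closes this hole (App.~\ref{app:compatible obs}). In canonical form, $\mobserv^{\,}_1 = \cliff \, \mobserv_1' \, \cliff^{\dagger}$ for a single-qubit observable $\mobserv_1'$ at some site $j$, where $\cliff$ is the physical evolution of duration $\tau^{\,}_1$ after that measurement; so $\mobserv^{\,}_1$ can have support on $f$ only if $d(j,f) \leq \LRvel \tau^{\,}_1$. But then teleportation requires the factor $\uchan^{\dagger}\bar{\mobserv}^{\,}_1\uchan$ (supported within $\LRvel(T-\tau^{\,}_1)$ of $j$) to carry the logical operator to site $i$, since $\uchan^{\dagger}\Pauli{\nu}{f}\uchan$ cannot reach $i$ when $\Dist > \LRvel T$; the triangle inequality gives $\Dist \leq d(i,j)+d(j,f) \leq \LRvel T$, a contradiction. (The paper phrases this as: an observable inside the light cone of $f$ is no more useful than unitary evolution, and Prop.~\ref{prop:M=1 useless} shows the remaining single measurement cannot achieve $\Dist > \LRvel T$.) Once this locality step is supplied, both measured observables act trivially on $f$ and your route goes through, though at that point it is essentially equivalent to the paper's direct expansion $\comm{\mobserv^{\,}_1}{\PZ{f}\mobserv^{\,}_2} = \comm{\mobserv^{\,}_1}{\PZ{f}}\mobserv^{\,}_2 + \PZ{f}\comm{\mobserv^{\,}_1}{\mobserv^{\,}_2} = 0$.
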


The proof of Lemma~\ref{lem:compatible observables} appears in App.~\ref{app:compatible obs}. That proof invokes many previous results, and particularly, the condition that $\Dist > \LRvel T$. In addition to proving that $\Nmeas=2$ is sufficient to realize $\Dist>\LRvel T$ \eqref{eq:L def}, the Lemma establishes a compatibility condition \eqref{eq:compatible observ} on the canonical-form observables $\mobserv^{\,}_{1,2}$ \eqref{eq:CF observable}. Denoting by $\mobserv_1'$ and $\mobserv_2'$  the single-qubit observables measured in the \emph{na\"ive} protocol $\chan'$ (which need not be in canonical form), we find
\begin{equation}
\label{eq:noncanon compatible observ}
    \comm{\mobserv_1^{\prime}}{\CliffDag{12} \, \mobserv_2^{\prime}  \, \Cliff{12}} = 0 \, ,~~
\end{equation}
if, in the Schr\"odinger picture, $\chan'$ involves the measurement of $\mobserv_1'$, followed by the physical unitary channel $\cliff^{\,}_{12}$, followed by the measurement of $\mobserv_2'$. Finally, we note that the two observables $\mobserv^{\,}_{1,2}$ (either in the na\"ive protocol $\chan'$ or in the canonical-form protocol $\chan$) must act on different sites, or they either fail to commute or fail to realize two distinct physical observables, so that Prop.~\ref{prop:M=1 useless} applies.

The compatibility condition \eqref{eq:compatible observ} makes sense when one considers the information that can be extracted from the resource state $\ket{\Psi_t}$ \eqref{eq:resource state}. Suppose that one measures $\PZ{j}$ on site $j$. Measuring $\PZ{j}$ again is gives the same outcome, while measuring $\PX{j}$ or $\PY{j}$ gives a random outcome;  in either case, no new information can be extracted. In the na\"ive protocol, the single-qubit measurements only extract information from the resource state if they act on distinct sites or if there are intervening multiqubit unitaries. For the canonical-form observables \eqref{eq:CF observable}, one expects that at most $n$ independent outcomes can be extracted from $n$ qubits via measurements, provided that the $n$ observables mutually commute.

However, it remains to understand \emph{how} a pair of compatible measurements can be useful to teleportation, as well as the extent of that utility in terms of exceeding the measurement-free Lieb-Robinson bound \eqref{eq:LR bound}. We now present Lemma~\ref{lem:M=2 dist}, which establishes the maximum enhancement to the teleportation distance $\Dist =d(i,f)$ \eqref{eq:L def} that a pair of compatible measurements \eqref{eq:compatible observ} can realize, using the fact that $\uchan$ obeys the bound \eqref{eq:LR bound}.

\begin{lem}[Speedup from two measurements]
\label{lem:M=2 dist}
    Consider a standard teleportation protocol $\chan$ (see Def.~\ref{def:standard teleport})  in canonical form $\chan = \QECChannel \, \MeasChannel \, \uchan$ \eqref{eq:canonical Clifford form}, where $\uchan$ is a local unitary (e.g., an FDQC) that obeys the bound \eqref{eq:LR bound} with Lieb-Robinson velocity $\LRvel$ and total duration (or depth) $T$, $\MeasChannel$ involves the measurement of two compatible, involutory observables $\mobserv^{\,}_1 \neq \mobserv^{\,}_2$, and $\QECChannel$ involves outcome-dependent operations that are useful in the sense of Lemma~\ref{lem:compatible observables}. Then the task distance $\Dist = d(i,f)$ \eqref{eq:L def} obeys the bound
    \begin{equation}
        \label{eq:M=2 speed limit}
        \Dist \leq \LRvel (3  T- 2)  - 2 \LRvel \, \max_\nu \tau^{\,}_\nu   \, ,~~
    \end{equation}
    where $\tau^{\,}_{\nu}$ is the total duration of time evolution following the measurement of $\mobserv_{\nu}'$ in the na\"ive protocol $\chan'$  (in the Schr\"odinger picture), and $T \geq 2$.  
\end{lem}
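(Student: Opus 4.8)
The plan is to evolve each final logical basis operator backward through $\chan = \QECChannel \, \MeasChannel \, \uchan$, track the growth of its support with the Lieb-Robinson bound (Theorem~\ref{thm:LR theorem}), and then use the teleportation condition \eqref{eq:X Z Heisenberg commutator} to force the initial site $i$ into that support. First I would replace $\QECChannel$ by its effective form $\QECChannel^{\,}_x \QECChannel^{\,}_z$ from Lemma~\ref{lem:recovery factorization}; since Lemma~\ref{lem:compatible observables}(ii) guarantees the two recovery channels are conditioned on different outcomes, I may label the measurements so that $\QECChannel^{\,}_x$ is conditioned on the outcome of $\mobserv^{\,}_1$ (i.e.\ $r^{\,}_x = \{1\}$) and $\QECChannel^{\,}_z$ on that of $\mobserv^{\,}_2$. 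Evolving $\PX{f}$ backward through $\QECChannel$ then attaches the Stinespring factor $\SSZ{1}$ (because $\PX{f}$ anticommutes with the $\LZ$ applied by $\QECChannel^{\,}_x$ but commutes with the $\LX$ applied by $\QECChannel^{\,}_z$), and Lemma~\ref{lem:measure = attach} converts this, on passing through $\MeasChannel$, into the involutory observable $\bar{\mobserv}^{\,}_1$ attached to $\PX{f}$. Writing $A \equiv \uchan^{\dagger} \PX{f} \uchan$ and $B \equiv \uchan^{\dagger} \bar{\mobserv}^{\,}_1 \uchan$, the teleportation condition (Prop.~\ref{prop:teleportation conditions}) reads $\chandag \PX{f} \chan = A \, B \, \SSZ{1} = \PX{i} \StabEl^{\,}_x$.

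Next I would bound the two supports. Theorem~\ref{thm:LR theorem} places $\mathrm{supp}(A)$ within distance $\LRvel T$ of $f$. For $B$ I would use the canonical-form construction \eqref{eq:CF observable}: the na\"ive observable $\mobserv_1'$ sits at a site $x^{\,}_1$ and is measured a time $\tau^{\,}_1$ before the end, so $\bar{\mobserv}^{\,}_1 = V \bar{\mobserv}_1' V^{\dagger}$ with $\uchan = V \, U$, where $V$ is the evolution during the final $\tau^{\,}_1$ and $U$ the evolution preceding the measurement of $\mobserv_1'$. Hence $B = U^{\dagger} \bar{\mobserv}_1' \, U$ is $\bar{\mobserv}_1'$ evolved backward through $U$ alone, so $\mathrm{supp}(B)$ lies within distance $\LRvel(T - \tau^{\,}_1)$ of $x^{\,}_1$. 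The anticommutation \eqref{eq:X Z Heisenberg commutator}, $\norm{\comm{\chandag \PX{f} \chan}{\PZ{i}}} = 2$, forces $A \, B$ to anticommute with $\PZ{i}$; since $\Dist = d(i,f) > \LRvel T$ the factor $A$ commutes with $\PZ{i}$, so $B$ must carry the anticommutation, placing $i \in \mathrm{supp}(B)$ and yielding $d(i,x^{\,}_1) \leq \LRvel(T - \tau^{\,}_1)$.

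The crucial remaining step is to show that $\mathrm{supp}(A)$ and $\mathrm{supp}(B)$ must \emph{overlap}, giving $d(x^{\,}_1,f) \leq \LRvel T + \LRvel(T - \tau^{\,}_1)$. I would argue by applying $A \, B \, \SSZ{1} = \PX{i} \StabEl^{\,}_x$ to $\ket{\Psi^{\,}_0} \otimes \ket{0}^{\,}_{\rm ss}$: since $\StabEl^{\,}_x$ stabilizes the initial product state and $\SSZ{1} \ket{0}^{\,}_{\rm ss} = \ket{0}^{\,}_{\rm ss}$, this reduces to $A \, B \ket{\Psi^{\,}_0} = \PX{i} \ket{\Psi^{\,}_0}$. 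If $A$ and $B$ had disjoint supports, then on $\mathrm{supp}(A)$ (which excludes $i$) the factor $A$ would have to stabilize the product state by itself, i.e.\ $\PX{f} \ket{\Psi^{\,}_t} = \ket{\Psi^{\,}_t}$ for the resource state $\ket{\Psi^{\,}_t} = \uchan \ket{\Psi^{\,}_0}$ \eqref{eq:resource state}. This is impossible: $\ket{\Psi^{\,}_t}$ genuinely encodes the logical amplitudes, so $\PX{f}$ cannot pin $\matel*{\psi}{\PX{}}{\psi}$ to $+1$ independently of the input, contradicting Eq.~\ref{eq:logical X expval gen}. Thus there is a site $w \in \mathrm{supp}(A) \cap \mathrm{supp}(B)$, and the triangle inequality on $G$ gives $\Dist = d(i,f) \leq d(i,x^{\,}_1) + d(x^{\,}_1,w) + d(w,f) \leq \LRvel T + 2 \LRvel (T - \tau^{\,}_1)$. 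Running the identical argument for $\PZ{f}$ (served by $\mobserv^{\,}_2$) yields $\Dist \leq \LRvel T + 2 \LRvel (T - \tau^{\,}_2)$, and since both bounds hold I would take the binding (smaller) one, producing $\max_\nu \tau^{\,}_\nu$.

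I expect the main obstacle to be precisely this overlap/connectivity step together with extracting the exact constants. The continuous Lieb-Robinson estimate gives $\Dist \leq \LRvel(3T - 2 \max_\nu \tau^{\,}_\nu)$, but sharpening it to the stated $\LRvel(3T-2) - 2\LRvel \max_\nu \tau^{\,}_\nu$ requires a careful layer-by-layer account of the \emph{sharp} FDQC light cone---in particular that joining the light cones of $A$ and $B$ demands genuine overlap rather than mere adjacency, costing one layer at each of the two junctions, and that $T \geq 2$ is needed for two distinct measurements to be spread at all. I would therefore devote the bulk of the rigorous write-up to the discrete-circuit bookkeeping and to making the disjoint-support contradiction airtight for general (non-Clifford) $\uchan$ on an arbitrary graph $G$.
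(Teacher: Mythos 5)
Your proposal is correct in substance and follows the same skeleton as the paper's proof in App.~\ref{app:M=2 speed limit}: write the Heisenberg-evolved logical operator as a product of $A=\uchan^{\dagger}\,\Pauli{\nu}{f}\,\uchan$ and the backward-evolved measured observable $B=\uchan^{\dagger}_{\nu}\,\bar{\mobserv}_{\nu}'\,\uchan^{\,}_{\nu}$, bound both supports by Lieb-Robinson, force the initial site into $\mathrm{supp}(B)$ via the commutator condition \eqref{eq:X Z Heisenberg commutator}, force the two supports to meet, and add up the three distances. The genuinely different ingredient is the overlap step. The paper rules out disjointness algebraically: if both pairs of light cones were disjoint, then $\uchan^{\dagger}\PX{f}\uchan$ and $\uchan^{\dagger}\PZ{f}\uchan$ would each be unitary stabilizers of the product initial state, contradicting the fact that any two elements of $\UStabOf{\ket{\Psi^{\,}_0}}$ have anticommutator of norm two \eqref{eq:stab acomm norm}, whereas these two anticommute \eqref{eq:M=2 bad stab acomm}. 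You instead argue, for each $\nu$ \emph{separately}, that disjointness forces $\Pauli{\nu}{f}$ to stabilize the resource state \eqref{eq:resource state}. This buys something real: read literally, the paper's anticommutator trick only excludes the two disjointnesses holding \emph{simultaneously}, which by itself yields the bound with $\min_{\nu}\tau^{\,}_{\nu}$; your per-$\nu$ contradiction is what the stated $\max_{\nu}\tau^{\,}_{\nu}$ in \eqref{eq:M=2 speed limit} actually requires, since each measured observable's backward light cone must independently bridge $i$ to the light cone of $f$.

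Two points need tightening. First, your contradiction is compressed and, as worded, slightly off target: from $A\ket{\Psi^{\,}_0}=\pm\ket{\Psi^{\,}_0}$ (the eigenvalue is $\pm 1$, not necessarily $+1$) you get $\PX{f}\ket{\Psi^{\,}_t}=\pm\ket{\Psi^{\,}_t}$, i.e., site $f$ is disentangled and pinned to $\ket{\pm}$ in the resource state. Citing Eq.~\eqref{eq:logical X expval gen} alone does not finish the job, because that relation concerns the \emph{final} state, not the resource state. You must add that the remaining channels---measurements supported away from $f$ (Cor.~\ref{cor:meas comm}) and Pauli recovery acting at $f$---preserve this definite value, so \emph{every} outcome trajectory leaves $\ket{\pm}$ (up to phase) at $f$, contradicting Prop.~\ref{prop:pure output} and Def.~\ref{def:state transfer} for generic input $\ket{\psi}$. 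Second, the additive $-2\LRvel$ that you defer is obtained in the paper exactly as you guess: one circuit layer is charged at each junction (Eqs.~\eqref{eq:M=2 f travel} and \eqref{eq:M=2 ik dist}), so your planned layer-by-layer bookkeeping matches the paper's and poses no obstacle.
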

The proof of Lemma~\ref{lem:M=2 dist} appears in App.~\ref{app:M=2 speed limit}, and relies on the fact that the light cones of both measured observables must intersect with that of the final-state logical operator $\Pauli{\nu}{f}$ (at some site $\ell \approx i + 2 \Dist/3$). The intuition is that measurements ``reflect'' (or ``link up'')  light cones \cite{SpeedLimit}, leading to a threefold enhancement to $\LRvel$ in the $\Nmeas=2$ bound \eqref{eq:M=2 speed limit} compared to $\Dist \leq \LRvel T$ \eqref{eq:LR bound} for $\Nmeas < 2$.

\subsection{Additional measurement regions}
\label{subsec:k=1 daisy chain}

\begin{figure}[t]
\centering
\includegraphics[width=0.45\textwidth]{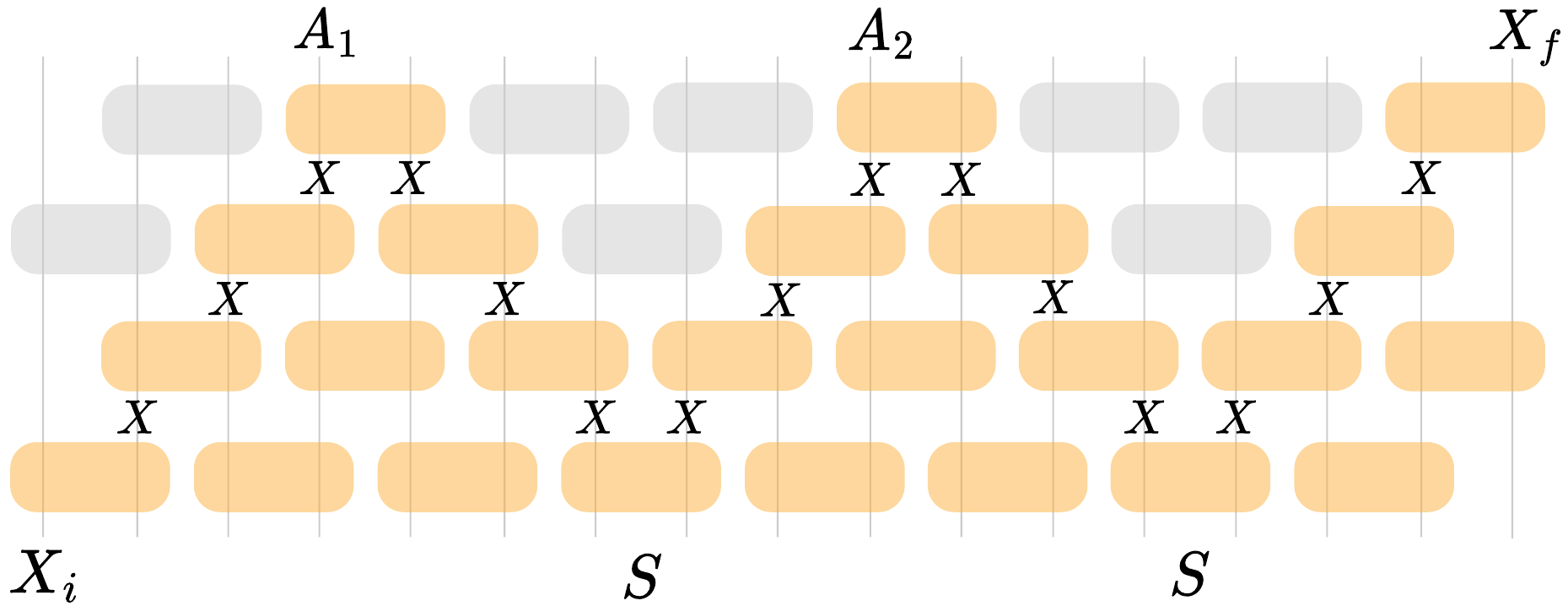}
\caption{The light cones of measured observables in canonical form are illustrated in a FDQC. The light cones of the logical operator and the measured observables $\mobserv^{\,}_j$ attached thereto via Lemma~\ref{lem:measure = attach} must intersect for teleportation to succeed.} 
\label{fig:daisy-changed regions}
\end{figure}

We next turn to standard teleportation protocols that achieve a task distance $\Dist > 3 \LRvel T$ \eqref{eq:L def}, which exceeds the bound \eqref{eq:M=2 speed limit} of Lemma~\ref{lem:M=2 dist} for $\Nmeas=2$. Theorem~\ref{thm:FYHL} \cite{SpeedLimit} suggests that this is possible if one uses additional measurements and accompanying feedback to link up the light cones of the additional measured observables attached to the logical operators, as depicted  in Fig. \ref{fig:daisy-changed regions}.

As a starting point, consider a protocol $\chan = \chan^{\,}_{\Nregions} \cdots \chan^{\,}_1$ in which $\Nregions$ teleportation protocols $\chan^{\,}_s$ are applied in series. Each protocol $\chan^{\,}_s$ satisfies $\Dist^{\,}_s=d(i_s,f_s) \leq \LRvel (3 \, T^{\,}_s -2)$ \eqref{eq:M=2 speed limit}, where $f_s=i_{s+1}$ is the initial site of the subsequent protocol, so that the total distance \eqref{eq:L def} is
\begin{align}
    \Dist &= d(i^{\,}_1,f^{\,}_{\Nregions}) = \sum\limits_{s=1}^{\Nregions} \Dist^{\vpp}_s \LRvel \, \sum\limits_{s=1}^{\Nregions} (3 \, T^{\vpp}_s - 2 )  \notag \\
    &\leq \LRvel (3 \, T - 2 \, \Nregions) < 3 \, \LRvel \, T \, ,~~ \label{eq:consecutive protocol dist}
\end{align}
where $T = \sum_s T^{\,}_s$ is the total depth. Unsurprisingly, applying protocols in series obeys the same bound \eqref{eq:M=2 speed limit}.

However, an improvement can be realized by applying the protocols in \emph{parallel}. Most of the channels in the protocol $\chan^{\,}_s$ do not overlap with those of $\chan^{\,}_{s-1}$. For convenience, suppose each protocol $\chan^{\,}_s$ has the same depth $T^{\,}_s = T'$. As explained following Lemma~\ref{lem:M=2 dist}, each protocol $\chan^{\,}_s$ uses $\Nmeas^{\,}_s=2$ measurements to link a unitary ``staircase'' region of size $\ell^{\,}_{s,{\rm uni}} \leq \LRvel T'$ and another measurement-assisted ``light-cone'' region of size $\ell^{\,}_{s,{\rm meas}} \leq 2 \LRvel (T'-1)$. 

Crucially, this perspective elucidates how to achieve parallelization. The unitaries that act in the light-cone regions can all be applied in parallel in time $T'$. The unitaries in each staircase region must be applied after all light-cone unitaries, for an additional unitary depth of $T'$, so that the final site $f_s$ of protocol $s$ is the initial site $i_{s+1}$ of protocol $s+1$.  After time $2T'$, all measurements and outcome-dependent operations are applied. Noting that the total unitary depth is $T=2 T'$, the task distance $\Dist$ \eqref{eq:L def} for the parallelized protocol obeys
\begin{align}
    \Dist^{\vpp}_{\rm par} 
    &\leq \frac{3}{4} \LRvel \, \Nmeas \left( T - \frac{4}{3} \right) \, , ~~ \label{eq:parallelized k=1 speedup}
\end{align}
which exceeds the $\Nmeas=2$  bound \eqref{eq:M=2 speed limit} when $\Nmeas  > 4$.

However, it is possible to exceed the bound \eqref{eq:parallelized k=1 speedup} by constructing fully \emph{parallel} protocols. In particular, by omitting the staircase segments of each protocol $\chan^{\,}_s$ except $\chan^{\,}_1$, the total depth can be cut in half compared to the parallel\emph{ized} case. As a result, the upper bound \eqref{eq:parallelized k=1 speedup} is modified by sending $T \to 2 T$ and subtracting $(\Nregions-1) \LRvel T$. The resulting bound for a \emph{generic} standard teleportation protocol with $\Nmeas > 1$ is captured by Lemma~\ref{lem:M>2 speedup}.

\begin{lem}[Maximum speedup for $\Nmeas$ measurements] \label{lem:M>2 speedup}
    The distance $\Dist = d(i,f)$ \eqref{eq:L def} that a standard teleportation protocol $\chan = \QECChannel\, \MeasChannel \, \uchan$ (where $\uchan$ has depth $T$ and speed limit $\LRvel$) can transfer a single logical state $\ket{\psi}$ using $\Nmeas$ measurements and accompanying feedback obeys
    \begin{align}\label{eq:k=1 speed limit}
    \Dist \leq \LRvel T + 2\LRvel \left\lfloor \frac{\Nmeas}{2} \right\rfloor \left( T - 1 \right) 
    \, ,~~
    \end{align}
    where the $\LRvel T$ term captures the measurement-free distance and the second term captures the maximum enhancement due $\Nmeas$ measurements. We must have $T \geq 2$, as in Lemma~\ref{lem:M=2 dist}, and the separation $\ell$ between pairs of measurements obeys $\ell \leq 2 \LRvel (T-1)$.
\end{lem}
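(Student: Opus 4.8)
The plan is to build directly on Lemma~\ref{lem:M=2 dist}, treating the $\Nmeas$-measurement protocol as a concatenation of two-measurement ``regions'' and bounding the reach of the backward-evolved logical operator region by region. First I would fix the combinatorial structure: by Prop.~\ref{prop:M=1 useless} a lone measurement yields no speedup, and by Lemma~\ref{lem:compatible observables} the useful measurements must be grouped into pairs of commuting observables, each pair constituting one measurement region. Hence a protocol with $\Nmeas$ measurements contains at most $\Nregions = \left\lfloor \Nmeas/2 \right\rfloor$ such regions, with any leftover (odd) measurement contributing nothing---this is the origin of the floor function in the bound.

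Next I would track the Heisenberg evolution of the final logical operator $\Pauli{\nu}{f}$ backward through $\chan = \QECChannel\,\MeasChannel\,\uchan$. By Lemma~\ref{lem:measure = attach}, each measurement attaches its involutory part $\bar{\mobserv}_j$ to the evolving operator, and for teleportation to succeed these attachments must link the support of $\Pauli{\nu}{f}$ all the way back to site $i$, so that $\chandag\,\Pauli{\nu}{f}\,\chan = \Pauli{\nu}{i}\,\StabEl$. Absent any measurement, Theorem~\ref{thm:LR theorem} limits this reach to a single unitary ``staircase'' of length $\LRvel T$. The content of Lemma~\ref{lem:M=2 dist} is that each pair of compatible measurements ``reflects'' the light cone, appending one measurement-assisted segment of length $\ell \leq 2\LRvel(T-1)$; indeed, setting $\tau_\nu = 0$ there gives the single-region reach $\LRvel(3T-2) = \LRvel T + 2\LRvel(T-1)$, which separates cleanly into the staircase $\LRvel T$ and the reflected segment $2\LRvel(T-1)$.

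The bound then follows by summing: one staircase of length $\LRvel T$ together with $\Nregions$ reflected segments of length at most $2\LRvel(T-1)$ each, giving
\begin{equation}
    \Dist \leq \LRvel T + 2\LRvel \left\lfloor \frac{\Nmeas}{2} \right\rfloor (T-1)\, .
\end{equation}
To confirm tightness I would invoke the fully parallel construction sketched above: apply all light-cone unitaries simultaneously within depth $T$ while retaining only the single staircase from the first region, so that the lone staircase and the $\Nregions$ reflected segments concatenate to saturate the bound. The requirement $T \geq 2$ is inherited directly from Lemma~\ref{lem:M=2 dist}.

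The hard part will be rigorously justifying that the reflected segments add \emph{linearly} and that exactly one staircase survives in the depth budget $T$. One must rule out that any rearrangement of the $\Nregions$ regions---sharing unitaries across regions, or nesting their light cones---extends the reach beyond this sum. This is precisely the light-cone linking picture of Fig.~\ref{fig:daisy-changed regions}: each attached observable $\bar{\mobserv}_j$ carries a forward light cone of size $\LRvel(T-1)$ emanating from its measurement time, and successive cones may overlap only at their endpoints if the chain of attachments is to remain connected from $f$ to $i$. Making this packing argument airtight---so that it constrains \emph{every} standard protocol rather than merely computing the distance of the particular parallel construction---is the crux of the proof.
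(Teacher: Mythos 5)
Your proposal is correct and follows essentially the same route as the paper's proof in App.~\ref{app:proof M>2 speedup}: pairs of compatible measurements identified via Prop.~\ref{prop:M=1 useless} and Lemma~\ref{lem:compatible observables} are daisy-chained, each contributing a segment of at most $2\LRvel(T-1)$ on top of a single unitary staircase, with the floor function arising because a leftover odd measurement cannot help. The ``crux'' you flag---why the segments add linearly---is handled in the paper exactly as you suggest, by restricting to the optimal scenario and summing Lieb-Robinson bounds on consecutive distances $d(i,j^{\,}_1)$, $d(j^{\,}_s,j^{\,}_{s+1})$, $d(j^{\,}_{\Nregions},\ell)$, $d(\ell,f)$, where the connectivity of successive light cones is forced by the disjoint-stabilizer anticommutation contradiction already established in the proof of Lemma~\ref{lem:M=2 dist}.
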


The proof of Lemma~\ref{lem:M>2 speedup} appears in App.~\ref{app:proof M>2 speedup}. The bound \eqref{eq:k=1 speed limit} is saturated by parallel protocols in which $\Nregions = \Nmeas/2$ pairs of measurements are spaced by a distance $\ell = 2 \LRvel (T-1)$ to daisy chain $\Nregions = \floor{\Nmeas/2}$ light-cone regions of size $\ell$ and a single staircase region of size $\ell^{\,}_0 \leq \LRvel T$. Note that Eq.~\ref{eq:k=1 speed limit} saturates the $\numQ=1$ bound \eqref{eq:FYHL k=1} \cite{SpeedLimit}, up to $\Order{1}$ offsets, while the teleportation bound \eqref{eq:FYHL bound} of Theorem~\ref{thm:FYHL} with $\numQ=1$ is loose by a factor of two compared to the $\numQ=1$ bound \eqref{eq:k=1 speed limit}, as conjectured in Ref.~\citenum{SpeedLimit}.

At this point, we note that the ``measurement regions'' discussed in Ref.~\citenum{SpeedLimit} are only unambiguously defined for optimal protocols. In that case, each pair of (adjacent) measurements used to daisy chain regions in the proof of Lemma~\ref{lem:M>2 speedup} (see App.~\ref{app:proof M>2 speedup}) defines a distinct measurement region (the support of the pair of operators). There  are $\Nmeas = \floor{\Nmeas/2}$ such regions in total, where the $s$th region has size $\LRvel + 1 \leq \ell^{\,}_s \leq 2 \LRvel (T-1)$. However, we note that (\emph{i}) the bounds herein are all phrased in terms of the number of measurement \emph{outcomes} and (\emph{ii}) an unambiguous delineation of measurement regions can be identified even in suboptimal protocols by considering string order, as we establish in Sec.~\ref{sec:SPT proof}.

\subsection{Multiqubit teleportation}
\label{subsec:k>1 main}

We now generalize the foregoing results to standard teleportation protocols $\chan$ (see Def.~\ref{def:standard teleport}) that transfer $\numQ \geq 1$ logical qubits. We first note that the various results of Sec.~\ref{subsec:meas+feedback} apply directly to protocols with $\numQ>1$ as stated. Next, the proofs of Prop.~\ref{prop:M=1 useless} and Lemma~\ref{lem:compatible observables} include proofs for $\numQ>1$ logical qubits. Essentially, $\Nmeas \geq 2\numQ$ are required, or else at least one logical qubit is teleported using a single outcome, which is impossible; moreover, \emph{all} measured observables must commute. Using these results, we now state the most general bound for standard teleportation protocols, captured by Theorem~\ref{thm:standard bound}.

\begin{thm}[Standard teleportation bound]
\label{thm:standard bound}
    Consider a standard teleportation protocol $\chan= \QECChannel \, \MeasChannel \, \uchan$ written in canonical form \eqref{eq:chan canonical form}, where $\uchan$ realizes unitary time evolution for duration $T$ with Lieb-Robinson velocity $\LRvel$ and $\MeasChannel$ consists of $\Nmeas>1$ projective measurements. Then $\chan$ can teleport $\numQ$ logical qubits a distance \eqref{eq:L def}
    \begin{align}
        \label{eq:k>1 bound}
        \Dist \leq \LRvel T + 2 \, \LRvel \left\lfloor \frac{\Nmeas}{2\numQ} \right\rfloor \left( T - 1 \right) \, ,~~
    \end{align}
    where $\floor{\cdot}$ is the floor function, and  $T$ obeys
    \begin{align}
        \label{eq:k>1 min depth}
        T \geq 1 + \numQ/\LRvel \, ,~~
    \end{align}
    where all measured observables $\mobserv^{\,}_j$ \eqref{eq:CF observable} commute with one another and with all final-state logical operators  \eqref{eq:teleport final mb state}. 
\end{thm}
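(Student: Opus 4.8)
The plan is to lift the single-qubit bound of Lemma~\ref{lem:M>2 speedup} to $\numQ$ logical qubits by tracking all $2\numQ$ logical basis operators simultaneously and showing that the measurement budget $\Nmeas$ must be partitioned into groups of $2\numQ$ outcomes---one pair per logical qubit per ``region.'' First I would establish the commutation claims. Treating each logical qubit $n$ through its pair $\LXp{(n)},\LZp{(n)}$, I apply Lemma~\ref{lem:measure = attach} to conclude that every measured observable $\mobserv_j$ commutes with every Heisenberg-evolved logical operator $\Gamma(t_j)$ at the time of its measurement. Because the attachment rule \eqref{eq:meas attach} decorates logical operators with the involutory parts $\bar{\mobserv}_j$ while preserving the Pauli algebra among the $2\numQ$ logicals required by Eq.~\ref{eq:logical alpha beta expval}, the $\mobserv_j$ are forced to mutually commute, generalizing the $\Nmeas=2$ condition \eqref{eq:compatible observ} of Lemma~\ref{lem:compatible observables}. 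Evaluating this at the final measurement, where each $\Gamma$ is still $\Pauli{\nu}{f_n}$ up to Stinespring factors (by Lemma~\ref{lem:recovery factorization}), forces each $\mobserv_j$ to commute with $\Pauli{\nu}{f_n}$ for all $\nu$, i.e.\ to act trivially on the final logical sites $F$.

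Next I would count the minimum number of measurements per region. Generalizing Prop.~\ref{prop:M=1 useless} and part (ii) of Lemma~\ref{lem:compatible observables}, each logical qubit requires its $X$-type and $Z$-type recovery channels to be conditioned on \emph{distinct} outcome clusters; otherwise the two operators cannot be grown independently and $\PY{f_n}$ fails to reach its target, which is exactly the $\Nmeas=1$ obstruction. Since all observables commute and the $2\numQ$ logical operators act on distinct logical sites, these outcome clusters cannot be shared across logical qubits without breaking the required anticommutation pattern between recovery Paulis and logicals. Hence each region that advances all $\numQ$ qubits consumes at least $2\numQ$ outcomes, yielding $\lfloor \Nmeas/(2\numQ) \rfloor$ complete light-cone regions; in particular $\Nmeas \geq 2\numQ$ is forced whenever $\Dist > \LRvel T$.

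With the combinatorics fixed, the geometric bound follows from the daisy-chain construction of Lemma~\ref{lem:M>2 speedup}, now run in parallel across all $\numQ$ qubits. The initial staircase region contributes a distance $\leq \LRvel T$, and each of the $\lfloor \Nmeas/(2\numQ) \rfloor$ light-cone regions contributes $\leq 2\LRvel(T-1)$, since within each region the $2\numQ$ commuting observables link the logical light cones exactly as the two observables do in the single-qubit case. Taking $\Dist = \min_n d(i_n,f_n)$ selects the most constrained qubit, so these per-region contributions add to give \eqref{eq:k>1 bound}. The depth constraint \eqref{eq:k>1 min depth} arises because each region must physically accommodate all $\numQ$ logical strings at once: the usable light-cone half-width $\LRvel(T-1)$, with the $-1$ accounting for the measurement layer, must be at least $\numQ$ to separate and address the $\numQ$ logicals, which rearranges precisely to $T \geq 1 + \numQ/\LRvel$ and reduces to $T\geq 2$ when $\numQ=\LRvel=1$, consistent with Lemma~\ref{lem:M=2 dist}.

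The main obstacle I anticipate is ruling out protocols that reuse a single commuting observable's outcome to service more than one logical qubit, which would beat the $2\numQ$-per-region count. The resolution combines the global commutativity of the $\mobserv_j$ with the fact that each logical qubit's byproduct operator is a \emph{single-qubit} Pauli supported on $f_n$ (see Sec.~\ref{subsec:recovery}): an outcome shared between qubits $n$ and $m$ would have to correct independent single-site byproducts on $f_n$ and $f_m$ simultaneously, which fails for generic logical inputs and collapses to the $\Nmeas<2$ insufficiency of Prop.~\ref{prop:M=1 useless} for one of the qubits. Making this ``no-sharing'' step fully rigorous---rather than heuristic---is where the bulk of the work will lie.
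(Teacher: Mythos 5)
Your outline follows the same route as the paper's proof: per-qubit treatment via Lemmas~\ref{lem:recovery factorization} and \ref{lem:measure = attach}, global commutativity of the $\mobserv^{\,}_j$ and their triviality on $F$ (the $\numQ>1$ extension of Lemma~\ref{lem:compatible observables} and Cor.~\ref{cor:meas comm}), a count of $2\numQ$ outcomes per measurement region, the parallel daisy-chain geometry of Lemma~\ref{lem:M>2 speedup} giving $\Dist \leq \LRvel T + 2\,\Nregions\,\LRvel(T-1)$ with $\Nregions \to \floor{\Nmeas/2\numQ}$, and the depth constraint \eqref{eq:k>1 min depth} from demanding that a region of width at most $2\LRvel(T-1)$ hold $2\numQ$ measured sites. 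All of that matches the paper.

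The gap sits exactly where you predicted, but the repair you sketch would not work. You argue that an outcome shared between qubits $n$ and $m$ "would have to correct independent single-site byproducts on $f_n$ and $f_m$ simultaneously, which fails for generic logical inputs." It does not fail: under Def.~\ref{def:Clifford recovery} the recovery operator $\conjchan^{\,}_{\Omega}$ may be any Pauli \emph{string}, so a single outcome bit can condition, e.g., $\PX{f_n}\PX{f_m}$; moreover the two byproducts produced by a shared outcome are perfectly \emph{correlated}---both are functions of the same bit---so one conditioned string corrects both, and none of this ever depends on the logical inputs. The true obstruction lives at the \emph{initial} sites, not the final ones. If the same $\bar{\mobserv}^{\,}_j$ is attached (via Lemma~\ref{lem:measure = attach}) to logical operators of both qubits, multiply the two teleportation conditions \eqref{eq:teleportation conditions}: the shared involutory factor squares away, leaving $\uchan^{\dagger}\, \Pauli{\nu}{f_n} \Pauli{\mu}{f_m}\, \uchan$ equal to $\Pauli{\nu}{i_n} \Pauli{\mu}{i_m}$ times initial-state stabilizers. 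The left side is supported in the backward light cone of $\{f_n,f_m\}$ under $\uchan$, which cannot reach $i_n$ or $i_m$ since $\Dist > \LRvel T$; acting on the separable initial state \eqref{eq:teleport initial mb state} it therefore cannot reproduce logical flips on both initial sites, so at least one logical operator is forced to be transferred by $\uchan$ alone (equivalently, to act nontrivially on two initial sites), and that qubit collapses to the single-outcome insufficiency of Prop.~\ref{prop:M=1 useless}. This locality argument is how the paper closes the step (it appears in the $\numQ>1$ portions of the proofs of Prop.~\ref{prop:M=1 useless} and Lemma~\ref{lem:compatible observables}); with it substituted for your byproduct heuristic, the rest of your argument goes through.
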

The proof of Theorem~\ref{thm:standard bound} appears in App.~\ref{app:standard bound proof}. The bounds \eqref{eq:k>1 bound} and \eqref{eq:k>1 min depth} follow from identifying the optimal protocol described in the Theorem, which is further constrained in the proof in App.~\ref{app:standard bound proof}. Essentially, each of the $\numQ$ logical qubits can be treated individually, and all previous results apply, giving the bound \eqref{eq:k>1 bound}. The depth constraint \eqref{eq:k>1 min depth} follows from recognizing that there must be ample spacing between equivalent measurements to accommodate all $2 \numQ$ measurements required per region. 

The perspective is that a single region of size $\ell^{\,}_0 \leq \LRvel T$ can be daisy chained with $\Nregions$ regions of size $2 \numQ \leq \ell \leq 2 \LRvel (T-1)$ using measurements of $2\numQ$ sites. In the optimal case, the measurements are made on $2\numQ$ consecutive sites, where the sites $2n-1$ and $2n$ correspond to the $n$th logical qubit. In this scenario, the optimal spacings of initial and final sites are given by $i_n = i_1 + 2(n-1)$ and $f_n = f_1 + 2(n-1)$, respectively.

Finally, we remark that the teleportation bound \eqref{eq:k>1 bound} is tighter than corresponding bound \eqref{eq:FYHL bound} of Theorem~\ref{thm:FYHL} and Ref.~\citenum{SpeedLimit}. Essentially,  Eq.~\ref{eq:FYHL bound} is loose by a factor of two in the context of standard teleportation protocols, as initially conjectured in Ref.~\citenum{SpeedLimit}. While, in principle, it is possible that nonstandard teleportation protocols could exceed the teleportation bound \eqref{eq:k>1 bound}---and instead obey Eq.~\ref{eq:FYHL bound}---we note that (\emph{i}) both bounds have the same asymptotic scaling in terms of $\Dist$, $\Nmeas$, and $T$, which makes them equivalent in the context of such locality bounds, and (\emph{ii}) as far as we are aware, the types of channels that we have eschewed in our analysis of optimal protocols in App.~\ref{app:standard bound proof}---such as conditional measurements and non-Pauli recovery operations---cannot lead to violations of the standard bound \eqref{eq:k>1 bound}. Hence, we expect that Eq.~\ref{eq:k>1 bound} optimally bounds all physical teleportation protocols.

\subsection{Suboptimal measurements}
\label{subsec:suboptimal meas}

Before moving on to the main results in Sec.~\ref{sec:SPT proof}, we briefly discuss suboptimal measurements. There are a number of ways in which a standard teleportation protocol $\chan$ (see Def.~\ref{def:standard teleport}) can fail to be optimal in the sense of saturating the teleportation bound \eqref{eq:k>1 bound}. Of these, only protocols that involve more measurements than are necessary for a given teleportation distance $\Dist$ \eqref{eq:L def} have the potential to complicate the main results in  Sec.~\ref{sec:SPT proof}.

However, the requirement in Def.~\ref{def:standard teleport} that no unitaries be applied to qubits that have already been measured (in the Schr\"odinger picture) avoids this complication. The analysis of Sec.~\ref{sec:SPT proof} requires that all of the measurements that are useful to teleportation commute, and that the rest can be considered part of the preparation of the resource state (e.g., in the sense that there exists an equivalent protocol in which these measurements are replaced by unitaries). As we establish below, in Corollary~\ref{cor:meas comm} of Lemma~\ref{lem:measure = attach}, the aforementioned criterion of  Def.~\ref{def:standard teleport} ensures that all ``useful'' measured observables commute  with each other and with all logical operators, and that all other measurements can be omitted without consequence.

\begin{cor}[Necessary measurements commute]
    \label{cor:meas comm}
    The measurements that are necessary for teleportation correspond to observables that (i) commute with one another and (ii) have no support on the final logical sites $f \in F$. The remaining observables are (i) applied to sites that were already usefully measured, (ii) anticommute with that useful measurement and thus have a random outcome, (iii) are not attached to any logical operator via Lemma~\ref{lem:measure = attach}, and (iv) may simply be omitted from $\chan$.
\end{cor}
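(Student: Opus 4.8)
The plan is to partition the measurements of a canonical-form standard teleportation protocol into those that are \emph{useful}---attached to at least one logical operator through Lemma~\ref{lem:measure = attach}---and the rest, and to verify the two clauses of the statement for each class. The essential lever is the assumption in Def.~\ref{def:standard teleport} that no physical unitary touches a site once it has been measured: when every unitary is pulled to the front of the canonical form $\chan = \QECChannel\,\MeasChannel\,\uchan$, each measured observable is conjugated only by unitaries that avoid its own site, so each useful observable equals the single-qubit operator originally measured.

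First I would establish clause (i) for the necessary (useful) observables. That they avoid the final logical sites $F$ follows because the recovery channel $\QECChannel$ applies conditional Paulis to the destination qubits $f\in F$ after all measurements; by the no-unitaries-on-measured-sites assumption those sites cannot have been measured, so no $\mobserv_j$ is supported on $F$. For pairwise commutation I would show that no two useful measurements share a site: since no unitary refreshes a site between two measurements of it, a repeat either reproduces the earlier outcome (and is redundant) or measures an anticommuting observable, and in the latter case running Lemma~\ref{lem:measure = attach} backward through the \emph{earlier} measurement shows its observable must commute with the operator carrying the anticommuting involutory part attached by the \emph{later} repeat---a contradiction. Hence the useful observables occupy distinct single sites and commute pairwise, consistent with Theorem~\ref{thm:standard bound}.

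Next I would characterize the remaining observables. By definition they attach to no logical, so Prop.~\ref{prop:feedback required} forces each either to act trivially on every logical or to trivialize one (which would make $\chan$ fail); only the former survives in a successful protocol. Such a measurement on a fresh site acts trivially on all logicals and can be absorbed into the preparation unitary $\uchan$---that is, replaced by a unitary---so the genuinely extra measurements are the repeats on already-measured sites. By the previous paragraph each such repeat anticommutes with the useful observable on its site (a commuting repeat merely duplicates an outcome), so its outcome is uniformly random and it is attached to no logical operator; this yields clauses (i)--(iii) for the remaining observables.

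Finally, for omissibility I would verify that deleting such a repeat leaves the averaged logical output invariant. Averaging over its unused random outcome dephases the already-measured site in the anticommuting basis, and since no logical retains support on that site after it was usefully measured, the dephasing acts trivially on $\DensMat_{\rm L}$---as in Prop.~\ref{prop:pure output}---so the measurement may be dropped, giving clause (iv). I expect the main obstacle to be the bookkeeping in the second paragraph: making ``cannot be usefully attached'' airtight requires following the backward Heisenberg evolution of all logicals simultaneously and excluding the loophole in which an anticommuting repeat attaches instead to a different logical that avoids the contested site---resolved because Lemma~\ref{lem:measure = attach} forces the earlier observable to commute with \emph{every} logical, including that one.
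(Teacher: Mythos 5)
Your overall strategy is the same as the paper's: run the commutation and attachment conditions of Lemma~\ref{lem:measure = attach} backward through the Heisenberg evolution, using the requirement of Def.~\ref{def:standard teleport} that no unitaries touch previously measured sites to keep every canonical-form observable a single-qubit operator, so that noncommutation can only occur between observables on the same site. However, your central contradiction argument has a genuine gap. When a site carries the useful measurement $\mobserv^{\,}_j$ and a later anticommuting repeat attached to some logical $\Gamma$, the condition $\comm{\mobserv^{\,}_j}{\Gamma(t^{\,}_j)} = 0$ fails only if an \emph{odd} number of anticommuting factors have been attached to $\Gamma$. If \emph{two} later observables, both anticommuting with $\mobserv^{\,}_j$ on that site, are attached to the same logical operator, the two sign flips cancel, the commutator vanishes, and no contradiction arises---so you cannot conclude that anticommuting repeats are never attached (your clause (iii)), and your clauses (i) and (iv) inherit the problem. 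This is precisely the case the paper's proof isolates: it shows such observables must occur in pairs which, by the same recursive commutation conditions, must commute with each other; two commuting single-qubit observables on one site have equal involutory parts, so the product attached to $\Gamma$ is the identity. Hence the pair's net attachment is trivial and both measurements (together with their feedback) can be omitted. Your closing paragraph anticipates a loophole, but the one you close (a repeat attaching to a different logical that avoids the contested site) is not this one.

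A smaller issue: your argument that useful observables avoid $F$ leans on reading the no-unitaries-on-measured-sites assumption as forbidding the recovery Paulis from acting on measured sites. In the paper's taxonomy the recovery channel $\QECChannel$ is not among the ``unitaries'' that assumption constrains (those live in $\uchan$), so this reading is questionable. The paper's route is more direct and needs no such reading: Lemma~\ref{lem:measure = attach} forces the last-measured observable to commute with $\Pauli{\nu}{f}$ for all $\nu$ and all $f \in F$, and the only single-qubit operator commuting with all three Paulis on a site is the identity there; the backward induction then propagates the same conclusion, together with pairwise commutation, to every earlier necessary observable.
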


The proof of Corollary~\ref{cor:meas comm} is straightforward and appears in App.~\ref{app:proof meas comm}. As we note in Sec.~\ref{subsec:standard teleport} below Def.~\ref{def:standard teleport}, we expect that the result above holds generically, even if the requirement that no unitaries are applied to previously measured sites (in the Schr\"odinger picture) is relaxed.

\section{The resource state is an SPT}
\label{sec:SPT proof}

We now derive the main result: Standard teleportation protocols involve preparing an SPT resource state $\ket{\Psi_t} = \uchan \, \ket{\Psi_0}$ \eqref{eq:resource state} protected by a $\Ints^{\,}_2 \times \Ints^{\,}_2$ symmetry, measuring the corresponding string order parameters, and applying outcome-dependent Pauli gates. We briefly review SPTs in Sec.~\ref{subsec:SPT intro} before proving that the resource state $\ket{\Psi_t}$ for $\numQ=1$ teleportation has $\Ints^{\,}_2 \times \Ints^{\,}_2$  string order. We then extend this result to $\numQ > 1$ in Sec.~\ref{sec:k>1 spt}, where the protecting symmetry consists of (at least) $\numQ$ copies of $\Ints^{\,}_2 \times \Ints^{\,}_2$, each corresponding to pairs of sublattices. 

We stress that SPT phases are distinct from phases with ``instrinsic'' topological order like the toric code~\cite{toric_1998}. In some sense, the SPT phase can be viewed as a ``trivial'' topological order with no anyon content (i.e., a only a single, trivial anyonic excitation). Another important distinction is that SPT states are \emph{not} long-range entangled; as a result, they can be prepared using FDQCs~\cite{Chen2011}, making them promising resource states $\ket{\Psi_t}$~\eqref{eq:resource state}. Additionally, while it is \emph{possible} to teleport using a topologically ordered resource state---e.g., the ground state of the toric code~\cite{Herringer2023classificationof}---it is inefficient, because preparing the long-range-entangled resource state requires the same resources as teleportation itself~\cite{SpeedLimit}.

\subsection{Overview of SPTs}
\label{subsec:SPT intro}

We begin with a brief summary of 1D SPT phases and their salient features. For the SPTs of interest, the ``protecting symmetry'' corresponds to a group $\mathcal{G}$, whose generators $g \in \mathcal{G}$ have an \emph{on-site} representation
\begin{equation}
    \label{eq:on site}
    U(g) = \prod\limits_{j=1}^{\Nspins} \, u_j (g) \, ,~~
\end{equation}
where the $u_j(g)$ is a unitary representation of $g$ on site $j$. 

A 1D SPT phase is captured by a state $\ket{\Phi}$ that is symmetric under $U(g)$ \eqref{eq:on site}. Because the chain may be open or closed, the definition of a symmetric state must be independent of boundary conditions. We denote by $\interval \subseteq [1,\Nspins]$ a subinterval of the chain, and define $U_{\interval} (g)$ as the restriction of $U(g)$ \eqref{eq:on site} to the interval $\interval$. Then $\ket{\Psi}$ is symmetric under $U(g)$ \eqref{eq:on site} if, for every interval $\interval$ with $\abs{\interval} \gg \xi$ (where $\xi$ is the correlation length of $\ket{\Psi}$), there exist ``endpoint operators'' $V_L(g)$ and $V_R(g)$ localized to the left and right boundaries of $\interval$, respectively, such that
\begin{equation}
    \label{eq:sym state}
    V_L(g) \, U_{\interval} (g) \, V_R (g) \, \ket{\Psi} = \ket{\Psi} \, ,~~
\end{equation}
meaning that applying the on-site symmetry operators $u_j(g)$ in $U(g)$ \eqref{eq:on site} to some subregion $\interval$ only affects the symmetric state $\ket{\Psi}$ near the boundaries of $\interval$. Eq.~\ref{eq:sym state} can be shown rigorously using matrix product states \cite{MPSStringOrder}. 

The operators  $V_L(g) \, U_{\interval} (g) \, V_R (g)$ in Eq.~\ref{eq:sym state} are known as \emph{string order parameters} \cite{MPS1dDetect}. More generally, a string order parameter is any operator consisting of an on-site representation $u_j(g)$ \eqref{eq:on site} of $g \in \mathcal{G}$ in an interval $\interval$ decorated by charged operators $V$ at the endpoints \cite{MPS1dDetect}. Here, we can always choose endpoints such that the string order is \emph{perfect}, meaning it has expectation value one. While the operators $u_j(g)$ in $U(g)$ \eqref{eq:on site} realize a linear representation of $g \in \mathcal{G}$, the endpoint operators $V_{L/R}(g)$ only realize a \emph{projective} representation of $g \in \mathcal{G}$, i.e.,
\begin{subequations}
    \label{eq:proj rep}
    \begin{align}
        V_L (g) \, V_L (h) &= \omega (g,h) \, V_L (g h) \label{eq:left proj rep} \\
        V_R (g) \, V_R (h) &= \omega^{-1} (g,h) \, V_L (g h) \label{eq:right proj rep} \, ,~~
    \end{align}
\end{subequations}
for all $g,h \in \mathcal{G}$, where $\omega(g,h)\in \Unitary{1}$ is a complex phase. This phenomenon is known as \emph{symmetry fractionalization}, and is the basis for classifying SPT phases in 1D \cite{Chen2011, MPS_classification, MPS1dEntSpect}. 

For the finite Abelian groups $\mathcal{G}$ relevant herein, it is useful to define an additional complex phase $\Omega(g,h)$ via
\begin{equation}
    \label{eq:Abelian phase}
    V_L(g)\, V_L(h) = \Omega(g,h) \, V_L(h) \, V_L(g) \, , ~~
\end{equation}
where one can check that $\Omega (g,h) = \omega(g,h) \, \omega^{-1} (h,g)$. We now define \emph{nontrivial} string order using the phases $\Omega$.

\begin{defn}
\label{def:string order}
A state $\ket{\Psi}$ has \emph{nontrivial string order} with respect to an (Abelian) symmetry group $\mathcal{G}$ if, for \emph{every} subinterval $\interval \subseteq [1,\Nspins]$, there exist endpoint operators $V_L(g)$ and $V_R(g)$ localized to the boundary of $\interval$ satisfying $\Omega(g,h) \neq 1$ \eqref{eq:Abelian phase} for \emph{some} choice of $g,h \in \mathcal{G}$, along with
\begin{equation} \label{eq:string order def}
    \matel{\Psi}{\, V_L(g) \, U_{\interval}(g) \, V_R(g) \,}{\Psi} = 1 \, , ~~
\end{equation}
where $U_{\interval}(g)$ is the restriction of $U(g)$ \eqref{eq:on site} to $\interval$.
\end{defn}

The fact that $\Omega(g,h)\neq 1$ for some $g,h\in\mathcal{G}$ is necessary and sufficient for the projective representation to belong to a nontrivial class is shown in \cite{Berkovich1998}. One can show that nontrivial string order \eqref{eq:string order def} implies that (\emph{i}) any gapped Hamiltonian with symmetry group $\mathcal{G}$ and ground state $\ket{\Psi}$ has zero-energy \emph{edge modes} \cite{MPS1dEntSpect} and (\emph{ii}) the state $\ket{\psi}$ cannot be mapped to a product state via a FDQC consisting of local gates \emph{that respect the symmetry} \cite{Huang_2015}. Accordingly, nontrivial string order \eqref{eq:string order def} of $\ket{\Psi}$ is often regarded as synonymous with nontrivial SPT order of $\ket{\Psi}$ (or the parent Hamiltonian). However, since SPT order delineates a phase of matter, it is only well defined in the thermodynamic limit \cite{Zeng2015}. Hence, establishing SPT order more carefully requires considering a family of states $\ket{\Psi}$ defined on chains of increasing length, each of which possesses nontrivial string order in the sense of Def.~\ref{def:string order}.

\subsection{Single-qubit case: \texorpdfstring{$\Ints^{\,}_2 \times \Ints^{\,}_2$}{Z2 cross Z2}}
\label{subsec:k=1 spt}

We first prove that standard teleportation protocols $\chan$ that teleport a single logical qubit ($\numQ=1$), when written in canonical form \eqref{eq:chan canonical form}, correspond to preparing an SPT resource state $\ket{\Psi^{\,}_t}$ \eqref{eq:resource state}, measuring its string order parameters, and applying Pauli recovery gates to the final site $f$. In particular, the nontrivial string order of $\ket{\Psi^{\,}_t}$ \eqref{eq:resource state} is with respect to the discrete Abelian symmetry $\mathcal{G}=\Ints^{\,}_2 \times \Ints^{\,}_2$. The endpoint operators satisfy the Pauli algebra generated by $\PX{}$ and $\PZ{}$, which realize a projective representation of $\mathcal{G}$. As a first step toward proving that $\ket{\Psi^{\,}_t}$ \eqref{eq:resource state} has nontrivial string order in the sense of Def.~\ref{def:string order}, we first present Lemma~\ref{lem:k=1 big string order}, establishing the existence of order parameters \eqref{eq:string order def} when $\interval$ is the full chain.

\begin{lem}[End-to-end string order parameter]
\label{lem:k=1 big string order}
    Suppose that the standard teleportation protocol $\chan$ teleports $\numQ=1$ logical qubit from  $i$ to $f$ and has canonical form $\chan = \QECChannel \, \MeasChannel \, \uchan$ \eqref{eq:chan canonical form}. Then the resource state $\ket{\Psi^{\,}_t}$ \eqref{eq:resource state} has string order characterized by, e.g., the order parameters
    \begin{subequations}
    \label{eq:k=1 big string order}
    \begin{align}
        \matel{\Psi^{\vpp}_t}{ \, \uchan \, \PX{i} \, \uchan^{\dagger}\, \bar{\mobserv}^{\vpp}_x \, \PX{f} \, }{\Psi^{\vpp}_t} &= 1 \label{eq:k=1 big string order X} \\
        \matel{\Psi^{\vpp}_t}{\, \uchan\,  \PZ{i} \, \uchan^{\dagger} \, \bar{\mobserv}^{\vpp}_z \, \PZ{f} \, }{\Psi^{\vpp}_t} &= 1 \, , ~~ \label{eq:k=1 big string order Z} 
    \end{align}
    \end{subequations}
    where the bulk operators are given by
    \begin{equation}
        \label{eq:k=1 big string bulk}
        \bar{\mobserv}^{\vpp}_{\nu} \equiv \bar{\mobserv}_1^{\lambda^{\,}_{1,\nu}} \cdots \bar{\mobserv}_{\Nmeas}^{\lambda^{\,}_{\Nmeas,\nu}} \, ,~~ 
    \end{equation}
    i.e., the product of the involutory parts \eqref{eq:involutory part} of all \emph{necessary} measured observables $\bar{\mobserv}^{\,}_j$ (see Cor.~\ref{cor:meas comm}) attached to $\Pauli{\nu}{f}$ by Lemma~\ref{lem:measure = attach}. Hence, the string order parameters \eqref{eq:k=1 big string order} commute in the bulk region $(i + \LRvel T, f)$, i.e.,
    \begin{equation}
    \label{eq:k=1 big string bulk comm}
        \comm{\bar{\mobserv}^{\vpp}_{x}}{\bar{\mobserv}^{\vpp}_{z}} = 0  \, ,~~
    \end{equation}
    while the \emph{endpoint} operators $\uchan \, \Pauli{\nu}{i} \, \uchan^{\dagger}$ and $\Pauli{\nu}{f}$ anticommute for different $\nu = x,z$, as required by Def.~\ref{def:string order}. 
\end{lem}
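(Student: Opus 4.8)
The plan is to start from the state-transfer conditions of Prop.~\ref{prop:teleportation conditions}, which for $\numQ=1$ read $\chandag \Pauli{\nu}{f} \chan = \Pauli{\nu}{i}\,\StabEl_\nu$ (for $\nu=x,z$) with $\StabEl_\nu \in \UStabOf{\ket{\Psi_0}}$, and to evaluate the left-hand sides by Heisenberg-evolving $\Pauli{\nu}{f}$ backward through the canonical form $\chan = \QECChannel\,\MeasChannel\,\uchan$. Pushing $\Pauli{\nu}{f}$ through $\QECChannel$, Lemma~\ref{lem:recovery factorization} shows that only the effective channel $\QECChannel_\nu$ acts nontrivially, attaching the Stinespring string $\prod_{j\in r_\nu}\SSZ{j}$, while the complementary effective recovery commutes with $\Pauli{\nu}{f}$. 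Iterating Lemma~\ref{lem:measure = attach} through $\MeasChannel$ then converts each $\SSZ{j}$ into $\bar{\mobserv}_j\,\SSZ{j}$ for precisely the necessary observables indexed by $r_\nu$ (Cor.~\ref{cor:meas comm}), leaving $\Pauli{\nu}{f}$ untouched since it commutes with every $\bar{\mobserv}_j$ and carries no Stinespring support. The accumulated physical factor is the bulk operator $\bar{\mobserv}_\nu = \prod_j \bar{\mobserv}_j^{\lambda_{j,\nu}}$ of Eq.~\eqref{eq:k=1 big string bulk}, and conjugating by $\uchan$ gives
\begin{equation*}
    \chandag \Pauli{\nu}{f} \chan = \uchan^{\dagger} \big( \bar{\mobserv}_\nu \, \Pauli{\nu}{f} \big) \uchan \cdot \prod_{j \in r_\nu} \SSZ{j} \, ,
\end{equation*}
which fixes the form of $\bar{\mobserv}_\nu$.

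Comparing this expression with $\Pauli{\nu}{i}\,\StabEl_\nu$ and noting that $\prod_{j\in r_\nu}\SSZ{j}$ fixes the default Stinespring state $\ket{0}_{\rm ss}$, I would act on $\ket{\Psi_0} = \ket{\Psi_0^{\rm ph}}\otimes\ket{0}_{\rm ss}$ and invoke $\StabEl_\nu\ket{\Psi_0}=\ket{\Psi_0}$. Because the Stinespring content of $\StabEl_\nu$ is inert on $\ket{0}_{\rm ss}$, this collapses to the purely physical identity
\begin{equation*}
    \uchan^{\dagger} \big( \bar{\mobserv}_\nu \, \Pauli{\nu}{f} \big) \uchan \, \ket{\Psi_0^{\rm ph}} = \Pauli{\nu}{i} \, \ket{\Psi_0^{\rm ph}} \, .
\end{equation*}
Substituting $\ket{\Psi_t} = \uchan\,\ket{\Psi_0^{\rm ph}}$ into the candidate order parameters and using $\uchan^{\dagger}\uchan = \ident$ together with $\comm{\bar{\mobserv}_\nu}{\Pauli{\nu}{f}} = 0$ then yields
\begin{equation*}
    \matel{\Psi_t}{ \uchan \, \Pauli{\nu}{i} \, \uchan^{\dagger} \, \bar{\mobserv}_\nu \, \Pauli{\nu}{f} }{\Psi_t} = \matel{\Psi_0^{\rm ph}}{ \Pauli{\nu}{i}\,\Pauli{\nu}{i} }{\Psi_0^{\rm ph}} = 1 \, ,
\end{equation*}
establishing Eqs.~\eqref{eq:k=1 big string order X}--\eqref{eq:k=1 big string order Z}.

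The structural claims are then short. The bulk relation $\comm{\bar{\mobserv}_x}{\bar{\mobserv}_z}=0$ \eqref{eq:k=1 big string bulk comm} holds because $\bar{\mobserv}_x$ and $\bar{\mobserv}_z$ are products of involutory parts of necessary observables, which pairwise commute by Cor.~\ref{cor:meas comm}; commuting observables have commuting involutory parts, as $\bar{\mobserv}_j$ is affine in $\mobserv_j$ by Prop.~\ref{prop:single-qubit observable}. For the endpoints, unitarity of $\uchan$ preserves the Pauli algebra, so $\acomm{\uchan\PX{i}\uchan^{\dagger}}{\uchan\PZ{i}\uchan^{\dagger}} = \uchan\,\acomm{\PX{i}}{\PZ{i}}\,\uchan^{\dagger} = 0$ on the left, and $\acomm{\PX{f}}{\PZ{f}} = 0$ on the right. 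These anticommuting projective-representation relations give $\Omega\neq 1$ as required by Def.~\ref{def:string order}, and by Theorem~\ref{thm:LR theorem} the left endpoint $\uchan\,\Pauli{\nu}{i}\,\uchan^{\dagger}$ is supported within $\LRvel T$ of $i$, so the two strings overlap only in the commuting bulk $(i+\LRvel T, f)$.

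The main obstacle is the consistent bookkeeping of the Stinespring registers as operators are pushed through the dilated channels: verifying that $\QECChannel$ contributes only $\SSZ{}$ strings, that each necessary measurement sends $\SSZ{j}\mapsto\bar{\mobserv}_j\SSZ{j}$, and---crucially---that the physical and Stinespring factors separate cleanly, so that the stabilizer condition $\StabEl_\nu\ket{\Psi_0}=\ket{\Psi_0}$ can be projected onto $\ket{0}_{\rm ss}$ to extract the physical identity that drives the matrix-element evaluation. Once that separation is secured, the remaining matrix-element computation and commutation structure reduce to routine Pauli algebra.
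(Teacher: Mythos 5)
Your proposal is correct and follows essentially the same route as the paper's proof: both combine Prop.~\ref{prop:teleportation conditions} with Lemmas~\ref{lem:recovery factorization} and \ref{lem:measure = attach} to show that the Heisenberg-evolved logical operator reduces to $\bar{\mobserv}^{\,}_{\nu}\,\Pauli{\nu}{f}$ (up to Stinespring factors that act trivially on the default state), then invoke the initial-state stabilizer property for unit expectation value, with Cor.~\ref{cor:meas comm} supplying bulk commutation and the Pauli algebra supplying endpoint anticommutation. The only cosmetic difference is that the paper packages the key identity as the statement that $\uchan\,\StabEl^{\,}_{\nu}\,\uchan^{\dagger}$ is a resource-state stabilizer equal to the order parameter, whereas you apply the equivalent state-level identity $\uchan^{\dagger}\,\bar{\mobserv}^{\,}_{\nu}\,\Pauli{\nu}{f}\,\uchan\,\ket{\Psi^{\,}_0} = \Pauli{\nu}{i}\,\ket{\Psi^{\,}_0}$ directly inside the matrix element and use $\Pauli{\nu}{i}^2=\ident$.
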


The proof of Lemma~\ref{lem:k=1 big string order} appears in App.~\ref{app:proof k=1 big string order}, and follows straightforwardly from previous results. The end-to-end string order parameters are depicted graphically in Fig.~\ref{fig:string order}(a). The operators above $\uchan$ represent the two string order parameters \eqref{eq:k=1 big string order} acting on the resource state $\ket{\Psi^{\,}_t}$ \eqref{eq:resource state}, with the anticommuting endpoint operators color coded. In the Heisenberg picture, the order parameters evolve under $\uchan$ into the stabilizers $\StabEl^{\,}_{x,z} \in \UStabOf{\ket{\Psi^{\,}_0}}$ at the bottom of Fig.~\ref{fig:string order}(a), which have expectation value one \eqref{eq:sym state} acting on the initial state $\ket{\Psi^{\,}_0}$ \eqref{eq:teleport initial mb state}.

\begin{figure}[t]
\centering
\includegraphics[width=0.45\textwidth]{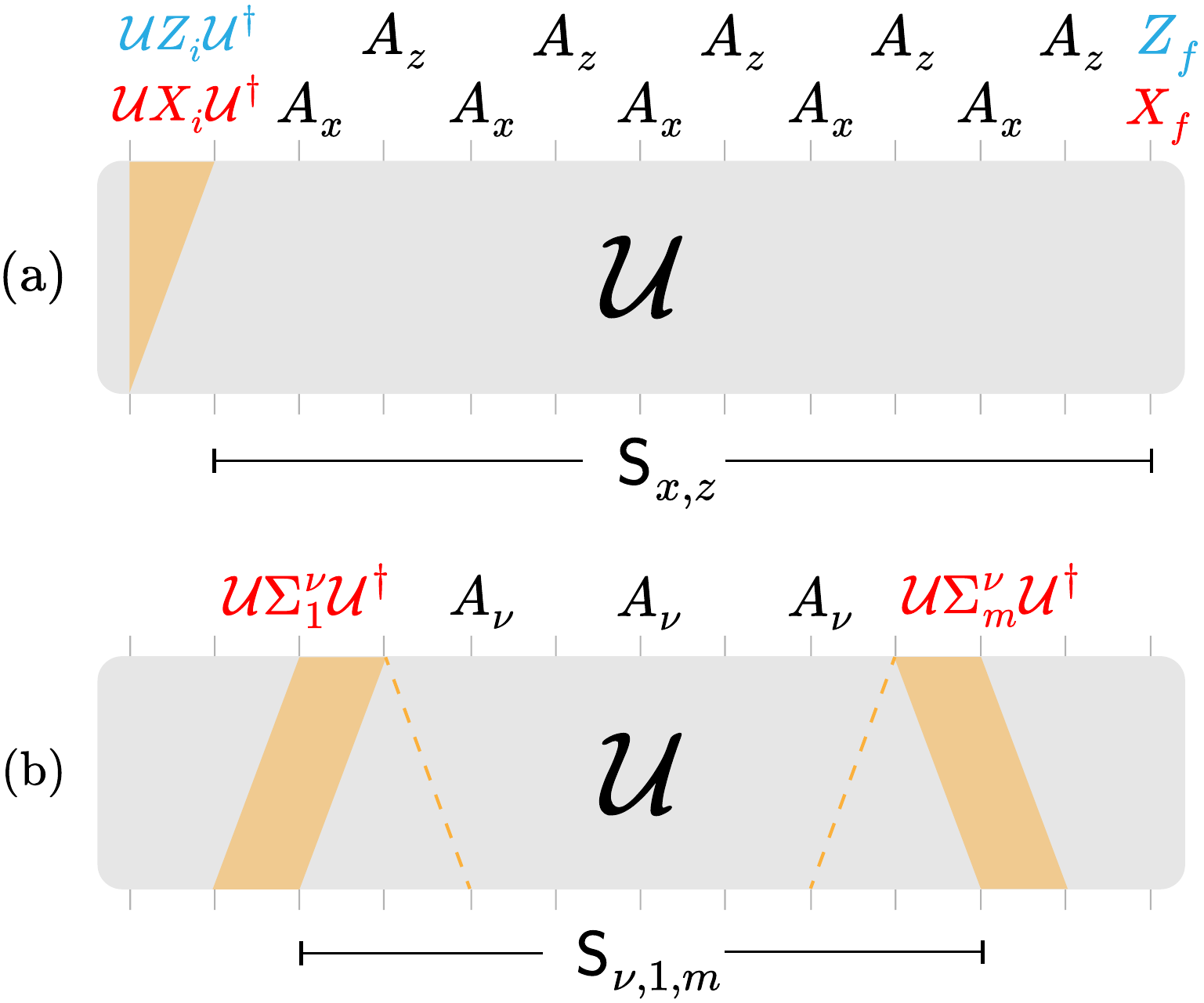}
\caption{The construction of $\Ints^{\,}_2 \times \Ints^{\,}_2$ string order parameters (a) on the endpoints and (b) in the bulk, where the colored operators represent the anticommuting endpoints.}
\label{fig:string order}
\end{figure}

However, in general, a state is only said to have string order \eqref{eq:string order def} if a similar relation holds for \emph{many} choices of endpoints. As it turns out, the string order parameters \eqref{eq:k=1 big string order} in Lemma~\ref{lem:k=1 big string order}, with right endpoint $f$ and left endpoint in $[i,i+\LRvel T]$, are not the only valid choice for the resource state $\ket{\Psi^{\,}_t}$ \eqref{eq:resource state}. In Lemma~\ref{lem:k=1 small string order}, we prove that a set of valid string order parameters \eqref{eq:string order def} corresponding to various subintervals $\interval \subset [i,f]$ can be identified.

\begin{lem}[$\Ints^{\,}_2 \times \Ints^{\,}_2$ string order]
    \label{lem:k=1 small string order}
    In addition to the end-to-end string order parameter \eqref{eq:k=1 big string order} of Lemma~\ref{lem:k=1 big string order}, the resource state $\ket{\Psi^{\,}_t}$ \eqref{eq:resource state} exhibits nontrivial string order (see Def.~\ref{def:string order}) captured by the order parameters
    \begin{equation}
        \label{eq:k=1 small string order}
        \mathcal{S}^{\vpp}_{\nu,a,b} = \left( \uchan \, \Sigma^{\nu}_a \, \uchan^\dagger \right) \, \bar{\mobserv}^{\vpp}_{\nu,a+1} \cdots \bar{\mobserv}^{\vpp}_{\nu,b} \, \left( \uchan \, \Sigma^{\nu}_{b} \, \uchan^{\dagger} \right) \, ,~~
    \end{equation}
    where the left and right endpoints are given by
    \begin{equation}
    \label{eq:k=1 small string new endpoint ops}
        \uchan \, \Pauli{\nu}{i} \, \uchan^\dagger \to \uchan \,  \Sigma^{\nu}_{a}  \, \uchan^\dagger ~~,~~\Pauli{\nu}{f} \to \uchan \Sigma^{\nu}_b \, \uchan^{\dagger} \, ,~~
    \end{equation}
    compared to Eq.~\ref{eq:k=1 big string order}, where the new endpoint operator is
    \begin{align}
    \label{eq:k=1 small string Sigma def}
        \Sigma^{\nu}_s &= \Pauli{\nu}{i} \, \uchan^{\dagger} \, \bar{\mobserv}^{\vpp}_{\nu,1} \cdots \bar{\mobserv}^{\vpp}_{\nu,s} \, \uchan = \Sigma^{\nu}_{s-1} \, \uchan^{\dagger} \, \bar{\mobserv}^{\vpp}_{\nu,s} \, \uchan \, ,~~   
    \end{align}
    where $\Sigma^{\nu}_0 = \Pauli{\nu}{i}$ and $\acomm{\Sigma^x_s}{\Sigma^z_s}=0$, as required. The $\Nregions$ operators $\bar{\mobserv}^{\,}_{s}$ correspond to products of the measured observables $\bar{\mobserv}^{\,}_j$ \eqref{eq:involutory part} in disjoint regions labeled $s$, such that $\uchan^{\dagger} \, \bar{\mobserv}^{\,}_{s+1} \, \uchan$ has no overlap with $\Sigma^{\nu}_{s}$ and $\Sigma^{\nu}_{s}$ acts to the right of $\Sigma^{\nu}_{s-1}$ and to the left of $\Sigma^{\nu}_{s+1}$. 
    
    Since, for all $a<b \in [1,\Nregions]$, $\ket{\Psi^{\,}_t}$ \eqref{eq:resource state} satisfies
    \begin{equation}
        \label{eq:k=1 string order}
        \matel{\Psi^{\vpp}_t}{\mathcal{S}^{\vpp}_{\nu,a,b}}{\Psi^{\vpp}_t}  = 1\, ,~~
    \end{equation}
    the resource state has nontrivial string order (see Def.~\ref{def:string order}). 
\end{lem}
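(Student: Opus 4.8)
The plan is to prove the expectation-value identity~\eqref{eq:k=1 string order} by conjugating the string operator back onto the initial product state, and then to read off the nontrivial projective phase directly from the on-site Pauli algebra at $i$. Using $\ket{\Psi^{\,}_t} = \uchan\ket{\Psi^{\,}_0}$~\eqref{eq:resource state}, I would first write $\matel{\Psi^{\,}_t}{\mathcal{S}^{\,}_{\nu,a,b}}{\Psi^{\,}_t} = \matel{\Psi^{\,}_0}{\uchan^{\dagger}\mathcal{S}^{\,}_{\nu,a,b}\uchan}{\Psi^{\,}_0}$. Conjugating the three factors of $\mathcal{S}^{\,}_{\nu,a,b}$~\eqref{eq:k=1 small string order} returns the endpoints to $\Sigma^{\nu}_a$ and $\Sigma^{\nu}_b$ and sends the bulk to $\uchan^{\dagger}\bar{\mobserv}^{\,}_{\nu,a+1}\cdots\bar{\mobserv}^{\,}_{\nu,b}\uchan$. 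The recursion~\eqref{eq:k=1 small string Sigma def}, $\Sigma^{\nu}_s = \Sigma^{\nu}_{s-1}\uchan^{\dagger}\bar{\mobserv}^{\,}_{\nu,s}\uchan$, then telescopes the bulk onto the left endpoint, $\Sigma^{\nu}_a\,(\uchan^{\dagger}\bar{\mobserv}^{\,}_{\nu,a+1}\cdots\bar{\mobserv}^{\,}_{\nu,b}\uchan) = \Sigma^{\nu}_b$, so that $\uchan^{\dagger}\mathcal{S}^{\,}_{\nu,a,b}\uchan = (\Sigma^{\nu}_b)^2$.

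To finish the expectation value I would show that $\Sigma^{\nu}_b$ is involutory: it is a product of $\Pauli{\nu}{i}$ and the backward-evolved observables $\uchan^{\dagger}\bar{\mobserv}^{\,}_{\nu,k}\uchan$, each of which squares to $\ident$ by Prop.~\ref{prop:single-qubit observable}. These factors mutually commute---the measured observables all commute by Cor.~\ref{cor:meas comm}, hence so do their conjugates, and each $\uchan^{\dagger}\bar{\mobserv}^{\,}_{\nu,k}\uchan$ is spatially disjoint from site $i$, which is precisely the region-decomposition hypothesis that $\uchan^{\dagger}\bar{\mobserv}^{\,}_{\nu,s+1}\uchan$ does not overlap $\Sigma^{\nu}_s \supseteq \Pauli{\nu}{i}$. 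Thus $(\Sigma^{\nu}_b)^2 = \ident$ and $\matel{\Psi^{\,}_0}{(\Sigma^{\nu}_b)^2}{\Psi^{\,}_0} = 1$, establishing~\eqref{eq:k=1 string order}; equivalently, the reduction exhibits $\uchan^{\dagger}\mathcal{S}^{\,}_{\nu,a,b}\uchan$ as the identity assembled from the stabilizers of Lemma~\ref{lem:k=1 big string order}. The same commutation bookkeeping yields the endpoint anticommutation claimed in the statement: since every $\uchan^{\dagger}\bar{\mobserv}^{\,}_{\nu,k}\uchan$ commutes with the others and with both $\Pauli{x}{i}$ and $\Pauli{z}{i}$, the only anticommuting pair encountered in reordering $\Sigma^{x}_s\Sigma^{z}_s$ into $\Sigma^{z}_s\Sigma^{x}_s$ is $\acomm{\Pauli{x}{i}}{\Pauli{z}{i}} = 0$, so $\acomm{\Sigma^{x}_s}{\Sigma^{z}_s} = 0$. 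This fixes the projective phase $\Omega = -1 \neq 1$ in~\eqref{eq:Abelian phase}, certifying nontriviality in the sense of Def.~\ref{def:string order}.

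The remaining step, and the one I expect to be the true obstacle, is to justify that this construction furnishes legitimate string order parameters for \emph{every} subinterval $\interval \subset [i,f]$---i.e., that the endpoints $\uchan\Sigma^{\nu}_s\uchan^{\dagger}$ are genuinely localized near region $s$ rather than smeared across the chain. Concretely, I would argue that the spread-out portion of $\uchan\Sigma^{\nu}_s\uchan^{\dagger} = (\uchan\Pauli{\nu}{i}\uchan^{\dagger})\bar{\mobserv}^{\,}_{\nu,1}\cdots\bar{\mobserv}^{\,}_{\nu,s}$ running from the physical boundary at $i$ up to region $s$ is a truncated symmetry string whose left terminus is cancelled by the boundary endpoint operator $\uchan\Pauli{\nu}{i}\uchan^{\dagger}$, leaving an operator confined near region $s$ when acting on $\ket{\Psi^{\,}_t}$. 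Establishing this requires exhibiting the disjoint region decomposition $\{\bar{\mobserv}^{\,}_{\nu,s}\}$ with the stated nesting and non-overlap properties, which in turn rests on the light-cone geometry of Sec.~\ref{sec:teleport+locality}---in particular the bound $\ell \leq 2\LRvel(T-1)$ on measurement-region size from Lemma~\ref{lem:M>2 speedup}, which guarantees that consecutive backward-evolved observables occupy disjoint intervals. By comparison, the algebraic telescoping and the endpoint anticommutation are routine once the commutation data of Cor.~\ref{cor:meas comm} and Prop.~\ref{prop:single-qubit observable} are in hand.
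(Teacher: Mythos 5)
Your telescoping computation is sound as algebra---conjugating by $\uchan$ and iterating $\Sigma^{\nu}_s = \Sigma^{\nu}_{s-1}\,\uchan^{\dagger}\bar{\mobserv}^{\,}_{\nu,s}\,\uchan$ does give $\uchan^{\dagger}\mathcal{S}^{\,}_{\nu,a,b}\,\uchan = (\Sigma^{\nu}_b)^2$, and this mirrors the paper's construction of $\mathcal{S}^{\,}_{\nu,a,b}$ as a product $\StabEl^{\,}_{\nu,t,a}\StabEl^{\dagger}_{\nu,t,b}$ of stabilizing operators. But one of your supporting claims is false: it is \emph{not} true that each $\uchan^{\dagger}\bar{\mobserv}^{\,}_{\nu,k}\,\uchan$ is spatially disjoint from site $i$. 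For $k=1$ this is incompatible with teleportation itself: since $\Dist > \LRvel T$, the backward-evolved final Pauli $\uchan^{\dagger}\Pauli{\nu}{f}\,\uchan$ cannot reach site $i$, so the teleportation identity $\Pauli{\nu}{i}\StabEl^{\,}_{\nu} = (\uchan^{\dagger}\bar{\mobserv}^{\,}_{\nu}\,\uchan)(\uchan^{\dagger}\Pauli{\nu}{f}\,\uchan)$ forces the leftmost region's observable to satisfy $\uchan^{\dagger}\bar{\mobserv}^{\,}_{\nu,1}\,\uchan = \Pauli{\nu}{i}\,\Sigma^{\nu}_1$, i.e., to act on site $i$ \emph{exactly} as $\Pauli{\nu}{i}$---this is the paper's key relation and the entire mechanism by which the logical operator reaches $i$. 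The commutators you actually need survive ($\comm{\uchan^{\dagger}\bar{\mobserv}^{\,}_{\nu,1}\,\uchan}{\Pauli{\nu}{i}} = 0$ because the on-site action is the \emph{same} Pauli), so $(\Sigma^{\nu}_b)^2 = \ident$ still holds; but your sign bookkeeping for $\acomm{\Sigma^x_s}{\Sigma^z_s}=0$ is wrong---there are three anticommuting pairs ($\Pauli{x}{i}$ with $\Pauli{z}{i}$, $\uchan^{\dagger}\bar{\mobserv}^{\,}_{x,1}\uchan$ with $\Pauli{z}{i}$, and $\Pauli{x}{i}$ with $\uchan^{\dagger}\bar{\mobserv}^{\,}_{z,1}\uchan$), not one---and it is only a parity accident that your count gives the right answer. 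The paper instead derives endpoint anticommutation as genuine symmetry fractionalization: $\comm{\uchan^{\dagger}\bar{\mobserv}^{\,}_{x,1}\uchan}{\uchan^{\dagger}\bar{\mobserv}^{\,}_{z,1}\uchan}=0$ together with the decomposition $\uchan^{\dagger}\bar{\mobserv}^{\,}_{\nu,1}\uchan = \Pauli{\nu}{i}\Sigma^{\nu}_1$ \emph{forces} $\acomm{\Sigma^x_1}{\Sigma^z_1}=0$.

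The larger problem is the step you yourself flag as ``the true obstacle'' and then leave unproven: with the literal definition of $\Sigma^{\nu}_s$, your computation shows $\mathcal{S}^{\,}_{\nu,a,b} = \ident$ identically, whose unit expectation value is vacuous. An identity operator whose ``endpoints'' are strings stretching from site $i$ all the way out to regions $a$ and $b$ does not satisfy Def.~\ref{def:string order}, which demands endpoint operators \emph{localized to the boundary} of the interval $\interval$; the localization is not a technical refinement but the entire content of the Lemma. The paper's proof spends most of its effort exactly here: the operators $\Sigma^{\nu}_s$ are defined only up to multiplication by elements of $\UStabOf{\ket{\Psi^{\,}_0}}$ (since every defining relation is used acting on $\ket{\Psi^{\,}_0}$), and one chooses stabilizer-dressed representatives $\Sigma^{\nu}_s\StabEl^{\,}_0$---with different choices for left versus right endpoints (``rightmost'' versus ``leftmost'' weight)---whose support is genuinely confined near region $s$. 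With those choices, $\mathcal{S}^{\,}_{\nu,a,b} = \StabEl^{\,}_{\nu,t,a}\StabEl^{\dagger}_{\nu,t,b}$ is no longer the identity, yet still has unit expectation because each factor stabilizes $\ket{\Psi^{\,}_t}$, and it now possesses localized anticommuting endpoints. Your proposed substitute---the spacing bound $\ell \leq 2\LRvel(T-1)$ from Lemma~\ref{lem:M>2 speedup}---cannot close this gap: light-cone geometry constrains where support \emph{can} lie, but removing the support of $\Sigma^{\nu}_s$ on the stretch between $i$ and region $s$ requires the stabilizer-multiplication freedom, not a tighter light cone (and that bound is in any case derived for optimal protocols, whereas the Lemma must cover suboptimal ones).
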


The proof of Lemma~\ref{lem:k=1 small string order} appears in App.~\ref{app:k=1 smol string proof}. The Lemma shows that the resource state $\ket{\Psi^{\,}_t}$ \eqref{eq:resource state} has nontrivial string order captured by a set of order parameters \eqref{eq:k=1 small string order} corresponding to intervals $\interval \subseteq [i,f]$. Valid endpoints are labelled $s=[0,\Nregions+1]$, where $s=0$ corresponds to $j \approx i + \LRvel T$, $s=\Nregions+1$  to $j=f$, and all others  to the  $\Nregions$ \emph{nonoverlapping} ``measurement regions.'' 

The string order parameters \eqref{eq:k=1 small string order} are illustrated in Fig.~\ref{fig:string order}(b). The operator $\Sigma^{\nu}_1$ is guaranteed to have no overlap in support with $\uchan^{\dagger} \bar{\mobserv}^{\,}_{\nu,2}$---the observable immediately to the right of the left endpoint operator in Fig.~\ref{fig:string order}(b)---by Lemma~\ref{lem:k=1 small string order}. This can be ensured by multiplying the ``na\"ive'' choice of $\Sigma^{\nu}_1$ by an initial-state stabilizer $\StabEl^{\,}_{\nu,1}$ to remove any support in triangle defined by the parallelogram and dashed line. The same holds for the right endpoint operator. We also point out that $\Sigma^{\nu}_1$ has no overlap with $\Pauli{\nu}{i}$, and that this extends to any choice of interval $\interval=[a,b]$, where the endpoint operators of the corresponding string order parameter $\mathcal{S}^{\,}_{\nu,a,b}$ \eqref{eq:k=1 small string order} have no overlap with those of $\mathcal{S}^{\,}_{\nu,a\pm 1,b}$ nor $\mathcal{S}^{\,}_{\nu,a,b\pm 1}$ by Lemma~\ref{lem:k=1 small string order}.

The string order parameters \eqref{eq:k=1 small string order} indicate that  $\ket{\Psi^{\,}_t}$ \eqref{eq:resource state} is symmetric \eqref{eq:sym state} with respect to a group $\mathcal{G}=\Ints^{\,}_2 \times \Ints^{\,}_2$, which is projectively represented by the Pauli matrices $\{\ident, \PX{}, \PY{}, \PZ{} \}$, so that $\Omega(x,z) = -1$ \eqref{eq:Abelian phase}. In the bulk, $\mathcal{G}$ has a linear (unitary) representation
\begin{equation}
    \label{eq:k=1 pseduo on site}
    U_{\interval} (\nu) = \prod\limits_{s \in \interval} \, \bar{\mobserv}^{\vpp}_{\nu,s} \, ,~~
\end{equation}
where $\nu \in \{x,z\}$ and $\comm{U_{\interval}(x)}{U_{\interval}(z)}=0$ by Cor.~\ref{cor:meas comm}, which is \emph{on site} for a sufficiently large unit cell. 

However, for nontrivial string order \eqref{eq:k=1 string order} to imply an SPT, one must show that the string order corresponds to a \emph{phase of matter}. This requires that one can extend Eqs.~\ref{eq:k=1 small string order} and \ref{eq:k=1 string order} to the thermodynamic limit (where $\abs{\interval} \to \infty$), which we establish in the following Corollary.

\begin{cor}[Embedding the resource state]
\label{cor:k=1 embed}
    The resource state $\ket{\Psi^{\,}_t}$ \eqref{eq:resource state} can be embedded in a thermodynamically large (i.e., infinite) chain, and the string order \eqref{eq:k=1 string order} extended to arbitrarily large intervals by implementing the truncation procedure of Lemma~\ref{lem:k=1 small string order} in reverse.
\end{cor}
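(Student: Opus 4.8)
The plan is to promote the finite-chain string order of Lemma~\ref{lem:k=1 small string order} into genuine SPT \emph{phase} order by exhibiting a family of resource states of unbounded length, each carrying the same nontrivial projective phase $\Omega(x,z)=-1$ \eqref{eq:Abelian phase}. As emphasized in Sec.~\ref{subsec:SPT intro}, SPT order is only well defined in the thermodynamic limit, so it does not suffice to display string order parameters on a single chain; one must show that the construction survives as $\abs{\interval}\to\infty$ while the endpoint operators stay localized near the boundary of $\interval$.

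First I would construct the family. The key observation is that the daisy-chain construction underlying Lemma~\ref{lem:M>2 speedup} lets one concatenate additional ``measurement regions'' while holding the depth $T$ and velocity $\LRvel$ fixed. Each appended region contributes one more involutory bulk factor $\bar{\mobserv}^{\vpp}_{\nu,s}$ to the symmetry $U_{\interval}(\nu)$ \eqref{eq:k=1 pseduo on site}, which is on site for a sufficiently large unit cell, and extends the chain by a bounded number of sites. Iterating this yields standard teleportation protocols with arbitrarily many regions $\Nregions\to\infty$ on chains of length $\Nspins\to\infty$, to each of which Lemma~\ref{lem:k=1 small string order} applies verbatim.

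Next I would extend the string order, which is precisely the \emph{reverse} of the truncation in Lemma~\ref{lem:k=1 small string order}: there the end-to-end parameter was contracted inward by sliding the endpoints $\Sigma^{\nu}_a,\Sigma^{\nu}_b$ \eqref{eq:k=1 small string Sigma def} over fewer regions, whereas here one slides them apart over more regions as the chain grows. For any fixed $a<b$ the order parameter $\mathcal{S}^{\vpp}_{\nu,a,b}$ \eqref{eq:k=1 small string order} is unchanged by further extension, while new choices with $b-a$ arbitrarily large become available. At every step the dressed endpoints still obey $\acomm{\Sigma^x_s}{\Sigma^z_s}=0$, so the projective class remains $\Omega(x,z)=-1$ and Eq.~\eqref{eq:k=1 string order} continues to hold, establishing nontrivial string order (Def.~\ref{def:string order}) for intervals of unbounded size.

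The main obstacle is controlling the locality of the endpoint operators uniformly across the family, since Def.~\ref{def:string order} requires endpoints localized near the boundary of $\interval$ for a \emph{fixed} correlation length $\xi$. This is resolved by holding $T$ fixed throughout the extension: the dressed endpoints $\uchan\,\Sigma^{\nu}_s\,\uchan^{\dagger}$ have support set by the light cone of $\uchan$, so by the Lieb-Robinson bound \eqref{eq:LR bound} their width is $\Order{\LRvel T}$ independent of $\Nregions$. Hence $\xi\lesssim\LRvel T$ is uniform, every member of the family is short-range entangled with the same nontrivial projective class, and the resource state $\ket{\Psi^{\vpp}_t}$ \eqref{eq:resource state} therefore lies in the $\Ints^{\,}_2\times\Ints^{\,}_2$ SPT phase.
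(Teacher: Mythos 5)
Your proposal is correct and follows essentially the same route as the paper: the paper's own (very terse) justification is that Cor.~\ref{cor:k=1 embed} follows from Lemma~\ref{lem:k=1 small string order} by running its endpoint-truncation procedure in reverse, which is exactly your step of sliding the dressed endpoints $\Sigma^{\nu}_s$ apart over additional measurement regions while Eq.~\eqref{eq:k=1 string order} continues to hold. Your added details---building the unbounded family by daisy-chaining regions at fixed depth $T$ and using the Lieb-Robinson bound \eqref{eq:LR bound} to keep the endpoint operators localized within $\Order{\LRvel T}$ uniformly in $\Nregions$---are consistent with, and merely make explicit, what the paper leaves implicit.
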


Corollary \ref{cor:k=1 embed} follows from Lemma~\ref{lem:k=1 small string order} and its proof in App.~\ref{app:k=1 smol string proof}. The Corollary establishes that one can use the resource state $\ket{\Psi^{\,}_t}$ \eqref{eq:resource state} to define a family of states of unbounded length, each of which has perfect string order, meaning that the string order survives into the thermodynamic limit. This also establishes that $\ket{\Psi^{\,}_t}$ \eqref{eq:resource state} \emph{cannot} be prepared in finite time $t \ll d(i,f)$ under local unitary time evolution (with $\LRvel = \Order{1}$ finite) that respects the symmetry $\mathcal{G} = \Ints^{\,}_2 \times \Ints^{\,}_2$ \cite{Huang_2015}. Thus, the resource state \eqref{eq:resource state} for \emph{any} $\numQ=1$ standard teleportation protocol $\chan$ belongs to a nontrivial SPT phase indicated by $\Ints^{\,}_2 \times \Ints^{\,}_2$ string order, where the measured operators $\mobserv_j$ used to determine $\QECChannel_{x/z}$ \eqref{eq:effective recovery} are those that make up the bulk of the $x/z$ string order parameters \eqref{eq:k=1 small string order}.

\subsection{Multiqubit case: \texorpdfstring{$\Ints_2^{2\numQ}$}{k Z2s (Zs 2?)}}
\label{sec:k>1 spt}

We now show that the results of Sec.~\ref{subsec:k=1 spt} for a single logical qubit extend straightforwardly to the standard teleportation of $\numQ >1$ logical qubits. This result is established by Theorem~\ref{thm:spt order}, which we state and prove below.

\begin{thm}[Standard teleportation implies SPT order]
    \label{thm:spt order}
    Consider a protocol with canonical form $\chan= \QECChannel \, \MeasChannel \, \uchan$ \eqref{eq:chan canonical form} that achieves standard teleportation of $\numQ \geq 1$ qubits (see Def.~\ref{def:standard teleport}). Omitting any unnecessary measurements from $\MeasChannel$ (see Cor.~\ref{cor:meas comm}), we identify $\Nregions$ \emph{measurement regions}---where $\bar{\mobserv}^{\,}_{n,\nu,s}$ is a product of all measured observables $\bar{\mobserv}^{\,}_j$ \eqref{eq:involutory part} in region $s$ attached to $\Pauli{\nu}{f_n}$ \eqref{eq:meas attach}---such that $\bar{\mobserv}^{\,}_{n,\nu,s+1}$ has no overlap with the \emph{endpoint} operator
    \begin{align}
        \Sigma^{\nu}_{n,s} &= \Pauli{\nu}{i_n} \, \uchan^{\dagger} \, \bar{\mobserv}^{\vpp}_{n,\nu,1} \cdots \bar{\mobserv}^{\vpp}_{n,\nu,s} \, \uchan \notag \\
        &= \Sigma^{\nu}_{n,s-1} \, \uchan^{\dagger} \, \bar{\mobserv}^{\vpp}_{n,\nu,s} \, \uchan \, ,~~ 
        \label{eq:k>1 endpoint def}
    \end{align}
    where $\Sigma^{\nu}_{n,0}=\Pauli{\nu}{i_n}$ and $\Sigma^{\nu}_{n,\Nregions+1}=\uchan^{\dagger} \, \Pauli{\nu}{f_n} \, \uchan$. Then, for all $n \in [1,\numQ]$,  $\nu=x,y,z$, and $a<b \in [1,\Nregions]$, the operator
    \begin{align}
    \label{eq:k>1 string order parameter}
         \mathcal{S}^{n,\nu}_{a,b}  \equiv \uchan \, \Sigma^{\nu}_{n,a} \, \uchan^{\dagger} \, \bar{\mobserv}^{\vpp}_{n,\nu,a+1} \cdots  \bar{\mobserv}^{\vpp}_{n,\nu,b} \,\uchan \, \Sigma^{\nu}_{n,b} \, \uchan^{\dagger} \, ,
    \end{align}
    with anticommuting endpoint operators $\Sigma^{\nu}_{n,s}$ \eqref{eq:k>1 endpoint def} satisfies
    \begin{align}
    \label{eq:k>1 string order}
        \matel*{\Psi^{\vpp}_t}{\,\mathcal{S}^{n,\nu}_{a,b} \,}{\Psi^{\vpp}_t} = 1  \, ,~~
    \end{align}
    for the resource state $\ket{\Psi^{\,}_t}$ \eqref{eq:resource state}, indicating \emph{nontrivial string order} (see Def.~\ref{def:string order}) with respect to the symmetry
    \begin{equation}
        \label{eq:protecting symmetry}
        \mathcal{G}_{\numQ} = \left( \Ints^{\vpp}_2 \times \Ints^{\vpp}_2 \right)^{\numQ} = \Ints^{2\numQ}_2 \, , ~~
    \end{equation}
    so that $\uchan$ prepares a resource state $\ket{\Psi^{\,}_t}$ \eqref{eq:resource state} with nontrivial SPT order, $\MeasChannel$ measures its end-to-end \eqref{eq:k=1 big string order} string order parameters $\mathcal{S}^{n,\nu}_{0,\Nregions+1}$ \eqref{eq:k>1 string order parameter}, and $\QECChannel$ corrects the errors $\conjchan$, which belong to a projective representation of $\mathcal{G}_{\numQ}$ \eqref{eq:protecting symmetry}.
\end{thm}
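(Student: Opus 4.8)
The plan is to reduce the $\numQ$-qubit claim to $\numQ$ essentially independent copies of the single-qubit analysis of Sec.~\ref{subsec:k=1 spt}, and then to show that these copies assemble into a \emph{single} projective representation of $\Ints^{2\numQ}_2$. First I would place $\chan$ in canonical form and, invoking Cor.~\ref{cor:meas comm}, discard every measurement that is unnecessary for teleportation, so that the surviving observables $\bar{\mobserv}_j$ mutually commute and have no support on the final logical sites $F$. For each logical qubit $n$ and each $\nu \in \{x,z\}$, I would apply Lemma~\ref{lem:measure = attach} to determine which $\bar{\mobserv}_j$ are attached to $\Pauli{\nu}{f_n}$ and group them, in order of increasing support, into the $\Nregions$ disjoint regions that define the products $\bar{\mobserv}_{n,\nu,s}$; the endpoint operators $\Sigma^{\nu}_{n,s}$ of Eq.~\ref{eq:k>1 endpoint def} are then precisely the partially Heisenberg-evolved logical operators built recursively as in the $\numQ=1$ case, with $\Sigma^{\nu}_{n,0} = \Pauli{\nu}{i_n}$ and $\Sigma^{\nu}_{n,\Nregions+1} = \uchan^\dagger \, \Pauli{\nu}{f_n} \, \uchan$.

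With this bookkeeping fixed, the string-order identity Eq.~\ref{eq:k>1 string order} for each individual pair $(n,\nu)$ follows verbatim from Lemmas~\ref{lem:k=1 big string order} and \ref{lem:k=1 small string order}, whose proofs are already written for arbitrary $\numQ$. Concretely, conjugating $\mathcal{S}^{n,\nu}_{a,b}$ by $\uchan^\dagger$ and using the teleportation conditions of Prop.~\ref{prop:teleportation conditions} together with the attachment rule of Lemma~\ref{lem:measure = attach} reduces the bulk observables against the two endpoint operators to an element of $\UStabOf{\ket{\Psi_0}}$, so that $\matel*{\Psi_t}{\mathcal{S}^{n,\nu}_{a,b}}{\Psi_t} = \matel*{\Psi_0}{\uchan^\dagger \mathcal{S}^{n,\nu}_{a,b} \uchan}{\Psi_0} = 1$. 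The $\nu=y$ case then follows from $\Pauli{y}{} = \ii \, \Pauli{x}{} \Pauli{z}{}$ and $\Sigma^{y}_{n,s} \propto \Sigma^{x}_{n,s} \Sigma^{z}_{n,s}$, exactly as in Eq.~\ref{eq:Y teleportation condition}. Finally, the intra-qubit anticommutation $\acomm{\Sigma^{x}_{n,s}}{\Sigma^{z}_{n,s}} = 0$ is inherited from $\acomm{\Pauli{x}{i_n}}{\Pauli{z}{i_n}} = 0$ at $s=0$ and preserved at every step, since each attached factor $\uchan^\dagger \, \bar{\mobserv}_{n,\nu,s} \, \uchan$ commutes with the logical operators by Lemma~\ref{lem:measure = attach}.

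The genuinely new step is to verify that the $2\numQ$ generators assemble into $\Ints^{2\numQ}_2$ rather than a smaller or differently fractionalized group. Because every surviving observable commutes (Cor.~\ref{cor:meas comm}), the bulk operators $\bar{\mobserv}_{n,\nu,s}$ mutually commute and square to the identity, furnishing a \emph{linear} representation of $\Ints^{2\numQ}_2$ that becomes on-site once the chain is coarse-grained into a unit cell large enough to contain one region for all $2\numQ$ strings---the ``pairs of sublattices'' of the statement, with qubit $n$ occupying sublattices $2n-1$ and $2n$. For the endpoints I would then evaluate the Abelian phases $\Omega$ of Eq.~\ref{eq:Abelian phase}: within a fixed qubit the pair $\Sigma^{x}_{n,s},\Sigma^{z}_{n,s}$ anticommutes, so $\Omega = -1$, whereas for $n \neq n'$ the endpoints are supported near the distinct initial sites $i_n \neq i_{n'}$ and are built from mutually commuting observables, so $\Sigma^{\nu}_{n,s}$ and $\Sigma^{\nu'}_{n',s'}$ commute and $\Omega = +1$. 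This is exactly a projective representation of $\Ints^{2\numQ}_2$ whose class is nontrivial in $H^2(\Ints^{2\numQ}_2, \Unitary{1})$, so Def.~\ref{def:string order} is satisfied. Extending this to a genuine SPT \emph{phase} then amounts to applying Cor.~\ref{cor:k=1 embed} to each of the $\numQ$ sublattice pairs, embedding $\ket{\Psi_t}$ in a family of ever-longer chains that retain perfect string order in the thermodynamic limit.

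I expect the main obstacle to be the \emph{simultaneous} localization of all $2\numQ$ endpoint operators and the identification of a common unit cell. The truncation procedure of Lemma~\ref{lem:k=1 small string order} localizes each $\Sigma^{\nu}_{n,s}$ separately by multiplying by initial-state stabilizers, and one must check that these truncations can be carried out for all $\numQ$ qubits at once without the regions assigned to different qubits conflicting---that is, that the necessary measured sites genuinely partition into $2\numQ$ commuting, disjoint bulk strings supported on distinct (coarse-grained) sublattices. Establishing this compatibility, rather than the per-qubit string order itself, is the real content of extending the single-qubit result to $\numQ>1$.
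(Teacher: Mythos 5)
Your proposal follows essentially the same route as the paper's proof: reduce to the $\numQ=1$ string-order lemmas (Lemmas~\ref{lem:k=1 big string order} and \ref{lem:k=1 small string order}) applied to each logical qubit independently, use Cor.~\ref{cor:meas comm} to guarantee that all necessary observables---and hence the bulk strings and endpoints for distinct qubits---commute, and invoke Cor.~\ref{cor:k=1 embed} for the extension to the thermodynamic limit. The additional detail you supply (the explicit $\Omega$-phase computation exhibiting a nontrivial projective representation of $\Ints^{2\numQ}_2$, the $\nu=y$ case, and the flagged compatibility of the per-qubit truncations) is precisely what the paper's terse proof leaves implicit, so there is no substantive difference in approach.
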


\begin{proof}
    The case for $\numQ=1$ is proven in Lemmas~\ref{lem:k=1 big string order} and \ref{lem:k=1 small string order}. By definition, the logical operators for different logical qubits commute; by Cor.~\ref{cor:meas comm}, all \emph{necessary} measured observables commute for any $\numQ$, while all other measurements have been omitted without affecting $\chan$. Thus, Lemmas~\ref{lem:k=1 big string order} and \ref{lem:k=1 small string order} hold for each logical qubit \emph{individually} (i.e., independently): The preceding arguments guarantee that the  string order parameters \eqref{eq:k>1 string order parameter} for distinct logical qubits \emph{locally} commute. Moreover, Cor.~\ref{cor:k=1 embed} extends to $\numQ>1$ without caveat. As a result, $\ket{\Psi^{\,}_t}$ \eqref{eq:resource state} belongs to a nontrivial SPT phase with $\Ints^{2\numQ}_2$ string order. 
\end{proof}

Theorem~\ref{thm:spt order} is the main result of this paper. In the optimal case, the $\Nmeas=2\numQ \Nregions$ measurements are applied to the $2\numQ$ sublattices of the chain at the endpoints of each valid subinterval $\interval$ described in Sec.~\ref{subsec:SPT intro}. The
same resource state $\ket{\Psi^{\,}_t}$ \eqref{eq:resource state} can be used to teleport logical states $\ket{\psi_n}$ between the endpoints of \emph{any} such interval $\interval$; by Cor.~\ref{cor:k=1 embed}, $\ket{\Psi^{\,}_t}$ can also be embedded in a thermodynamically large chain with nontrivial string order on infinitely many such endpoints. Hence, the resource state $\ket{\Psi^{\,}_t}$ \eqref{eq:resource state} belongs to a nontrivial SPT phase protected by the symmetry $\mathcal{G}_{\numQ}$ \eqref{eq:protecting symmetry}, where logical states are teleported between the edge modes of open intervals on the chain by measuring the string order parameters \eqref{eq:k>1 string order parameter} on the intervals $[i_n,f_n)$.

This also holds even for suboptimal protocols that realize standard teleportation. In particular, while using more measurements are made than are needed to achieve teleportation distance $\Dist$ \eqref{eq:L def} complicates the delineation of valid intervals (i.e., endpoints), they are guaranteed to exist by Theorem~\ref{thm:spt order} (and Lemmas~\ref{lem:k=1 big string order} and \ref{lem:k=1 small string order}). Hence, even  suboptimal standard teleportation protocols $\chan$ are equivalent to preparing an SPT resource state $\ket{\Psi^{\,}_t}$ \eqref{eq:resource state} protected by $\mathcal{G}=\Ints_2^{2\numQ}$ and  measuring its $2 \, \numQ$ string order parameters \eqref{eq:k>1 string order parameter} on the $\numQ$ intervals $[i_n,f_n)$.

\section{Example teleportation protocols}
\label{sec:examples}

We now describe several examples that illustrate our main result. In Sec.~\ref{subsec:cluster-y}, we consider a $\numQ=1$ teleportation protocol whose resource state is the same cluster state that appears in Sec.~\ref{subsec:phys teleport}, and we explain how $\PY{}$-basis measurements are nonetheless compatible with Theorem~\ref{thm:spt order}. In Sec.~\ref{subsec:hypergraph} we explain how \emph{hypergraph} states \cite{hypergraph, hypergraphMiyake, hypergraphMiyake2D} are a viable resource for $\numQ=1$ teleportation, despite not generally being recognized as SPTs and being nonstabilizer states \cite{gottesman1997stabilizer}. In Sec.~\ref{subsec:k>1 cluster}, we generalize the cluster states of Secs.~\ref{subsec:phys teleport} and \ref{subsec:cluster-y} to $\numQ\geq 1$, and describe the protecting symmetries and string order parameters. Finally, in Sec.~\ref{subsec:k>1 estp}, we present a family of optimal resource states for standard teleportation of $\numQ \geq 1$ qubits and discuss the reelvant symmetries and order parameters.

\begin{figure*}[t!]
\centering
\includegraphics[width=0.8\textwidth]{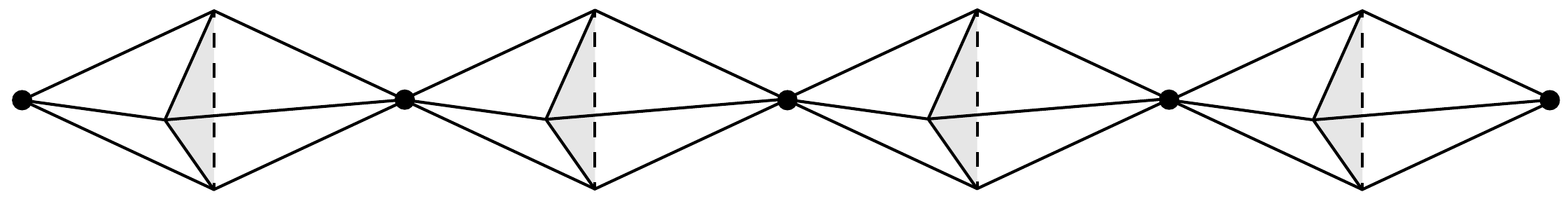}
\caption{A hypergraph lattice with qubits assigned to the $\Nspins=17$ vertices. The hypergraph state $\ket{\Psi^{\,}_t}$ recovers from $\ket{\Psi^{\,}_0} = \ket{+++ \cdots}$ \eqref{eq:hypergraph initial state}, where $\PXp{j}{} \ket{+} = \ket{+}$, by applying CCZ gates to every unshaded triangular face via $\uchan$ \eqref{eq:hypergraph uchan}.}
\label{fig:hypergraph chain}
\end{figure*}

\subsection{Alternative measurements on the cluster state}
\label{subsec:cluster-y}

We first consider an extension of the cluster state example of Sec.~\ref{subsec:phys teleport} \cite{Briegel_2001}. However, instead of measuring in the $\PX{}$ basis (as in Fig.~\ref{fig:cluster state circuit}), one can instead measure in the $\PY{}$ basis \cite{dumbqc}. Theorem~\ref{thm:spt order} then implies that the cluster state has nontrivial string order with respect to a unitary symmetry \eqref{eq:on site} that is diagonal in the on-site $\PY{}$ basis. We now show that this is, indeed, the case, and identify an alternative measurement pattern for $\PY{}$-basis measurements compatible with cluster-state teleportation.

The protocol with $\PY{}$-basis measurements is similar to the $\PX{}$-basis protocol in Sec.~\ref{subsec:phys teleport}. In fact, both protocols have the same physical unitary channel $\uchan$ \eqref{eq:cluster CZ circ}. To understand the measurement and recovery channels, it is helpful to consider the Heisenberg evolution of the logical operators. In analogy to Eq.~\ref{eq:cluster premeasurement logical}, we require that the logical operators acting on the cluster state $\ket{\Psi^{\,}_t}$ \eqref{eq:resource state} are
\begin{subequations}
\label{eq:cluster v2 premeas logicals}
\begin{align}
    \MeasChannel^\dagger \, \QECChannel^\dagger \, \PX{f} \, \QECChannel \, \MeasChannel &= Y Y \ident \, Y Y \ident \, \cdots Y Y \ident \,\PX{f} \label{eq:cluster v2 premeas logical X} \\
    \MeasChannel^\dagger \, \QECChannel^\dagger\,  \PZ{f} \, \QECChannel\,  \MeasChannel &= \ident \, Y Y \ident \, Y Y \cdots \ident \, Y Y \, \PZ{f} \, ,~\label{eq:cluster v2 premeas logical Z}
\end{align}
\end{subequations}
for chains of length $\Nspins= 3 \Nregions+1$ (though it is possible to modify the protocol for other $\Nspins$), where we have projected onto the default initial Stinespring state $\ket{\bvec{0}}$. Essentially, one measures all sites except $f$ in the $\PY{}$ basis, where ``measurement regions'' involve \emph{three} neighboring sites.

Following Sec.~\ref{subsec:phys teleport}, we again identify the stabilizers $\StabEl^{\,}_x$ and $\StabEl^{\,}_z$ \eqref{eq:cluster string ops} using Eq.~\ref{eq:teleportation conditions} of Prop.~\ref{prop:teleportation conditions} to ensure successful teleportation. Then Eq.~\ref{eq:cluster v2 premeas logicals} implies that 
\begin{subequations}
\label{eq:cluster v2 string ops}
\begin{align}
    \uchan \, \StabEl^{\vpp}_x \, \uchan^\dagger &= \PZ{1} \, \PX{2} \, (\ident \, Y Y \ident \, Y Y \cdots \ident \,) \, \PX{\Nspins}  \label{eq:cluster v2 string op X} \\
    \uchan \, \StabEl^{\vpp}_z \, \uchan^\dagger &= \PZ{1} \, \PY{2} (Y \ident \, Y Y \ident \, Y \cdots Y) \, \PZ{\Nspins} \label{eq:cluster v2 string op Z} \, ,~
\end{align}
\end{subequations}
which are indeed stabilizer elements for the cluster state.

Moreover, the operators $\StabEl^{\,}_x$ and $\StabEl^{\,}_z$ \eqref{eq:cluster v2 string ops} realize end-to-end string order parameters \eqref{eq:k=1 big string order}. These two operators commute in the bulk but have anticommuting endpoints. Since they have expectation value one in the cluster state, they imply that the cluster state also has nontrivial string order with respect to the symmetry $\mathcal{G} = \Ints^{\,}_2 \times \Ints^{\,}_2$ whose (bulk) generators are $Y \ident \, Y Y \ident \, Y \cdots$ and $\ident \, Y Y \ident \, Y Y \cdots$. 

Hence, we identify measurement regions as having size three, since this is the minimal translation-invariant unit cell of the symmetry generators. Finally, we note that the number of local measurements is minimized if one measures $\PY{1}\PY{2}$ and $\PY{2}\PY{3}$ in each region; however, one can instead measure only $\PY{1}$ and $\PY{3}$, at the cost of applying an additional FDQC after $\uchan$ \eqref{eq:cluster CZ circ}. These single-qubit measurements are compatible with Def.~\ref{def:phys teleport}, and more similar to the $\PX{}$-basis protocol discussed in Sec.~\ref{subsec:phys teleport}, but the measurement regions still have size three.

\subsection{Hypergraph states}
\label{subsec:hypergraph}

We present a $\numQ=1$ standard teleportation protocol involving a hypergraph \cite{hypergraph, hypergraphMiyake, hypergraphMiyake2D} resource state $\ket{\Psi^{\,}_t}$ \eqref{eq:resource state}. Because $\UStabOf*{\ket{\Psi^{\,}_t}}  \nsubseteq \PauliGroup*{\Nspins}$ for the hypergraph state $\ket{\Psi^{\,}_t}$, this example highlights how our results extend beyond ``stabilizer states'' \cite{gottesman1997stabilizer}. The hypergraph state $\ket{\Psi^{\,}_t}$ is defined on the vertices $v \in V$ of a graph $G=(V,E)$ comprising connected tetrahedra, as depicted in Fig.~\ref{fig:hypergraph chain}. The initial state is the same as for the cluster state \eqref{eq:cluster initial state}
\begin{align}
\label{eq:hypergraph initial state}
    \ket{\Psi^{\vps}_0} = \ket{\psi}^{\vpp}_i \otimes \ket{+++\cdots} \otimes \ket{\bvec{0}}^{\vpp}_{\rm ss} \, ,
\end{align}
with the logical state $\ket{\psi}$ initialized on the leftmost site $i$ of Fig.~\ref{fig:hypergraph chain}. The entangling unitary $\uchan$ is given by
\begin{align}
\label{eq:hypergraph uchan}
    \uchan = \prod_{\Delta \in \mathcal{F}_u} \mathrm{CCZ}^{\vps}_{\Delta} \, ,
\end{align}
where $\Delta$ runs over all \emph{unshaded} faces (the set $\mathcal{F}_u$) of the graph in Fig.~\ref{fig:hypergraph chain}, and the controlled-controlled-$\PZ{}$ gate $\mathrm{CCZ}^{\,}_{\Delta}$ applies CZ to two of the qubits on the face $\Delta$ if the third is in the $\PZ{}$-basis state $\ket{1}$. In the computational ($\PZ{}$) basis, $\mathrm{CCZ}^{\,}_{\Delta} = \mathrm{diag}(1,1,1,1,1,1,1,-1)$ sends $\ket{111}^{\,}_{\Delta} \to -\ket{111}^{\,}_{\Delta}$, and acts trivially otherwise.

We note that a length-three version of this hypergraph state appears in Ref.~\citenum{hypergraphMiyake}, but its connection to SPT order was not discussed. The hypergraph state in Fig.~\ref{fig:hypergraph chain} can be viewed as a cluster state \cite{Briegel_2001} whose odd sites correspond to the vertices connecting the tetrahedra (noted by dots in Fig.~\ref{fig:hypergraph chain}), and whose even sites are encoded in the three qubits around the shaded faces in Fig.~\ref{fig:hypergraph chain}.

The CCZ gate corresponds to the third order of the Clifford hierarchy \cite{CliffordHierarch}, where the first order is the Pauli group $\PauliGroup*{\Nspins}$, the second order is the Clifford group $\CliffGroup*{\Nspins}$, which maps  $\PauliGroup*{\Nspins}$ to itself, while gates like CCZ map $\CliffGroup*{\Nspins}$ to itself. In particular, for three vertices $i,j,k$, we have
\begin{align}
\label{eq:CCZ action on X}
    \mathrm{CCZ}^{\vps}_{ijk}\, \PX{i} \,\mathrm{CCZ}^{\vps}_{ijk} = \PX{i} \,\mathrm{CZ}^{\vps}_{jk} \, ,~
\end{align}
which is an element of $\CliffGroup*{\Nspins}$, and the same holds for$\PX{j}$ and $\PX{k}$ by the permutation symmetry of CCZ.

The hypergraph teleportation protocol $\chan$ is straightforward. After applying the entangling unitary $\uchan$ \eqref{eq:hypergraph uchan} to the initial state $\ket{\Psi^{\,}_0}$ \eqref{eq:hypergraph initial state}, one next measures all qubits (except the rightmost site $f$) in the $\PX{}$ basis in the channel $\MeasChannel$. Thus, the measurement regions in Fig.~\ref{fig:hypergraph chain} correspond to the tetrahedra formed by the shaded triangular faces and the vertex to the left. In the channel $\QECChannel$, the parity of the outcomes of all $\PX{j}$ measurements for qubits $j \in \mathcal{F}_s$ in the \emph{shaded} triangles in Fig.~\ref{fig:hypergraph chain} determine a final $\conjchan^{\,}_z = \PX{f}$ rotation, while the parity of the $\PX{j}$ outcomes for all other qubits $j \in V \setminus \mathcal{F}_s$ determine a final $\conjchan^{\,}_x=\PZ{f}$ rotation. 

In the Heisenberg picture, the two logical operators $\LX$ and $\LZ$ evolve under the dilated channels $\QECChannel$ and $\MeasChannel$ via
\begin{subequations}
\label{eq:hypergraph logical resource}
\begin{align}
    \PXp{\rm L}{\prime} &= \trace_{\rm ss} \left[ \, \MeasChannel^\dagger \, \QECChannel^\dagger\,  \PX{f} \, \QECChannel\,  \MeasChannel \, \SSProj{\bvec{0}}{\rm ss} \, \right] = \prod_{j \in V \setminus \mathcal{F}_s} \PX{j} \label{eq:hypergraph logical resource X} \\
    \PZp{\rm L}{\prime} &= \trace_{\rm ss} \left[ \, \MeasChannel^\dagger \, \QECChannel^\dagger \, \PZ{f} \, \QECChannel\,  \MeasChannel \, \SSProj{\bvec{0}}{\rm ss} \,  \right] = \PZ{f} \prod_{\Delta \in \mathcal{F}_s} \PX{\Delta}\, ,\label{eq:hypergraph logical resource Z} 
\end{align}
\end{subequations}
acting on the hypergraph resource state $\ket{\Psi^{\,}_t}$, where $\mathcal{F}_s$ denotes the \emph{shaded} faces in Fig.~\ref{fig:hypergraph chain}, the set $V \setminus \mathcal{F}_s$ corresponds to the filled dots in Fig.~\ref{fig:hypergraph chain}, and $\PX{\Delta} = \prod_{j \in \Delta} \, \PX{j}$. 

To show that the protocol $\chan$ satisfies Eq.~\ref{eq:teleportation conditions} of Prop.~\ref{prop:teleportation conditions}, we evolve through the unitary $\uchan$ \eqref{eq:hypergraph uchan} to find
\begin{subequations}
\label{eq:hypergraph logical initial}
\begin{align}
    \PX{f} (T) &= \trace_{\rm ss} \left[ \, \chan^{\dagger} \, \PX{f} \, \chan \, \SSProj{\bvec{0}}{\rm ss} \, \right] = \prod_{j \in V \setminus \mathcal{F}_s} \PX{j} \label{eq:hypergraph logical initial X} \\
    \PZ{f} (T) &= \trace_{\rm ss} \left[ \, \chan^{\dagger} \, \PZ{f} \, \chan \, \SSProj{\bvec{0}}{\rm ss} \, \right] =  \PZ{i}  \prod_{\Delta \in \mathcal{F}_s} \PX{\Delta}  \, ,\label{eq:hypergraph logical initial Z}
\end{align}
\end{subequations}
where $\uchan$ acts trivially on all $\PX{}$ operators. Essentially, $\uchan$ \eqref{eq:hypergraph uchan} acts on $\PXp{\rm L}{\prime}$ \eqref{eq:hypergraph logical resource X} by attaching CZ operators to $\PX{j}$ along all edges of each shaded face whose vertices neighbor qubit $j$ \eqref{eq:CCZ action on X}. However, each shaded face borders two such $\PX{j}$ operators, and so each CZ operator is attached twice to $\PXp{\rm L}{\prime}$ \eqref{eq:hypergraph logical resource X}, but $\mathrm{CZ}^2=\ident$. The action of $\uchan$ \eqref{eq:hypergraph uchan} on $\PZp{\rm L}{\prime}$ \eqref{eq:hypergraph logical resource Z} again follows from Eq.~\ref{eq:CCZ action on X}, and again, we find that CZ gates are always attached twice, so that the only change from Eq.~\ref{eq:hypergraph logical resource Z} to Eq.~\ref{eq:hypergraph logical initial Z} is that $\PZ{f} \to \PZ{i}$. Note that Eqs.~\ref{eq:hypergraph logical initial} satisfy Prop.~\ref{prop:teleportation conditions} for the initial state $\ket{\Psi^{\,}_0}$ \eqref{eq:hypergraph initial state}. 

Finally, we identify the string order parameters establishing that the hypergraph resource state $\ket{\Psi^{\,}_t}$ has nontrivial $\Ints^{\,}_2 \times \Ints^{\,}_2$ SPT order. The string order parameters \eqref{eq:string order def} correspond to the stabilizers $\StabEl^{\,}_{\nu,t}$ given by
\begin{align}
    \StabEl^{\vpp}_{\nu,t} &= \uchan \, \StabEl^{\vpp}_{\nu} \, \uchan^{\dagger} = \uchan \, \Pauli{\nu}{i} \, \Pauli{\nu}{f} (T) \, \uchan^{\dagger} \notag \\
    &= \uchan \, \Pauli{\nu}{i} \, \uchan^{\dagger} \,  \trace_{\rm ss} \left[ \, \MeasChannel^\dagger \, \QECChannel^\dagger\,  \Pauli{\nu}{f} \, \QECChannel\,  \MeasChannel \, \SSProj{\bvec{0}}{\rm ss} \, \right] \, ,~
\end{align}
which is the product of the initial-state logical operator $\Pauli{\nu}{i}$ evolved ``forward'' under $\uchan$ \eqref{eq:hypergraph uchan} and the resource-state logical operators \eqref{eq:hypergraph logical resource}. We note that $\PZ{i}$ commutes with $\uchan$\eqref{eq:hypergraph uchan}, and is thus unchanged, while $\PX{i}$ evolves according to Eq.~\ref{eq:CCZ action on X}. The resulting order parameters are
\begin{subequations}
\label{eq:hypergraph SOP}
\begin{align}
    \StabEl^{\vpp}_{x,t} &= \mathrm{CZ}_{\Delta_i}\,  \Big( \prod\limits_{j \in \mathcal{V}} \, \PX{j} \, \Big) \, \PX{f} \label{eq:hypergraph SOP X} \\
    \StabEl^{\vpp}_{z,t} &= \PZ{i}  \, \PX{\Delta_i} \, \Big( \prod\limits_{\substack{\Delta \in \mathcal{F}_s \\ \delta \neq \Delta_i}}  \PX{\Delta} \,\Big) \,  \PZ{f} \label{eq:hypergraph SOP Z} \, ,~
\end{align}
\end{subequations}
where $\mathrm{CZ}_{\Delta_i}$ is the product of CZ gates along the three edges of the shaded face $\Delta_i$ to the right of the leftmost qubit $i$ and $\mathcal{V} =V \setminus \left( \mathcal{F}_s \cup \{i,f\} \right)$ is the set of all vertices in the bulk $(i,f)$ that are not part of a shaded face (i.e., the three middle dotted vertices in Fig.~\ref{fig:hypergraph chain}). Importantly, the endpoint operators for $\StabEl^{\,}_{x,t}$ and  $\StabEl^{\,}_{z,t}$ both anticommute (on qubit $f$ and the shaded face $\Delta_i$), while the operators commute everywhere else (in the bulk). 

The generators of the $\Ints^{\,}_2 \times \Ints^{\,}_2$ are given by
\begin{equation}
\label{eq:hypergraph z2^2}
    U(x) = \prod\limits_{j \notin \mathcal{F}_s} \PX{j} ~,~~U(z) = \prod_{\Delta \in \mathcal{F}_s} \PX{\Delta} \, ,~~
\end{equation}
on an infinite chain, which are represented projectively by the endpoint operators in Eq.~\ref{eq:hypergraph SOP}. The nontrivial string order of the hypergraph state is captured by
\begin{equation}
    \label{eq:hypergraph nontrivial string order}
    \matel{\Psi^{\,}_t}{\, \StabEl^{\vpp}_{\nu,t}}{\Psi^{\,}_t} = 1 \, ,~~
\end{equation}
for $\nu=x,z$. Note that any of the dotted vertices $j>i$ in Fig.~\ref{fig:hypergraph chain} constitute valid right endpoints, and any tetrahedron corresponding to a dotted vertex $j<f$ and the shaded face $\Delta_j$ to its right constitutes a valid left endpoint. The chain in Fig.~\ref{fig:hypergraph chain} can clearly be extended to be thermodynamically large, so that the hypergraph state has nontrivial SPT order protected by the symmetry group $\mathcal{G} = \Ints^{\,}_2 \times \Ints^{\,}_2$ generated by Eq.~\ref{eq:hypergraph z2^2}.

\subsection{Cluster states for \texorpdfstring{$ \numQ \geq 1$}{one or more} logical qubits}
\label{subsec:k>1 cluster}

\begin{figure}[t]
    \centering
    \includegraphics[width=0.95\linewidth]{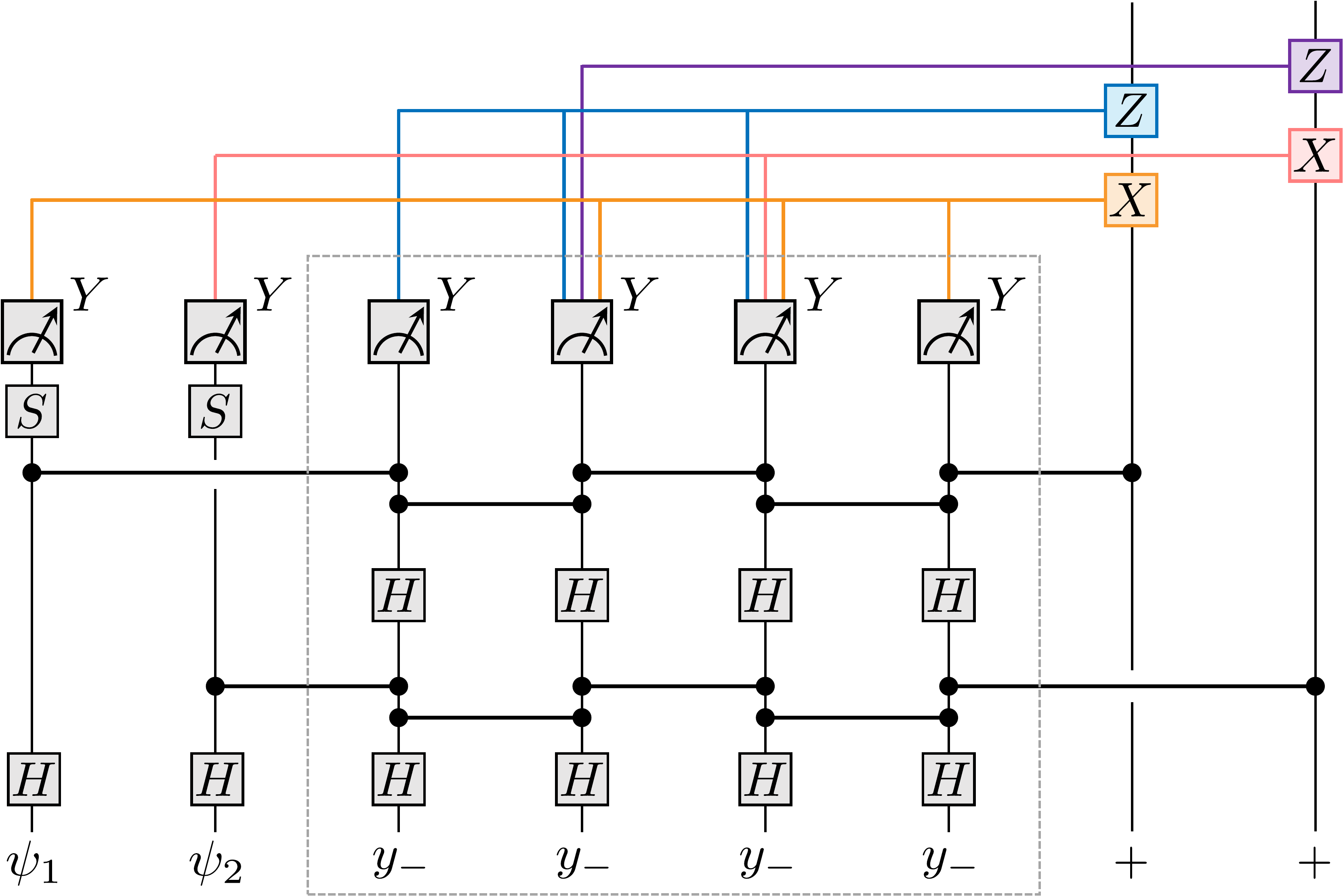}
    \caption{Standard teleportation of $\numQ=2$ logical qubits using a modification of the generalized cluster state $\ket{C^{\,}_{2}}$ \eqref{eq:generalized cluster}. The conditioning of the recovery operations on the measurements is indicated by color-coded lines. The region within the dotted box is precisely the ``bulk'' resource state $\ket{C^{\,}_{2}}$ \eqref{eq:generalized cluster}, where the operations outside the dotted box capture the required modifications near the boundaries.}
    \label{fig:gen_cluster}
\end{figure}

We first consider an extension of the cluster-state teleportation protocol discussed in Secs.~\ref{subsec:phys teleport} and \ref{subsec:cluster-y} \cite{Briegel_2001} to $\numQ \geq 1$ \cite{dumbqc}. We present a family of SPT states $\ket{C^{\,}_{\numQ}}$ \eqref{eq:generalized cluster} with protecting symmetry $\mathcal{G}=\Ints_2^{2 \numQ}$ (as required by Theorem~\ref{thm:spt order}) that is closely related to other known families of resource states for $\numQ$-qubit teleportation \cite{Raussendorf2022, dumbqc}. 

For convenience of presentation, we first define the extended ``$\numQ$-fold'' cluster state in the \emph{bulk} of the chain,
\begin{equation} \label{eq:generalized cluster}
    \ket{C^{\vpp}_{\numQ}} \equiv \bigg( \prod_j \mathrm{CZ}^{\,}_{j,j+1} \, \prod_{j'} H^{\,}_{j'} \bigg)^{\numQ} \, \ket{y_-y_-\cdots y_-} \, ,~
\end{equation}
where $H^{\,}_j = (\PX{j} + \PZ{j})/\sqrt{2}$ is the Hadamard gate, and the $+1$ eigenstates $\ket{+}$ of $\PX{}$ in Eq.~\ref{eq:cluster initial state} have been replaced by the $-1$ eigenstates $\ket{y_-}$ of $\PY{}$ (where $\ket{y_{\pm}}= (\ket{0} \pm \ii \, \ket{1})/\sqrt{2}$ in the $\PZ{}$ basis). We note that the $\numQ=1$ state $\ket{C^{\,}_{1}}$ \eqref{eq:generalized cluster} is unitarily equivalent to the standard cluster state \cite{Briegel_2001} via a $\PZ{}$-basis rotation of all qubits by $\pi/4$. The stabilizer group $\UStabOf*{\ket{C^{\,}_{\numQ}}}$ is generated by the operators
\begin{equation}
    \StabEl^{\vpp}_j = \PZ{j} \, \PY{j+1} \cdots \PY{j+2 \numQ-1} \, \PZ{j+2\numQ} \, ,~~
\end{equation}
which bears some resemblance to Eq.~\ref{eq:cluster v2 string ops}. Furthermore, $\ket{C^{\,}_{\numQ}}$ \eqref{eq:generalized cluster} has a $\Ints_2^{2k}$ symmetry generated by 
\begin{equation}
    \label{eq:generalized cluster sym gen}
    g^{\vpp}_p = \prod\limits_j \, \PY{2 \, \numQ \, j + p} \, ,~~
\end{equation}
for $p \in [ 1,2 \, \numQ]$. Hence, these $\numQ$-fold cluster states can be used to teleport $\numQ$ logical qubits via bulk measurements in local $Y$-basis, as depicted in Fig.~\ref{fig:gen_cluster} for $\numQ=2$ and a single ``measurement region'' of size $2\numQ$ in the bulk. The color-coded lines indicate how the measurement outcomes determine the recovery operations on $F$. 

Importantly, we note that the resource state $\ket{\Psi^{\,}_t}$ for teleportation requires modifying $\ket{C^{\,}_{\numQ}}$ \eqref{eq:generalized cluster} in the vicinity of the initial and final logical sets $I$ and $F$. In the bulk of the chain (i.e., a distance $r > \LRvel \, T$ from both $I$ and $F$), the initial state $\ket{\Psi^{\,}_0}$ is simply $\ket{y_-}$ on each site, and the unitary $\uchan$ is \emph{identical} to the one implicit in Eq.~\ref{eq:generalized cluster}. The required modifications are shown explicitly in Fig.~\ref{fig:gen_cluster} for $\numQ=2$ logical qubits. They include modifying (\emph{i}) the initial state $\ket{\Psi^{\,}_0}$ in $F$ according to  $\ket{y_-} \to \ket{+} $; (\emph{ii}) the unitary $\uchan$ in the region $I=\{1,2\}$ to include phase gates $S = \mathrm{diag}(1,\ii)$ in the computational basis; and (\emph{iii}) the measurement channel $\MeasChannel$ in the region $I$ to include $\numQ=2$ additional measurements in $I$ (only the four measurements in the boxed ``bulk'' region in Fig.~\ref{fig:gen_cluster} repeat for $\Nregions>1$). 

The important takeaway is that the resource state $\ket{\Psi^{\,}_t}$ \eqref{eq:resource state} is equivalent to $\ket{C^{\,}_{\numQ}}$ \eqref{eq:generalized cluster} in the bulk of the chain. By the same token, this state is compatible with Theorem~\ref{thm:spt order} for any $\numQ$, as the modifications to $\ket{C^{\,}_{\numQ}}$ \eqref{eq:generalized cluster} required for teleportation are localized to the two edges, and merely modify the endpoint operators, which anticommute as required. Although it is possible to work out the modifications required at the two edges (i.e., outside of the boxed region in Fig.~\ref{fig:gen_cluster}) for any $\numQ>2$, this is not particularly illuminating or instructive. Instead, we consider an alternative family of states with a systematic procedure for encoding the edges in Sec.~\ref{subsec:k>1 estp}.

\subsection{Valence-bond states for \texorpdfstring{$ \numQ \geq 1$}{one or more} logical qubits}
\label{subsec:k>1 estp}

We now describe another family of resource states suitable for the standard teleportation of $\numQ \geq 1$ logical qubits. In contrast to the generalized cluster states $\ket{C^{\,}_{\numQ}}$ \eqref{eq:generalized cluster} of Sec.~\ref{subsec:k>1 cluster}, the unitary $\uchan$ that prepares the actual resource state $\ket{\Psi^{\,}_t}$ is straightforwardly related to the unitary that prepares the corresponding bulk SPT state $\ket{V^{\,}_{\numQ}}$ \eqref{eq:alpha chains}.

\begin{figure}
    \centering
    \includegraphics[width=0.95\linewidth]{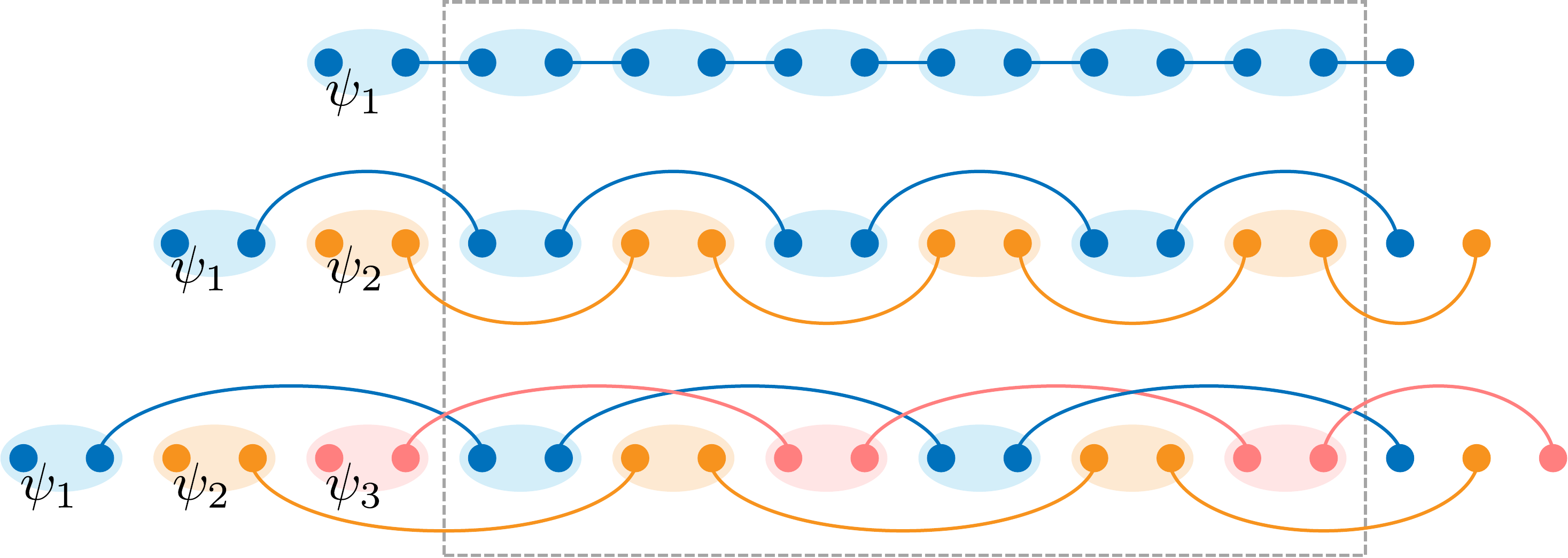}
    \caption{The valence-bond resource states for $\numQ=1,2,3$ (top to bottom), where two dots connected by a line denote the Bell state $\ket{\mathrm{Bell}}$. The \emph{bulk} states in the dotted box realize $\ket{V^{\,}_{\numQ}}$ \eqref{eq:alpha chains}, which are closely related to the resource state $\ket{\Psi^{\,}_t}$ \eqref{eq:resource state} for teleportation. Performing Bell-basis $XX$ and $ZZ$ measurement on each oval teleports the qubits to the rightmost end points of Bell states, up to Pauli errors. }
    \label{fig:alpha chains}
\end{figure}

We refer to this family of states as \emph{valence-bond states}, due to their resemblance to, e.g., the Affleck-Kennedy-Lieb-Tasaki (AKLT) state \cite{AKLT}. These states are also closely related to the $\alpha$ chains introduced in Ref.~\citenum{Ruben1dSPT}. Given a 1D chain with an even number of qubits $\Nspins$, we define the $\numQ$-fold valence bond state $\ket{V^{\,}_{\numQ}}$ \eqref{eq:alpha chains} as
\begin{equation} \label{eq:alpha chains}
    \ket{V^{\vpp}_{\numQ}} \equiv \bigotimes_j \, \ket{\mathrm{Bell}}^{\vpp}_{2j,2 j+2 \numQ -1} \, , ~~
\end{equation}
where $\ket{\mathrm{Bell}}^{\,}_{a,b} = ( \ket{00}^{\,}_{a,b} + \ket{11}^{\,}_{a,b})/\sqrt{2}$. These states are depicted in Fig.~\ref{fig:alpha chains} for $\numQ=1,2,3$, where the dotted box encloses the bulk of the chain. The states $\ket{V^{\,}_{\numQ}}$ \eqref{eq:alpha chains} are symmetric under the group $\mathcal{G}=\Ints_2^{2\numQ}$ generated by
\begin{subequations}
    \label{eq:alpha chain sym gen}
\begin{align}
     g^x_p &= \prod\limits_j \, \PX{2 \, \numQ \, j + 2 \, p} \,\PX{2 \, \numQ \, j + 2 \, p+1} \label{eq:alpha chain sym gen x} \\
     g^z_p &= \prod\limits_j \, \PZ{2 \, \numQ \, j + 2 \, p} \,\PZ{2 \, \numQ \, j + 2 \, p+1} \label{eq:alpha chain sym gen z}
\end{align}
\end{subequations}
for $p \in [1,\numQ]$. The corresponding string order parameters \eqref{eq:k>1 string order parameter} act as one of the two generators \eqref{eq:alpha chain sym gen} in the the bulk.

We now describe the valence-bond teleportation protocol. Given $\numQ \geq 1$ logical qubits and a target distance $\Dist^{\,}_*$ \eqref{eq:L def}, we choose the minimum number of measurement regions $\Nregions \geq 1$ such that  $\Dist = \left(2 \, \Nregions + 1 \right) \numQ + 1 \geq \Dist^{\,}_*$. These protocols involve at most nearest-neighbor gates (with $\LRvel=1$), and are optimal in that they use the minimum required depth $T = \numQ +1$ \eqref{eq:k>1 min depth}, so that $\Dist = T + 2 \, \Nregions \, (T-1)$ saturates the teleportation bound \eqref{eq:k>1 bound}.

The protocol is implemented on a 1D chain with $\Nspins= (2 \, \Nregions+3) \numQ$ qubits. These qubits can be partitioned into a ``bulk'' region with $2 \, \numQ \, \Nregions$ qubits (the region enclosed in the dotted box in Fig.~\ref{fig:alpha chains}) along with an additional $2\, \numQ$ qubits to the left of the bulk region and another $\numQ$ qubits to the right. The system is initialized in the state
\begin{align}
\label{eq:valence bond initial}
    \ket{\Psi^{\vpp}_0} &= \prod\limits_{n=1}^{\numQ} \ket{\psi^{\vpp}_n}^{\vpp}_{2n-1} \otimes \ket{0}^{\vpp}_{2n}  \otimes \prod\limits_{j = 2\numQ +1}^{\Nspins} \ket{0}^{\vpp}_j \, ,~~
\end{align}
where $\ket{\psi^{\,}_n}$ is the $n$th logical state on site $i_n=2n-1$. All bulk qubits $j \in [2\numQ+1,2(\Nregions+1)\numQ]$ are prepared in the initial state $\ket{0}$, though the sites $j=i_n+1$ and $j=\Nspins-n$ for $n\in [1,\numQ]$ may be prepared in \emph{any} state.

The \emph{resource} state $\ket{\Psi^{\,}_t}$ \eqref{eq:resource state} for teleportation is prepared in $T = \numQ+1$ layers of unitary evolution labelled $\uchan^{\,}_{p}$ for $p \in [0,\numQ]$. While the \emph{valence-bond} state $\ket{V^{\,}_{\numQ}}$ \eqref{eq:alpha chains} is prepared by the first $\numQ$ layers alone, the final unitary layer $\uchan^{\,}_{\numQ}$ merely converts the symmetry generators in Eq.~\ref{eq:alpha chain sym gen} into on-site operators, so both states are SPTs).  

The first unitary layer is given by 
\begin{align}
\label{eq:valence bond U0}
    \uchan^{\vpd}_0 &= \Big( \prod\limits_{n=1}^{\numQ} \, \mathrm{SWAP}^{\vpd}_{2n-1,2n} \, \Big) \, \Bigg[ \prod\limits_{j=\numQ+1}^{(\Nregions+1)\numQ} \, \mathcal{B}^{\vpd}_{2j-1,2j} \, \Bigg] \, \ident \, ,
\end{align}
corresponding to (\emph{i}) the $2 \numQ$ sites on the left, (\emph{ii}) the $2 \Nregions \numQ $ sites in the bulk, and (\emph{iii}) the $\numQ$ sites on the right, respectively, where $\mathrm{SWAP} \, \ket{a,b}=\ket{b,a}$  and, for convenience, we define the Bell encoding channel $\mathcal{B}$ as
\begin{align}
\label{eq:Bell channel}
    \mathcal{B}^{\vpd}_{a,b} \equiv \mathrm{CNOT} \left( a \to b \right) \, H^{\vpd}_{a} \, , ~~
\end{align}
where $\mathrm{CNOT} \left( a \to b \right)$ flips the $\PZ{}$ state of qubit $b$ if qubit $a$ is in the $\PZ{}$ state $\ket{1}$, and acts trivially otherwise, so that $\mathcal{B}^{\,}_{a,b} \, \ket{00}^{\,}_{a,b} = \ket{\mathrm{Bell}}^{\,}_{a,b}$. Note that the SWAP gates in $\uchan^{\,}_0$ \eqref{eq:valence bond U0} can be omitted if we swap the states of sites $2n-1$ and $2n$ in $\ket{\Psi^{\,}_0}$ \eqref{eq:valence bond initial}, at the cost of decreasing $\Dist$ \eqref{eq:L def} by one. When preparing the valence-bond state $\ket{V^{\,}_{\numQ}}$ \eqref{eq:alpha chains} directly, $\uchan^{\,}_0$  only contains the bulk term in Eq.~\ref{eq:valence bond U0}.

The next layers of $\uchan$ are given by
\begin{equation}
    \label{eq:valence bond U p}
    \uchan^{\vpd}_{p} = 
    \begin{cases}
        \prod\limits_{j=1}^{(\Nregions+1)\numQ^{\vpd}} \, \mathrm{SWAP}^{\vpd}_{2j+2p-2,2j+2p-1} \, & \, p~{\rm odd} \\[2pt]
        \prod\limits_{j=1}^{(\Nregions+1)\numQ} \, \mathrm{SWAP}^{\vpd}_{2j+2p-3,2j+2p-2} \, & \, p~{\rm even}
    \end{cases} \, ,~~
\end{equation}
which can be thought of more simply as alternating layers of SWAP gates on even versus odd bonds, where the first layer of SWAP gates acts on even bonds---i.e., the SWAP gates in $\uchan^{\,}_1$ \eqref{eq:valence bond U p} are staggered by one site with respect to the Bell encoding channels $\mathcal{B}$ in $\uchan^{\,}_0$ \eqref{eq:valence bond U0}.

The product of the untiaries $\uchan^{\,}_{\numQ-1} \cdots \uchan^{\,}_0$ map the state $\ket{\bvec{0}}$ to the valence-bond state $\ket{V^{\,}_{\numQ}}$ \eqref{eq:alpha chains}. The analogous state $\ket{\Psi^{\,}_{T-1}}$ for teleportation is captured by Fig.~\ref{fig:alpha chains} for $\numQ=1,2,3$; however, writing a general expression is inconvenient. The final layer of unitary evolution applies the Bell \emph{decoding} channel to the qubits in ovals in Fig.~\ref{fig:alpha chains}, and SWAP gates to all qubits on the right end, i.e.,
\begin{align}
    \uchan^{\vpd}_{\numQ} =  \prod\limits_{j=1}^{\numQ  \Nregions}  \mathcal{B}^{\dagger}_{\numQ+2j-1,\numQ+2j}   \, \prod\limits_{n=0}^{\numQ-1} \mathrm{SWAP}^{\vpd}_{\Nspins-2n-1,\Nspins-2n} ,
\end{align}
where the SWAP gates on the final $2\numQ$ sites can be omitted at the cost of decreasing $\Dist$ \eqref{eq:L def} by one. Following the application of all unitaries, one measures $\PZ{j}$ on all sites $[\numQ+1,(2\Nregions+1)\numQ]$, where every $2\numQ$ consecutive outcomes are determine recovery channels according to $(m_{x,1},m_{z,1},m_{x,2},m_{z,2},\dots,m_{x,\numQ},m_{z,\numQ})$, where $\Pauli{z/x}{f_n}$ is applied to $f_n$ if  $\sum_{s=1}^{\Nregions} m_{x/z,s} \cong 1$ mod $2$.

\section{Outlook}
\label{sec:outlook}

We have shown that all ``standard'' teleportation protocols that teleport $\numQ$ logical states a distance $\Dist > \LRvel T$ \eqref{eq:L def} are equivalent to preparing a resource state with nontrivial SPT order protected by the symmetry $\mathcal{G}=\Ints_2^{2\numQ}$ \eqref{eq:protecting symmetry} using local unitary evolution for time $T \ll \Dist$, measuring the $2 \numQ$ string order parameters \eqref{eq:k>1 string order parameter} in the bulk, and applying Pauli recovery gates \eqref{eq:unitary feedback teleport} based on the outcomes. These protocols are physical in that the logical states $\ket{\psi^{\,}_n}$ are realized on particular physical qubits. We require that no unitaries are applied to a site after it is measured, though we expect that this requirement can be relaxed.

The crucial feature of standard teleportation protocols, from the perspective of constraining the resource state, is that all outcome-dependent operations are Pauli gates and all classical processing is linear. The recovery group---generated by all physical error-correction operators $\conjchan$ in $\QECChannel$~\eqref{eq:unitary feedback general}, modulo complex phases---is simply the $\numQ$-qubit Pauli group $\PauliGroup*{\numQ}/\Unitary{1}$ for standard teleportation, which forms a projective representation of the symmetry group $\mathcal{G}=\Ints_2^{2\numQ}$~\eqref{eq:protecting symmetry}. This simple structure of the recovery group ensures that the resource state $\ket{\Psi_t}$~\eqref{eq:resource state} belongs to a \emph{single} class of SPT phases protected by the symmetry $\mathcal{G}=\Ints_2^{2\numQ}$.
We are confident that our analysis can be extended, \emph{mutatis mutandis}, to more general recovery operations, where the choice of recovery group dictates both the protecting symmetry $\mathcal{G}$ and phase of matter of the resource state, while the nature of the classical postprocessing of measurement outcomes dictates how that symmetry is \emph{represented} on the physical qubits.

In this sense, our results (namely, Theorem~\ref{thm:spt order}) represent a first step toward classifying all quantum teleportation protocols based solely on their recovery (or decoding) channels \eqref{eq:unitary feedback general}.
There are numerous recovery operations to consider beyond the ``Paulis and linear processing'' of standard teleportation: The corresponding group can be continuous and non-Abelian (i.e., $\conjchan^{\,}_g \, \conjchan^{\,}_h \neq \omega (g,h) \, \conjchan^{\,}_h \, \conjchan^{\,}_g$ for some complex phase $\omega$), one can condition measurements (or the basis of a measurement) on prior outcomes, and one may condition operations on the outcomes of measurements in other ways. For example, Ref.~\citenum{SIMPS} considers Pauli recovery operations, but requires nonlinear classical processing, in which case the resource state can be chosen to have SPT order, but the symmetry generators $U(g)$ are no longer ``on site''~\eqref{eq:on site}. Other teleportation schemes use more complicated decoders to determine the appropriate recovery operations in the presence of, e.g., noisy measurement outcomes~\cite{lee2022, chen2023realizing, google_teleport}. Finally, we note that matrix product states~\cite{Gross2007, Gross2010} \emph{without} a clear notion of string order~\eqref{eq:string order def}~\cite{LocalizableEntanglement} may be used as resource states for teleportation using a nondeterministic ``repeat-until-success'' recovery scheme. In general, we expect our approach using the Stinespring picture~\cite{SpeedLimit, AaronDiegoFuture} to be crucial to analyzing these more exotic protocols and their potential connection to SPT orders.

Lastly, we comment on the robustness of the teleportation schemes that fall under our formalism to noise and errors. It is known that teleportation protocols using SPT states are resources are robust to symmetry-respecting perturbations to the resource state \cite{DominicMBQC_SPT,Marvian2017}. Therefore, by unveiling the presence of SPT order in all standard teleportation protocols, we have shown that all such protocols are robust to symmetry-respecting perturbations. Conversely, in the protocols we consider, precise knowledge of the measurement outcomes is crucial to the final recovery operations---there is no tolerance for even a \emph{single}  measurement error or generic error in the resource state. To achieve robustness to generic perturbations as well as noise and measurement errors, we need to consider the broader problem of fault-tolerant teleportation.  There exist fault-tolerant teleportation protocols using 3D cluster states as resource states~\cite{Raussendorf2005a}. These have furthermore been connected to the presence of SPT order protected by higher-form symmetries~\cite{Roberts_2017}. From the perspective of our results, the ability to teleport using the 3D cluster state in the \emph{absence} of errors follows from the presence of the membrane order parameters defined in Ref.~\citenum{Roberts_2017}, which play the role of string-order parameters. However, the fault tolerance of the protocol, and its relation to higher-form symmetries, lies beyond our framework, and is a compelling direction for future work.


\section*{Acknowledgments}
We thank Ollie Hart, Andy Lucas, Charles Stahl, and Chao Yin for useful discussions and collaboration on related work. This work was supported in part by the U.S. Air Force Office of Scientific Research under Grant No. FA9550-21-1-0195 (YH), the Office of Naval Research via Grant No. N00014-23-1-2533 (YH), the Simons Collaboration on Ultra-Quantum Matter via Award No. 651440 (DTS), and the US Department of Energy via Grant No. DE-SC0024324 (AJF). Publication of this article was funded by the University of Colorado Boulder Libraries Open Access Fund.

\bibliographystyle{apsrev4-2}
\bibliography{thebib}

\appendix

\renewcommand{\thesubsection}{\thesection.\arabic{subsection}}
\renewcommand{\thesubsubsection}{\thesubsection.\arabic{subsubsection}}
\renewcommand{\theequation}{\thesection.\arabic{equation}}

\section{Proof of Proposition~\ref{prop:teleportation conditions}}
\label{app:Proof state=operator teleport}

\begin{proof}
For the teleportation conditions \eqref{eq:teleportation conditions} to be necessary and sufficient for quantum state transfer (see Def.~\ref{def:state transfer}), they must be  equivalent to Eq.~\ref{eq:W state transfer} for the initial state $\ket{\Psi^{\,}_0}$ \eqref{eq:teleport initial mb state} and final state $\ket{\Psi^{\,}_T}$ \eqref{eq:teleport final mb state}. We prove both directions of implication in turn, using the fact that all operators acting on a qubit can be decomposed onto the Pauli operators $\ident,\PX{},\PY{},\PZ{}$, which are spanned by $\PX{}$ and $\PZ{}$. The coefficients $\alpha$ and $\beta$ of a logical state $\ket{\psi}$ can be extracted from the expectation values of the Pauli operators \eqref{eq:logical alpha beta expval}.

We first establish that Eqs.~\ref{eq:W state transfer},  \ref{eq:teleport initial mb state}, and \ref{eq:teleport final mb state} imply Eq.~\ref{eq:teleportation conditions}---i.e., Eq.~\ref{eq:teleportation conditions} is necessary for state transfer. First, Eq.~\ref{eq:teleport final mb state} implies that $\PX{f_n}$ is the $\PX{}$-type logical operator for the $n$th logical qubit acting on the final state $\ket{\Psi^{\,}_T}$ \eqref{eq:teleport final mb state}, as it interchanges $\alpha^{\,}_n$ and $\beta^{\,}_n$. In the Heisenberg picture, this logical operator evolves under $\chan$ according to
\begin{equation}
\label{eq:logical teleport logical X def}
   \LXp{(n)}(T) \, = \,\PX{f_n}  (T) \, = \, \chandag \, \PX{f_n} \, \chan \, ,~~
\end{equation}
and now, using the fact that $\PXp{i_n}{2} = \ident$ is involutory, we introduce an arbitrary operator $\StabEl^{\,}_{n,x}$ via
\begin{align}
\label{eq:logical teleport logical X stab implicit}
    \LXp{(n)}(T) \, &= \, \ident \, \chandag \PX{f_n} \chan \, =\,  \PXp{i_n}{2} \, \chandag \PX{f_n} \chan  \notag \\
    &= \, \PX{i_n} \, \left( \PX{i_n} \, \chandag \PX{f_n} \chan  \right) \, \equiv \, \PX{i_n} \, \StabEl^{\,}_{n,x}  \, ,~~
\end{align}
which implicitly defines the (unitary) operator $\StabEl^{\,}_{n,x}$ as
\begin{equation}
    \label{eq:logical teleport proof S def}
    \StabEl^{\,}_{n,x} \, \equiv \, \PX{i_n} \, \chandag \PX{f_n} \chan \, .
\end{equation}
Leaving the coefficients $\alpha^{\,}_j,\beta^{\,}_j$ for $j\neq n$ implicit for convenience, we apply $\StabEl^{\,}_{n,x}$ \eqref{eq:logical teleport proof S def} to the initial state $\ket{\Psi^{\,}_0 (\alpha^{\,}_n,\beta^{\,}_n)}$ \eqref{eq:teleport initial mb state} and use the first part of Eq.~\ref{eq:W state transfer} to find
\begin{align}
    \StabEl^{\,}_{n,x} \, \ket{\Psi^{\vpp}_0 (\alpha^{\,}_n,\beta^{\,}_n)} \, &= \,  \PX{i_n} \, \chandag \PX{f_n} \chan \, \ket{\Psi^{\vpp}_0 (\alpha^{\,}_n,\beta^{\,}_n)}  \notag \\ 
    &=\,  \PX{i_n} \, \chandag \PX{f_n} \ket{\Psi^{\vpp}_T (\alpha^{\,}_n,\beta^{\,}_n)} \notag \\
    &= \, \PX{i_n} \, \chandag \ket{\Psi^{\vpp}_T (\beta^{\,}_n,\alpha^{\,}_n)}  \notag \\
    &= \, \PX{i_n}  \ket{\Psi^{\vpp}_0 (\beta^{\,}_n,\alpha^{\,}_n)}  \notag \\
    &= \, \ket{\Psi^{\vpp}_0 (\alpha^{\,}_n,\beta^{\,}_n)}\, ,~~ \label{eq:logical teleport logical X stab action}
\end{align}
meaning that $\StabEl^{\,}_{n,x}$ \eqref{eq:logical teleport proof S def} is an element of the initial-state stabilizer group $\UStabOf*{\ket{\Psi^{\,}_0}}$ \eqref{eq:UStab}, and is guaranteed to be unitary. Next, applying the  Heisenberg-evolved operator $\PX{f_n}(T)$ \eqref{eq:logical teleport logical X def} to the initial state $\ket{\Psi^{\,}_0}$ \eqref{eq:teleport initial mb state} gives
\begin{align}
    \LXp{(n)}(T) \,\ket{\Psi^{\vpp}_0 (\alpha^{\,}_n,\beta^{\,}_n)} \, &= \, \PX{i_n} \, \StabEl^{\,}_{n,x} \, \ket{\Psi^{\vpp}_0 (\alpha^{\,}_n,\beta^{\,}_n)}  \notag \\
    &= \, \PX{i_n} \,  \ket{\Psi^{\vpp}_0 (\alpha^{\,}_n,\beta^{\,}_n)} \, , ~~
    \label{eq:logical teleport evo X apply init}
\end{align}
which holds for \emph{any} choice of $\alpha^{\,}_n,\beta^{\,}_n$. This implies that $\LXp{(n)}(T) \, = \, \chandag \, \PX{f_n} \, \chan \,$ realizes an $\PX{}$-type logical operator for the $n$th logical qubit acting on the \emph{initial} state $\ket{\Psi^{\,}_0}$ \eqref{eq:teleport initial mb state}---i.e., it acts as $\PX{i_n}$ times some stabilizer $\StabEl^{\,}_{n,x}$ \eqref{eq:logical teleport proof S def}, where the latter acts trivially on $\ket{\Psi^{\,}_0}$ \eqref{eq:teleport initial mb state}. 

An analogous procedure applies to the $\PZ{}$-type logical operator for the $n$th logical qubit. That operator evolves in the Heisenberg picture according to
\begin{equation}
\label{eq:logical teleport logical Z def}
   \LZp{(n)}(T) \, = \,\PZ{f_n}  (T) \, = \, \chandag \, \PZ{f_n} \, \chan \, ,~~
\end{equation}
in analogy to Eq.~\ref{eq:logical teleport logical X def}, and implied by Eq.~\ref{eq:teleport final mb state}.  We then define another stabilizer $\StabEl^{\,}_{n,z} \equiv \PZ{i_n} \, \chandag \PZ{f_n} \chan$,  implying that $\StabEl^{\,}_{n,z}$ is also an element of $\UStabOf*{\ket{\Psi^{\,}_0}}$ \eqref{eq:UStab}. Applying the logical $\PZ{}$ operator \eqref{eq:logical teleport logical Z def} to  $\ket{\Psi^{\,}_0}$ \eqref{eq:teleport initial mb state} reveals that
\begin{equation}
\label{eq:logical teleport evo Z apply init}
    \LZp{(n)}(T) = \chandag \, \PZ{f_n} \, \chan  = \PZ{i_n} \,\StabEl^{\,}_{n,z} \, ,~~ 
\end{equation}
is the $\PZ{}$-type logical operator for the $n$th logical qubit acting on the initial state $\ket{\Psi^{\,}_0}$ \eqref{eq:teleport initial mb state}, in analogy to Eq.~\ref{eq:logical teleport evo X apply init}. 

Thus, the fact that $\chan$ teleports $\numQ$ logical states in the Schr\"odinger picture via  EQ.~\ref{eq:W state transfer} implies that (\emph{i}) $\chan$ teleports $\numQ$ pairs of logical \emph{operators} in the Heisenberg picture; (\emph{ii}) $\LXp{(n)}(T) \, = \, \chandag \, \PX{f_n} \, \chan \, = \, \PX{i_n} \, \StabEl^{\vpp}_{n,x}$ and $\LZp{(n)}(T) \, = \, \chandag \, \PZ{f_n} \, \chan\, = \, \PZ{i_n} \, \StabEl^{\vpp}_{n,z}$ are valid $\PX{}$- and $\PZ{}$-type logical operators for the $n$th logical qubit acting on the \emph{initial} state  $\ket{\Psi^{\,}_0}$ \eqref{eq:teleport initial mb state}; and (\emph{iii}) the operators $\StabEl^{\,}_{n,x}$ and $\StabEl^{\,}_{n,z}$ are elements of the initial-state stabilizer group $\UStabOf*{\ket{\Psi^{\,}_0}}$ \eqref{eq:UStab}. Hence, Eq.~\ref{eq:W state transfer} implies Eq.~\ref{eq:teleportation conditions}.

Now, we prove the reverse implication: Any protocol $\chan$ satisfying Eq.~\ref{eq:teleportation conditions} with the initial state $\ket{\Psi^{\,}_0}$ \eqref{eq:teleport initial mb state} implies state transfer \eqref{eq:W state transfer} with final state $\ket{\Psi^{\,}_T}$  \eqref{eq:teleport final mb state}---i.e., Eq.~\ref{eq:teleportation conditions} is sufficient for Eq.~\ref{eq:W state transfer}. For convenience, we define 
\begin{align}
    \DensMat^{\vpd}_n \, &= \, \BKop{\psi^{\,}_n}{\psi^{\,}_n}  \notag \\
    &= \, \frac{1}{2} \ident +  c^{(n)}_1 \, \PX{i_n} + + c^{(n)}_2 \, \PY{i_n} + c^{(n)}_3 \, \PZ{i_n}  \notag \\
    &= \, \frac{1}{2} \left( \ident + \sum_{\nu=1}^3 c^{(n)}_\nu \Pauli{\nu}{i_n} \right) \, ,~\label{eq:logical rho_n}
\end{align}
where the coefficients $c^{(n)}_{\nu}$ are given by
\begin{align}
    c^{(n)}_{\nu} \, &\equiv \, \matel{\psi^{\,}_n}{\Pauli{\nu}{i_n}}{\psi^{\,}_n} \label{eq:logical c_n} \, , ~~
\end{align}
for the $n$th logical qubit, as in Eq.~\ref{eq:logical alpha beta expval}. We can then write the initial density matrix associated with $\ket{\Psi^{\,}_0}$ \eqref{eq:teleport initial mb state} as
\begin{align}
    \DensMat^{\,}_0  &=  \BKop{\Psi^{\vpp}_0 \left( \bvec{\alpha},\bvec{\beta} \right)}{\Psi^{\vpp}_0 \left( \bvec{\alpha},\bvec{\beta} \right)} \, \notag \\
    &=  \left[ \prod\limits_{n=1}^{\numQ} \, \DensMat^{\vpd}_n \right] \otimes \BKop{\Phi^{\vpp}_{i}}{\Phi^{\vpp}_{i}}^{\vpd}_{\rm rest} \, ,
    \label{eq:teleport initial mb rho}
\end{align}
which realizes $\DensMat^{\,}_n=\BKop{\psi^{\,}_n}{\psi^{\,}_n}$ on each initial logical qubit $i_n$. Using Eqs.~\ref{eq:teleport initial mb rho} and \ref{eq:logical rho_n}, we write
\begin{align}\label{eq:logical rho_n stab final}
    \chan \, \ket{\Psi^{\vps}_i} &= \chan \, \BKop{\psi^{\,}_n}{\psi^{\,}_n}^{\vps}_{i_n} \ket{\Psi^{\vps}_i}  \notag \\
    &= \chan\, \frac{1}{2} \left( \ident + \sum_{\nu=1}^3 c^{(n)}_\nu \Pauli{\nu}{i_n} \right) \ket{\Psi^{\vps}_i}  \notag \\
    &= \chan\, \frac{1}{2} \left( \ident + \sum_{\nu=1}^3 c^{(n)}_\nu \Pauli{\nu}{i_n} \StabEl^{\,}_{n,\nu} \right) \ket{\Psi^{\vps}_i}  \notag \\
    &= \chan\, \frac{1}{2} \left( \ident + \sum_{\nu=1}^3 c^{(n)}_\nu \chandag \Pauli{\nu}{f_n} \,\chan \right) \ket{\Psi^{\vps}_i}  \notag \\
    &= \frac{1}{2} \left( \ident + \sum_{\nu=1}^3 c^{(n)}_\nu \Pauli{\nu}{f_n} \right) \chan\, \ket{\Psi^{\vps}_i}  \notag \\
    &= \BKop{\psi^{\,}_n}{\psi^{\,}_n}^{\vps}_{f_n} \, \chan\, \ket{\Psi^{\vps}_i} \, ,~~
\end{align}
where the second line follows from Eq.~\ref{eq:logical rho_n}; the third line uses $\StabEl^{\,}_{n,\nu} \in \UStabOf*{\ket{\Psi^{\,}_0}}$ to insert stabilizers inside the parenthetical expression; the fourth line uses Eqs.~\ref{eq:teleportation conditions} and \ref{eq:Y teleportation condition}; and the last line uses Eq.~\ref{eq:logical rho_n} once again. Thus, the final state $\ket{\Psi^{\,}_T} = \chan\,\ket{\Psi^{\,}_0}$ \eqref{eq:teleport final mb state} is stabilized by the $n$th logical qubit's density matrix on site $f^{\,}_n$. For brevity, we represent the final density matrix as $\DensMat^{\,}_T = \chan\,\BKop{\Psi^{\,}_0}{\Psi^{\,}_0}\chandag$. Since $\BKop{\psi^{\,}_n}{\psi^{\,}_n}^{\,}_{f_n}$ only acts nontrivially on site $f_n$, we have
\begin{align}
    1 &= \tr{\, \DensMat^{\vps}_T \,}  = \tr{\, \BKop{\psi^{\,}_n}{\psi^{\,}_n}^{\vps}_{f_n} \, \DensMat^{\vps}_T \,}  \notag \\
    &= \tr{\, \BKop{\psi^{\,}_n}{\psi^{\,}_n} \, \DensMat^{\vps}_{f_n} \,}  = F \left( \rho^{\vps}_{f_n} , \ket{\psi^{\vps}_n} \right) \, ,~
\end{align}
where $F$ is the \emph{fidelity} \cite{Jozsa_1994}, and in second line we used Eq.~\ref{eq:logical rho_n stab final}; in the third line we have defined $\DensMat^{\,}_{f_n} \equiv \ptr{f^c_n}{\,\DensMat^{\,}_T\,}$ as the reduced final density matrix on site $f_n$; in the last line we have used the fact that the fidelity obeys $F(\rho,\ket{\psi}) = \matel{\psi}{\rho}{\psi} = \tr{\, \BKop{\psi}{\psi} \, \rho \,}$ when one of the states is pure \cite{Jozsa_1994}. Note that, in the second line the trace is over the entire system, whereas in the third line, it is only over $f_n$. Since $F (\rho^{\vps}_{f_n} , \ket{\psi^{\vps}_n} ) = 1$, we must have that
\begin{align}
    \rho^{\vps}_{f_n} = \ptr{f^c_n}{\,\DensMat\,} = \BKop{\psi^{\,}_n}{\psi^{\,}_n} \, ,~
\end{align}
i.e., the final reduced density matrix on site $f_n$ is the logical $n$th (pure) state. Since the above arguments hold for all $n=1,\dots,\numQ$ independently, we conclude that $\ket{\Psi^{\,}_T} = \chan\,\ket{\Psi^{\,}_0}$ satisfies \eqref{eq:W state transfer} for the final state $\ket{\Psi^{\,}_T}$ \eqref{eq:teleport final mb state}. Thus, Eq.~\ref{eq:teleportation conditions} is both necessary and sufficient for Eq.~\ref{eq:W state transfer}.
\end{proof}

\section{Proof of Proposition~\ref{prop:single-qubit observable}}
\label{app:Proof of involutory meas}

\begin{proof}
A generic and nontrivial single-qubit observable $\mobserv^{\,}_j$ is guaranteed to have exactly two unique eigenvalues $\meig{0}>\meig{1}$, with spectral decomposition \eqref{eq:spectral decomp}
\begin{align}
    \label{eq:Aj spectral decomp}
    \mObserv{j}{\vpd} &= \meig{0} \, \Proj{0}{j} + \meig{1} \, \Proj{1}{j} \, , ~~
\end{align}
which we decompose onto the Pauli basis via
\begin{align}
    \mObserv{j}{\vpp} &=  \frac{1}{2} \, \sum\limits_{\nu=0}^3 \, \trace\limits_j \, \left[ \, \mObserv{j}{\vpp} \, \Pauli{\nu}{j} \, \right] \, \Pauli{\nu}{j} \notag \\
    &= \frac{1}{2} \left(  \meig{0}  + \meig{1} \right) \, \ident + \vec{\alpha}^{\vpp}_j \cdot \vec{\sigma}^{\vpp}_j \, , ~~\label{eq:single qubit Pauli decomp nice}
\end{align}
where $\vec{\sigma}^{\,}_j = ( \PX{j}, \PY{j},\PZ{j})$ is a vector of Pauli matrices and $\vec{\alpha}^{\,}_j$ is a vector of coefficients as determined by the Pauli decomposition. Importantly, using the properties of $2 \times 2$ matrices and the fact that $\det \, \left[ \mobserv^{\,}_j \right] = \meig{0} \, \meig{1}$ we find
\begin{equation}
    \label{eq:coeff vector magnitude}
    \abs{\vec{\alpha}} = \frac{1}{2} \abs{ \meig{0}-\meig{1}} \, ,~~
\end{equation}
for the magnitude of $\vec{\alpha}$. This provides for the definition of a unit vector $\hat{\alpha}$ that is parallel to the original vector $\vec{\alpha}$.

We rewrite the decomposed observable $\mobserv^{\,}_j$ \eqref{eq:single qubit Pauli decomp nice} as 
\begin{align}
    \mObserv{j}{\vpd} &= \frac{1}{2}\left( \meig{0}+\meig{1} \right) \, \ident + \frac{1}{2} \left( \meig{0}-\meig{1} \right) \, \sum\limits_{n=1}^3 \, \hat{\alpha}^{\vpp}_{j,n} \, \Pauli{n}{j} \notag \\
    &= \frac{1}{2} \trace\limits_j \left[ \mObserv{j}{\vpp} \right] \, \ident + \frac{1}{2} \, \left( \meig{0}-\meig{1} \right) \, \bar{\mobserv}^{\vpp}_j \, ,~~
    \label{eq:single-qubit observable nice}
\end{align}
where the vector $\hat{\alpha}^{\,}_j = ( \hat{\alpha}^{\,}_{j,1}, \hat{\alpha}^{\,}_{j,2}, \hat{\alpha}^{\,}_{j,3})$ is real valued and has unit length. This expression agrees with Eq.~\ref{eq:involutory part}.

We next recover simple expressions for the projectors $\Proj{n}{j}$ \eqref{eq:Aj spectral decomp} onto the states (or spaces of states) with eigenvalues $\meig{n}$. We first note that these two projectors must be \emph{orthogonal} for the eigenvalue equation $\mobserv^{\,}_j \, \Proj{n}{j} = \meig{n} \, \Proj{n}{j}$ to hold. Second, we note that these two projectors must be \emph{complete} (i.e., sum to the identity), or else evaluating $\det \, \left[ \mobserv^{\,}_j \right] $ in the eigenbasis of $\mobserv^{\,}_j$ would return zero, independent of the values of $\meig{0,1}$. As a result,
\begin{align}
    \mObserv{j}{\vpd} &= \meig{n} \, \Proj{n}{j} + \meig{1-n} \left( \ident - \Proj{n}{j} \right) \notag \\
    &= \meig{1-n} \, \ident + \left( \meig{n} - \meig{1-n} \right) \, \Proj{n}{j} \label{eq:Aj single-qubit decomp n} \, , ~~
\end{align}
for either $n=0,1$. Rearranging this expression gives
\begin{align}
    \Proj{n}{j} &= \left( \meig{n}-\meig{1-n} \right)^{-1} \, \left( \mObserv{j}{\vpp} - \meig{1-n} \, \ident \right) \label{eq:Aj proj n naive} \\
    &= \frac{1}{\meig{n}-\meig{1-n}} \left( \frac{1}{2} \trace\limits_j \left[ \mObserv{j}{\vpp} \right] \ident +\abs{\vec{\alpha}} \,  \bar{\mobserv}^{\vpp}_j - \meig{1-n} \ident \right) \notag \\
    &= \frac{1}{\meig{n}-\meig{1-n}} \left( \frac{1}{2} ( \meig{n}-\meig{1} ) \ident + \frac{1}{2} (\meig{0}-\meig{1}) \, \bar{\mobserv}^{\vpp}_j  \right) \notag \\
    &= \frac{1}{2} \left( \ident + (-1)^n \, \bar{\mobserv}^{\vpp}_j \right) \, ,~~\label{eq:Aj proj n invol}
\end{align}
where we used Eq.~\ref{eq:single qubit Pauli decomp nice} to write $\mobserv^{\,}_j$ in terms of its involutory part $\bar{\mobserv}^{\,}_j$ \eqref{eq:involutory part}, along with the relations $\meig{0}-\meig{1} = (-1)^n (\meig{n}-\meig{1-n}) = 2 \abs{\vec{\alpha}}$ \eqref{eq:coeff vector magnitude} and $\trace_j \left[\mobserv^{\,}_j \right] = \meig{0}+\meig{1}$.

Because the involutory operator $\mobserv^{\,}_j$ squares to the identity, its eigenvalues are $\pm 1$. Accordingly, the projectors onto eigenstates of $\mobserv^{\,}_j$ (which are also eigenstates of the original observable $\mobserv^{\,}_j$) are given by Eq.~\ref{eq:Aj proj n invol}. Intuitively, these observables must share eigenstates because they commute: They differ only by a scalar factor and an overall, universally commuting identity component. Finally, because only the projectors onto eigenstates of $\mobserv^{\,}_j$ appear in the unitary measurement channel \eqref{eq:unitary meas general}, measurement of $\mobserv^{\,}_j$ is equivalent to measurement of $\bar{\mobserv}^{\,}_j$ \eqref{eq:involutory part}.
\end{proof}

\section{Proof of Proposition~\ref{prop:feedback required}}
\label{app:Proof feedback required}

\begin{proof}
    At Heisenberg time $\tau=0$, the initial operator acts as $\Gamma \otimes \SSid{\rm ss}$ (i.e., trivially on all Stinespring registers). The outcome-averaged Heisenberg update to $\Gamma$ is given by
    \begin{align}
        \Gamma' &= \trace\limits_{{\rm ss},j} \, \left[ \, \Umeasdag{\mobserv^{\,}_j} \, \Gamma \otimes \SSid{j} \, \Umeas{\mobserv^{\,}_j} \, \SSProj{0}{j} \, \right] \, , ~~\label{eq:no feedback Heis meas update}
    \end{align}
    since the $j$th Stinespring qubit is prepared in the $\ket{0}$ state by default. Using Eqs.~\ref{eq:unitary meas general} and \ref{eq:Aj proj n invol}, the above becomes
    \begin{align}
        \Gamma' &=  \sum\limits_{m,n=0,1} \frac{1}{4} \left( \ident + (-1)^m \, \bar{\mobserv}^{\vpp}_j \right) \, \Gamma \, \left( \ident + (-1)^m \, \bar{\mobserv}^{\vpp}_j \right) \notag \\
        &~~~\quad~~\times ~ \tr{\, \SSXp{j}{m} \, \SSid{j} \, \SSXp{j}{n} \, \SSProj{n}{j} \, } \notag \\
        &= \frac{1}{4} \sum\limits_{m=0,1}  \, \left( \Gamma + (-1)^m \,\acomm{\bar{\mobserv}^{\,}_j}{\Gamma} +\bar{\mobserv}^{\,}_j \, \Gamma \, \bar{\mobserv}^{\,}_j \right)  \notag \\
        &= \frac{1}{4} \sum\limits_{m=0,1} \, \left( \bar{\mobserv}^{\,}_j  + (-1)^m \, \ident \right) \, \acomm{\bar{\mobserv}^{\,}_j}{\Gamma} \, ,~~\label{eq:no feedback Gamma' 1}
    \end{align}
and if $\Gamma$ acts trivially (i.e., as the identity $\ident$) on qubit $j$, then we find $\Gamma' = \Gamma$, meaning that the measurement did nothing, on average. However, if $\Gamma$ acts on qubit $j$ as $\Pauli{\nu}{j}$, then Eq.~\ref{eq:no feedback Gamma' 1} can be rewritten
\begin{align}
    \Gamma' &= \frac{1}{2} \sum\limits_{m=0,1} \,(-1)^m \, \Proj{m}{j} \, \sum\limits_{\mu=1}^{3} \, \hat{\alpha}^{\vpp}_{j,\mu} \, \acomm{\Pauli{\mu}{j}}{\Pauli{\nu}{j}} \notag \\
    &= \sum\limits_{m=0,1} \, (-1)^m \, \Proj{m}{j} \, \hat{\alpha}^{\vpp}_{j,\nu} = \trace\limits_j \, \left[ \, \bar{\mobserv}^{\vpp}_j \, \Pauli{\nu}{j} \, \right] \, \bar{\mobserv}^{\vpp}_j \,, ~~\label{eq:no feedback Gamma' 2}
\end{align}
which is neither involutory nor unitary, and thus incompatible with the teleportation conditions \eqref{eq:teleportation conditions} imposed by, e.g., Def.~\ref{def:standard teleport} and Prop.~\ref{prop:teleportation conditions}. To see this, note that $\Gamma'$ squares to $\tr{\bar{\mobserv}^{\,}_j \, \Pauli{\nu}{j} \, } \, \ident$, which generically has operator norm \eqref{eq:spectral norm} less than one. However, if 
$\bar{\mobserv}^{\,}_j =\Pauli{\mu}{j}$, then $\Gamma'$ \eqref{eq:no feedback Gamma' 2} is either identically zero or equal to $\Gamma$ (if $\mu=\nu$).  

In other words, if we evolve any logical operator $\Gamma$ (or Pauli-string observable of interest) under a dilated unitary channel $\umeas$ corresponding to projective measurement of a single-qubit observable $\mobserv^{\,}_j$ \emph{without} outcome-dependent recovery operations, then the Heisenberg-evolved operator $\Gamma'$ \eqref{eq:no feedback Heis meas update} realizes one of the following:
\begin{enumerate}
    \item If the operator $\Gamma$ and $\mobserv^{\,}_j$ share no support, then $\Gamma'=\Gamma$, meaning that $\umeas$ acts trivially.
    \item If the involutory part of $\mobserv^{\,}_j$ and the Pauli content of $\Gamma$ on qubit $j$ are both $\Pauli{\mu}{j}$, then $\Gamma'=\Gamma$, so that $\umeas$ acts trivially.
    \item If the involutory part of $\mobserv^{\,}_j$ and the Pauli content of $\Gamma$ on qubit $j$ correspond to \emph{distinct} Pauli operators, then $\Gamma'=0$, meaning that teleportation fails because the logical operator is destroyed.
    \item If the involutory part of $\mobserv^{\,}_j$ realizes a superposition of Pauli operator \eqref{eq:involutory part}, then $\Gamma'$ is no longer involutory nor unitary, and thus cannot reproduce the logical operator algebra \eqref{eq:teleportation conditions} acting on the initial state, which is required by Prop.~\ref{prop:teleportation conditions}.
\end{enumerate}
In all cases above, either the measurement channel $\umeas$ acts trivially on a Pauli string $\Gamma$ of interest, or it causes teleportation to fail (because $\Gamma$ fails to be unitary). Hence, measurement without outcome-dependent recovery operations is not useful to quantum teleportation.
\end{proof}

\section{Proof of Proposition~\ref{prop:pure output}}
\label{app:Proof every trajectory}

\begin{proof}
    By construction, the logical reduced density matrix
    \begin{equation}
        \label{eq:logical DM}
        \DensMat^{\vpp}_{\rm L} \equiv \trace_{F^c} \left[ \chan \, \DensMat^{\vpp}_0 \otimes \BKop{\bvec{0}}{\bvec{0}}^{\vpp}_{\rm ss} \, \chan^{\dagger} \right] \, ,~~
    \end{equation}
    is a pure state, and separable with respect to the partition $F | F^c$. In particular, $\abs{F}=\numQ$ is the number of logical qubits, and $\DensMat^{\,}_{\rm L}$ \eqref{eq:logical DM} is guaranteed to be unitarily equivalent to the pure state $\bigotimes_{n=1}^{\numQ} \BKop{\psi^{\,}_n}{\psi^{\,}_n}^{\,}_n$. Because $F^c$ includes all Stinespring qubits, $\DensMat^{\,}_{\rm L}$ \eqref{eq:logical DM} is implicitly averaged over the outcomes of all measurements in $\chan$. 

    Additionally, the logical reduced density matrix $\DensMat^{\,}_{\rm L}$ \eqref{eq:logical DM} after realizing the sequence of measurement outcomes $\bvec{n}=(n^{\,}_1, \dots, n^{\,}_{\Nmeas})$ is given by \cite{AaronMIPT, AaronDiegoFuture, SpeedLimit}
    \begin{align}
    \label{eq:logical DM traj n}
        \DensMat^{(\bvec{n})}_{\rm L} \equiv \frac{\trace_{F^c} \left[ \, \DensMat^{\vpp}_T \, \SSProj{\bvec{n}}{\rm ss} \, \right]}{\trace_{\rm dil} \left[ \, \DensMat^{\vpp}_T \, \SSProj{\bvec{n}}{\rm ss} \, \right]} \, ,~~
    \end{align}
    where the denominator is equal to the the probability $p^{\,}_{\bvec{n}}$ of the ``outcome trajectory'' $\bvec{n}$.

    Thus, the outcome-averaged logical reduced density matrix $\DensMat^{\,}_{\rm L}$ \eqref{eq:logical DM} is automatically a convex sum over trajectories $\bvec{n}$ of the reduced density matrix $\DensMat^{(\bvec{n})}_{\rm L} $ \eqref{eq:logical DM traj n} weighted by the scalar $p^{\,}_{\bvec{n}}$. However, such a density matrix is pure \emph{if and only if} every term in the sum is proportional to $\DensMat^{\,}_{\rm L}$ \eqref{eq:logical DM}. Because  $\DensMat^{\,}_{\rm L}$ \eqref{eq:logical DM} and $\DensMat^{(\bvec{n})}_{\rm L} $ \eqref{eq:logical DM traj n} are proportional and normalized for all $\bvec{n}$, we must have that $\DensMat^{(\bvec{n})}_{\rm L} = \DensMat^{\vpp}_{\rm L}$ for all $\bvec{n}$. Hence, the output state $\DensMat^{\,}_{\rm L}$ \eqref{eq:logical DM} is realized on \emph{all} outcome trajectories.
\end{proof}

\section{Proof of Lemma~\ref{lem:recovery factorization}}
\label{app:recovery factorization proof}

\begin{proof}
    Let $\QECChannel^{\,}_s$ be a recovery channel \eqref{eq:unitary feedback teleport} and $\Gamma$ to be be any logical operator. By construction (and Def.~\ref{def:Clifford recovery}), both $\conjchan^{\,}_s$ \eqref{eq:unitary feedback teleport} and $\Gamma$ are Pauli-string operators \eqref{eq:Pauli string def}. Evolving $\Gamma$ under $\QECChannel^{\,}_s$ in the Heisenberg picture gives
    \begin{align}
        \QECChannel^{\dagger}_s \, \Gamma \otimes \SSid{} \, \QECChannel^{\vpd}_s &= \sum\limits_{m,n=0,1} \, \ConjChanPow{s}{m} \, \Gamma \, \ConjChanPow{s}{n} \otimes \SSProj{m}{r^{\,}_s} \, \SSid{} \, \SSProj{n}{r^{\,}_s}  \notag \\
        &= \sum\limits_{m=0,1} \, \ConjChanPow{s}{m} \, \Gamma \, \ConjChanPow{s}{m} \otimes \SSProj{m}{r^{\,}_s} \notag \\
        &=  \sum\limits_{m=0,1} \, \left( -1 \right)^{m \, \lambda^{\,}_{\Gamma,s}} \, \Gamma \otimes \SSProj{m}{r^{\,}_s} \notag \\
        %
        %
        &= \Gamma \otimes \left( \kron{\lambda^{\,}_{\Gamma,s},0} \, \SSid{} + \kron{\lambda^{\,}_{\Gamma,s},1} \, \SSZ{r^{\,}_s} \right) \notag \\
        &= \Gamma \otimes \SSZp{r^{\,}_s}{\lambda^{\,}_{\Gamma,s}} \label{eq:recovery attach Z proven} \, ,~~
    \end{align}
    where $\SSZ{r^{\,}_s}$ is the product of $\SSZ{j}$ for $j \in r^{\,}_s$, where
    \begin{equation}
    \label{eq:recovery lambda def}
        \ConjChan{s} \, \Gamma \, \ConjChan{s} = \left( -1 \right)^{\lambda^{\vpp}_{\Gamma,s}} \, \Gamma \, , ~~
    \end{equation}
    defines $\lambda^{\vpp}_{\Gamma,s} \in \{0,1\}$ for each  $\Gamma$ and  $\QECChannel^{\,}_s$. 
    
    Next, we prove Eq.~\ref{eq:effective recovery} for a given Pauli string $\Gamma$. The result \eqref{eq:recovery attach Z proven} establishes that evolving $\Gamma$ under any $\QECChannel$ multiplies $\Gamma$ by Stinespring operators $\SSZ{j}$ for $j \in r^{\,}_s$. Hence, any recovery channel $\QECChannel^{\,}_{\Gamma}$ \eqref{eq:effective recovery} attaches $\SSZ{r}$ to $\Gamma$ provided that $\acomm{\conjchan^{\,}_{\Gamma}}{\Gamma}=0$. The particular choice
    \begin{equation} 
    \label{eq:effective conjchan}
    \ConjChan{\Gamma} = \prod\limits_{s=1}^{n} \ConjChanPow{s}{\lambda^{\vpp}_{\Gamma,s}} \, ,~~
    \end{equation}
    always fulfils this condition, where $\lambda^{\vpp}_{\Gamma,s} \in \{0,1\}$ is defined in Eq.~\ref{eq:recovery lambda def} and $n$ is the number of Clifford recovery channels $\QECChannel^{\,}_s$. However, any choice of Pauli string $\conjchan^{\,}_{\Gamma}$ that anticommutes with $\Gamma$ leads to a channel $\QECChannel^{\,}_{\Gamma}$ \eqref{eq:effective recovery} that is equivalent to $\QECChannel$ in its action on $\Gamma$.

    Next, consider a set of logical operators $\bvec{\Gamma}$, consisting of, e.g., all $\PX{}$- and $\PZ{}$-type logical operators for the $\numQ$ logical qubits. Again, Eq.~\ref{eq:recovery attach Z proven}  establishes that $\QECChannel$ acts by attaching (possibly different) $\SSZ{j}$ operators to different logical operators $\Gamma \in \bvec{\Gamma}$. However, by the previous result \eqref{eq:effective recovery}, $\QECChannel$ is equivalent in its action on each $\Gamma \in \bvec{\Gamma}$ to some $\QECChannel^{\,}_{\Gamma}$  \eqref{eq:effective recovery} where $\conjchan^{\,}_{\Gamma}$ is \emph{any} Pauli string that anticommutes with $\Gamma$. Hence, we can write $\QECChannel$ as a product over $\Gamma \in \bvec{\Gamma}$ of channels $\QECChannel^{\,}_{\Gamma}$  \eqref{eq:effective recovery} provided that $\conjchan^{\,}_{\Gamma}$ anticommutes only with the logical operator $\Gamma$, and not its counterpart $\Gamma'$, which acts on the same logical qubit. Since $\acomm{\Gamma}{\Gamma'}=\comm{\Gamma'}{\Gamma'}=0$, the choice $\conjchan^{\,}_{\Gamma} = \Gamma'$ is sufficient to ensure that Eq.~\ref{eq:canonical Clifford form} is equivalent to $\QECChannel$ for all logical operators. We note that the decomposition \eqref{eq:canonical Clifford form} is not unique.
\end{proof}

\section{Proof of Lemma~\ref{lem:measure = attach}}
\label{app:Proof meas = attach}

\begin{proof}
    The canonical-form protocol $\chan= \QECChannel \MeasChannel \uchan$ \eqref{eq:chan canonical form} involves dilated channels of the form 
    \begin{align}
        \QECChannel^{\vpd}_{j,k} &= \sum\limits_{n=0,1} \, \ConjChanPow{j,k}{n} \otimes \SSProj{n}{j} \label{eq:1meas recovery channel} \\
        \umeas^{\vpd}_j  \, &= \frac{1}{2} \sum\limits_{m=0,1} \left( \ident + (-1)^m \, \bar{\mobserv}^{\prime}_j \right) \otimes \SSXp{j}{m} \label{eq:1meas meas channel} 
         \, , ~~
    \end{align}
    where $\conjchan^{\,}_{j,k} \in \PauliGroup*{\rm ph}$ is a physical Pauli string, $\bar{\mobserv}_j'$ is the involutory part \eqref{eq:involutory part} of an operator of the form 
    \begin{equation}
        \mobserv_j'= \cliff^{\vpd}_j \, \mobserv^{\,}_j \cliff^\dagger_j \label{eq:1meas modified observable}
    \end{equation}
    where $\mobserv^{\,}_j$ is a single-qubit observable,  and $\cliff^{\,}_j$ is a generic physical unitary. Note that Eq.~\ref{eq:1meas modified observable} captures modifications due to writing $\chan'$ in canonical form \eqref{eq:chan canonical form}.  
    
    The above form \eqref{eq:1meas meas channel} of the measurement channels is guaranteed by Prop.~\ref{prop:single-qubit observable}. Additionally, by Defs.~\ref{def:Clifford recovery} and \ref{def:standard teleport}, all outcome-dependent recovery operations are captured by Eq.~\ref{eq:1meas recovery channel}. All other channels in $\chan$ correspond to physical unitaries, captured by $\uchan$ in Eq.~\ref{eq:canonical Clifford form}.

    Consider the Heisenberg evolution of the Pauli string 
    \begin{equation}
        \label{eq:1meas Gamma initial}
        \Gamma (0) =  \Gamma = \Pauli{\nu}{f_n} \otimes \SSid{\rm ss} \in \PauliGroup{\rm ph} \, , ~~
    \end{equation}
    which acts as a logical operator on the final state $\ket{\Psi^{\,}_T}$ \eqref{eq:teleport final mb state}. Evolving this operator in the Heisenberg picture, projecting onto the default Stinespring initial state $\ket{\bvec{0}}^{\,}_{\rm ss}$, and averaging over outcomes leads to Eq.~\ref{eq:teleportation conditions}, i.e.,
    \begin{equation}
        \label{eq:measure attach final Gamma}
        \Gamma^{\vpp}_{\rm ph} (T) = \trace\limits_{\rm ss} \left[ \, \chan^{\dagger} \, \Gamma (0) \, \chan \, \SSProj{\bvec{0}}{\rm ss} \, \right]  = \Pauli{\nu}{i_n} \StabEl^{\vpd}_{n,\nu} \, , ~~
    \end{equation}
    which is the same logical operator $\Gamma$ acting on the initial state $\ket{\Psi^{\,}_0}$ \eqref{eq:teleport initial mb state}, where $\StabEl^{\,}_{n,\nu}$ is an element of the initial-state stabilizer group $\UStabOf{\ket{\Psi^{\,}_0}} \subset \Aut*{\Hilbert^{\,}_{\rm ph}}$ \eqref{eq:UStab}.

    Importantly, the na\"ive protocol $\chan$ need not be in canonical form \eqref{eq:chan canonical form}. Without loss of generality, we have omit any measurements without corresponding feedback and convert trivial recovery operations into physical unitaries, as described above. However, for convenience, we also pull all non-Clifford physical unitaries into $\uchan$, and if the combination of a measurement channel $\MeasChannel^{\,}_j$ is conditioned on a prior outcome in the na\"ive protocol $\chan$, then we pull prior recovery channels through $\MeasChannel^{\,}_j$ as needed so that it realizes Eq.~\ref{eq:1meas meas channel}. This is not necessary if $\MeasChannel^{\,}_j$  commutes with all prior recovery channels (in the Schr\"odinger picture) or if $\mobserv^{\,}_j$ is a Pauli string.  
    
    As a result, all operations after the first measurement channel (in the Schr\"odinger picture) correspond either to measurements or Clifford channels. Hence, in the Heisenberg picture, any logical operator $\Gamma (\tau^{\,}_{j,k})$ prior to the recovery channel $\QECChannel^{\,}_{s}$ is guaranteed to be a Pauli string, provided that no measurement channels have been encountered. Note that any number of recovery operations conditioned on various measurement outcomes can be factorized into the desired form \eqref{eq:1meas recovery channel}, up to Clifford gates $\cliff \in \CliffGroup*{\rm ph}$, by Def.~\ref{def:Clifford recovery} and Lemma~\ref{lem:recovery factorization}.
    
    Suppose that  $\umeas^{\,}_j$ is the first measurement channel encountered in the Heisenberg picture. By Lemma~\ref{lem:recovery factorization}, 
    \begin{equation}\label{eq:recovery attach Z}
        \QECChannel^{\dagger}_{j,k} \, \Gamma (\tau^{\vpp}_{j,k} ) \, \QECChannel^{\vpd}_{j,k} = \Gamma (\tau^{\vpp}_{j,k} ) \, \SSZp{j}{\lambda^{\vpp}_{\Gamma,j,k}} \, , ~~
    \end{equation}
    where $\lambda^{\,}_{\Gamma,j,k}$ is defined via
    \begin{equation*}
        \tag{\ref{eq:recovery lambda def}}
        \ConjChan{j,k} \, \Gamma (\tau^{\vpp}_{j,k} ) \, \ConjChan{j,k} = \left( -1 \right)^{\lambda^{\vpp}_{\Gamma,j,k}} \, \Gamma (\tau^{\vpp}_{j,k} )  \,,~~
    \end{equation*}
    so that the physical part of $\Gamma(\tau^{\,}_{j,k})$ is unaffected by the recovery channel, while the Stinespring part is multiplied by the operator $\SSZ{j}$ if and only if  the recovery Pauli $\conjchan^{\,}_{j,k}$ anticommutes with $\Gamma (\tau^{\,}_{j,k})$ immediately prior to the recovery channel $\QECChannel^{\,}_{j,k}$ (in the Heisenberg picture). If $\lambda^{\,}_{\Gamma,j,k}=0$ in Eq.~\ref{eq:recovery lambda def} then Eq.~\ref{eq:recovery attach Z} is trivial, as claimed.

    We then observe that recovery channels conditioned on other measurement outcomes also act on $\Gamma$ via Eq.~\ref{eq:recovery attach Z}. Additionally, only the recovery channels $\QECChannel^{\,}_{j,k}$ encountered prior to the measurement channel $\umeas^{\,}_j$ (in the Heisenberg picture) are actually sensitive to the measurement outcome, and hence we convert all other ``trivial'' recovery channels into physical unitaries. We then define
    \begin{equation}
    \label{eq:1meas sum Lambda}
        \Lambda^{\,}_{\Gamma,j} = \sum_k \lambda^{\,}_{j,k} ~{\rm mod}~2 \, ,~~
    \end{equation}
    which captures whether or not the combination of \emph{all} recovery channels $\QECChannel^{\,}_{j,k}$ conditioned on the $j$th outcome act nontrivially on the logical operator $\Gamma$ via Eq.~\ref{eq:recovery attach Z}.

    We can then write the \emph{physical part} of the time-evolved logical operator $\Gamma(t^{\,}_j)$ immediately prior to the measurement channel $\umeas^{\,}_j$ (in the Heisenberg picture) as
    \begin{equation}
    \label{eq:gamma pre meas j phys}
        \Gamma (t^{\vpp}_j) = \CliffDag{\Nmeas} \, \Gamma \, \Cliff{\Nmeas} \, , ~~
    \end{equation}
    where $\cliff^{\,}_{\Nmeas}$ is the product of all physical Clifford gates $\cliff$ and the outcome-independent parts of all recovery-channels in $\chan$ after the measurement channel $\MeasChannel^{\,}_j$ in the Schr\"odinger picture. Hence, it is the same unitary $\cliff^{\,}_j$ that appears in Eq.~\ref{eq:1meas modified observable}. The Stinespring part of $\Gamma$ is
    \begin{equation}
    \label{eq:gamma pre meas j ss}
        \widetilde{\Gamma}(t^{\vpp}_j) = \SSZp{j}{\Lambda^{\vpp}_{\Gamma,j}} 
        \otimes \cdots \, ,~~
    \end{equation}
    where $\cdots$ captures $\SSZ{}$ operators attached by recovery channels conditioned on the outcomes of other measurements.

    Evolving $\Gamma$ \eqref{eq:gamma pre meas j phys} through the measurement channel $\umeas^{\,}_j$ results in a logical operator $\Gamma$ with physical part
    \begin{align}
        \Gamma (t^{\prime}_j) &= \trace\limits_{ {\rm ss},j} \left[ \, \umeas^{\dagger}_j ~ \Gamma (t^{\vpp}_j) \otimes  \SSZp{j}{\Lambda^{\vpp}_{\Gamma,j}} ~ \umeas^{\vpd}_j \, \SSProj{0}{j}\,  \right] \notag \\
        &= \frac{1}{4} \hspace{-0.6mm} \sum\limits_{m,n=0,1} \hspace{-0.6mm}\left( \ident + (-1)^m \, \bar{\mobserv}^{\vpp}_j \right) \Gamma (t^{\vpp}_j) \left( \ident + (-1)^n \, \bar{\mobserv}^{\vpp}_j \right) \notag \\
        &~~~~~~~~\times \matel*{0}{\SSXp{j}{m} \,\SSZp{j}{\Lambda^{\vpp}_{\Gamma,j}}  \, \SSXp{j}{n} }{0} \notag \\
        &= \sum\limits_{n=0,1} \frac{\left(-1\right)^{n \, \Lambda^{\vpp}_{\Gamma,j}}}{4} \left( \bar{\mobserv}^{\vpp}_j + \left(-1\right)^n \ident \right) \acomm{\bar{\mobserv}^{\vpp}_j}{\Gamma (t^{\vpp}_j)} \notag \\
        %
        &= \frac{1}{2} ~\bar{\mobserv}^{\, \Lambda^{\vpp}_{\Gamma,j}-1}_j ~ \acomm{\bar{\mobserv}^{\vpp}_j}{\Gamma (t^{\vpp}_j)} \,, ~~ \label{eq:gamma post meas j}
    \end{align}
    where we have traced out the $j$th Stinespring qubit to average over measurement outcomes and projected onto the default initial state $\ket{0}$, since no subsequent channels act nontrivially on this qubit. The remaining Stinespring content of $\Gamma (t^{\prime}_j)$ corresponds to the $\cdots$ terms in Eq.~\ref{eq:gamma pre meas j ss}.

    Importantly, the operator $\Gamma (t_j')$ \eqref{eq:gamma post meas j} must be \emph{involutory}, which leads to the condition
    \begin{align}
        \ident &= \Gamma^2 (t_j') = \frac{1}{2} \ident + \frac{1}{4} \acomm{\Gamma (t^{\vpp}_j)}{\bar{\mobserv}^{\vpp}_j \, \Gamma (t^{\vpp}_j)\, \bar{\mobserv}^{\vpp}_j} \, , ~
    \end{align}
    independent of $\Lambda^{\,}_{\Gamma,j}$, leading to
    \begin{align}
        0 &= \Gamma (t^{\vpp}_j) \, \bar{\mobserv}^{\vpp}_j \, \Gamma (t^{\vpp}_j)\, \bar{\mobserv}^{\vpp}_j  + \bar{\mobserv}^{\vpp}_j \, \Gamma (t^{\vpp}_j)\, \bar{\mobserv}^{\vpp}_j \,   \Gamma (t^{\vpp}_j) - 2 \, \ident\notag \\
        &=  \comm{\Gamma (t^{\vpp}_j)}{\bar{\mobserv}^{\vpp}_j }^2 \, ,~~\label{eq:gamma post meas j comm sq}
    \end{align}
    and denoting this commutator as
    \begin{equation}
        B =  \comm{\Gamma (t^{\vpp}_j)}{\bar{\mobserv}^{\vpp}_j } = - B^\dagger \, ,~~
    \end{equation}
    we see that $B^2 = B^\dagger B = B B^\dagger = 0$, meaning that $B$ is a normal matrix; since $B^\dagger B \geq 0$,  the only solution to $B^\dagger B = 0$ is $B=0$. Alternatively, we note that $B$ is both skew-Hermitian and nilpotent; the former implies that $B$ has purely imaginary eigenvalues, while the latter implies that all complex eigenvalues of $B$ are zero. 
    
    Either way, the result is the same, and
    \begin{equation*}
        \tag{\ref{eq:obs logical commute}}
        \comm{\mobserv^{\vpp}_j}{\Gamma (t^{\vpp}_j)} = 0 \, ,~~
    \end{equation*}
    for \emph{all} logical operators $\Gamma$ evolved to the time $t^{\,}_j$ \eqref{eq:gamma pre meas j phys} immediately prior to the measurement of $\mobserv^{\,}_j$, in the Heisenberg picture. Given Eq.~\ref{eq:obs logical commute}, Eq.~\ref{eq:gamma post meas j} becomes
    \begin{equation}
        \label{eq:1meas Heise meas good}
        \Gamma (t^{\prime}_j) = \bar{\mobserv}^{\, \Lambda^{\vpp}_{\Gamma,j}}_j ~\Gamma (t^{\vpp}_j) \, ,~~
    \end{equation}
    and the same result recovers from Eq.~\ref{eq:gamma post meas j} in the case where $\Gamma(t^{\,}_j)$ has no support on site $j$. We note that the latter scenario also satisfies the commutation condition \eqref{eq:obs logical commute}; hence, both Eqs.~\ref{eq:obs logical commute} and \ref{eq:1meas Heise meas good} hold generally.
    
    Thus, successful teleportation requires that the measured observable $\mobserv^{\,}_j$ commutes with \emph{all} logical operators $\Gamma(t^{\,}_j)$ evolved in the Heisenberg picture to the time $t^{\,}_j$ immediately prior to the corresponding measurement \eqref{eq:obs logical commute}. Then, the combination of measuring $\mobserv^{\,}_j$ and all recovery operations $\QECChannel^{\,}_{j,k}$ \eqref{eq:1meas recovery channel} conditioned on the outcome attaches the involutory part $\bar{\mobserv}^{\,}_j$ \eqref{eq:involutory part} of the measured observable $\mobserv^{\,}_j$ to $\Gamma(t^{\,}_j)$ \eqref{eq:gamma pre meas j phys} via Eq.~\ref{eq:1meas Heise meas good}, provided that an odd number of recovery Pauli operators $\conjchan^{\,}_{j,k}$ anticommute with the corresponding Heisenberg-evolved logical operators $\Gamma(\tau^{\,}_{j,k})$ \eqref{eq:recovery lambda def} immediately prior to the $k$th recovery channel conditioned on the $j$th outcome---i.e., $\Lambda^{\,}_{\Gamma,j}=1$ \eqref{eq:1meas sum Lambda}. This holds for each logical operator $\Gamma$. 
    
    We now restore the Stinespring operators in the $\cdots$ term in Eq.~\ref{eq:gamma pre meas j ss}, and repeat this procedure for the remaining measurements. We note that subsequent recovery operations (in the Heisenberg picture) generally commute with $\bar{\mobserv}^{\,}_j$, as described below Eq.~\ref{eq:measure attach final Gamma}. The other possibility is that  $\bar{\mobserv}^{\,}_j$ is a Pauli string that anticommutes with one or more $\conjchan^{\,}_{i,\ell}$ recovery operators \eqref{eq:1meas recovery channel}; however, this remains compatible with the update in Eq.~\ref{eq:recovery attach Z}, since the operator conjugated by the channel $\QECChannel^{\,}_{i,\ell}$ is still a Pauli string (until all recovery operations have concluded, all physical unitaries are elements of $\CliffGroup*{\rm ph}$). Thus, all subsequent recovery channels $\QECChannel^{\,}_{i,\ell}$ (in the Heisenberg picture) either act trivially on $\bar{\mobserv}^{\,}_{j}$ in Eq.~\ref{eq:measure attach final Gamma}, or  $\bar{\mobserv}^{\,}_{j}$ is a Pauli string that can be absorbed into $\Gamma (t^{\,}_j)$. 
    
    These properties are preserved by all subsequent gates and measurements. When all recovery channels have been encountered, we allow generic (i.e., non-Clifford) physical unitaries to appear in $\chan$, which may be absorbed into the unitary operators $\cliff$ in Eq.~\ref{eq:gamma pre meas j phys}. Since only the conjugation by recovery channels $\QECChannel^{\,}_{i,\ell}$ is sensitive to having a Pauli-string input, this does not pose an obstacle to evolving under any remaining measurement channels. It is straightforward to verify that the same conditions that applied to prior recovery and measurement channels also apply to all remaining dilated channels. 

    Finally, we note that in a canonical-form protocol $\chan = \QECChannel \, \MeasChannel \, \uchan$, by Lemma~\ref{lem:recovery factorization},  $\Gamma (\tau^{\,}_j) = \Gamma$ for all $j \in [1,\Nmeas]$. However, the operators $\Gamma (t^{\,}_j)$ may be equal to $\Gamma$ times up to $\Nmeas-j$ involutory operators, per Eq.~\ref{eq:1meas Heise meas good}.
\end{proof}

\section{Proof of Proposition~\ref{prop:M=1 useless}}
\label{app:M=1 useless proof}

\begin{proof}
    We first consider the $\numQ=1$ case described in the statement of Prop.~\ref{prop:M=1 useless}. Since $\chan$ only contains a single projective measurement of an effectively involutory observable $\bar{\mobserv}$ \eqref{eq:involutory part} by Prop.~\ref{prop:single-qubit observable}, there is only one associated Stinespring qubit. There can also be no outcome-dependent measurement channels. Written in canonical form $\chan = \QECChannel \MeasChannel \uchan$ \eqref{eq:chan canonical form}, we have
    \begin{align}\label{eq:recovery chan generic}
        \QECChannel = \ident \otimes \SSProj{0}{} + \chi \otimes \SSProj{1}{}
    \end{align}
    for some Pauli-string $\chi$ \eqref{eq:Pauli string def}. By Lemma~\ref{lem:measure = attach}, acting on a logical operator $\Gamma$ in the Heisenberg picture, the combination of $\QECChannel$ and $\MeasChannel$ either attaches $\bar{\mobserv}$ or acts trivially.

    Consider the logical operators $\LX$ and $\LZ$, and Eq.~\ref{eq:teleportation conditions} of Prop.~\ref{prop:teleportation conditions}. If the dilated channels attach $\bar{\mobserv}$ to $\LX$ alone, then $\LZ$ only grows under $\uchan$, so that $\Dist \leq \LRvel T$ for that operator, and hence the logical qubit. The same holds if we switch $\LX$ and $\LZ$. If the dilated channels attach $\bar{\mobserv}$ to \emph{both} $\LX$ and $\LZ$, then they attach $\bar{\mobserv}^2=\ident$ to $\LY$, which therefore obeys $\Dist \leq \LRvel T$. Hence, a single measurement cannot be used to teleport all three logical basis operators a distance $\Dist > \LRvel T$. For $\numQ=\Nmeas=1$, this can also be proven for \emph{any} physical teleportation protocol by showing that Hermiticity of $\StabEl^{\,}_{\nu}$ is equivalent to $\comm{\Pauli{\nu}{i}}{\uchan^{\dagger} \, \bar{\mobserv} \, \uchan} = 0$ for all $\nu$, which implies that at most two logical operators can exceed the Lieb-Robinson bound \eqref{eq:LR bound}.

    Next, consider a standard teleportation protocol $\chan$ with $\numQ \geq 1$ with $\Nmeas < 2 \numQ$. For $\chan$ to succeed with fidelity one, the teleportation conditions \eqref{eq:teleportation conditions} of Prop.~\ref{prop:teleportation conditions} must be satisfied for all $\numQ$ \emph{pairs} of logical operators. To exceed the Lieb-Robinson bound $\Dist \leq \LRvel T$ \eqref{eq:LR bound}, distinct stabilizers $\StabEl^{\,}_{n,\nu}$ must realize for all $2\numQ$ logical basis operators---equivalently, a different involutory operator $\bar{\mobserv}^{\,}_j$ must be attached to all $2\numQ$ logical basis operators. Otherwise, one can define another valid set of logical (basis) operators by taking products of logical operators and/or their associated stabilizers $\StabEl^{\,}_{n,\nu}$ to find one or more logical operators that satisfy $\Pauli{\nu}{i_n} = \uchan^{\dagger} \, \Pauli{\nu}{f_n} \, \uchan$ \eqref{eq:teleportation conditions}, which can only be satisfied if $\Dist \leq d(i_n,f_n) \leq \LRvel T$, as with $\numQ=1$  \cite{SpeedLimit}. 
    
    Alternatively, since Lemma~\ref{lem:compatible observables} \emph{independently} proves that two measurements are sufficient to teleport a logical qubit a distance $\Dist > \LRvel T$, a protocol with $\Nmeas < 2 \numQ$ would require that \emph{at least} one logical qubit is teleported a distance $\Dist > \LRvel T$ using a single measurement, which violates the $\numQ=1$ result proven above.
\end{proof}

\section{Proof of Lemma~\ref{lem:compatible observables}}
\label{app:compatible obs}

\begin{proof}
    Again, we first consider the $\numQ=1$ case, and generalize to $\numQ>1$ afterward. For concreteness, consider the Heisenberg evolution of the logical operators $\LX$ and $\LZ$, which act on the final state $\ket{\Psi^{\,}_T}$ \eqref{eq:teleport final mb state} as $\PX{f}$ and $\PZ{f}$, respectively. In this context,  Lemma~\ref{lem:recovery factorization} guarantees that
    \begin{equation}
        \label{eq:2meas W canonical}
        \chan = \QECChannel^{\vpd}_x \, \QECChannel^{\vpd}_z \, \MeasChannel^{\vpd}_2 \, \MeasChannel^{\vpd}_1 \, \uchan \, ,~~
    \end{equation}
    where the order of the \emph{effective} recovery channels $\QECChannel^{\,}_{x,z}$ is inconsequential, and $\MeasChannel$ and $\uchan$ are unmodified. 

    Evolving the logical operators through the recovery channels $\QECChannel$ in the Heisenberg picture leads to 
    \begin{align}
    \label{eq:2meas post recovery}
        \Pauli{\nu}{\rm L}(\tau) &= \QECChannel^{\dagger} \, \Pauli{\nu}{f}\, \QECChannel = \Pauli{\nu}{f}\, \SSZp{1}{\lambda^{\vpp}_{\nu,1}}\, \SSZp{2}{\lambda^{\vpp}_{\nu,2}} \, ,~~
    \end{align}
    where $\lambda^{\vpp}_{\nu,i}=1$ if an odd number of recovery operators $\conjchan^{\,}_{i,k}$ based on the $i$th measurement outcome anticommute with $\Pauli{\nu}{f}$, and is zero otherwise (for $\nu=x,y,z$). Evolving the operator $\Pauli{\nu}{\rm L}(\tau)$ through the two measurement of $\mobserv^{\,}_2$ (and projecting onto the default state $\ket{0}$ of the corresponding Stinespring register) leads to
    \begin{align}
        \Pauli{\nu}{\rm L}(t^{\,}_2) &= \trace\limits_{{\rm ss},2} \left[ \, \MeasChannel^{\dagger}_2 \, \Pauli{\nu}{\rm L} (\tau) \, \MeasChannel^{\vpd}_2 \, \SSProj{0}{2} \, \right] \notag \\
        &= \Pauli{\nu}{f}\, \mobserv^{\lambda^{\vpp}_{\nu,2}}_2 \, \SSZp{1}{\lambda^{\vpp}_{\nu,1}} \, ,~~\label{eq:2meas post M2}
    \end{align}
    where Lemma~\ref{lem:measure = attach} imposes the condition
    \begin{equation}
        \label{eq:2meas logical commute with A2}
        \comm{\mobserv^{\vpp}_2}{\Pauli{\nu}{\rm L}(\tau)} = \comm{\mobserv^{\vpp}_2}{\Pauli{\nu}{f}} = 0 \, ,~~ 
    \end{equation}
    and evolving through the remaining measurement gives
    \begin{align}
        \Pauli{\nu}{\rm L}(t^{\,}_1) &= \trace\limits_{{\rm ss},2} \left[ \, \MeasChannel^{\dagger}_1 \, \Pauli{\nu}{\rm L}(t^{\,}_2)  \, \MeasChannel^{\vpd}_1 \, \SSProj{0}{1} \, \right] \notag \\
        &= \Pauli{\nu}{f}\, \mobserv^{\lambda^{\vpp}_{\nu,2}}_2 \, \mobserv^{\lambda^{\vpp}_{\nu,1}}_1\, ,~~\label{eq:2meas post M1}
    \end{align}
    where, now, Lemma~\ref{lem:measure = attach} imposes the condition
    \begin{equation}
        \label{eq:2meas logical commute with A1}
        \comm{\mobserv^{\vpp}_1}{\Pauli{\nu}{\rm L}(t^{\,}_2)} = \comm{\mobserv^{\vpp}_1}{\Pauli{\nu}{f}\, \mobserv^{\lambda^{\vpp}_{\nu,2}}_2 } = 0 \, ,~~ 
    \end{equation}
    and we now consider the conditions of the Lemma.

    First, $\lambda^{\,}_{\nu,i}=1$ must hold for at least one logical operator labelled $\nu$, for both $i=1,2$; otherwise, one of the measurements acts trivially by Prop.~\ref{prop:feedback required}, and the other is then useless to $\chan$ by Prop.~\ref{prop:M=1 useless}. Second, if $\lambda^{\,}_{\nu,i}=1$ for both $i=1,2$ and $\nu=x,z$, e.g., then  $\LX(t^{\,}_1)$ and $\LZ(t^{\,}_1)$ are both multiplied by $\mobserv^{\,}_2 \, \mobserv^{\,}_1$; however, this is equivalent to a \emph{single} measurement of the involutory operator $B = \mobserv^{\,}_2 \, \mobserv^{\,}_1$, which cannot be useful to teleportation by Prop.~\ref{prop:M=1 useless}. Third, if $\lambda^{\,}_{\nu,i}=0$ for $i=1,2$ for \emph{any} $\nu$, then that logical operator must be teleported by $\uchan$ alone, in which case $\Dist \leq \LRvel T$, violating the conditions of the Lemma.  Hence, $\chan$ must attach one of the observables $\mobserv^{\,}_{1,2}$ to $\LX$ and the other to $\LZ$ (or any pair of logical operators).

    Suppose that $\chan$ attaches $\mobserv^{\,}_2$  to $\LZ$ and $\mobserv^{\,}_1$ to $\LX$---i.e., $\lambda^{\,}_{z,2}=\lambda^{\,}_{x,1}=1$ while $\lambda^{\,}_{x,2}=\lambda^{\,}_{z,1}=0$.  The commutation conditions imposed by Lemma~\ref{lem:measure = attach} take the form
    \begin{subequations}
        \label{eq:2meas nice comms}
    \begin{align}
        \comm{\mobserv^{\vpp}_2}{\PX{f}} &= \comm{\mobserv^{\vpp}_1}{\PX{f}} = 0 \label{eq:2meas nice comm x} \\
        \comm{\mobserv^{\vpp}_2}{\PZ{f}} &= \comm{\mobserv^{\vpp}_1}{\PZ{f} \mobserv^{\vpp}_2} = 0 \label{eq:2meas nice comm z} \, ,~~
    \end{align}
    \end{subequations}
    and we focus on the final relation in Eq.~\ref{eq:2meas nice comm z}, which, by linearity of the commutator can be written
    \begin{equation}
        \label{eq:2meas interesting comm}
        \comm{\mobserv^{\vpp}_1}{\PZ{f} \mobserv^{\vpp}_2} = \comm{\mobserv^{\vpp}_1}{\PZ{f}} \,  \mobserv^{\vpp}_2 + \PZ{f} \, \comm{\mobserv^{\vpp}_1}{\mobserv^{\vpp}_2} = 0 \, . ~~
    \end{equation}
    We note that $\mobserv^{\vpp}_1$ \emph{only} fails to commute with an operator on site $f$ if, in the bare protocol $\chan'$ (which need not be in canonical form), the single-qubit observable $\mobserv_1'$ either (\emph{i}) is supported only on site $f$ or (\emph{ii}) is supported on some site $j \neq f$ where both $j$ and $f$ lie in the support of some physical unitary $\cliff$ that acts after $\MeasChannel_1'$ (in the Schr\"odinger picture). In either case, $\mobserv_1'$ lies in the light cone of site $f$ generated by time evolution alone. Hence, attaching such an operator to any logical operator $\Pauli{\nu}{\rm L}$ cannot be more useful to teleportation than time evolution alone, which obeys $\Dist \leq \LRvel T$ \eqref{eq:LR bound}. However, Prop.~\ref{prop:M=1 useless} establishes that the measurement of $\mobserv^{\,}_2$ cannot lead to $\Dist> \LRvel T$ on its own.

    Hence, both $\mobserv^{\,}_1$ and $\mobserv^{\,}_2$ must act trivially on $f$ (in canonical form or otherwise), so that Eq.~\ref{eq:2meas interesting comm} becomes
    \begin{equation*}\tag{\ref{eq:compatible observ}}
        \comm{\mobserv^{\vpp}_1}{\mobserv^{\vpp}_2} = 0 \, , ~~
    \end{equation*}
    meaning that two observables $\mobserv^{\,}_1$ and $\mobserv^{\,}_2$ are mutually compatible if they commute. This also implies that the channels $\MeasChannel^{\,}_1$ and $\MeasChannel^{\,}_2$ commute, so their order in Eq.~\ref{eq:2meas W canonical} is inconsequential. Moreover, all of the arguments above hold if we replace $\LX$ and $\LZ$ with any pair of logical operators. Note that Eq.~\ref{eq:compatible observ} must hold in the bare protocol and in canonical form \eqref{eq:2meas W canonical}.

    Having proven the Lemma for $\numQ=1$ (i.e., as stated), we now establish that $\Nmeas=2\numQ$ measurements is sufficient to teleport $\numQ \geq 1$ logical qubits a distance $\Dist > \LRvel T$. Again, we assume a standard teleportation protocol $\chan$ in canonical form \eqref{eq:chan canonical form}. Lemma~\ref{lem:recovery factorization} guarantees that
    \begin{align}
        \chan =\Big[ \prod\limits_{n=1}^{\numQ} \, \QECChannel^{\vpd}_{n,x} \, \QECChannel^{\vpd}_{n,x} \Big] \, \MeasChannel^{\vpd}_{2\numQ}\,
        \cdots 
        \MeasChannel^{\vpd}_{1} \, \uchan \, ,~
    \end{align}
    acting on logical operators, where the order of the \emph{effective} recovery channels $\QECChannel^{\,}_{n,\nu}$ \eqref{eq:effective recovery} is unimportant, and both $\MeasChannel$ and $\uchan$ are unchanged compared to the true protocol. 

    We next repeat the analysis used in the $\numQ=1$ case, noting that the combination of $\QECChannel$ and $\MeasChannel$ attaches observables to logical operators, per Lemma~\ref{lem:measure = attach}. By Prop.~\ref{prop:M=1 useless}, the same observable cannot be attached to both logical basis operators for a given logical qubit $n$. Moreover, the $\numQ=1$ case above establishes that the attached operator $\bar{\mobserv}^{\,}_{s}$ (the measured observable) evolves under $\uchan$ into the initial logical operator $\Pauli{\nu}{i_n}$ (and some stabilizer element). Hence, if the same operator $\bar{\mobserv}^{\,}_{s}$ is attached to logical operators for the logical qubits $n$ \emph{and} $n'$, then at least one of the logical operators $n$ and $n'$ acts nontrivially on \emph{both} $i_n$ and $i_{n'}$,  violating Eq.~\ref{eq:teleportation conditions}. Hence, we assign a unique observable $\mobserv^{\,}_s$ to each logical basis operator.
    
    The first measured observable encountered in the Heisenberg picture is $\mobserv^{\,}_{2\numQ}$; by Lemma~\ref{lem:measure = attach}, we have that
    \begin{align}
    \label{eq:k>1 first obs comm relation}
        \comm{\mobserv^{\vpp}_{2\numQ}}{\Pauli{\nu}{f_n}} = 0 ~~\forall \, n,\nu \, ,~~
    \end{align}
    and since the label $n$ is meaningless, suppose that $\bar{\mobserv}^{\,}_{2\numQ}$ is attached to $\PX{f_{\numQ}}$ (i.e., $\numQ$th logical $\PX{}$ operator). 

    We next encounter the observable $\mobserv^{\,}_{2\numQ-1}$, finding
    \begin{align}
        \comm{\mobserv^{\vpp}_{2\numQ-1}}{\PX{f_{\numQ}} \, \mobserv^{\vpp}_{2\numQ}} = 0 \, , ~~\label{eq:k>1 second obs comm relation}
    \end{align}
    and note that $\mobserv^{\,}_{2\numQ-1}$ commutes with all other logical operators. Following Eq.~\ref{eq:2meas interesting comm}, this implies that
    \begin{equation}
    \label{eq:k>1 compatability first}
        \comm{\mobserv^{\vpp}_{2\numQ-1}}{\mobserv^{\vpp}_{2\numQ}} = 0 \, ,~~
    \end{equation}
    in analogy to Eq.~\ref{eq:compatible observ}. The logical operator to which $\mobserv^{\,}_{2\numQ-1}$ is attached is unimportant, as long as each measured observable is attached to a unique logical operator. 
    
    As a result, every subsequent observable obeys one or more expressions like Eq.~\ref{eq:k>1 second obs comm relation} by Lemma~\ref{lem:measure = attach}. Since the measured observables always commute with the logical operators (which is required for $\Dist > \LRvel T$), the observable $\mobserv^{\,}_{2\numQ-s}$ obeys Eq.~\ref{eq:k>1 second obs comm relation} for the $s$ logical operators to which prior observables have been attached, and commutes with all other logical operators. Each of the $s$ expressions of the form of Eq.~\ref{eq:k>1 second obs comm relation} reduce to Eq.~\ref{eq:k>1 compatability first} for each of the measured observables already encountered. Repeating this for all $2\numQ$ measurement channels, we find that
    \begin{align}
        \label{eq:k>1 compatible observ}
        \comm{\mobserv^{\vpp}_{j}}{\mobserv^{\vpp}_{j'}}  = 0 ~~\forall \, j,j' \, ,~~
    \end{align}
    i.e., all measured observables mutually commute. The proof that a protocol $\chan$ with $2\numQ$ measurements satisfying Eq.~\ref{eq:k>1 compatible observ} can teleport $\numQ$ logical qubits a distance $\Dist > \LRvel T$ is straightforward: One simply invokes the remainder of the proof for $\numQ=1$ for each logical qubit individually. 
\end{proof}

\section{Proof of Lemma~\ref{lem:M=2 dist}}
\label{app:M=2 speed limit}

\begin{proof}
    The proof of Eq.~\ref{eq:M=2 speed limit} involves several steps, each of which utilizes the limits on operator growth imposed by the Lieb-Robinson bound \eqref{eq:LR bound} in Theorem~\ref{thm:LR theorem}, and builds on the teleportation conditions established in  Prop.~\ref{prop:teleportation conditions},
    \begin{align}
        \label{eq:M=2 teleport conditions}
        \Pauli{\nu}{\rm L} (T) &\equiv \trace\limits_{\rm ss} \left[ \, \chan^{\dagger} \, \Pauli{\nu}{f} \, \chan \, \SSProj{\bvec{0}}{\rm ss} \, \right] \notag \\
        &= \uchan^{\dagger} \, \mobserv^{\vpp}_{\nu,j} \, \Pauli{\nu}{f} \, \uchan = \Pauli{\nu}{i} \, \StabEl^{\vpp}_{\nu} \, ,~~
    \end{align}
    where $\Pauli{\nu}{\rm L} (T)$ is the physical part of the $\nu$-type logical operator acting on the \emph{initial} state and the final equality follows from Eq.~\ref{eq:teleportation conditions}. Lemma~\ref{lem:compatible observables} guarantees that exactly one of the two measured observables is attached to $\Pauli{\nu}{\rm L}$; the attached operator is given by
    \begin{equation*}
        \tag{\ref{eq:CF observable}}
        \mobserv^{\vpp}_{\nu,j} \equiv \Cliff{\nu} \, \bar{\mobserv}_{\nu j}' \, \CliffDag{\nu} \, , ~~
    \end{equation*}
    where $\bar{\mobserv}_j'$ is the involutory part \eqref{eq:involutory part} of the single-qubit observable $\mobserv^{\,}_j$ acting on qubit $j$ (which depends on $\nu$), and $\cliff^{\,}_{\nu}$ comprises all physical unitaries $\cliff 
    $ \emph{after} the measurement of $\mobserv_{\nu j}'$ in the na\"ive protocol $\chan'$, in the Schr\"odinger picture. We can then write $\uchan = \cliff^{\,}_{\nu} \, \uchan^{\,}_{\nu}$, so that $\uchan^{\,}_{\nu}$ comprises all physical unitaries in $\chan'$ \emph{prior} to the measurement of $\mobserv_{\nu j}'$, where $\cliff^{\,}_{\nu}$ has depth $\tau^{\,}_{\nu}$, $\uchan^{\,}_{\nu}$ has depth $T^{\,}_{\nu}$, and $T = \tau^{\,}_{\nu}+T^{\,}_{\nu}$. We then rewrite Eq.~\ref{eq:M=2 teleport conditions} as
    \begin{equation}
        \label{eq:M=2 teleport conditions nice}
        \Pauli{\nu}{i} \, \StabEl^{\vpp}_{\nu} = \left( \uchan^{\dagger}_{\nu} \, \bar{\mobserv}_{\nu j}' \, \uchan^{\vpd}_{\nu} \right) \, \left( \uchan^{\dagger} \, \Pauli{\nu}{f} \, \uchan \right) \, ,~~
    \end{equation}
    and it will prove convenient to define
    \begin{equation}
        \label{eq:M=2 U evolved logical}
        \Sigma^{\nu}_{\ell} \equiv \uchan^{\dagger} \, \Pauli{\nu}{f} \, \uchan \, , ~~
    \end{equation}
    and the Lemma's requirement that $\Dist > \LRvel T$ implies that $d(\ell,i) \geq 1$, so $\Sigma^{\nu}_{\ell}$ cannot be $\Pauli{\nu}{\rm L} (T)$ \eqref{eq:M=2 teleport conditions}. 

    Next, we establish by contradiction that the light cones of $\bar{\mobserv}_{\nu j}'$ and $\Pauli{\nu}{f}$ in Eq.~\ref{eq:M=2 teleport conditions nice} have to meet. If the light cones do not meet, then since  $\Sigma^{\nu}_{\ell}$ cannot realize $\Pauli{\nu}{i}$, it must realize part of the stabilizer element $\StabEl^{\,}_{\nu}$ in Eq.~\ref{eq:M=2 teleport conditions nice}, while $\uchan^{\dagger}_{\nu} \, \bar{\mobserv}_{\nu j}' \, \uchan^{\vpd}_{\nu}$ realizes $\Pauli{\nu}{i}$ times another part of $\StabEl^{\,}_{\nu}$, so that
    \begin{align}
        \StabEl^{\vpp}_{\nu} &= \StabEl^{(1)}_{\nu} \otimes \StabEl^{(2)}_{\nu} = \Pauli{\nu}{i} \, \uchan^{\dagger}_{\nu} \, \bar{\mobserv}_{\nu j}' \, \uchan^{\vpd}_{\nu} \otimes \Sigma^{\nu}_{\ell} \, , ~~\label{eq:M=2 disjoint stab}
    \end{align}
    where $\StabEl^{(1)}_{\nu}, \StabEl^{(2)}_{\nu} \in \UStabOf{\ket{\Psi^{\,}_0}}$ \eqref{eq:UStab} have disjoint support. 
    
    Note that any two elements $\StabEl^{\,}_1,\StabEl^{\,}_2$ of $\UStabOf*{\ket{\Psi}}$ satisfy
    \begin{align}\label{eq:stab acomm norm}
        \norm{\acomm{\StabEl^{\vps}_1}{\StabEl^{\vps}_2}} = 2 \, ,~~
    \end{align}
    which we prove for all $\StabEl^{\,}_1,\StabEl^{\,}_2 \in \UStabOf*{\ket{\Psi}}$. First, note that $\acomm{\StabEl^{\,}_1}{\StabEl^{\,}_2} \, \ket{\Psi} = 2 \, \ket{\Psi}$, which implies that $\norm{\acomm{\StabEl^{\,}_1}{\StabEl^{\,}_2}} \geq 2$ (since there exists an eigenvalue of the anticommutator with eigenvalue at least two). Second, note that unitarity of the stabilizer elements---along with the triangle inequality---ensure that $\norm{\acomm{\StabEl^{\,}_1}{\StabEl^{\,}_2}} \leq 2 \, \norm{\StabEl^{\,}_1 \, \StabEl^{\,}_2} = 2$. 
    
    Crucially, the initial state $\ket{\Psi^{\,}_0}$ \eqref{eq:teleport initial mb state}  is either a product state, or is separable with respect to the support of the two stabilizer elements. However, Eq.~\ref{eq:M=2 disjoint stab} implies that
    \begin{equation}
        \label{eq:M=2 bad stab acomm}
        \acomm{\StabEl^{(2)}_{\mu}}{\StabEl^{(2)}_{\nu}} = \uchan^{\dagger} \, \acomm{\Pauli{\mu}{f}}{\Pauli{\nu}{f}} \, \uchan = 2 \, \kron{\mu,\nu} \, ,~~
    \end{equation}
    which is incompatible with Eq.~\ref{eq:stab acomm norm} for $\mu \neq \nu$, contradicting the claim that $\StabEl^{(2)}_{\mu,\nu}$ are elements of $\UStabOf{\ket{\Psi^{\,}_0}}$ \eqref{eq:UStab}. Hence, the light cones of the operators $\bar{\mobserv}_{\nu j}'$ and $\Pauli{\nu}{f}$ must meet. 

    Moreover, the operators $\Pauli{\nu}{\rm L}(T)$ \eqref{eq:M=2 teleport conditions} must realize the Pauli algebra on site $i$ only, and commute everywhere else. The most efficient strategy to ensure this is to evolve the commuting operators $\mobserv_{\nu j}'$ (for $\nu=x,z$) to nearby sites (using a circuit with depth $\tau^{\,}_{0}$, and apply a unitary ``decoding'' channel $\mathcal{Q}$ that maps, e.g., 
    \begin{equation}
    \label{eq:M=2 decoding}
        \mobserv^{\prime}_{\nu j} \to \Pauli{\nu}{j} \, \Pauli{\nu}{j+\LRvel} \, .~~
    \end{equation}
    As established in Ref.~\citenum{SpeedLimit}, the most efficient strategy is to evolve $\Pauli{\nu}{j} \to \Pauli{\nu}{i}$ and $\Pauli{\nu}{j+\LRvel} \to \Pauli{\nu}{\ell-\LRvel}$, with
    \begin{equation}
    \label{eq:M=2 link up tau}
        \uchan^{\dagger} \, \Pauli{\nu}{f} \, \uchan = \Sigma^{\nu}_{\ell} = \Pauli{\nu}{\ell} \, , ~~
    \end{equation}
    so that $\Pauli{\nu}{\ell}\, \Pauli{\nu}{\ell-\LRvel}$ can be converted into stabilizers of $\ket{\Psi^{\,}_0}$ \eqref{eq:teleport initial mb state} for both $\nu=x,z$ by the ``encoding'' channel $\mathcal{Q}^{\dagger}$. 
    
    Using the constraints above, we bound the teleportation distance $\Dist = d(i,f)$ \eqref{eq:L def}. We first note that 
    \begin{equation}
        \label{eq:M=2 f travel}
        f - \ell \leq \LRvel \left( T - 1 \right)  \, , ~~
    \end{equation}
    since at least one layer of the circuit must convert $\Pauli{\nu}{\ell-\LRvel}$ and $\Pauli{\nu}{\ell}$ into a stabilizer. We then note that
    \begin{subequations}
    \label{eq:M=2 k dists}
        \begin{align}
            j - i &\leq \LRvel \,\left(  T - \max_{\nu} \tau^{\vpp}_{\nu} -  \tau^{\vpp}_{0} - 1 \right)  \label{eq:M=2 ik dist} \\
            \ell - j &\leq \LRvel \left( T - \max_{\nu} \tau^{\vpp}_{\nu}  -  \tau^{\vpp}_{0}  \right)  \label{eq:M=2 kn dist} \, ,~~
        \end{align}
    \end{subequations}
    where $\max_\nu \tau^{\,}_{\nu}$ is the total depth of all physical unitary channels following the first measurement (in the Schr\"odinger picture). By summing the foregoing inequalities, we obtain the following bound on $\Dist$, 
    \begin{equation}
        \label{eq:M=2 dist 1}
        \Dist = d(i,f) \leq 3 \LRvel T - 2 \LRvel \left( \max_\nu \tau^{\,}_\nu + \tau^{\vpp}_{0} + 1 \right)  \, ,
    \end{equation}
    where the depth $\tau^{\,}_{0} \geq 0$ increases if the measurements of $\mobserv_{\nu,j}'$  are either (\emph{i}) far from one another (i.e., $d(j^{\,}_1,j^{\,}_2) \gg 1$) or (\emph{ii}) far from the optimal position $j^* \approx i + T/3$. Taking $\tau^{\,}_{0} \to 0$ in Eq.~\ref{eq:M=2 dist 1} gives the (potentially loose) bound
    \begin{equation*}
        \tag{\ref{eq:M=2 speed limit}}
        \Dist = d(i,f) \leq 3 \LRvel T - 2 \LRvel \left( \max_\nu \tau^{\vpp}_{\nu} + 1 \right) \, .~
    \end{equation*}
\end{proof}

\section{Proof of Lemma~\ref{lem:M>2 speedup}}
\label{app:proof M>2 speedup}

\begin{proof}
The proof of Lemma~\ref{lem:M>2 speedup} closely resembles that of Lemma~\ref{lem:M=2 dist} (see App.~\ref{app:M=2 speed limit}). Again, the light cone emanating from one pair of measurements must intersect the light cone of the final-state logical operators $\Pauli{\nu}{f}(T)$ in the Heisenberg picture, while the light cone emanating from a (generically distinct) pair of measurements must reach the initial logical site $i$. For $\Nmeas=4$, the light cones of these two measurement regions must intersect each other; for $\Nmeas = 2m$ with $m>2$, we instead ``daisy chain'' the various light cones of the measured observables, in series. 

As in the $\Nmeas=2$ case described by Lemma~\ref{lem:M>2 speedup}, we need only consider the \emph{optimal} scenario. This means that pairs of measurements are well separated, and all of the measured observables commute with one another and all logical operators. This applies both to the canonical-form observables $\mobserv^{\,}_j$ \eqref{eq:CF observable} and the single-qubit observables $\mobserv_j'$ of the na\"ive protocol $\chan'$. Evolving the logical through the dilated channels $\QECChannel$ and $\MeasChannel$ leads to
\begin{equation}
\label{eq:gen M link up logical t}
    \Pauli{\nu}{f}(t) = \Pauli{\nu}{f} \, \prod\limits_{s=1}^{\Nmeas} \, \Cliff{s,\nu} \, \bar{\mobserv}^{\vpd}_{s,\nu} \, \CliffDag{s,\nu} \, ,~~
\end{equation}
where $s$ runs over pairs of measurements (one of which is assigned to $\LX$ and the other to $\LZ$), and $\cliff^{\,}_{s,\nu}$ captures all unitary evolution following (in the Schr\"odinger picture) the measurement of the single-qubit operator  $\mobserv_{s,\nu}$ in the \emph{na\"ive} protocol $\chan'$ (and has depth $\tau^{\,}_{s,\nu}$). 

Evolving through $\chan$ again leads to Eq.~\ref{eq:M=2 link up tau} for the $\Pauli{\nu}{f} $ term in Eq.~\ref{eq:gen M link up logical t}, which is moved to some site $\ell$ obeying
\begin{equation*}
    \tag{\ref{eq:M=2 f travel}}
    f - \ell \leq \LRvel \left( T - 1 \right)  \, , ~~
\end{equation*}
as before. Again, it is optimal to measure nearby sites---e.g., $j^{\,}_{s,z} = j^{\,}_{s,x}+\LRvel$, as in Fig.~\ref{fig:cluster state circuit} and Eq.~\ref{eq:M=2 decoding}. We denote by $j^{\,}_s$ the left site of the pair $s$. As in Lemma~\ref{lem:M>2 speedup}, we have
\begin{align*}
    d(j^{\vpp}_1,i) &\leq \LRvel \,\left(  T - \max_{\nu} \tau^{\vpp}_{1,\nu} - 1 \right)  \tag{\ref{eq:M=2 ik dist}} \\
    d(\ell,j^{\vpp}_{\Nregions}) &\leq \LRvel \left( T - \max_{\nu} \tau^{\vpp}_{\Nregions,\nu}  -  1  \right)  \tag{\ref{eq:M=2 kn dist}} \, ,~~
\end{align*}
where $\Nregions=\Nmeas/2$ is the number of pairs of nearby measurements, and we take $\tau^{\vpp}_0= 0$ compared to Eq.~\ref{eq:M=2 k dists}. The new ingredient compared to Lemma~\ref{lem:M>2 speedup} is
\begin{align}
\label{eq:M>2 jj dist}
    d(j^{\vpp}_{s+1} , j^{\vpp}_s) \leq \LRvel \left( T -  \max_{\nu} \max ( \tau^{\vpp}_{s,\nu}, \tau^{\vpp}_{s+1,\nu} ) -1 \right)  \, , 
\end{align}
and adding these all up---and taking $\tau^{\,}_{s,\nu}=0$ to reflect optimal protocols---we find that
\begin{align}
    \Dist &= d(i,f) \leq 3 \LRvel (T - 1) + 2 \LRvel \sum\limits_{s=1}^{\Nregions-1} ( T -  1 ) \, ,~
\end{align}
and noting that adding a single additional measurement cannot alter this bound, we replace $\Nregions=\floor{\Nmeas/2}$ to recover
\begin{align*}\tag{\ref{eq:k=1 speed limit}}
    \Dist \leq \LRvel T + 2\LRvel \left\lfloor \frac{\Nmeas}{2} \right\rfloor \left( T - 1 \right) 
    \, .~
\end{align*}
\end{proof}

\section{Proof of Theorem~\ref{thm:standard bound}}
\label{app:standard bound proof}

\begin{proof}
    To prove Eq.~\ref{eq:k>1 bound}, we derive the optimal spacing of measurements for a given logical qubit $n$, as in the proofs of Lemmas~\ref{lem:M=2 dist} and \ref{lem:M>2 speedup}. This is straightforward, as Lemmas~\ref{lem:recovery factorization} and \ref{lem:measure = attach} jointly guarantee that the combination of measurements and outcome-dependent operations do not ``mix'' between logical operators. Moreover, Lemma~\ref{lem:compatible observables}---whose proof in App.~\ref{app:compatible obs} applies to all $\numQ \geq 1$---guarantees that all of the measured observables $\mobserv^{\,}_j$ \eqref{eq:CF observable}---and their involutory parts $\bar{\mobserv}^{\,}_j$ \eqref{eq:involutory part}---commute with one another and with all final-state logical operators, by Lemma~\ref{lem:measure = attach}.
    
    Evolving all logical operators through the channels $\QECChannel$ and $\MeasChannel$ in the Heisenberg picture, projecting onto the default Stinespring initial state $\ket{\bvec{0}}^{\,}_{\rm ss}$, and tracing out the Stinespring degrees of freedom leads to
    \begin{align}
        \Pauli{\nu}{f_n} (t) &= \trace\limits_{\rm ss} \left[ \, \MeasChannel^{\dagger} \, \QECChannel \,  \Pauli{\nu}{f_n} \, \QECChannel \, \MeasChannel \, \SSProj{\bvec{0}}{\rm ss} \, \right] \notag \\
        &= \bar{\mobserv}^{\vpp}_{1,n,\nu} \cdots \bar{\mobserv}^{\vpp}_{\Nregions,n,\nu} \, \Pauli{\nu}{f_n} \, ,~~
        \label{eq:k>1 speed limit post meas logical}
    \end{align}
    where the first label on $\bar{\mobserv}^{\,}_{s,n,\nu}$ corresponds to the ``measurement region'' $s \in [1,\Nregions]$ \cite{SpeedLimit}, $n$ labels the logical qubit, and $\nu$ labels the logical basis operator (e.g., $\nu \in \{x,z\}$). 
    
    We now work out the optimal spacing between measurement sites $j^{\,}_{s,n,\nu}$ for different $s$. Following the proofs in Apps.~\ref{app:M=2 speed limit} and \ref{app:proof M>2 speedup}, we note that the logical basis operators $\Pauli{\nu}{f_n}(t)$ \eqref{eq:k>1 speed limit post meas logical} still realize the Pauli algebra on site $f_n \gg i_n$. As before, the strategy is for the pair observables $\mobserv^{\,}_{s,n,x}$ and $\mobserv^{\,}_{s,n,z}$ to be measured on nearby sites $j^{\,}_{s,n}$ and $j^{\,}_{s,n} +\LRvel$, and then converted into a pair of operators of the same type ($\nu$) under the next unitary channel encountered in the Heisenberg picture---i.e., 
    \begin{align}
    \label{eq:k>1 speed limit decode meas}
        \bar{\mobserv}^{\vpp}_{s,n,\nu} \to \Pauli{\nu}{j^{\vpp}_{s,n}} \, \Pauli{\nu}{j^{\vpp}_{s,n}+\LRvel} ~~~{\rm and}~~~\Pauli{\nu}{f_n} \to \Pauli{\nu}{f_n-\LRvel} \, ,~~
    \end{align}
    which can always be achieved in duration $\Delta T \geq 1$ \cite{SpeedLimit}. 
 
    As before, the next step is to send the pair of $\Pauli{\nu}{}$ operators created on nearby sites via \eqref{eq:k>1 speed limit decode meas} in opposite directions. Note that we need to ``save'' $\Delta T \geq 1$ of the total unitary evolution to ``undo'' Eq.~\ref{eq:k>1 speed limit decode meas}. The next segment of the Heisenberg evolution of the logical operator \eqref{eq:k>1 speed limit post meas logical}, with duration $T-2$, sends $f_n -\LRvel \to f_n - \LRvel (T-1)$, $j^{\,}_{s,n} \to j^{\,}_{s,n} - \LRvel (T-1)$, and $j^{\,}_{s,n} + \LRvel \to j^{\,}_{s,n} + \LRvel (T-1)$. At this point, the sites of anticommutation must be ``undone'' (since the initial state is a product state, by Def.~\ref{def:standard teleport}), which is possible using the inverse of the channel that achieves Eq.~\ref{eq:k>1 speed limit decode meas}. For this to succeed---and for $\Pauli{\nu}{f_n}(T)$ to realize the Pauli algebra on site $i_n$---we must have
    \begin{subequations}
        \label{eq:k>1 speed limit spacings}
    \begin{align}
        d(f^{\vpp}_n,j^{\vpp}_{\Nregions,n}) &\leq 2 \LRvel T - \LRvel \label{eq:k>1 speed limit f,j spacing} \\
        d(j^{\vpp}_{s+1,n},j^{\vpp}_{s,n}) &\leq 2 \LRvel (T-1) \label{eq:k>1 speed limit j,j spacing} \\
        d(j^{\vpp}_{1,n},i^{\vpp}_n) &\leq \LRvel (T-1) \, ,~~\label{eq:k>1 speed limit j,i spacing}
    \end{align}
    \end{subequations}
    where the middle line holds for $1 \leq s < \Nregions$. These relations follow from the requirement that all light cones come within $\LRvel$ of one another prior to the final channel in the Heisenberg picture (with depth $\Delta T \geq 1$), except that of the leftmost $\Pauli{\nu}{}$ operator generated in Eq.~\ref{eq:k>1 speed limit decode meas}, which can propagate an extra $\LRvel$ sites to $i_n$. Putting this all together, a protocol with $\Nregions$  measurement regions teleports the $n$th logical qubit a distance $\Dist$ \eqref{eq:L def} satisfying
    \begin{align}
        \label{eq:k>1 proto speed limit}
        \Dist \leq \LRvel T + 2 \Nregions \LRvel (T-1) \, ,~~
    \end{align}
    and now, we consider the remaining $\numQ-1$ logical qubits. 

    To proceed, we first recall several previously established facts. First, Prop.~\ref{prop:M=1 useless} proves that (\emph{i}) the \emph{same} measured observables cannot be attached to distinct logical basis operators and remain useful to teleportation and (\emph{ii}) a given region requires \emph{two} measurements per logical qubit. This implies that each of the $\Nregions$ measurement ``regions'' must involve $2\numQ$ single-qubit measurements; moreover, pairs of measurements corresponding to the same logical operator should neighbor one another (i.e., be no further than $\LRvel$ from each other). The optimal strategy is to have the sites $i_n$ (and $f_n$) for different $n$ correspond to alternating sites (in order), and for each measurement region to comprise $2\numQ$ consecutive sites, of which the pair of sites $2n-1,2n$ correspond to the observables attached to the logical basis operators for the $n$th logical qubit. As a result, the total number of measurements required is $\Nmeas = 2 \Nregions \numQ$; by Prop.~\ref{prop:M=1 useless} and Def.~\ref{def:task dist}, only an additional $2\numQ$ measurements can increase $\Dist$, and thus we replace $\Nregions$ with $\floor{\Nmeas/2\numQ}$ in Eq.~\ref{eq:k>1 proto speed limit} to recover Eq~.\ref{eq:k>1 bound}. 

    Finally, we prove Eq.~\ref{eq:k>1 min depth}. As always, the maximum distance between equivalent sites $j^{\,}_{s,n}$ and $j^{\,}_{s+1,n}$ is $\ell \leq 2 \LRvel (T-1)$. However, we note that at least $2\numQ$ sites must be measured in each region; hence, we must have $\ell \geq 2 \numQ$. Combining these inequalities, we find that $2 \LRvel (T-1) \geq 2 \numQ$, which simplifies to Eq.~\ref{eq:k>1 min depth}. In the case $\numQ=1$, we find $T \geq 2$, as noted in Lemmas~\ref{lem:M=2 dist} and \ref{lem:M>2 speedup}.
\end{proof}

\section{Proof of Corollary~\ref{cor:meas comm}}
\label{app:proof meas comm}

\begin{proof}
    The first important result of Lemma~\ref{lem:measure = attach} is
    \begin{equation*}
        \tag{\ref{eq:obs logical commute}}
        \comm{\mobserv^{\vpp}_j}{\Gamma (t^{\vpp}_j)} = 0 \, , ~~
    \end{equation*}
    where $\Gamma (t^{\,}_j)$ is some final-state logical operator $\Gamma$ evolved to the (Heisenberg) time $t^{\,}_j$ immediately prior to the measurement of $\mobserv^{\,}_j$. The other important result is
    \begin{equation*}
    \tag{\ref{eq:meas attach}}
        \Gamma (t^{\vpp}_{j-1}) = \trace_{{\rm ss}~j} \left[ \, \umeas^{\dagger}_j \, \Gamma (t^{\vpp}_j) \, \umeas^{\vpd}_j \, \right]  = \bar{\mobserv}^{\lambda^{\,}_{j,\Gamma}}_j \, \Gamma (t^{\vpp}_j) \, , ~
    \end{equation*}
    where $\lambda^{\,}_{j,\Gamma} \in \{0,1\}$ is determined by the recovery channels conditioned on the state of the $j$th Stinespring qubit, and $\bar{\mobserv}^{\,}_j$ is the involutory part \eqref{eq:involutory part} of the observable $\mobserv^{\,}_j$.

    The final measured observable $\mobserv^{\,}_{\Nmeas}$ therefore satisfies 
    \begin{align}
    \label{eq:meascomm M logical condition}
        \comm{\mobserv^{\vpp}_{\Nmeas}}{\Pauli{\nu}{f}} = 0 \, ,~~
    \end{align}
    for all final logical sites $f \in F$ and all  $\nu \in \{x,y,z\}$, which requires that $\mobserv^{\,}_{\Nmeas}$ has no support in $F$ (since only $\ident_f$ commutes with $\Pauli{\nu}{f}$ for all nontrivial values of $\nu$). 

    The penultimate observable $\mobserv^{\,}_{\Nmeas-1}$ also has no support on any logical site $f \in F$ to which $\mobserv^{\,}_{\Nmeas}$ is not attached; for any logical operator $\Gamma$ to which $\mobserv^{\,}_{\Nmeas}$ \emph{is} attached,
    \begin{align}
        0 &= \comm{\mobserv^{\vpp}_{\Nmeas-1}}{\bar{\mobserv}^{\vpp}_{\Nmeas} \, \Gamma} \notag \\
        &= \comm{\mobserv^{\vpp}_{\Nmeas-1}}{\bar{\mobserv}^{\vpp}_{\Nmeas}}  \, \Gamma + \bar{\mobserv}^{\vpp}_{\Nmeas} \,\comm{\mobserv^{\vpp}_{\Nmeas-1}}{\Gamma} \, ,~\label{eq:meascomm M-1 expanded comm}
    \end{align}
    where at least one of the two commutators in the sum above must vanish, as $\mobserv^{\,}_{\Nmeas-1}$ \eqref{eq:CF observable} remains a single-qubit operator in the canonical-form protocol $\chan$ \eqref{eq:chan canonical form} by the requirement of Def.~\ref{def:standard teleport} that no unitaries be applied to previously measured sites. Hence, both commutators must vanish, meaning that
    \begin{equation}
    \label{eq:compatible observ 2}
         \comm{\mobserv^{\,}_{\Nmeas-1}}{\mobserv^{\,}_{\Nmeas}} = 0 \, ,~~
    \end{equation}
    and also that $\mobserv^{\,}_{\Nmeas-1}$ has no support in $F$.

    Repeating this analysis for all subsequent observables $\mobserv^{\,}_j$, we again find constraints of the same form as Eq.~\ref{eq:meascomm M-1 expanded comm}, with $\bar{\mobserv}^{\,}_{\Nmeas}$ replaced by some product of observables $\bar{\mobserv}^{\vpp}_{j+1}$ through $\bar{\mobserv}^{\vpp}_{\Nmeas}$. For each $j< \Nmeas$, we find that none of the measured observables previously encountered in the Heisenberg evolution of $\Gamma$  have support in $F$. As a result, Eq.~\ref{eq:obs logical commute} can be split into a vanishing sum of two commutators \`a la Eq.~\ref{eq:meascomm M-1 expanded comm}, where at least one of the commutators must be zero. Hence, both commutators must vanish, meaning that $\mobserv^{\,}_j$ has no support in $F$ and 
    \begin{equation}
        \label{eq:meascomm j comm condition}
        \comm{\mobserv^{\vpp}_j}{\bar{\mobserv}^{\lambda_{j+1,\Gamma}}_{j+1} \cdots \bar{\mobserv}^{\lambda_{\Nmeas,\Gamma}}_{\Nmeas}} = 0 \, ,~~
    \end{equation}
    meaning that $\mobserv^{\,}_j$ commutes with \emph{every} product of observables $\mobserv^{\,}_{j+1}$ through $\mobserv^{\,}_{\Nmeas}$ attached to \emph{any} logical operator $\Gamma = \Pauli{\nu}{f}$ via Eq.~\ref{eq:meas attach} of Lemma~\ref{lem:measure = attach}. 

    Importantly, all of these observables are single-qubit operators, and trivially commute unless they act on the same qubit. By Eq.~\ref{eq:meascomm j comm condition}, either \emph{all} observables correspond to different sites $j \notin F$ \emph{or}, following the measurement of $\mobserv^{\,}_j$ (in the Schr\"odinger picture), two observables that anticommute with $\mobserv^{\,}_j$ are subsequently measured \emph{and} both are attached to the same logical operators. However, these observables must commute with each other, and hence must be the same, so the product of their involutory parts \eqref{eq:involutory part} is the identity. Hence, this pair of measurements can be omitted entirely. Any measurements intermediate to this pair act on other sites, and hence, trivially commute.
    
    This also agrees with the basic intuition that the  subsequent anticommuting measurements necessarily result in random outcomes, yet the final logical state must be the same along \emph{every} outcome trajectory, meaning that no nontrivial operations can be conditioned on the outcomes of the measurements that anticommute with $\mobserv^{\,}_j$. More important, however, is the fact that this anticommuting operator is effectively removed from \emph{all} logical operators prior to encountering  $\mobserv^{\,}_j$ (in the Heisenberg picture), and hence, these measurements may be omitted entirely without affecting any commutation condition \eqref{eq:meascomm j comm condition} on the observables or the evolution of any logical operator. 
\end{proof}

\section{Proof of Lemma~\ref{lem:k=1 big string order}}
\label{app:proof k=1 big string order}

\begin{proof}
    The resource state is given by
    \begin{equation*}
        \tag{\ref{eq:resource state}}
        \ket{\Psi^{\vpp}_t} = \uchan \, \ket{\Psi^{\vpp}_0} \, , ~~
    \end{equation*}
    where $\ket{\Psi^{\,}_0} = \ket{\psi} \otimes \ket{\Phi^{\,}_0}$ \eqref{eq:teleport initial mb state} is the initial state, with $\ket{\Phi^{\,}_0}$ a product state by Def.~\ref{def:phys teleport}. Importantly, the teleportation conditions \eqref{eq:teleportation conditions} of Prop.~\ref{prop:teleportation conditions} guarantee that
    \begin{equation*}
        \tag{\ref{eq:teleportation conditions}}
        \Pauli{\nu}{i} \,\StabEl^{\vpp}_{\nu} = \trace_{\rm ss} \left[ \, \chan^{\dagger} \, \Pauli{\nu}{f} \,\chan \, \SSProj{\bvec{0}}{\rm ss} \, \right] \, ,~~
    \end{equation*}
    where $\StabEl^{\,}_{\nu} \, \ket{\Psi^{\,}_0} = \ket{\Psi^{\,}_0}$ is a stabilizer $\StabEl^{\,}_{\nu} \in \UStabOf{\ket{\Psi^{\,}_0}}$.

    The logical operator that acts on the resource state is
    \begin{align}
        \Pauli{\nu}{f}(T-t) &= \trace_{\rm ss} \left[ \, \MeasChannel^{\dagger} \, \QECChannel^{\dagger} \,\Pauli{\nu}{f} \, \QECChannel \, \MeasChannel \, \SSProj{\bvec{0}}{\rm ss} \, \right] \notag \\
        &= \bar{\mobserv}_1^{\lambda^{\,}_{1,\nu}} \cdots \bar{\mobserv}_{\Nmeas}^{\lambda^{\,}_{\Nmeas,\nu}} \, \Pauli{\nu}{f} = \bar{\mobserv}^{\vpp}_{\nu} \, \Pauli{\nu}{f} \notag \\
        &= \left( \uchan \, \Pauli{\nu}{i}  \, \uchan^{\dagger} \right) \, \left( \uchan \, \StabEl^{\vpp}_{\nu} \, \uchan^{\dagger} \right) \, , ~~\label{eq:k=1 big string order logical facts}
    \end{align}
    by Eqs.~\ref{eq:teleportation conditions} and \ref{eq:meas attach}, and we define
    \begin{align}
    \label{eq:k=1 big string resource stab}
        \StabEl^{\vpp}_{\nu,t} &\equiv \uchan \, \StabEl^{\,}_{\nu} \, \uchan^{\dagger}
    \end{align}
    which is an element of the (unitary) stabilizer group $\UStabOf{\ket{\Psi^{\,}_t}}$ for the \emph{resource} state, since
    \begin{align}
        \StabEl^{\vpp}_{\nu,t} \, \ket{\Psi^{\vpp}_t} &= \left( \uchan \, \StabEl^{\vpp}_{\nu} \, \uchan^\dagger \right) \, \uchan \, \ket{\Psi^{\vpp}_0} \notag \\
        &= \uchan \, \StabEl^{\vpp}_{\nu}  \, \ket{\Psi^{\vpp}_0} = \uchan \, \ket{\Psi^{\vpp}_0} = \ket{\Psi^{\vpp}_t} \, ,~~\label{eq:k=1 big string stab is stab}
    \end{align}
    as expected. Using Eqs.~\ref{eq:k=1 big string order logical facts} and \ref{eq:k=1 big string resource stab}, we have
    \begin{align}
    \label{eq:k=1 big string resource stab nice}
         \StabEl^{\vpp}_{\nu,t} &= \uchan \, \Pauli{\nu}{i} \, \uchan^{\dagger} \, \bar{\mobserv}^{\vpp}_{\nu} \, \Pauli{\nu}{f} \, , ~~
    \end{align}
    and thus, by Eq.~\ref{eq:k=1 big string stab is stab}, we have that
    \begin{align*}
        1 &= \matel{\Psi^{\vpp}_t}{\, \StabEl^{\vpp}_{\nu,t} \, }{\Psi^{\vpp}_t}  \\
        &= \matel{\Psi^{\vpp}_t}{\, \uchan \, \Pauli{\nu}{i} \, \uchan^{\dagger} \, \bar{\mobserv}^{\vpp}_{\nu} \, \Pauli{\nu}{f} \, }{\Psi^{\vpp}_t} 
        \tag{\ref{eq:k=1 big string order}} \, ,~~
    \end{align*}
    as stated in the Lemma. The commutation of the ``bulk'' operator content of the string order parameter $\StabEl^{\,}_{\nu,t}$ \eqref{eq:k=1 big string resource stab nice} is guaranteed by Cor.~\ref{cor:meas comm} for all \emph{necessary} measurements, where we simply omit unnecessary measurements. The anticommutation of the endpoint operators follows automatically from the Pauli algebra.
\end{proof}

\section{Proof of Lemma~\ref{lem:k=1 small string order}}
\label{app:k=1 smol string proof}

\begin{proof}
    We first show that there exist subregions of the interval $[i,f]$ with different \emph{left} endpoints such that a string order parameter analogous to Eq.~\ref{eq:k=1 big string order} has unit expectation value in the \emph{same} resource state $\ket{\Psi^{\,}_t}$ \eqref{eq:resource state}. This requires that multiple observables $\bar{\mobserv}^{\,}_j$ are attached to both logical operators, which is guaranteed when $\Dist \gg \LRvel T$. We show the same for different \emph{right} endpoints subsequently.

    The combination of Eqs.~\ref{eq:teleportation conditions} and \ref{eq:meas attach} imply that
    \begin{align}
    \label{eq:k=1 small string order starting condish}
        \Pauli{\nu}{i} \, \StabEl^{\vpp}_{\nu} &= \left( \uchan^{\dagger} \, \bar{\mobserv}^{\vpp}_{\nu} \, \uchan \right) \left( \uchan^\dagger \, \Pauli{\nu}{f} \, \uchan \right) \, , ~~
    \end{align}
    and importantly, the fact that $\Dist > \LRvel T$ (as required for physical teleportation via Def.~\ref{def:phys teleport}) guarantees that $\uchan^\dagger \, \Pauli{\nu}{f} \, \uchan$ only contributes to the stabilizer part $\StabEl^{\,}_{\nu}$ of Eq.~\ref{eq:k=1 small string order starting condish}. Hence, the logical part $\Pauli{\nu}{i}$ must recover from evolving the leftmost measured observables in $\bar{\mobserv}^{\,}_{\nu}$ \eqref{eq:k=1 big string bulk} under $\uchan$. 
    
    This product of operators can be rewritten as
    \begin{equation}
        \label{eq:k=1 small string order clustered A}
        \bar{\mobserv}^{\vpp}_{\nu} = \bar{\mobserv}^{\vpp}_{\nu,1} \cdots \bar{\mobserv}^{\vpp}_{\nu,\Nregions} \, ,~~
    \end{equation}
    where each operator $\bar{\mobserv}^{\,}_{\nu,s}$ corresponds to a particular ``measurement region,'' and may represent a product of the involutory parts \eqref{eq:involutory part} of two or more measured observables attached to $\Pauli{\nu}{f}$ via Eq.~\ref{eq:meas attach} of Lemma~\ref{lem:measure = attach}.

    Using the above, we rewrite Eq.~\ref{eq:k=1 small string order starting condish} as
    \begin{align}
        \Pauli{\nu}{i}  \, \StabEl^{\vpp}_{\nu}   &= \left( \uchan^{\dagger} \, \bar{\mobserv}^{\vpp}_{\nu,1} \, \uchan \right) \left( \uchan^{\dagger} \,  \bar{\mobserv}^{\vpp}_{\nu,2} \cdots \bar{\mobserv}^{\vpp}_{\nu,\Nregions} \, \Pauli{\nu}{f} \, \uchan \right) \, ,~ 
        \label{eq:k=1 small string order suggestive starting condish}
    \end{align}
    and since only the leftmost term on the right-hand side is causally connected to $\Pauli{\nu}{i}$ via $\uchan$, we define
    \begin{equation}
        \label{eq:k=1 small string order left Sigma1 def}
        \Pauli{\nu}{i} \, \Sigma^{\nu}_{1} = \uchan^{\dagger} \, \bar{\mobserv}^{\vpp}_{\nu,1} \, \uchan \, ,~~
    \end{equation}
    where $\Sigma^{\nu}_{1}$ acts nontrivially only to the right of site $i$, when the above is applied to the initial state $\ket{\Psi^{\,}_0}$ \eqref{eq:teleport initial mb state}. Since the right-hand sides of Eq.~\ref{eq:k=1 small string order left Sigma1 def} for $\nu=x,z$ commute with one another, the left-hand sides must as well. This implies that, e.g., $\Sigma^{x}_{1}$  and $\Sigma^{z}_{1}$ reproduce the Pauli algebra to the right of $i$, up stabilizer elements of $\UStabOf{\ket{\Psi^{\,}_0}}$.

    The string order parameter \eqref{eq:k=1 big string order} from Lemma~\ref{lem:k=1 big string order} can be written as the following element of $\UStabOf{\ket{\Psi^{\,}_t}}$,
    \begin{equation*}
        \tag{\ref{eq:k=1 big string resource stab nice}}
         \StabEl^{\vpp}_{\nu,t} = \uchan \, \Pauli{\nu}{i} \, \uchan^{\dagger} \, \bar{\mobserv}^{\vpp}_{\nu} \, \Pauli{\nu}{f} \, , ~~
    \end{equation*}
    which we next massage using Eq.~\ref{eq:k=1 small string order left Sigma1 def} to find 
    \begin{align}
        \StabEl^{\vpp}_{\nu,t} &= \uchan \, \Pauli{\nu}{i} \, \uchan^{\dagger} \, \bar{\mobserv}^{\vpp}_{\nu,1} \, \left( \uchan \, \uchan^{\dagger} \right) \,\bar{\mobserv}^{\vpp}_{\nu,2} \cdots \bar{\mobserv}^{\vpp}_{\nu,\Nregions} \,  \Pauli{\nu}{f} \notag \\
        &= \uchan \, \Pauli{\nu}{i} \, \left( \uchan^{\dagger} \, \bar{\mobserv}^{\vpp}_{\nu,1} \, \uchan \right) \uchan^{\dagger} \,\bar{\mobserv}^{\vpp}_{\nu,2} \cdots \bar{\mobserv}^{\vpp}_{\nu,\Nregions} \,  \Pauli{\nu}{f} \notag \\
        &= \uchan \, \Pauli{\nu}{i} \, \left( \Pauli{\nu}{i} \, \Sigma^{\nu}_1 \right) \uchan^{\dagger} \, \bar{\mobserv}^{\vpp}_{\nu,2} \cdots \bar{\mobserv}^{\vpp}_{\nu,\Nregions} \,  \Pauli{\nu}{f} \notag \\
        &= \left( \uchan \,\Sigma^{\nu}_1 \, \uchan^{\dagger}  \right) \,  \bar{\mobserv}^{\vpp}_{\nu,2} \cdots \bar{\mobserv}^{\vpp}_{\nu,\Nregions} \,  \Pauli{\nu}{f} \,,~~
        \label{eq:k=1 small string order first new left endpoint}
    \end{align}
    which has the same right endpoint operator $\Pauli{\nu}{f}$ as in Eq.~\ref{eq:k=1 big string order}, but the left endpoint operator has changed from
    \begin{equation}
        \label{eq:k=1 small string order first new left endpoint explicit}
        \uchan \, \Pauli{\nu}{i} \, \uchan^{\dagger} \to \uchan \,  \Sigma^{\nu}_1 \, \uchan^{\dagger} \, ,~~
    \end{equation}
    where the support of $\Sigma^{\nu}_1$ is entirely to the right of $\Pauli{\nu}{i}$, and thus constitutes an equivalent string order parameter $\StabEl^{\,}_{\nu,t}$ \eqref{eq:k=1 small string order first new left endpoint} of the same form as Eq.~\ref{eq:k=1 big string order} with a \emph{new} left endpoint, where the left-endpoint operators for $\nu=x,z$ are guaranteed to anticommute by Eq.~\ref{eq:k=1 small string order left Sigma1 def}. 

    Importantly, we observe that the new string-order parameter in Eq.~\ref{eq:k=1 small string order first new left endpoint} has the same important properties as the na\"ive counterpart \eqref{eq:k=1 big string order}. Both objects have (\emph{i}) unit expectation value in the resource state $\ket{\Psi^{\,}_t}$ \eqref{eq:resource state}; (\emph{ii}) a right-endpoint operator given by $\Pauli{\nu}{f}$; (\emph{iii}) a left-endpoint operator of the form $\uchan \, \tau^{\nu} \, \uchan^{\dagger}$, where the objects $\tau^{\nu}$ satisfy the Pauli algebra; and (\emph{iv}) all of the operators between the two endpoint commute for different values of $\nu$. 

    There are two key features that facilitate iterating this procedure to recover new endpoints. The first is that the left-endpoint operators \eqref{eq:k=1 small string order first new left endpoint explicit} are of the same form as the original Pauli operator in Eq.~\ref{eq:k=1 big string order}. The second is that there exists a ``leftmost'' choice of $\Sigma^{\nu}_1$ \eqref{eq:k=1 small string order left Sigma1 def} such that the clustered observables $\bar{\mobserv}^{\,}_{\nu,3}$ through $\bar{\mobserv}^{\,}_{\nu,\Nregions}$ are not connected to $\Sigma^{\nu}_1$ under $\uchan$. Essentially, a set of equivalent operators recovers upon multiplying $\Sigma^{\nu}_1$ \eqref{eq:k=1 small string order left Sigma1 def} by stabilizer elements $\StabEl^{\,}_0$ of $\UStabOf{\ket{\Psi^{\,}_0}}$; the ``leftmost'' left-endpoint operator \eqref{eq:k=1 small string order first new left endpoint explicit} is the operator $\Sigma^{\nu}_1 \, \StabEl^{\,}_0$ for $\StabEl^{\,}_0 \in \UStabOf{\ket{\Psi^{\,}_0}}$ whose support is closest to---but disjoint from---the initial site $i$ (i.e., whose right weight is closest to $i$).

    We find the next left endpoint by defining
    \begin{equation}
    \label{eq:k=1 small string order left Sigma2 def}
        \Sigma^{\nu}_1 \, \Sigma^{\nu}_2 = \uchan^\dagger \, \bar{\mobserv}^{\vpp}_{\nu,2} \, \uchan \, , ~~
    \end{equation}
    in analogy to Eq.~\ref{eq:k=1 small string order left Sigma1 def}. The procedure then repeats as before, where in the general case we define
    \begin{equation}
    \label{eq:k=1 small string order left Sigma any def 1}
        \Pauli{\nu}{i} \, \Sigma^\nu_s = \uchan^\dagger \, \bar{\mobserv}^{\vpp}_{\nu,1} \cdots \bar{\mobserv}^{\vpp}_{\nu,s} \, \uchan \, ,~~
    \end{equation}
    which is equivalent to the familiar form
    \begin{equation}
    \label{eq:k=1 small string order left Sigma any def 2}
        \Sigma^{\nu}_{s-1} \, \Sigma^{\nu}_s = \uchan^\dagger \, \bar{\mobserv}^{\vpp}_{\nu,s} \, \uchan \, , ~
    \end{equation}
    where the above two definitions reproduce Eq.~\ref{eq:k=1 small string Sigma def}. 

    One can iterate this procedure until all bulk operators except $\bar{\mobserv}^{\,}_{\nu,\Nregions}$ have been exhausted. Choosing the leftmost form of each $\Sigma^{\nu}_s$ guarantees that each $\Sigma^{\nu}_{s}$ only has support to the right of $\Sigma^{\nu}_{s-1}$, and thus constitutes a genuinely new endpoint. This choice only affects the total number of measurement ``regions'' $\Nregions$, which delineate endpoints. 
    
    We then find a set of string order parameters
    \begin{align}
        \StabEl^{\vpp}_{\nu,t,s} &= \left( \uchan \, \Sigma^{\nu}_s \, \uchan^\dagger \right) \, \bar{\mobserv}^{\vpp}_{\nu,s+1} \cdots \bar{\mobserv}^{\vpp}_{\nu,\Nregions} \, \left( \Pauli{\nu}{f} \right) \label{eq:k=1 small string order left}
    \end{align}
    for all $s \in [1,\Nregions]$, such that $\matel{\Psi^{\,}_t}{\StabEl^{\,}_{\nu,t,s}}{\Psi^{\,}_t} = 1$ \eqref{eq:k=1 string order} indicates perfect string order of $\ket{\Psi^{\,}_t}$ \eqref{eq:resource state}.
    
    We next identify new \emph{right} endpoints. Returning to Eq.~\ref{eq:k=1 small string order suggestive starting condish}, using the same operators $\Sigma^{\nu}_{s}$ defined in Eqs.~\ref{eq:k=1 small string order left Sigma any def 1} and \ref{eq:k=1 small string order left Sigma any def 2}. Importantly, we take $\Sigma^{\nu}_{s}$ to be the operator $\Sigma^{\nu}_{s} \, \StabEl^{\,}_0$ for $\StabEl^{\,}_0 \in \UStabOf{\ket{\Psi^{\,}_0}}$ with the leftmost right weight (i.e., whose support is farthest to the left). All such operators are equivalent, since these quantities are (effectively) evaluated in the initial state $\ket{\Psi^{\,}_0}$ \eqref{eq:teleport initial mb state}, so $\StabEl^{\,}_0 \cong \ident$.
    
    For any $s \in [1,\Nregions]$, we define
    \begin{align}
        \StabEl^{\prime}_{\nu,t,s} &\equiv \StabEl^{\vpp}_{\nu,t,0} \, \StabEl^{\dagger}_{\nu,t,s} \notag \\
        &= \left( \uchan \, \Pauli{\nu}{i} \, \uchan^{\dagger} \, \bar{\mobserv}^{\vpp}_{\nu,1} \cdots \bar{\mobserv}^{\vpp}_{\nu,\Nregions} \, \Pauli{\nu}{f} \right) \notag \\
        ~~&\times \left( \Pauli{\nu}{f} \, \bar{\mobserv}^{\vpp}_{\nu,\Nregions} \cdots \bar{\mobserv}^{\vpp}_{\nu,s+1} \, \uchan \, \Sigma^{\nu}_{s} \, \uchan^{\dagger} \right) \notag \\
        &= \left( \uchan \, \Pauli{\nu}{i} \, \uchan^{\dagger} \right) \, \bar{\mobserv}^{\vpp}_{\nu,1} \cdots \bar{\mobserv}^{\vpp}_{\nu,s} \, \left( \uchan \, \Sigma^{\nu}_s \, \uchan^{\dagger} \right) \, \label{eq:k=1 small string order right}  \, ,~~
    \end{align}
    which has the same left-endpoint operator $\uchan \, \Pauli{\nu}{i} \, \uchan^{\dagger}$ as $\StabEl^{\,}_{\nu,t,0}$ \eqref{eq:k=1 big string order} but distinct right-endpoint operators $\Sigma^{\nu}_s$, which are the same endpoint operators in Eq.~\ref{eq:k=1 small string order}. 
    
    So far, we have shown that new left and right endpoints can independently be chosen in Eqs.~\ref{eq:k=1 small string order left} and \ref{eq:k=1 small string order right}, respectively. Finally, we recover the generic result stated in the Lemma with arbitrary left endpoints via
    \begin{align}
        \mathcal{S}^{\vpp}_{\nu,a,b} &\equiv \StabEl^{\vpp}_{\nu,t,a} \, \StabEl^{\dagger}_{\nu,t,b} \\
        &= \left( \uchan \, \Sigma^{\nu}_a \, \uchan^{\dagger} \, \bar{\mobserv}^{\vpp}_{\nu,a+1} \cdots \bar{\mobserv}^{\vpp}_{\nu,\Nregions} \, \Pauli{\nu}{f} \right) \\
        ~~&\times \left( \Pauli{\nu}{f} \, \bar{\mobserv}^{\vpp}_{\nu,\Nregions} \cdots \bar{\mobserv}^{\vpp}_{\nu,b+1} \, \uchan \, \Sigma^{\nu}_{b} \, \uchan^{\dagger} \right) \\
        &= \left( \uchan \, \Sigma^{\nu}_a \, \uchan^{\dagger} \right) \, \bar{\mobserv}^{\vpp}_{\nu,a+1} \cdots \bar{\mobserv}^{\vpp}_{\nu,b} \, \left( \uchan \, \Sigma^{\nu}_b \, \uchan^{\dagger} \right) \, \tag{\ref{eq:k=1 small string order}}  \, ,~~
    \end{align}
    and since the support of $\Sigma^{\nu}_{s+1}$ is entirely to the right of $\Sigma^{\nu}_{s}$,  the support of $\Sigma^{\nu}_{s}$ is to the left of $\Sigma^{\nu}_{s+1}$. Hence, the operators $\Sigma^{\nu}_s$ constitute distinct left \emph{and} right endpoint operators for each $s \in [1,\Nregions]$, where $\matel*{\Psi^{\,}_t}{\mathcal{S}^{\,}_{\nu,a,b}}{\Psi^{\,}_t} = 1$ \eqref{eq:string order def} indicates nontrivial string order of $\ket{\Psi^{\,}_t}$ \eqref{eq:resource state}. 
\end{proof}

\end{document}